\documentclass{lmcs} 

\keywords{Lambda Calculus, 
  Evaluation Strategies,
  Useful Evaluation,
  Call-by-Value,
  Intersection Types, 
  Quantitative Type Systems,
  Observational Equivalence}

\usepackage{tikz}
\usetikzlibrary{positioning,arrows.meta}
\usepackage{hyperref}
\usepackage{mathbbol}
\usepackage{mathpartir}
\usepackage{thm-restate}
\usepackage{xspace}
\usepackage[all]{xy}

\newcommand{\sep}{\hspace{.3cm}}
\newcommand{\bsep}{\hspace{1cm}}
\newcommand{\HS}{\hspace{.5cm}}

\newcommand{\ie}{\emph{i.e.}\xspace}
\newcommand{\defn}[1]{\textbf{#1}}
\newcommand{\ih}{{\it i.h.}\xspace}

\newcommand{\CBN}{\textup{\footnotesize{CBN}}\xspace}
\newcommand{\CBNeed}{\textup{\footnotesize{CBNeed}}\xspace}
\newcommand{\CBV}{\textup{\footnotesize{CBV}}\xspace}
\newcommand{\LSC}{\textup{\footnotesize{LSC}}\xspace}
\newcommand{\VSC}{\textup{\footnotesize{VSC}}\xspace}
\newcommand{\VSCne}{\textup{\footnotesize{VSC}}$_{\textup{ne}}$\xspace}
\newcommand{\UOCBV}{\textup{{\footnotesize UCBV}}$^\Usym$\xspace}

\newcommand{\eqdef}{:=}
\newcommand{\eqgram}{\mathrel{::=}}

\newcommand{\set}[1]{\{#1\}}
\newcommand{\nat}{\mathbb{N}}
\newcommand{\length}[1]{|#1|}
\renewcommand{\implies}{\Rightarrow}
\newcommand{\disj}{\mathrel{\#}}
\newcommand{\iI}{{i \in I}}
\newcommand{\jJ}{{j \in J}}
\newcommand{\kK}{{k \in K}}

\newcommand{\relation}{\mathcal{R}}
\newcommand{\rewrite}[1]{\rightarrow_{#1}}
\newcommand{\rewriten}[1]{\rightarrow^*_{#1}}
\newcommand{\rewritensteps}[2]{\rightarrow^{#2}_{#1}}

\newcommand{\dbsym}{\mathsf{db}}
\newcommand{\lssym}{\mathsf{ls}}
\newcommand{\lsvsym}{\mathsf{lsv}}

\newcommand{\svsym}{\mathsf{sv}}
\newcommand{\svnesym}{\mathsf{svne}}

\newcommand{\tols}{\to_{\lssym}}
\newcommand{\todb}{\to_{\dbsym}}
\newcommand{\tosv}{\to_{\svsym}}
\newcommand{\tosvne}{\to_{\svnesym}}

\newcommand{\fv}[1]{\mathsf{fv}(#1)}
\newcommand{\rv}[1]{\mathsf{rv}(#1)}

\newcommand{\var}{x}
\newcommand{\vartwo}{y}
\newcommand{\varthree}{z}
\newcommand{\varfour}{w}

\newcommand{\tm}{t}
\newcommand{\tmtwo}{u}
\newcommand{\tmthree}{s}

\newcommand{\lam}[2]{\lambda#1.\,#2}
\newcommand{\sub}[2]{\{#1:=#2\}}
\newcommand{\esub}[2]{[#1\backslash#2]}

\newcommand{\id}{\mathtt{I}}

\newcommand{\val}{v}
\newcommand{\valtwo}{w}

\newcommand{\ctxhole}{\Diamond}
\newcommand{\ctxof}[2]{#1\langle#2\rangle}
\newcommand{\gctx}{\mathtt{C}}
\newcommand{\sctx}{\mathtt{L}}
\newcommand{\wctx}{\mathtt{W}}
\newcommand{\sctxtwo}{\mathtt{L}'}

\newcommand{\toVSCne}{\to_{\textup{\mbox{vsc}}_{\textup{\mbox{ne}}}}} 
\newcommand{\tovoriginal}[4]{\xrightarrow{\raisebox{-0.75ex}[0ex][0ex]{\footnotesize $\Usym$}}_{#1,#2,#3,#4}}
\newcommand{\tov}[4]{\to_{#1,#2,#3,#4}}             									
\newcommand{\tovinv}[4]{\leftarrow_{#1,#2,#3,#4}}
\newcommand{\tovTopb}[2]{\to_{\mathtt{prim},#1,#2}} 									

\newcommand{\absPred}[1]{\mathsf{abs(}#1\mathsf{)}}
\newcommand{\appPred}[1]{\mathsf{app(}#1\mathsf{)}}
\newcommand{\varPred}[1]{\mathsf{var(}#1\mathsf{)}}

\newcommand{\novscne}{\mathtt{no}}
\newcommand{\nevscne}{\mathtt{ne}}
\newcommand{\vvscne}{\mathtt{vr}}
\newcommand{\absvscne}{\mathtt{ab}}

\newcommand{\Usym}{\bullet}
\newcommand{\inv}[3]{\mathsf{inv}{(#1,#2,#3)}}

\newcommand{\HAbs}[1]{\mathsf{HA}_{#1}}
\newcommand{\Struct}[1]{\mathsf{St}_{#1}}

\newcommand{\Irred}[3]{\mathsf{Irred}^{\Usym}_{#1,#2,#3}}
\newcommand{\NF}[3]{\mathsf{NF}^{\Usym}_{#1,#2,#3}}

\newcommand{\rulename}{\rho}
\newcommand{\ruledb}{\dbsym}
\newcommand{\rulelsv}{\lsvsym}
\newcommand{\rulesub}[2]{\mathsf{sub}_{(#1,#2)}}
\newcommand{\aset}{\mathcal{A}}
\newcommand{\asettwo}{\mathcal{B}}
\newcommand{\sset}{\mathcal{S}}
\newcommand{\appflag}{\mu}
\newcommand{\app}{@}
\newcommand{\nonapp}{{\text{$\not$@}}}

\newcommand{\indrulename}[1]{\textup{\textsc{#1}}}

\newcommand{\ruleUDb}{\indrulename{db$^{\Usym}$}\xspace}
\newcommand{\ruleUSub}{\indrulename{sub$^{\Usym}$}\xspace}
\newcommand{\ruleULsv}{\indrulename{lsv$^{\Usym}$}\xspace}
\newcommand{\ruleUAppL}{\indrulename{appL$^{\Usym}$}\xspace}
\newcommand{\ruleUAppR}{\indrulename{appR$^{\Usym}$}\xspace}
\newcommand{\ruleUEsR}{\indrulename{esR$^{\Usym}$}\xspace}
\newcommand{\ruleUEsLAbs}{\indrulename{esLA$^{\Usym}$}\xspace}
\newcommand{\ruleUEsLStruct}{\indrulename{esLS$^{\Usym}$}\xspace}

\newcommand{\ruleUNFVar}{\indrulename{NF-var$^{\Usym}$}\xspace}
\newcommand{\ruleUNFLam}{\indrulename{NF-lam$^{\Usym}$}\xspace}
\newcommand{\ruleUNFApp}{\indrulename{NF-app$^{\Usym}$}\xspace}
\newcommand{\ruleUNFEsAbs}{\indrulename{NF-esA$^{\Usym}$}\xspace}
\newcommand{\ruleUNFEsStruct}{\indrulename{NF-esS$^{\Usym}$}\xspace}

\newcommand{\arUd}[4]{\ar[d]_-{#1,#2,#3,#4}}
\newcommand{\arUr}[4]{\ar[r]_-{#1,#2,#3,#4}}
\newcommand{\arsdUd}[4]{\ar@{.>}[d]_-{#1,#2,#3,#4}}
\newcommand{\arsdUr}[4]{\ar@{.>}[r]_-{#1,#2,#3,#4}}

\newcommand{\expansion}[2]{{#1}^{#2}}

\newcommand{\typesystem}{\mathcal{U}}
\newcommand{\systemV}{\mathcal{V}}

\newcommand{\isAppr}[2]{\mathtt{appropriate}_{#1}(#2)}

\newcommand{\ruleTypVarV}{\indrulename{var}_{\mathtt{v}}\xspace}
\newcommand{\ruleTypAbsV}{\indrulename{abs}_{\mathtt{v}}\xspace}
\newcommand{\ruleTypAppV}{\indrulename{app}_{\mathtt{v}}\xspace}
\newcommand{\ruleTypESV}{\indrulename{es}_{\mathtt{v}}\xspace}

\newcommand{\ruleTypVar}{\indrulename{var}\xspace}
\newcommand{\ruleTypAbs}{\indrulename{abs}\xspace}
\newcommand{\ruleTypAppP}{\indrulename{appP}\xspace}
\newcommand{\ruleTypAppC}{\indrulename{appC}\xspace}
\newcommand{\ruleTypES}{\indrulename{es}\xspace}

\newcommand{\derivable}[2]{\ensuremath{{#1} \rhd {#2}}}
\newcommand{\derivablep}[3]{\ensuremath{{#1} \rhd_{#3} {#2}}}

\newcommand{\deriv}{\Phi}
\newcommand{\derivtwo}{\Psi}

\newcommand{\sizeV}[1]{\ensuremath{\mathtt{sz}_{\systemV}\left({#1}\right)}}
\newcommand{\sizeU}[1]{\ensuremath{\mathtt{sz}_{\typesystem}\left({#1}\right)}}

\newcommand{\judg}[4][]{#2\vdash^{(#1)}#3:#4}
\newcommand{\judgs}[3]{#1\vdash #2:#3}

\newcommand{\dom}[1]{\mathsf{dom}(#1)}
\newcommand{\emptyctx}{\emptyset}
\newcommand{\tctx}{\Gamma}
\newcommand{\tctxtwo}{\Delta}
\newcommand{\TEnv}[3]{\mathsf{TEnv}(#1,#2,#3)}

\newcommand{\mset}[1]{[#1]}
\newcommand{\emset}{[\,]}

\newcommand{\typ}{\alpha}
\newcommand{\typV}{\sigma}
\newcommand{\typtwo}{\beta}
\newcommand{\typtwoV}{\tau}

\newcommand{\basetype}{\alpha}
\newcommand{\basetypetwo}{\beta}

\newcommand{\nityp}{\mathcal{M}}
\newcommand{\nityptwo}{\mathcal{N}}
\newcommand{\niunion}{\mathrel{\uplus}}

\newcommand{\mleq}{\ll}
\newcommand{\mtyp}{\mathcal{T}}
\newcommand{\mtyptwo}{\mathcal{S}}
\newcommand{\mtypthree}{\mathcal{R}}
\newcommand{\optmtyp}{\mtyp^?}
\newcommand{\optmtyptwo}{\mtyptwo^?}
\newcommand{\optmtypthree}{\mtypthree^?}

\newcommand{\none}{\bot}
\newcommand{\tight}{\mathbb{t}}
\newcommand{\tightN}{\mathbb{s}}

\newcommand{\cm}{m}
\newcommand{\ce}{e}
\newcommand{\cmtwo}{{m'}}
\newcommand{\cetwo}{{e'}}

\newcommand{\numarrows}[1]{\mathtt{ta}(#1)}

\DeclareMathAlphabet\mathbfcal{OMS}{cmsy}{b}{n}

\def\eg{{\em e.g.}}
\def\cf{{\em cf.}}

\begin{document}

\title{A Quantitative Interpretation of Useful Call-by-Value}

\titlecomment{{\lsuper*} This work is a revised and extended version of~\cite{BKM26}. 
It includes additional explanations, examples, and proofs.
\\
  This work is supported by the European Union through the 
  MSCA SE project QCOMICAL (Grant Agreement ID: 101182520) and
  through the EU Horizon 2020 research and innovation
  programme MSCA No 945332.
}

\author[P.~Barenbaum]{Pablo Barenbaum\lmcsorcid{0009-0003-2494-3345}}[a,b]
\author[D.~Kesner]{Delia Kesner\lmcsorcid{0000-0003-4254-3129}}[c]
\author[M.~Milicich]{Mariana Milicich\lmcsorcid{0009-0007-5722-2607}}[c,d]

\address{Universidad Nacional de Quilmes (CONICET), Argentina}

\address{Instituto de Ciencias de la Computaci\'on, UBA, Argentina}
\email{pbarenbaum@dc.uba.ar}

\address{Universit\'e Paris Cit\'e, CNRS, IRIF, France}
\email{kesner@irif.fr}

\address{University of Tartu, Estonia}
\email{mariana.milicich@ut.ee}



\begin{abstract}
  \emph{Useful evaluation} is an 
  optimised evaluation mechanism for functional programming languages.
  It relies on representing terms with \emph{sharing} and imposing a 
  restricted notion of \emph{useful substitutions}, that intuitively 
  disallows copying subterms that do not contribute to the progress of 
  the computation.

  In particular, \emph{useful call-by-value evaluation} optimises the
  standard \emph{call-by-value} strategy by preserving
  its original semantics. 
  This preservation result has been shown by means of syntactical 
  rewriting techniques, by relying on a detailed analysis of the reduction rules and their interactions.
  Such arguments are technically involved and closely tied to their particular operational semantics, making them difficult to adapt to alternative variants or extensions of the calculi at play.

  In this work, we present the first semantic interpretation of useful
  call-by-value evaluation through a
  \emph{non-idempotent intersection type system} called
  $\typesystem$. 
  One first contribution of this paper is a characterisation of
  termination of useful call-by-value evaluation via $\typesystem$-typability. 
  That is, a term is typable in system $\typesystem$
  if and only if it terminates in the useful call-by-value strategy.
  As a second contribution, we show that system $\typesystem$ provides
  a \emph{quantitative} interpretation for useful call-by-value
  evaluation, offering \emph{exact} step-count information for program
  evaluation. 
  Even
  though the specification of the operational semantics of useful
  evaluation is highly complex, system $\typesystem$ is notably
  simple.  As far as we know, system $\typesystem$ is one of the
  scarce quantitative type systems capturing exactly the substitution
  step-count for variables \emph{and} abstractions in an open
  call-by-value strategy.

  Our third contribution is a termination-equivalence result between standard call-by-value, which may perform erasing reductions, and useful call-by-value, which is non-erasing. 
  To establish this result, we introduce a non-erasing restriction of standard call-by-value, a refinement that serves as an intermediate calculus bridging standard and useful call-by-value. We show that all three calculi have the same termination behaviour, and hence induce the same \emph{observational equivalence}.
\end{abstract}

\maketitle

\section{Introduction}
\label{sec:intro}

Efficiently implementing an evaluation strategy of a programming language requires designing an operational semantics that faithfully captures the original model, while optimising its key operations.
The resulting model usually differs significantly from the
original one at the operational level, thus, it is not clear that the
former preserves the \emph{observational behaviour} of the latter:
given any program, is its observable behaviour in the optimised model
indistinguishable from its behaviour in the original model?

Call-by-need (\CBNeed) optimising call-by-name (\CBN) is a typical and
well-known example of this phenomenon, and both strategies 
have been shown to be observationally equivalent~\cite{AriolaFMOW95,MaraistOTW95,Kesner16}.

Another recent example of an optimised strategy is \emph{useful evaluation}, introduced by Accattoli
and Dal Lago~\cite{AccattoliL14}.
This notion is based on a graph 
\emph{representation of $\lambda$-terms} which implements
\emph{sharing} by means of \emph{explicit substitutions}\footnote{See~\cite{Kesner2009} for a survey on ESs.} (ESs). 
Useful evaluation aims to perform term duplication only
when it 
\emph{contributes to the progress of the computation}, that is, to 
the creation of $\beta$-redexes.
Defining useful evaluation is challenging because it must precisely capture what it means for a reduction step to ``contribute to the progress of the computation''. 
This contribution is not a property that can always be determined \emph{directly} from the term being evaluated, but rather \emph{indirectly} because it depends on how the current step affects the subsequent ones. 
Deciding whether a reduction step is useful therefore requires information about the evaluation context in which it occurs, making useful evaluation inherently \emph{context-sensitive} by nature.

In the call-by-value (\CBV) framework, useful evaluation can be seen 
as an optimised implementation of standard \CBV that effectively 
preserves its original semantics~\cite{AccattoliC15,AccattoliG17}.
This preservation result has been proved by means of syntactical tools
based on rewriting theory and abstract machines.

In general, proving equivalence between two calculi through operational reasoning 
is inherently complex, often relying on 
various syntactic techniques such as sharing, 
unfolding, residual theory, standardisation, among others. 
Moreover, syntactic proofs of observational equivalence are often 
\emph{ad hoc} and difficult to generalise to alternative variants of 
the calculi at play. 
In contrast, the semantic methods used in this work have provided abstract
and conceptual proofs when applied before to other cases; this is, for 
example, the emblematic observational equivalence result between \CBNeed and 
\CBN achieved through non-idempotent intersection typing~\cite{Kesner16},
which illustrates how type-theoretic methods can be used to reason about the equivalence of calculi.
Here, we extend these methods to the \CBV framework, focusing in particular on usefulness.

\paragraph{\textbf{Qualitative and Quantitative Interpretations.}} 
Building on the type-theoretic methodology mentioned above, this work relies on non-idempotent intersection type systems for capturing useful \CBV evaluation.
\defn{Intersection types} extend simple types 
with an intersection type constructor $\cap$, 
allowing a program $\tm$ to be typed with $\typ \cap \typtwo$ if $\tm$ types
with both $\typ$ and $\typtwo$ independently. 
Initially introduced as \emph{models} to capture
\emph{qualitative} computational
properties of functional programs~\cite{CoppoDezani78,CoppoD80},
intersection type systems are able, in particular, to
characterise termination of evaluation strategies: a program $\tm$
terminates in a given evaluation strategy if and only if $\tm$ is
typable in an associated type system.

In the original formulation of intersection types, the type
constructor $\cap$ is considered to be \defn{idempotent}, meaning
that $\typ \cap \typ = \typ$. 
So, an intersection $\typ_1 \cap \hdots \cap \typ_n$ can be 
represented as the \emph{set} $\set{\typ_1,\hdots,\typ_n}$. 
More recent works have adopted a \defn{non-idempotent} intersection 
type constructor~\cite{Gardner94,deCarvalho2007,deCarvalho18}. 
In this setting, an intersection $\typ_1 \cap \hdots \cap \typ_n$ can be 
represented as the \emph{multiset} $\mset{\typ_1,\hdots,\typ_n}$. 

Like idempotent intersection types,
non-idempotent intersection types still allow 
characterising operational properties of programs by means of 
typability but also yield a substantial improvement: they provide 
\emph{quantitative} measures about these 
properties~\cite{Gardner94,deCarvalho2007,deCarvalho18}.
For instance, from a typing derivation of a program $\tm$, it is possible to extract an
\emph{upper bound}, or even an \emph{exact measure}, for the
number of steps which is necessary to reach the normal form of $\tm$;
such information comes from calculating the \emph{size} of the typing derivation of $\tm$. Furthermore, exact measures can be obtained in the so-called
\emph{tight} quantitative type systems~\cite{AccattoliGK18}, by
decorating the typing judgements with \emph{counters}, and by imposing
a \emph{tightness} condition on typing derivations to ensure that
they are \emph{minimal}.
For instance, the counter $m$ may be used in the typing judgement 
$\tctx \vdash^m \tm : \typ$ to capture the fact that $\tm$ evaluates in
exactly $m$ reduction steps to a normal form.

It is worth noticing that semantic models of computation are often \defn{qualitative}, in that they
capture \emph{extensional} properties of programs, such as
observational equivalence. 
In contrast, type systems based on non-idempotent intersection types refine this perspective by
capturing \emph{intensional} properties of programs, such as the
length of reductions to normal form, and are therefore considered to
be \defn{quantitative}.

To (quantitatively) characterise termination for a given evaluation
strategy in non-idempotent intersection type systems, one relies on the results of
\emph{(quantitative) soundness} and \emph{(quantitative) completeness} of the type system with
respect to the corresponding evaluation strategy. 
Here, \defn{quantitative soundness} means that for any 
type derivation of a program $\tm$ with counter (or size) $\cm$, the program 
$\tm$ evaluates to normal form in \emph{at most} $\cm$ steps. 
Conversely, \defn{quantitative completeness} means that if a program $\tm$
evaluates to normal form in $\cm$ steps, then $\tm$
can be typed with a derivation whose size
provides an upper bound on $\cm$.
By restricting typing derivations to the so-called tight ones, it is possible to refine both quantitative soundness and completeness results.
\defn{Exact soundness} means that for any \emph{tight} type derivation of a program $\tm$ with counter $\cm$, the program $\tm$ evaluates to normal form in \emph{exactly} $\cm$ steps. 
\defn{Exact completeness} means that every reduction sequence to normal form of a given length $\cm$ corresponds to a \emph{tight} typing derivation with appropriate counter $\cm$.

Quantitative type systems based on non-idempotent intersection types have been applied to
various evaluation strategies in the $\lambda$-calculus, providing 
upper bounds and exact measures of evaluation lengths~\cite
{BernadetGL13,BucciarelliKV17,AccattoliGK18,deCarvalho18,KesnerViso22}.
This line of work was also developed in classical calculi~\cite
{KesnerVial17,KesnerVial20}, call-by-value~\cite
{Ehrhard12,AccattoliGuerrieri18,Guerrieri18,AccattoliGuerrieri22},
call-by-need~\cite
{BalabonskiBBK17,BarenbaumBM18,AccattoliGL19,Leberle21},
call-by-push-value~\cite{BucciarelliKRV20}, languages with
effects~\cite{AlvesKR23}, linear
logic~\cite{CarvalhoPF11,CarvalhoF16}, among others. 
The quantitative nature of these systems stems from non-idempotency: multiple occurrences of a type are kept distinct rather than collapsed, allowing typing derivations to retain information about the precise use of computational resources. As a consequence, typability can characterise qualitative properties such as termination, while suitable quantitative information extracted from typing derivations can provide bounds, or even exact measures, of evaluation lengths.

Beyond their operational use, these type systems also admit a semantic
reading.  Non-idempotent intersection types can be understood as
syntactic presentations of the relational model of linear
logic~\cite{Girard88,BucciarelliE01}.  This connection gives them a
dual role: they provide a denotational account of programs while
retaining sufficiently fine quantitative information to describe their operational behaviour.

\paragraph{\textbf{Motivation and Goal of this Paper.}}
In the framework of Plotkin's \CBV~\cite{Plotkin75}, non-idempotent intersection type systems
have been introduced by Ehrhard~\cite{Ehrhard12}, and then extended to
different calculi with sharing (see \eg~\cite{AccattoliGuerrieri18,AccattoliGL19,BucciarelliKRV20,KesnerViso22,AccattoliGuerrieri22}), none of which captures useful \CBV evaluation. 
All these type systems characterise termination of the 
corresponding \CBV calculus, but their underlying notion of 
\emph{tightness} does not appear to extend naturally to \emph{useful} \CBV. 
Said differently, useful \CBV has so far lacked a quantitative semantic interpretation, and existing quantitative interpretations of \CBV have not been shown to capture usefulness.
The problem of finding a semantic interpretation of useful \CBV capable of 
providing the \emph{exact} number of steps to reach a normal form 
thus remained open.
The motivation of this work is then to fill this gap, providing, to the best of our knowledge,
the first semantic account of useful \CBV.

Qualitative semantic models are unable to distinguish between
intensional/quantitative properties of alternative implementations,
such as \CBN vs \CBNeed, or \CBV vs useful \CBV.  In particular, we
aim to \textbf{deepen the understanding} of useful \CBV evaluation by
developing tools for reasoning \emph{quantitatively} about program
behaviour.  To this end, we develop a sound and complete
\textbf{quantitative interpretation} of useful \CBV taking the form of a \textbf{non-idempotent intersection type system} that admits a \emph{tight} extension. 

A key requirement of such an interpretation is the ability to distinguish between different kinds of reduction steps. 
Indeed, it is well-known that the number of function application steps and substitution steps may differ exponentially, 
a phenomenon known as the \emph{size explosion problem} (see \eg~\cite{AccattoliL14}).
Our quantitative interpretation is therefore designed to account separately for these two forms of reduction. 
This distinction is particularly relevant to the analysis of cost models and the design of optimisation techniques.

\paragraph{\textbf{Contributions and Structure of the Paper.}} 

Our contributions are twofold, encompassing both qualitative and quantitative results. 
On the qualitative side, we use semantic methods based on non-idempotent intersection types to establish the observational equivalence of \CBV and useful \CBV. 
This shows, in particular, that useful evaluation can be understood as an optimisation of \CBV that preserves its observational behaviour. 
On the quantitative side, we provide an exact account of useful \CBV evaluation through a new non-idempotent intersection type system $\typesystem$, whose typing derivations precisely capture the number of function application and substitution steps performed during evaluation. 
The remainder of this section presents these contributions
in more detail while outlining the structure of the paper.

Section~\ref{sec:preliminaries} recalls Accattoli and Paolini's
\emph{Value Substitution Calculus} (\VSC)~\cite{AccattoliP12}, the
formal framework serving as the starting point for studying the
operational semantics of \CBV in this work. 
Whereas most functional programming languages rely on reduction for \emph{closed} terms
(\ie~without free variables), reduction in \VSC applies also to
\emph{open} terms (\ie~with free variables).
Adopting an open \CBV setting broadens the applicability of our results, as open \CBV serves
as a foundational framework for more complex evaluation models, such as
\emph{strong} \CBV ~\cite{GregoireL02}, which is used to implement
proof assistants based on dependent type theory.

Section~\ref{sec:preliminaries} also recalls system $\systemV$~\cite{BucciarelliKRV20}, the non-idempotent intersection type system extending the relational semantics of \CBV (see \eg~\cite{Ehrhard12}).

\paragraph{\textbf{The qualitative results.}}

Since \VSC is a (potentially) \emph{erasing} calculus (\ie~values can be erased during computation), while useful \CBV is not,
relating their operational semantics is not straightforward.
To overcome this barrier, we make an intermediate step and define in Section~\ref{sec:non-erasing} a non-erasing variant of \VSC
we dub \VSCne, which enjoys in particular the following property:
\begin{center}
\textbf{(Prop$_{\mathbf{1}}$)}: 
  \VSC-termination is equivalent to \VSCne-termination
\end{center}
The proof of this property follows the methodology introduced in~\cite{Kesner16}
and relies on the non-idempotent intersection type system $\systemV$
(\cf~Section~\ref{sec:preliminaries}): a term $t$ is typable in $\systemV$ if and only if it is \VSC-terminating (resp. \VSCne-terminating). While the former characterisation is already known~\cite{BucciarelliKRV20}, the latter is relatively routine and is established in Section~\ref{sec:non-erasing}
(\autoref{cor:v-characterises-vscne}).

In Section~\ref{sec:ucbv}, we recall
an inductive formulation of the useful \CBV evaluation strategy,
called \UOCBV, recently defined~\cite{BKM-CADE25}.
In Section~\ref{sec:systemU}, we define a \emph{new} non-idempotent
intersection type system called $\typesystem$.
System $\typesystem$ is first shown to
capture \VSCne-termination
(\autoref{cor:u-characterises-vscne}), which implies in
particular that $\systemV$ and $\typesystem$ are equally expressive
(\autoref{cor:v-and-u}).

In Section~\ref{sec:ucbv-U}, system $\typesystem$ is also shown to capture \UOCBV-termination (\autoref{cor:qualitative_characterisation_of_termination}), 
thus solving the lack of a high-level, semantic characterisation of useful \CBV.

In Section~\ref{sec:obs_eq} we put together all the previous results:
following again the methodology from~\cite{Kesner16}, we have that
termination in \VSCne is thus equivalent to typability in system
$\typesystem$, which is in turn equivalent to \UOCBV-termination:
\begin{center}
\textbf{(Prop$_{\mathbf{2}}$)}: 
  \VSCne-termination is equivalent to \UOCBV-termination
\end{center}

From \textbf{(Prop$_{\mathbf{1}}$)} and \textbf{(Prop$_{\mathbf{2}}$)} we thus obtain
that \VSC, \VSCne, and \UOCBV are termination-equivalent
(\autoref{coro:main_result}), leading to conclude that \VSC, \VSCne, and \UOCBV are pairwise observationally equivalent (\autoref{cor:obseq}), so that, in particular,
\UOCBV evaluation can be seen as an optimisation of \CBV that preserves its original semantics.
A diagram summarising all our (qualitative) equivalence results can be found in~\autoref{fig:diagram_results}. 

\begin{figure}[t]
  \centering
  \begin{tikzpicture}[
    node distance=1cm and 2.5cm,
    mydoublearrow/.style={{Latex}-{Latex}},
    mydoublearrowdots/.style={{Latex}-{Latex},dotted},
  ]

  \node (vsc) {\VSC};
  \node[right=of vsc] (vscne) {\VSCne};
  \node[right=of vscne] (ucbv) {\UOCBV};
  \node[above=of vscne] (systemV) {System $\systemV$};
  \node[below=of vscne] (systemU) {System $\typesystem$};

  \draw[mydoublearrowdots] (vsc) -- (vscne);
  \draw[mydoublearrowdots] (vscne) -- (ucbv);
  \draw[mydoublearrow] (systemU) -- node[below right] {\tiny \autoref{cor:qualitative_characterisation_of_termination}} (ucbv);
  \draw[mydoublearrow, color = cyan!60!black] (systemV) -- node[above left] {\tiny \cite[Theorem~4.24]{BucciarelliKRV20}} (vsc);
  \draw[mydoublearrow] (systemV) -- node[right] {\tiny \autoref{cor:v-characterises-vscne}} (vscne);
  \draw[mydoublearrow] (systemU) -- node[left] {\tiny \autoref{cor:u-characterises-vscne}} (vscne);
  \end{tikzpicture}
\caption{Correspondence between \VSC, \VSCne, and \UOCBV through Systems $\systemV$ and $\typesystem$}
\label{fig:diagram_results}
\end{figure}

\paragraph{\textbf{The quantitative results.}} 

As a second main contribution, we refine system $\typesystem$ to provide
\emph{exact} quantitative information of \UOCBV evaluation. 
Indeed, we equip typing judgements of system $\typesystem$ with counters that are natural numbers, and we define a notion of \emph{tightness}, intuitively 
capturing minimality of derivations. 
The counters are meant to capture the exact step-count for program evaluation. 
More precisely, we show that a term $\tm$ is \emph{tightly} typable 
with counters $\cm$ and $\ce$ in system $\typesystem$ if and only if 
$\tm$ terminates in \UOCBV in exactly $\cm$ function 
application steps and $\ce$ substitution steps
(\autoref{cor:quantitative_characterisation_of_termination}).

In particular, system $\typesystem$ is one of the scarce quantitative 
type systems capturing exactly the substitution step-count for 
variables and abstractions in an open \CBV strategy.

\section{Preliminaries}
\label{sec:preliminaries}

In this section, we start by recalling some general notions of
rewriting theory~(\autoref{subsec:general_notions}). 
We then review the \emph{Value Substitution Calculus}~(\VSC), a \CBV 
calculus introduced by Accattoli and Paolini~\cite{AccattoliP12} (\autoref{subsec:vsc}), 
as well as the type system $\systemV$ introduced in~\cite{BucciarelliKRV20} (\autoref{subsec:systemV}).

\subsection{Preliminary Notions on Rewriting Theory}
\label{subsec:general_notions} 

Given a \defn{reduction system} $\relation$, we write:
\begin{itemize}
\item 
  $\rewrite\relation$ to denote the (one-step) reduction relation 
  associated with $\relation$
\item
  $\rewritensteps\relation{n}$ for the composition of $n$-steps of $\rewrite\relation$, also called an $\relation$-reduction sequence
\end{itemize}

A term $\tm$ is said to be:
\begin{itemize}
\item 
  \defn{$\relation$-irreducible}, or in 
  \defn{$\relation$-normal form}, written $\tm \not \rewrite\relation$,
  if there is no $\tmtwo$ such that $\tm \rewrite\relation \tmtwo$ 
\item
  \defn{$\relation$-weakly terminating} if there exists some $\relation$-normal form $\tmtwo$ such that 
  $\tm \rewriten\relation \tmtwo$
\item
  \defn{$\relation$-terminating} if there is no infinite 
  $\relation$-sequence starting at $\tm$
\item
  \defn{$\relation$-diamond} if $\tm \rewrite\relation \tm_0$ and 
  $\tm \rewrite\relation \tm_1$, with $\tm_0 \neq \tm_1$, imply there 
  is $\tm'$ such that $\tm_0 \rewrite\relation \tm'$ and 
  $\tm_1 \rewrite\relation \tm'$ 
\item
  \defn{$\relation$-confluent} if $\tm \rewriten\relation \tm_0$ and $\tm \rewriten\relation \tm_1$ 
  imply there is $\tm'$ such that $\tm_0 \rewriten\relation \tm'$ and 
  $\tm_1 \rewriten\relation \tm'$
\end{itemize} 
Moreover, we use the terminology \defn{$\relation$-evaluation} to talk about
a reduction sequence to $\relation$-normal form.

A reduction relation associated to $\relation$ is \defn{weakly-terminating} (resp. 
\defn{terminating}, \defn{diamond}, \defn{confluent}) if and only if 
every term is $\relation$-weakly terminating (resp. $\relation$-terminating, 
$\relation$-diamond, $\relation$-confluent). 

Any relation verifying the diamond property is in particular 
confluent.
Moreover, if $\tm$ is confluent, then its $\relation$-normal form, if 
it exists, is unique~\cite{Terese03}.

\subsection{The Value Substitution Calculus}
\label{subsec:vsc}

It is well-known that naive extensions of Plotkin's \CBV~\cite{Plotkin75}
to \emph{open \CBV} do not enjoy adequacy, in the sense that normal
forms are not necessarily \emph{solvable}, \ie ``meaningful''~\footnote{
See~\cite{PaoliniR99,Garcia-PerezN16,AccattoliGuerrieri22} for a general
discussion on \CBV solvability.}.
A typical example is the term $\tm_0 \eqdef (\lam \varthree \omega)\, (\vartwo \vartwo)\, \omega$, where $\omega \eqdef \lam \var \var\,\var$. 
Indeed, $\tm_0$ is a normal form, but is not solvable, as it is observationally equivalent to the unsolvable (and thus meaningless) term $\Omega \eqdef \omega\, \omega$.  
This weakness of naive extensions of Plotkin's \CBV to open terms
is widely accepted in the literature on
\CBV, which provides many alternative proposals to extend \CBV to the open setting
\cite{AccattoliG16,AccattoliP12,CarraroGuerrieri2014,GregoireL02,HerbelinZimmermann2009}. 
Among them, we chose the formalism in~\cite{AccattoliP12}, called here the \emph{Value Substitution Calculus}~(\VSC),
as it has good rewriting properties.
Indeed, \VSC is a non-trivial extension of Plotkin's \CBV to the open setting, using 
explicit substitutions and reduction \emph{at a distance} to recover adequacy.

Given a denumerable set of \defn{variables} 
($\var,\vartwo,\varthree,\hdots$), the syntax of \VSC is 
specified by:
\[ 
  \begin{array}{rlll}
    \mbox{\defn{(Terms)}} & \tm, \tmtwo  & \eqgram & \var \mid \lam\var\tm \mid \tm\,\tmtwo \mid \tm\esub\var{\tmtwo} \\
    \mbox{\defn{(Values)}} & \val, \valtwo & \eqgram & \var \mid \lam\var\tm
  \end{array}
\]
The set of terms includes \defn{variables}, \defn{abstractions},
\defn{applications}, and \defn{closures} $\tm\esub\var\tmtwo$, 
representing an \defn{explicit substitution} (ES) $\esub\var\tmtwo$ 
affecting a term $\tm$. 

Another recurring term throughout this article, besides $\omega$ and $\Omega$ mentioned before, is the identity function $\id \eqdef \lam\var\var$.

\defn{Free} and \defn{bound occurrences} of variables are defined as 
usual, where free occurrences of $\var$ in $\tm$ are bound in 
$\tm\esub\var\tmtwo$.
Terms are considered up to $\alpha$-renaming of bound variables.

We also use the following notions of contexts:
\[ 
  \begin{array}{rlll}
    \mbox{\defn{(Substitution contexts)}} & \sctx, \sctxtwo & \eqgram & \ctxhole \mid \sctx\esub\var\tm \\
    \mbox{\defn{(Weak contexts)}} & \wctx & \eqgram & \ctxhole \mid \wctx\, \tm \mid \tm \, \wctx \mid \wctx\esub\var{\tm} \mid \tm\esub\var{\wctx}
  \end{array}
\]

Substitution contexts are lists of ESs, and constitute a particular
class of weak contexts. 
We write $\ctxof \wctx \tm$ for 
the \defn{replacement} of the hole $\ctxhole$ in $\wctx$ by $\tm$, 
which may capture the free variables of $\tm$. This notation is
simplified  for substitution contexts, for which we may write $\tm\sctx$
instead of $\ctxof \sctx \tm$.
For example, if 
$\sctx = \esub\var{\vartwo\,\vartwo}\esub\vartwo\varthree$ and 
$\tm = \var$, then 
$\tm\sctx = \var\esub\var{\vartwo\,\vartwo}\esub\vartwo\varthree$.

We also consider a \defn{meta-level operation} $\tm\sub\var\tmtwo$
which substitutes all the free occurrences of 
$\var$ by $\tmtwo$ in $\tm$. 
Thanks to $\alpha$-renaming, we
can assume that this substitution operation
is \textit{capture-free}. 
The sets of \defn{free variables} of
a \defn{term} ($\fv\tm$) and of a
\defn{substitution context} ($\fv\sctx$) are defined as expected.

\defn{Reduction in \VSC} is defined 
by the following two rewriting rules, closed by arbitrary \emph{weak} 
contexts, \ie reduction is not allowed under 
abstractions:
\[
    (\lam\var\tm)\sctx\,\tmtwo
    \todb 
    \tm\esub\var\tmtwo\sctx
  \bsep
    \tm\esub\var{\val\sctx}
    \tosv
    \tm\sub\var\val\sctx
\]

The $\dbsym$-rule (``$\mathsf{d}$istant $\mathsf{b}$eta'') is a 
$\beta$-like rule that applies an abstraction $\lam\var\tm$ 
to an argument $\tmtwo$, creating an ES $\esub\var\tmtwo$ that binds 
$\var$ to the argument $\tmtwo$.
This rule acts at a \emph{distance}, meaning that an arbitrary list of ESs 
($\sctx$) may be in between the abstraction and its argument.
We assume
the rule to be capture-free due to $\alpha$-renaming (that is,
the set $\fv{\tmtwo}$ is not
affected by $\sctx$).
An example of $\dbsym$-reduction is: 
$(\lam\var{\lam\vartwo\var}) \, \tm \, \tmtwo
\todb (\lam\vartwo\var)\esub\var\tm \, \tmtwo
\todb \var\esub\vartwo\tmtwo\esub\var\tm$.

The $\svsym$-rule (``$\mathsf{s}$ubstitution of $\mathsf{v}$alues'')
fires an ES when $\var$ is bound to a value $\val$ surrounded by a context $\sctx$.
This performs the meta-level substitution of $\var$ by 
$\val$, potentially producing $n \geq 0$ copies of $\val$, while 
it extrudes the substitution context $\sctx$, which must remain 
shared.
As with the $\dbsym$-rule, we also assume the $\svsym$-rule to be capture-free due to $\alpha$-renaming. 
Two examples of $\svsym$-reduction follow:
on the first hand, 
$\var\esub\vartwo\varthree \tosv \var\sub\vartwo\varthree = \var$, which
results in the erasure of the value $\varthree$, while the step
$(\var\,\var)\esub\var{(\lam\vartwo\varthree)\esub\varthree\tm}
\tosv
((\lam\vartwo\varthree)\,\lam\vartwo\varthree)\esub\varthree\tm$
produces two copies of the ES $\lam\vartwo\varthree$ but keeps a 
single copy of $\esub\varthree\tm$.
We say that the first step is \emph{erasing}, while
the second one is \emph{non-erasing}.
Note that a term like $\var\esub\var{\vartwo \, \vartwo}$ is in 
\VSC-normal form because $\vartwo \, \vartwo$ is not a value.

The notion of \defn{reachable variable} of a term $\tm$ is 
relevant in the (weak) call-by-value setting because
reduction is closed under weak contexts.
The set of reachable variables $\rv\tm$ is defined as
the set of free variables that do not appear inside an abstraction.
Formally:
\[ 
  \begin{array}{lllllll}
    \rv\var                 & \eqdef & \set\var & \HS\HS
    \rv{\lam\var\tm}        & \eqdef & \emptyset
  \\
    \rv{\tm\,\tmtwo}        & \eqdef & \rv\tm \cup \rv\tmtwo & \HS\HS
    \rv{\tm\esub\var\tmtwo} & \eqdef & (\rv\tm \setminus \set\var) \cup \rv\tmtwo
  \end{array}
\]

For example, 
the set of free variables of the term $\var \esub\var\vartwo (\lam\var\varthree)$
is $\fv{\var \esub\var\vartwo (\lam\var\varthree)} = \set{\vartwo,\varthree}$,
whereas $\rv{\var \esub\var\vartwo (\lam\var\varthree)} = \set\vartwo$.

We end this section by introducing a working example that will be used throughout the paper.
\begin{exa}
\label{exa:working-example}
Let $\tm \eqdef (\var \, \varthree)\esub\var{\vartwo\esub\vartwo\id}$.
Then \[ \tm \tosv (\vartwo \, \varthree)\esub\var{\vartwo}\esub\vartwo\id
 \tosv (\id\, \varthree)\esub\var{\id}\esub\vartwo\id
 \todb  \varfour \esub \varfour \varthree\esub\var{\id}\esub\vartwo\id
\tosv   \varthree \esub \varfour \varthree\esub\var{\id}\esub\vartwo\id \] 
\end{exa}


\subsection{The Type System \texorpdfstring{$\systemV$}{V}}
\label{subsec:systemV}

We now recall the non-idempotent intersection type system $\systemV$ introduced in~\cite{BucciarelliKRV20},
which  extends the relational semantics of \CBV (see \eg~\cite{Ehrhard12}). 
More precisely, 
consider a set of \defn{base types} $(\basetype,\basetypetwo,\hdots)$.
The grammar of types of system $\systemV$ is given by:
\[
  \begin{array}{rlcl@{\sep\sep}rlcll}
    \text{\defn{(Types)}}       & \typV, \typtwoV & \eqgram & \basetype \mid \nityp \mid \nityp \to \typV
  \\
    \text{\defn{(Multi-Types)}} & \nityp, \nityptwo & \eqgram & \mset{\typV_i}_\iI & \text{where $I$ is a finite set}
  \\ 
  \end{array}
\] 

Like most non-idempotent intersection type systems, system $\systemV$ is inspired by Linear Logic~\cite{Girard87}. 
In particular, a multi-type $\mset{\typV_1,\hdots,\typV_n}$ can be understood as 
a multiplicative conjunction $\typV_1 \otimes \hdots \otimes \typV_n$.

We use $\length\nityp$ to denote the number of types in the multi-type $\nityp$.
\defn{Typing environments} $\tctx, \tctxtwo, \ldots$ are functions 
mapping variables to multi-types, assigning $\emset$ to all but 
finitely many variables.
The \defn{domain} of an environment $\tctx$ is
$\dom\tctx \eqdef \set{\var \mid \tctx(\var) \neq \emset}$, and
$\emptyctx$ denotes the empty typing environment, mapping each 
variable to $\emset$.

The \defn{union of multi-types}, written
$\nityp_1 \niunion \nityp_2$, is a multiset of types defined as
expected, where $\emset$ is the neutral element.
For typing environments $(\tctx_i)_\iI$ indexed over a finite set $I$, we write $+_\iI \tctx_i$
for the environment mapping each variable $\var$ to 
$\niunion_\iI \tctx_i(\var)$, where $\tctx + \tctxtwo$ and 
$\tctx +_\jJ \tctxtwo_j$ (meaning $\tctx + (+_\jJ \tctxtwo_j)$)
are taken to be particular instances of the general 
notation. 
When $\dom\tctx \cap \dom\tctxtwo = \emptyset$, we write 
$\tctx;\tctxtwo$ instead of $\tctx + \tctxtwo$ to emphasise that the 
domains of $\tctx$ and $\tctxtwo$ are disjoint. 
We write $\var_1:\nityp_1; \ldots; \var_n:\nityp_n$ to make explicit that each $\var_i$ is mapped to $\nityp_i$ ($i \in \set{1,\hdots,n}$). 
Note that the environment $\tctx;\var:\emset$ is identical to $\tctx$.

\defn{Typing judgements} are of the form
$\judgs\tctx\tm\typV$, where $\tctx$ is a type environment, $\tm$ is a term, and
$\typV$ is a type.
\defn{Typing rules} of system $\systemV$ are given in \autoref{fig:systemV}.

\begin{figure}[t]
  \begin{mathpar}
    \inferrule*[right = $\ruleTypVarV$]{
    }{
      \judgs{\var:\nityp}\var\nityp
    }

    \inferrule* [right = $\ruleTypAbsV$]{
      \left(
        \judgs{\tctx_i;\var:\nityp_i}\tm{\typV_i}
      \right)_\iI
    }{
      \judgs{+_\iI\tctx_i}{\lam\var\tm}{\mset{\nityp_i\to\typV_i}_\iI}
    }

    \inferrule* [right = $\ruleTypAppV$]{
      \judgs\tctx\tm{\mset{\nityp\to\typV}}
      \\
      \judgs\tctxtwo\tmtwo\nityp
    }{
      \judgs{\tctx + \tctxtwo}{\tm\,\tmtwo}\typV
    }
    
    \inferrule* [right = $\ruleTypESV$]{
      \judgs{\tctx;\var:\nityp}\tm\typV
      \\
      \judgs\tctxtwo\tmtwo\nityp
    }{
      \judgs{\tctx + \tctxtwo}{\tm\esub\var\tmtwo}\typV
    }
  \end{mathpar}
  \caption{Typing Rules of System $\systemV$}
  \label{fig:systemV}
\end{figure}

A \textbf{(type) derivation} is a tree built by applying the  (inductive) typing rules in~\autoref{fig:systemV}.
We write $\derivablep\deriv{\judgs\tctx\tm\typV}\systemV$ to denote 
a type derivation $\deriv$ in system $\systemV$ ending in the 
typing judgement $\judgs\tctx\tm\typV$. 
Sometimes, $\deriv$ or $\systemV$ may be omitted.
We say that a term $\tm$ is \defn{typable} in system $\systemV$ (or \defn{$\mathbf{\systemV}$-typable})
if there exists a typing derivation for $\tm$.

Multi-types in the conclusion of rule $\ruleTypAbsV$ reflect that \VSC-reduction does not proceed under abstractions.
For example, the derivation
$\derivable{}{\judgs\emptyctx{\lam\var\Omega}\emset}$ typing
the \VSC-normal form $\lam\var\Omega$ holds even if $\Omega$ is a
non-terminating subterm. 
Notice that rule $\ruleTypESV$ can be easily reconstructed from rules $\ruleTypAbsV$ and $\ruleTypAppV$.

Let us now consider a typing derivation for the term $\tm$ of our working~\autoref{exa:working-example}.
\begin{exa}
\label{ex:derivation-V}
Let $\nityp \eqdef \mset{\emset\to\emset}$, then we build the following derivation $\deriv$ for $\tm$:
\begin{mathpar}
  \inferrule*[]{
    \inferrule*[]{
      \inferrule*[]{
      }{
        \judgs
          {\var : \nityp}
          \var
          \nityp
      }{}
      \HS
      \inferrule*[]{
      }{
        \judgs
          {\varthree : \emset}
          {\varthree}
          \emset
      }{}
    }{
      \judgs
        {\varthree : \emset;\var:\nityp}
        {\var\,\varthree}
        \emset
    }{}
    \HS
    \inferrule*[]{
      \inferrule*[]{
      }{
        \judgs
          {\vartwo : \nityp}
          {\vartwo}
          \nityp
      }{}
      \HS
      \inferrule*[]{
        \inferrule*[]
          { }
          {\inferrule{}{\judgs{\varfour : \emset}\varfour\emset}}
      }{
        \judgs
          {\emptyctx}
          {\id}
          {\nityp}
      }{}
    }{
      \judgs
        {\emptyctx}
        {\vartwo\esub\vartwo\id}
        {\nityp}
    }{}
  }{
    \judgs
      {\varthree : \emset}
      {(\var \, \varthree)\esub\var{\vartwo\esub\vartwo\id}}
      \emset
  }{}
\end{mathpar}
\end{exa}

Next, let us focus on the quantitative information we can retrieve from $\systemV$. 
We measure the \defn{size of typing derivations} in system $\systemV$
using the function $\sizeV{\_}$, as defined in~\autoref{fig:sz_V}.

\begin{figure}[ht]
  \[
    \begin{aligned}
      \sizeV{
        \inferrule{
        }{
          \judgs{\var:\nityp}\var{\nityp} 
        }\ruleTypVarV
      } 
      & \eqdef \length\nityp
    \\
      \sizeV{
        \inferrule{
          (\deriv_i)_\iI
        }{
          \judgs\tctx{\lam\var\tm}{\mset{\typV_i}_\iI}
        }\ruleTypAbsV
      }
      & \eqdef \length{I} + \sum_\iI \sizeV{\deriv_i} 
    \\
      \sizeV{
        \inferrule{
          \deriv_1
          \HS
          \deriv_2
        }{
          \judgs\tctx{\tm\,\tmtwo}{\typV}
        }\ruleTypAppV
      }
     & \eqdef 1 + \sizeV{\deriv_1} + \sizeV{\deriv_2}
    \\
      \sizeV{
        \inferrule{
          \deriv_1
          \HS
          \deriv_2
        }{
          \judgs\tctx{\tm\esub\var\tmtwo}{\typV}
        }\ruleTypESV
      }
      & \eqdef
        \begin{cases*}
          1 + \sizeV{\deriv_1} + \sizeV{\deriv_2} & if $\var \in \fv{\tm}$
        \\
          \sizeV{\deriv_1} + \sizeV{\deriv_2}     & otherwise
        \end{cases*}
    \end{aligned}
  \]
  \caption{Size of Typing Derivations in System $\systemV$}
  \label{fig:sz_V}
\end{figure}

For example, the
size of the derivation $\deriv$ in~\autoref{ex:derivation-V} is $6$.
\medskip

We can now recall some properties of the type system $\systemV$ that will come in handy throughout the paper.
A first important property of $\systemV$ is that any typing derivation of a value can be split into smaller derivations, while several derivations of the same value can be merged into a single larger one:
\begin{lem}[Splitting/Merging for System $\systemV$]
\label{lem:splitting-merging-V}
Let $\nityp =\ \niunion_{\iI} \nityp_i$. The following statements are equivalent:
\begin{enumerate}
\item 
  $\derivablep\deriv{\judgs\tctx\val\nityp}\systemV$
\item 
  There exist typing derivations $(\deriv_i)_{\iI}$
  and typing environments $(\tctx_i)_{\iI}$ such that
  $\derivablep{\deriv_i}{\judgs{\tctx_i}\val{\nityp_i}}\systemV$
  and $\tctx = +_{\iI} \tctx_i$. 
\end{enumerate}
Moreover, $\sizeV{\deriv} = \sum_{\iI}\ \sizeV{\deriv_i} $.
\end{lem}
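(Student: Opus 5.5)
The proof goes by case analysis on the shape of the value $\val$, which is either a variable or an abstraction. In each case the last rule of any derivation of $\val$ is forced, since rules $\ruleTypAppV$ and $\ruleTypESV$ do not apply to values. Both directions, and the size equation, then come from inspecting the unique applicable rule. Note that a derivation of $\val$ necessarily assigns it a multi-type: rule $\ruleTypVarV$ yields $\nityp$, and rule $\ruleTypAbsV$ yields $\mset{\nityp_i\to\typV_i}_\iI$.

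\emph{Case $\val = \var$.} The only derivation of $\judgs\tctx\var\nityp$ is an axiom $\ruleTypVarV$ with $\tctx = \var:\nityp$. For (1)$\Rightarrow$(2), we take $\deriv_i$ to be the axiom $\judgs{\var:\nityp_i}\var{\nityp_i}$ and $\tctx_i = \var:\nityp_i$. Then $+_\iI\tctx_i = \var:\niunion_\iI\nityp_i = \tctx$. For (2)$\Rightarrow$(1), each $\deriv_i$ is forced to be such an axiom, so the sum of the environments is $\var:\nityp$ and the axiom for $\nityp$ works. In both directions the sizes agree because $\length{\nityp} = \sum_\iI\length{\nityp_i}$.

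\emph{Case $\val = \lam\var\tm$.} A derivation of $\judgs\tctx{\lam\var\tm}\nityp$ ends in $\ruleTypAbsV$ with premises $(\derivtwo_k)_{k\in K}$, where each $\derivtwo_k$ derives $\judgs{\tctxtwo_k;\var:\nityp'_k}\tm{\typV_k}$. Its conclusion is $\nityp = \mset{\nityp'_k\to\typV_k}_{k\in K}$ and $\tctx = +_{k\in K}\tctxtwo_k$.

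For (1)$\Rightarrow$(2), the decomposition $\nityp = \niunion_\iI\nityp_i$ induces a partition $K = \biguplus_\iI K_i$ such that $\nityp_i = \mset{\nityp'_k\to\typV_k}_{k\in K_i}$. We fix one such partition, choosing it consistently when types repeat. We then let $\deriv_i$ be the $\ruleTypAbsV$ rule applied to the premises $(\derivtwo_k)_{k\in K_i}$, and set $\tctx_i = +_{k\in K_i}\tctxtwo_k$. Since $+$ is associative and commutative, $+_\iI\tctx_i = \tctx$. For the size, $\sizeV{\deriv} = \length K + \sum_k\sizeV{\derivtwo_k}$, and this equals $\sum_\iI(\length{K_i} + \sum_{k\in K_i}\sizeV{\derivtwo_k})$.

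For (2)$\Rightarrow$(1), each $\deriv_i$ ends in $\ruleTypAbsV$ with premise family indexed by some $K_i$. We take $K$ to be the disjoint union of the $K_i$ and gather all these premises under a single $\ruleTypAbsV$ rule. This gives exactly $\judgs{+_\iI\tctx_i}{\lam\var\tm}{\niunion_\iI\nityp_i}$, and the size computation is the same as before.

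The only point needing care is the partition step in (1)$\Rightarrow$(2): we must check that a multiset decomposition of the conclusion type lifts to a decomposition of the index set of the premises. This holds because the elements of $\nityp$ are in bijection with $K$. Two degenerate cases should also be checked. When $I=\emptyset$ we have $\nityp=\emset$, and the empty-premise abstraction or the axiom $\judgs{}{\var}{\emset}$ gives size $0$. When some $\nityp_i = \emset$, the corresponding $K_i$ is empty and $\deriv_i$ is the nullary $\ruleTypAbsV$ rule, which has size $0$.
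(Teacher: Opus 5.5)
Your proof is correct and takes essentially the same route as the paper, which only calls both directions straightforward. Its detailed proof of the analogous splitting/merging lemma for system $\typesystem$ (\autoref{l:split-merge-value}) is exactly your argument: a case analysis on whether $\val$ is a variable or an abstraction, splitting the variable axiom's multi-type or partitioning the premise index set of $\ruleTypAbsV$, with sizes adding up. The only difference is that you handle an arbitrary finite family $(\nityp_i)_{\iI}$ directly, including the empty cases, rather than a binary split, and this is what the $\systemV$ statement actually asks for.
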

\begin{proof}
Both directions of the proof are straightforward.
\end{proof}

A second important property of $\systemV$, which will be used repeatedly throughout the paper, is that it is a \emph{relevant} type system:
\begin{lem}[Relevance]
\label{lem:relevance_systemV}
If $\derivablep\deriv{\judgs\tctx\tm\typV}\systemV$, then $\dom{\tctx} \subseteq \fv{\tm}$.
\end{lem}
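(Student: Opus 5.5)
The proof goes by structural induction on the derivation $\deriv$, with a case analysis on its last rule. We use that $\dom\tctx$ collects exactly the variables to which $\tctx$ assigns a non-empty multi-type, and that $\dom{\tctx+\tctxtwo} = \dom\tctx \cup \dom\tctxtwo$. The latter holds because a multiset union is empty iff both operands are empty.

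\textbf{Base and abstraction cases.} For rule $\ruleTypVarV$ the environment is $\var:\nityp$. Its domain is either $\set\var$ (when $\nityp \neq \emset$) or $\emptyset$ (when $\nityp = \emset$), and in both cases it is contained in $\fv\var = \set\var$. For rule $\ruleTypAbsV$, the induction hypothesis applied to each premise gives $\dom{\tctx_i;\var:\nityp_i} \subseteq \fv\tm$. Since $\tctx_i;\var:\nityp_i$ separates the binding of $\var$ from $\tctx_i$, we get $\var \notin \dom{\tctx_i}$, hence $\dom{\tctx_i} \subseteq \fv\tm \setminus \set\var = \fv{\lam\var\tm}$. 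Taking the union over $\iI$ yields the claim. This includes the case $I=\emptyset$, where the environment is $\emptyctx$ with empty domain.

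\textbf{Application and closure cases.} For rule $\ruleTypAppV$, the induction hypothesis gives $\dom\tctx\subseteq\fv\tm$ and $\dom\tctxtwo\subseteq\fv\tmtwo$. Therefore $\dom{\tctx+\tctxtwo}\subseteq \fv\tm\cup\fv\tmtwo = \fv{\tm\,\tmtwo}$. For rule $\ruleTypESV$ we argue as in the abstraction case for the left premise, obtaining $\dom\tctx \subseteq \fv\tm\setminus\set\var$. For the right premise the induction hypothesis gives $\dom\tctxtwo\subseteq\fv\tmtwo$. The union of the two is contained in $\fv{\tm\esub\var\tmtwo}$.

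No step is a real obstacle. The only point needing care is the disjoint-domain convention $\tctx;\var:\nityp$ in the binding rules, which ensures $\var\notin\dom\tctx$, so that the bound variable does not leak into the concluding environment.
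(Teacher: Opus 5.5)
Your proof is correct and is the same routine induction on $\deriv$ that the paper invokes; the paper does not spell out the cases. You are also right to single out the convention that $\tctx;\var:\nityp$ entails $\var\notin\dom\tctx$ even when $\nityp=\emset$: under a purely literal reading of the disjoint-domain notation, rule $\ruleTypAbsV$ would derive $\judgs{\var:\mset{\basetype}}{\lam\var\var}{\mset{\emset\to\mset{\basetype}}}$ and the lemma would fail, so your reading is the one the paper intends.
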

\begin{proof}
The proof is by induction on $\deriv$ and is straightforward.
\end{proof}

We now recall the main property of system $\systemV$, namely 
that $\systemV$-typability characterises \VSC-termination:
\begin{thmC}[{\cite[Theorem~4.24]{BucciarelliKRV20}}]
\label{t:charact-V-VSC}
  Let $\tm$ be a term. Then $\tm$ is $\systemV$-typable if and only
  if $\tm$ is \VSC-terminating.
\end{thmC}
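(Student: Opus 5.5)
The proof splits into the two directions of the equivalence. Both follow the standard non-idempotent intersection type methodology, using the size measure $\sizeV{\_}$ of \autoref{fig:sz_V} as the termination argument.

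\emph{Soundness} (typable implies \VSC-terminating). The plan is a quantitative subject reduction property: if $\derivablep\deriv{\judgs\tctx\tm\typV}\systemV$ and $\tm \rewrite{} \tm'$ by $\todb$ or $\tosv$, then there is $\derivablep{\deriv'}{\judgs\tctx{\tm'}\typV}\systemV$ with $\sizeV{\deriv'} < \sizeV{\deriv}$. Since sizes are natural numbers, every \VSC-sequence from a typable term is finite.

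The proof of subject reduction goes by induction on the weak context closing the step. The inductive cases just propagate the derivation, because every weak-context position is typed in $\systemV$; under an abstraction we are never forced to type, and the rules $\ruleTypAbsV$, $\ruleTypAppV$, $\ruleTypESV$ type all their subterms in weak position. The base cases are the following.
\begin{itemize}
\item For $\dbsym$, invert $\ruleTypAppV$ and then the rules for $\sctx$ and the singleton $\ruleTypAbsV$. Rebuild the derivation with $\ruleTypESV$ for $\esub\var\tmtwo$ and re-attach $\sctx$. One $\ruleTypAbsV$ node (cost $1$) and one $\ruleTypAppV$ node (cost $1$) disappear, replaced by at most one $\ruleTypESV$ node (cost $\le 1$), so the size strictly decreases.
\item For $\sv$, I need a substitution lemma. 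If $\judgs{\tctx;\var:\nityp}\tm\typV$ and $\judgs\tctxtwo\val\nityp$, then $\judgs{\tctx+\tctxtwo}{\tm\sub\var\val}\typV$, with size equal to the sum of the sizes minus $\length\nityp$. This is proved by induction on $\tm$, using Splitting/Merging (\autoref{lem:splitting-merging-V}) to distribute the derivation of $\val$ among the occurrences of $\var$; the variable axioms contribute exactly $\length\nityp$.
\end{itemize}

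The delicate point of the $\sv$ case is strictness when the step is erasing or $\nityp=\emset$. If $\var\in\fv\tm$, the $\ruleTypESV$ node costs $1$ and vanishes, so the size strictly drops. If $\var\notin\fv\tm$, then relevance (\autoref{lem:relevance_systemV}) forces $\nityp=\emset$. In that case $\val$ is typed with $\emset$ by a derivation of size $0$, and the $\ruleTypESV$ node has cost $0$, so the size does not drop. The fix is to use a lexicographic measure: the size first, then the number of ESs in weak position. An erasing $\sv$ step keeps the size and removes one ES; the substitution context $\sctx$ is only moved, not duplicated. This is the step I expect to be the main obstacle, and the lexicographic measure is what I would try first.

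\emph{Completeness} (\VSC-terminating implies typable). First, I will show that every \VSC-normal form is typable. Normal forms are characterised by a grammar of values and "inert" terms (variable-headed applications and ESs with non-value content). By induction on this grammar, I will show that every normal form has a derivation: values get $\emset$ via $\ruleTypAbsV$ with $I=\emptyset$ or $\ruleTypVarV$, and inert terms get a suitable type by giving head variables arrow types in the environment.

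Second, I will prove subject expansion: if $\tm\rewrite{}\tm'$ and $\tm'$ is typable, then $\tm$ is typable with the same judgement. The base cases need the converse of the substitution lemma (an anti-substitution lemma), again using Merging to collect the types of the occurrences of $\val$ into one multi-type. In the erasing case, $\val$ is typed with $\emset$, which is always possible.

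Since a \VSC-terminating term reaches a normal form in finitely many steps, induction on the length of that reduction sequence, together with subject expansion and typability of normal forms, gives typability of $\tm$.
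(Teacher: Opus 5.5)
The paper does not prove this statement itself; it imports it from \cite{BucciarelliKRV20}. Your plan follows the same scheme the paper uses for its \VSCne analogue (\autoref{cor:v-characterises-vscne}): weighted subject reduction, subject expansion via (anti-)substitution and Splitting/Merging, and typability of normal forms. The one genuine difference is the erasing $\svsym$-step.

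Your lexicographic measure (size, number of ESs in weak position) does handle that step, but two points need to be stated correctly. First, an erasing step only guarantees that the size does not increase; it does not keep it equal. The size can strictly drop when $\sctx$ binds a variable that is free only in the erased value. For instance, $\var\esub\varfour{\vartwo\esub\vartwo\varthree}\tosv\var\esub\vartwo\varthree$ loses the cost of $\esub\vartwo\varthree$. Second, for the context-closure cases you need that a \VSC-step never creates free variables. Then the conditional cost of an enclosing $\ruleTypESV$ node can only go from $1$ to $0$, so strict decrease and non-increase both propagate.

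There is a cheaper fix: charge every $\ruleTypESV$ node $1$ unconditionally. Then the vanishing node $\esub\var{\cdot}$ makes erasing steps strictly decrease as well, and plain weighted subject reduction suffices for \VSC. The conditional cost of \autoref{fig:sz_V} is what the \VSCne proof needs. There the ES $\esub\var\val$ survives the step with $\var$ no longer free, and an unconditional cost would not decrease when $\nityp=\emset$.

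For completeness, make the induction hypothesis on normal forms explicit, as in \autoref{prop:vscne_normal_forms_are_typable_V}. Variable-headed normal forms $\var\sctx$ must be typable with every multi-type, and inert terms with every type. With only ``every normal form is typable'', the case $\tm\esub\var\tmtwo$ with $\tmtwo$ inert does not close, because the derivation of the body fixes the multi-type that $\tmtwo$ must receive. Likewise, the argument of a variable-headed application fixes the arrow type its head must have. Your phrase ``a suitable type'' needs to become ``any prescribed type''.
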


In Section~\ref{sec:non-erasing}, we also show that
$\systemV$-typability characterises termination in
the Non-Erasing Value Substitution Calculus (\VSCne), a
calculus introduced in that same section as an
intermediate calculus between \VSC and \UOCBV. 
This second characterisation result (\autoref{cor:v-characterises-vscne}), 
together with~\autoref{t:charact-V-VSC}, yield an observational equivalence property between \VSC
and \VSCne through the type system $\systemV$, following the
methodology in~\cite{Kesner16}, as already discussed
in the introduction.

\section{The \texorpdfstring{\VSCne}{VSCne} Calculus}
\label{sec:non-erasing}

In this section, we introduce the \defn{Non-Erasing Value Substitution
Calculus} (\VSCne). 
As its name suggests, \VSCne modifies the \VSC substitution mechanism
so as to prevent erasing steps from being fired. 
For instance, the \VSC-step $(\var\,\var)\esub\vartwo\val \tosv \var\,\var$ is an
erasing step for any value $\val$, so that it will be ruled out in \VSCne. 
The \VSCne serves as an intermediate calculus between \VSC
and \UOCBV, as it modifies the operational semantics of \VSC without adding the full
machinery that useful evaluation requires to ensure computation progress. 
Indeed, erasing steps are not useful, as they do not contribute to the creation of $\dbsym$-redexes.

The section is organised as follows. 
In \autoref{subsec:vscne-op}, we define the \VSCne operational semantics 
and provide a characterisation of \VSCne-normal forms (\autoref{prop:vscne_characterisation_nf}).
Then in \autoref{subsec:vscne_V}, we show that $\systemV$-typability characterises \VSCne-termination (\autoref{cor:v-characterises-vscne}).

\subsection{Operational Semantics of \texorpdfstring{\VSCne}{VSCne}}
\label{subsec:vscne-op}
The \VSCne syntax is identical to that of \VSC, but its reduction relation differs. 
Indeed, it is defined by the following reduction rules, closed by arbitrary weak contexts:
\[
  \begin{array}{rlll}
      (\lam\var\tm)\sctx \, \tmtwo
    & \todb 
    & \tm\esub\var\tmtwo\sctx
  \\
      \tm\esub\var{\val\sctx}
    & \tosvne
    & \tm\sub\var\val\esub\var\val\sctx
    & \sep \text{ if $\var \in \fv{\tm}$}
  \end{array}
\]

Note that the $\dbsym$-rule is the same as the one used in \VSC. 
The $\svnesym$-rule modifies the $\svsym$-rule of \VSC 
by restricting substitution only to the cases where the variable $\var$ occurs free in $\tm$.
We assume the $\svnesym$-rule to be capture-free due to
$\alpha$-renaming (that is, $\var \notin \fv{\val}$ and the set
$\fv{\tm}$ is not affected by $\sctx$). 
For example, $\var\esub\vartwo\varthree$ reduces to $\var$ in \VSC but is
already a \VSCne-normal form. 
Additionally, the $\svnesym$-rule keeps track of the substitutions being performed. 
For instance, the term $(\var\,\var)\esub\var{\vartwo\esub\vartwo\varthree}$
$\svnesym$-reduces to
$(\vartwo\,\vartwo)\esub\var\vartwo\esub\vartwo\varthree$, thus tracking $\esub \var \vartwo$. 
This explicit tracing of substitutions plays an active role in calculi with \emph{linear} substitution (\cf~Section~\ref{sec:ucbv}).

\paragraph{\textbf{Characterisation of \VSCne-Normal Forms.}}
To characterise the set of \VSCne-normal forms, we start by defining the following grammar of terms:
\[ 
  \begin{array}{lll}
      \vvscne 
    & \eqgram 
    & \var \mid \vvscne \esub\var\nevscne \mid \vvscne \esub\var\novscne\ (\var \notin \fv\vvscne)
  \\
      \nevscne 
    & \eqgram 
    & \vvscne \, \novscne \mid \nevscne \, \novscne \mid \nevscne\esub\var\nevscne \mid \nevscne\esub\var\novscne\ (\var \notin \fv\nevscne)
  \\
      \absvscne 
    & \eqgram 
    & \lam \var \tm \mid \absvscne \esub\var\nevscne \mid \absvscne \esub\var\novscne\ (\var \notin \fv\absvscne)
  \\
      \novscne 
    & \eqgram 
    & \vvscne \mid \nevscne \mid \absvscne
  \end{array} 
\]

We define three predicates $\absPred{\_}$, $\appPred{\_}$, and
$\varPred{\_}$ needed in the forthcoming proof. 
For any term $\tm$, the predicate $\absPred\tm$ holds
if and only if $\tm = (\lam\var\tmtwo)\sctx$ for some variable $\var$, 
some term $\tmtwo$ and some substitution context $\sctx$; the 
predicate $\appPred\tm$ holds if and only if
$\tm = (\tmtwo \, \tmthree)\sctx$ for some terms $\tmtwo, \tmthree$ 
and some substitution context $\sctx$; and the predicate $\varPred\tm$
holds if and only if $\tm = \var\sctx$ for some variable
$\var$ and some substitution context $\sctx$.

\begin{prop}[Characterisation of \VSCne-Normal Forms]
\label{prop:vscne_characterisation_nf}
Let $\tm$ be a term.
Then, $\tm \in \novscne$ if and only if $\tm$ is \VSCne-irreducible.
\end{prop}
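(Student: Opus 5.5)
We prove the two implications separately, by induction on terms. First we record an auxiliary fact that follows directly from the grammar, by a straightforward induction: every $\tm \in \vvscne$ satisfies $\varPred\tm$, every $\tm \in \nevscne$ satisfies $\appPred\tm$, and every $\tm \in \absvscne$ satisfies $\absPred\tm$. In particular the three subcategories are pairwise disjoint, and membership in each is decided by the shape of the term under its outer substitution context.

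\emph{($\Rightarrow$) Every $\tm \in \novscne$ is \VSCne-irreducible.} We proceed by induction on the derivation of $\tm \in \novscne$. The irreducible terms are closed under weak contexts componentwise, so we only need two things: the immediate subterms are irreducible by the induction hypothesis, and $\tm$ itself is not a root redex.

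For the root check we use the auxiliary fact. Consider first an application $\vvscne\,\novscne$ or $\nevscne\,\novscne$. Its left component is not of the form $(\lam\var\tmtwo)\sctx$, so no $\dbsym$-redex appears at the root.

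Now consider a closure $\tmtwo\esub\var\tmthree$. A $\svnesym$-redex at the root would require $\tmthree = \val\sctx$ with $\var\in\fv\tmtwo$. In the grammar, $\tmthree$ is either in $\nevscne$, which is never $\val\sctx$ since it satisfies $\appPred{\_}$, or it is an arbitrary $\novscne$ with the side condition $\var\notin\fv\tmtwo$. In both cases the rule is blocked. Abstractions $\lam\var\tm$ are trivially irreducible, since reduction is weak.

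\emph{($\Leftarrow$) Every irreducible term lies in $\novscne$.} We proceed by induction on $\tm$.

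\begin{itemize}
\item A variable is in $\vvscne$, and an abstraction is in $\absvscne$.
\item For an application $\tmtwo\,\tmthree$, both components are irreducible, so by the induction hypothesis they are in $\novscne$. Since there is no root $\dbsym$-redex, $\absPred\tmtwo$ fails. By the auxiliary fact, $\tmtwo\notin\absvscne$, so $\tmtwo\in\vvscne\cup\nevscne$ and hence $\tm\in\nevscne$.
\item For a closure $\tmtwo\esub\var\tmthree$, both components are in $\novscne$ by the induction hypothesis. If $\var\notin\fv\tmtwo$, we use the clause carrying this side condition that matches the category of $\tmtwo$. If $\var\in\fv\tmtwo$, irreducibility forbids $\tmthree = \val\sctx$, so neither $\varPred\tmthree$ nor $\absPred\tmthree$ holds. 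This excludes $\vvscne$ and $\absvscne$, so $\tmthree\in\nevscne$, and the clause $\_\esub\var\nevscne$ of the appropriate category applies.
\end{itemize}

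The main point of care is the auxiliary fact, and in particular the observation that $\varPred\tmthree\vee\absPred\tmthree$ holds exactly when $\tmthree$ has the form $\val\sctx$. This is what links the grammar's side conditions to the redex conditions, and we prove it first by a simple induction over the grammar. We also need a remark for the side conditions $\var\notin\fv{\cdot}$: $\alpha$-renaming means that the capture-freeness assumptions of the rules never block a redex.
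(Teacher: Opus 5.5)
Your proof is correct and follows essentially the same route as the paper, which proves the statement by simultaneous structural induction on refined statements linking each of $\vvscne$, $\nevscne$, $\absvscne$ to irreducibility together with the shape predicates $\varPred{\_}$, $\appPred{\_}$, $\absPred{\_}$. Your preliminary shape lemma followed by two separate inductions is a mild repackaging of that same argument.
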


\begin{proof}
  The proof is by simultaneous
  induction on the following refined statements:
\begin{enumerate}
\item \label{it:ch_vr}
  $\tm \in \vvscne$ if and only if $\tm$ is \VSCne-irreducible and
  $\varPred{\tm}$
\item \label{it:ch_ne}
  $\tm \in \nevscne$ if and only if $\tm$ is \VSCne-irreducible and
  $\appPred{\tm}$
\item \label{it:ch_ab}
  $\tm \in \absvscne$ if and only if $\tm$ is \VSCne-irreducible and $\absPred\tm$
\item \label{it:ch_no}
  $\tm \in \novscne$ if and only if $\tm$ is \VSCne-irreducible \qedhere
\end{enumerate}
\end{proof}

\subsection{Characterisation of \texorpdfstring{\VSCne}{VSCne}-Termination through \texorpdfstring{$\systemV$}{V}-Typability}
\label{subsec:vscne_V}

In this subsection, we show that $\systemV$-typability characterises \VSCne-termination.
To do so, we prove soundness and completeness of the type system $\systemV$ with respect to \VSCne evaluation.
Both results depend, respectively, on demonstrating that \VSCne-reduction and \VSCne-expansion preserve $\systemV$-typability.

\paragraph{\textbf{Soundness.}}
We start by addressing the \defn{soundness} of system $\systemV$ with 
respect to \VSCne, which states that 
$\systemV$-typability implies \VSCne-termination. 
The soundness result requires a subject reduction property (\autoref{prop:vscne_wsr_V}), which is based in turn
on a substitution lemma (\autoref{lem:vscne_substitution_V}).

The forthcoming substitution lemma indicates that the meta-substitution operation does not alter $\systemV$-typability.
More precisely, given a typing derivation for a term $\tm$, performing a meta-substitution on $\tm$ yields a term with the same type, possible under a modified typing environment.
In addition, the lemma gives a quantitative bound on the size of the resulting typing derivation in terms of the sizes of the typing derivations of the terms involved in the meta-substitution.

\begin{lem}[Weighted Substitution]
\label{lem:vscne_substitution_V}
Let $\derivablep{\deriv_\tm}{\judgs{\tctx;\var:\nityp}\tm\typV}{\systemV}$
and $\derivablep{\deriv_\val}{\judgs\tctxtwo\val\nityp}{\systemV}$. 
Then, there exists a typing derivation $\deriv_{\tm\sub\var\val}$ 
such that 
$\derivablep{\deriv_{\tm\sub\var\val}}{\judgs{\tctx + \tctxtwo}{\tm\sub\var\val}\typV}{\systemV}$
and moreover
$\sizeV{\deriv_{\tm\sub\var\val}} = \sizeV{\deriv_\tm} + \sizeV{\deriv_\val} - \length\nityp$.
\end{lem}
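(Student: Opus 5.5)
I would prove the statement by induction on the derivation $\deriv_\tm$ (equivalently, on the structure of $\tm$), generalising over $\tctx$, $\nityp$, $\typV$, $\tctxtwo$ and $\deriv_\val$. The key ingredient is Splitting/Merging (\autoref{lem:splitting-merging-V}). Whenever the last rule of $\deriv_\tm$ has several premises among which the assumption $\var:\nityp$ is distributed as $\nityp = \niunion_{j} \nityp_j$, I will split $\deriv_\val$ into derivations $\deriv_\val^j$ typing $\val$ with $\nityp_j$ in environments $\tctxtwo_j$. These satisfy $\tctxtwo = +_j \tctxtwo_j$ and $\sizeV{\deriv_\val} = \sum_j \sizeV{\deriv_\val^j}$. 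I then apply the induction hypothesis to each premise, which is legitimate because $\deriv_\val^j$ types $\val$ with exactly the multi-type assigned to $\var$ in that premise. Finally I recombine with the same rule. The size equation follows by summing: the contributions $-\length{\nityp_j}$ add up to $-\length\nityp$, and the rule's own constant ($1$, or $\length I$ for $\ruleTypAbsV$) is unchanged.

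The cases are as follows.
\begin{itemize}
\item \textbf{Variable $\var$.} Here $\tctx=\emptyctx$, $\typV=\nityp$, and $\tm\sub\var\val=\val$. I take $\deriv_{\tm\sub\var\val} \eqdef \deriv_\val$. Since $\sizeV{\deriv_\tm}=\length\nityp$, the equation holds.
\item \textbf{Variable $\vartwo\neq\var$.} Then $\nityp=\emset$. By the splitting lemma with $I=\emptyset$, the derivation $\deriv_\val$ has $\tctxtwo=\emptyctx$ and size $0$. So I take $\deriv_\tm$ itself.
\item \textbf{Abstraction $\lam\vartwo\tmtwo$.} By $\alpha$-conversion, $\vartwo\notin\fv\val\cup\set\var$. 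Rule $\ruleTypAbsV$ has premises indexed by $I$, with $\var$ typed $\nityp_i$ in the $i$-th premise. I split $\deriv_\val$ accordingly and apply the induction hypothesis to each premise. When $I=\emptyset$, $\nityp=\emset$ and the previous argument applies.
\item \textbf{Applications and closures $\tmtwo\esub\vartwo\tmthree$.} These split $\nityp$ in two parts and are handled by the same pattern.
\end{itemize}

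The main obstacle is the closure case, because the size function of \autoref{fig:sz_V} adds $1$ only when $\vartwo\in\fv{\tmtwo}$. I must check that this side condition is invariant under substitution, i.e.\ that $\vartwo\in\fv{\tmtwo}$ iff $\vartwo\in\fv{\tmtwo\sub\var\val}$. This holds since $\vartwo\neq\var$ and $\vartwo\notin\fv\val$ by $\alpha$-renaming. With this, the $+1$ counts are preserved and the arithmetic goes through. A secondary subtlety is environment bookkeeping: in the premise typing $\tmtwo$ the environment has the form $\tctx';\vartwo:\nityptwo;\var:\nityp_1$. After applying the induction hypothesis, the new environment $\tctx'+\tctxtwo_1$ must still present $\vartwo$'s assignment separately for rule $\ruleTypESV$ (or $\ruleTypAbsV$) to apply. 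This requires $\vartwo\notin\dom{\tctxtwo_1}$, which follows from Relevance (\autoref{lem:relevance_systemV}), since $\dom{\tctxtwo_1}\subseteq\fv\val$ and $\vartwo\notin\fv\val$.
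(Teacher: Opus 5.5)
Your proposal is correct and takes essentially the same route as the paper. The paper defers to \cite[Lemma 4.18]{BucciarelliKRV23}, and proves the analogous \autoref{lem:vscne_substitution} for $\typesystem$ exactly as you do: induction on $\deriv_\tm$, with the variable case as base and the inductive cases via the induction hypothesis and Splitting/Merging, including your checks on the invariance of the side condition $\vartwo\in\fv\tmtwo$ and on $\vartwo\notin\dom{\tctxtwo_j}$ by Relevance.
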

\begin{proof}
See~\cite[Lemma 4.18]{BucciarelliKRV23}.
\end{proof}

\begin{exa}
\label{ex:substitution-V}
Consider the subderivation 
$\derivable
  {\deriv_{\var\,\varthree}}
  {\judgs
    {\varthree : \emset; \var:\nityp}
    {\var \, \varthree}
    \emset}$ 
within derivation $\deriv$ in \autoref{ex:derivation-V}.
Consider also the subderivation 
$\derivable{\deriv_{\vartwo}}{\judgs{\vartwo : \nityp}\vartwo\nityp}$. 
Then, the previous lemma guarantees the existence of a derivation 
$\derivable
  {\deriv_{\vartwo \, \varthree}}
  {\judgs
    {\varthree : \emset; \vartwo : \nityp}
    {\vartwo \, \varthree}
    \emset}$. 
Moreover, we have 
$\sizeV {\deriv_{\vartwo \, \varthree}} 
= 2 
= 2+1 - 1 
= \sizeV{\deriv_{\var \, \varthree}} + \sizeV{\deriv_{\vartwo}} - \length \nityp$.
\end{exa}

We now move on to the subject reduction property, which establishes that \VSCne-reduction preserves $\systemV$-typability.
Moreover, this result also provides quantitative information,
and therefore it is called \emph{weighted subject reduction}, 
as it proves that the size of typing derivations strictly decreases with respect to reductions.

\begin{prop}[Weighted Subject Reduction]
\label{prop:vscne_wsr_V}
Let $\derivablep\deriv{\judgs\tctx\tm\typV}\systemV$ and 
$\tm \toVSCne \tm'$. 
Then, there exists a derivation $\deriv'$ such that 
$\derivablep{\deriv'}{\judgs\tctx{\tm'}\typV}\systemV$ and 
$\sizeV\deriv > \sizeV{\deriv'}$.
\end{prop}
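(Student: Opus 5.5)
The proof goes by induction on the weak context $\wctx$ in which the \VSCne-step $\tm = \ctxof\wctx{r} \toVSCne \ctxof\wctx{r'} = \tm'$ is fired. We first handle the two root cases ($\wctx = \ctxhole$), then lift through the context closure.

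\emph{Root $\dbsym$-step.} Here $\tm = (\lam\var\tmthree)\sctx\,\tmtwo \todb \tmthree\esub\var\tmtwo\sctx$. We proceed by an auxiliary induction on $\sctx$.
\begin{itemize}
\item For $\sctx = \ctxhole$, the derivation ends with $\ruleTypAppV$. Its left premise types $\lam\var\tmthree$ with a singleton $\mset{\nityp\to\typV}$, so $\ruleTypAbsV$ has exactly one premise $\judgs{\tctx_1;\var:\nityp}\tmthree\typV$. The right premise is $\judgs\tctxtwo\tmtwo\nityp$.
\item We rebuild with $\ruleTypESV$ from these two premises. The old size is $1 + (1 + \sizeV{\deriv_\tmthree}) + \sizeV{\deriv_\tmtwo}$. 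The new size is at most $1 + \sizeV{\deriv_\tmthree} + \sizeV{\deriv_\tmtwo}$, so it decreases by at least one.
\item For $\sctx = \sctx'\esub\vartwo\tmthree'$, the left premise ends with $\ruleTypESV$. We permute the ES outward. Capture-freeness ($\vartwo\notin\fv\tmtwo$) together with \autoref{lem:relevance_systemV} ensures $\vartwo \notin \dom\tctxtwo$, so the environments recombine correctly. The size bookkeeping is unchanged, since the condition $\vartwo \in \fv{\cdot}$ is preserved by the rearrangement.
\end{itemize}

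\emph{Root $\svnesym$-step.} Here $\tm\esub\var{\val\sctx} \tosvne \tm\sub\var\val\esub\var\val\sctx$ with $\var\in\fv\tm$. We again induct on $\sctx$, pushing the ESs of $\sctx$ outward as above, so the core case is $\sctx = \ctxhole$.
\begin{itemize}
\item We have $\deriv_\tm \rhd \tctx;\var:\nityp \vdash \tm:\typV$ and $\deriv_\val \rhd \tctxtwo\vdash\val:\nityp$, and the old size is $1 + \sizeV{\deriv_\tm} + \sizeV{\deriv_\val}$.
\item The new term keeps an ES $\esub\var\val$, but $\var \notin \fv{\tm\sub\var\val}$ since $\var\notin\fv\val$. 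By relevance, the typing of the body assigns $\var$ the type $\emset$. So we type the copy of $\val$ with the empty multi-type, which is an $\ruleTypAbsV$ with no premises (size $0$) or $\ruleTypVarV$ with $\emset$ (size $0$), under the empty environment.
\item \autoref{lem:vscne_substitution_V} gives the body a derivation of size $\sizeV{\deriv_\tm} + \sizeV{\deriv_\val} - \length\nityp$.
\item The ES rule contributes no $+1$, because $\var\notin\fv{\tm\sub\var\val}$. The new size is therefore $\sizeV{\deriv_\tm}+\sizeV{\deriv_\val}-\length\nityp$, strictly smaller than the old size $1 + \sizeV{\deriv_\tm} + \sizeV{\deriv_\val}$.
\end{itemize}

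\emph{Context closure.} For the inductive cases $\wctx\,\tmtwo$, $\tmtwo\,\wctx$, $\wctx\esub\vartwo\tmtwo$ and $\tmtwo\esub\vartwo\wctx$, we apply the induction hypothesis to the premise containing the redex and keep the other premise unchanged. Two points need checking.
\begin{itemize}
\item The reduced subterm keeps its environment and type, so the rule reapplies.
\item The $+1$ in the ES size clause does not increase. This holds because \VSCne-steps never lose free variables: $\svnesym$ keeps $\esub\var\val$, and $\dbsym$ preserves $\fv{}$. Hence if $\vartwo\in\fv{\cdot}$ before, it still is afterwards, and if it was absent, no new occurrence is created.
\end{itemize}
If the redex lies in the argument of an ES or an application typed with $\emset$, a derivation still exists: the argument must be typed, possibly with $\emset$, and the induction hypothesis applies.

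The main obstacle is the $\svnesym$ case: getting exactly the right size accounting for the retained ES, and justifying that $\val$ can be typed with $\emset$ at cost $0$. This is where the non-erasing design, together with relevance, is essential. We would double-check that a variable value $\val=\vartwo$ typed with $\emset$ has size $\length\emset = 0$, which it does.
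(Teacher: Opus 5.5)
Your proposal is correct and follows essentially the paper's route. You argue by induction on the step, treat the root $\svnesym$ case by induction on $\sctx$, and use \autoref{lem:vscne_substitution_V}, relevance (to get $\var\notin\dom{\cdot}$), and an $\emset$-typing of the retained copy of $\val$ at size $0$. You also spell out the $\dbsym$ base case and the context-closure cases, including the free-variable preservation needed for the ES size clause, which the paper dismisses as straightforward.
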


\begin{proof}
We reason by induction on $\tm \toVSCne \tm'$. 
We only detail here the base case $\tm \tosvne \tm'$, as the inductive cases are straightforward.

Let $\tm = \tmtwo\esub\var{\val\sctx}$ and 
$\tm' = \tmtwo\sub\var\val\esub\var\val\sctx$, where $\var \in \fv\tmtwo$. 
We proceed by induction on the substitution context $\sctx$.
\begin{itemize}
\item $\sctx = \ctxhole$.
  Then, $\deriv$ has the form:
  \begin{mathpar}
    \inferrule* [right = $\ruleTypESV$]
      {\derivable{\deriv_\tmtwo}{\judgs{\tctx_1;\var : \nityp}\tmtwo\typV}
        \\ \derivable{\deriv_\val}{\judgs{\tctx_2}\val\nityp}
      }
      {\judgs{\tctx_1 + \tctx_2}{\tmtwo\esub\var\val}\typV}
  \end{mathpar}
  where $\tctx = \tctx_1 + \tctx_2$. 

  We can apply \autoref{lem:vscne_substitution_V} on $\deriv_\tmtwo$
  and $\deriv_\val$, yielding
  $\derivable
    {\deriv_{\tmtwo\sub\var\val}}
    {\judgs{\tctx_1 + \tctx_2}{\tmtwo\sub\var\val}\typV}$
  and $\sizeV{\deriv_{\tmtwo\sub\var\val}} 
  = \sizeV{\deriv_\tmtwo} + \sizeV{\deriv_\val} - \length\nityp$.

  Moreover, since $\var \notin \fv{\tmtwo\sub\var\val}$, then
  $\var \notin\dom{\tctx_1 + \tctx_2}$ by~\autoref{lem:relevance_systemV}. 
  We can then write 
  $\tctx_1 + \tctx_2$ as $(\tctx_1 + \tctx_2); \var : \emset$. 

  And it is straightforward to notice that for any value $\val$,
  we can build the typing derivation 
  $\derivable
    {\deriv'_\val}
    {\judgs\emptyctx\val\emset}$,
  with size $0$.
  Thus, we build the following derivation $\deriv'$:
  \begin{mathpar}
    \inferrule* [right = $\ruleTypESV$]
      {\derivable
        {\deriv_{\tmtwo\sub\var\val}}
        {\judgs{(\tctx_1 + \tctx_2);\var:\emset}{\tmtwo\sub\var\val}\typV}
        \\ \derivable{\deriv'_\val}{\judgs\emptyctx\val\emset}
      }
      {\judgs{\tctx_1 + \tctx_2}{\tmtwo\sub\var\val\esub\var\val}\typV}
  \end{mathpar}

  And we conclude that
  \[
    \begin{array}{rll}
        \sizeV\deriv 
      & = 1 + \sizeV{\deriv_\tmtwo} + \sizeV{\deriv_\val}
      & (\var \in \fv\tmtwo)
    \\
      & > \sizeV{\deriv_\tmtwo} + \sizeV{\deriv_\val} - \length\nityp
    \\
      & = \sizeV{\deriv_{\tmtwo\sub\var\val}}
    \\
      & = \sizeV{\deriv_{\tmtwo\sub\var\val}} + \sizeV{\deriv'_\val}
    \\
      & = \sizeV{\deriv'}
      & (\var \notin \fv{\tmtwo\sub\var\val})
    \end{array}
  \]
\item $\sctx = \sctxtwo\esub\vartwo\tmthree$.
  Then, $\deriv$ has the form:
  \begin{mathpar}
    \inferrule* [right = $\ruleTypESV$]
      {\derivable{\deriv_\tmtwo}{\judgs{\tctx';\var:\nityp}\tmtwo\typV}
        \\ \inferrule*[Right = $\ruleTypESV$]
              {\derivable{\deriv_{\val\sctxtwo}}{\judgs{\tctxtwo_1;\vartwo:\nityptwo}{\val\sctxtwo}\nityp} 
                \\ \derivable{\deriv_\tmthree}{\judgs{\tctxtwo_2}\tmthree\nityptwo}
              }
              {\judgs{\tctxtwo_1 + \tctxtwo_2}{\val\sctxtwo\esub\vartwo\tmthree}\nityp}
      }
      {\judgs
          {\tctx' + \tctxtwo_1 + \tctxtwo_2}
          {\tmtwo\esub\var{\val\sctxtwo\esub\vartwo\tmthree}}
          \typV 
      }
  \end{mathpar}
  where $\tctx = \tctx' + \tctxtwo_1 + \tctxtwo_2$.
  Moreover, we can assume $\vartwo \notin \fv\tmtwo$ by $\alpha$-conversion.
  Thus, we assume $\vartwo \notin \dom{\tctx'}$ by~\autoref{lem:relevance_systemV}.

  We now  build the following derivation $\derivtwo$:
  \begin{mathpar}
    \inferrule* [right = $\ruleTypESV$]
      {\derivable{\deriv_\tmtwo}{\judgs{\tctx';\var : \nityp}\tmtwo\typV}
        \\ \derivable{\deriv_{\val\sctxtwo}}
            {\judgs{\tctxtwo_1;\vartwo:\nityptwo}{\val\sctxtwo}\nityp}
      }
      {\judgs{\tctx' + (\tctxtwo_1;\vartwo:\nityptwo)}{\tmtwo\esub\var{\val\sctxtwo}}\typV}
  \end{mathpar}
  Moreover, $\tmtwo\esub\var{\val\sctxtwo} \tosvne \tmtwo\sub\var\val\esub\var\val\sctxtwo$,
  since $\var \in \fv\tmtwo$.
  Then, we can apply the \ih, yielding $\derivtwo'$ such that
  $\derivable{\derivtwo'}
    {\judgs{\tctx' + (\tctxtwo_1;\vartwo:\nityptwo)}{\tmtwo\sub\var\val\esub\var\val\sctxtwo}\typV}$
  and moreover, $\sizeV\derivtwo > \sizeV{\derivtwo'}$. 

  Since $\vartwo \notin \dom{\tctx'}$, we can write
  $\tctx' + (\tctxtwo_1;\vartwo:\nityptwo)$ as
  $(\tctx' + \tctxtwo_1);\vartwo:\nityptwo$.
  We build $\deriv'$ as follows: 
  \begin{mathpar}
    \inferrule* [right = $\ruleTypESV$]
      {\derivable{\derivtwo'}
        {\judgs{(\tctx' + \tctxtwo_1); \vartwo:\nityptwo}
          {\tmtwo\sub\var\val\esub\var\val\sctxtwo}\typV}
        \\
      \derivable{\deriv_\tmthree}{\judgs{\tctxtwo_2}\tmthree\nityptwo}
      }
      {\judgs{\tctx' + \tctxtwo_1 + \tctxtwo_2}
        {\tmtwo\sub\var\val\esub\var\val\sctxtwo\esub\vartwo\tmthree}\typV
      }
  \end{mathpar}

  To compare the sizes of both $\deriv$ and $\deriv'$, 
  we consider two cases, depending on whether $\vartwo \in \fv{\tmtwo\sub\var\val\esub\var\val\sctxtwo}$
  or not.
  Recall that $\var \in \fv\tmtwo$ by hypothesis.
  \begin{itemize}
  \item 
    If $\vartwo \notin \fv{\tmtwo\sub\var\val\esub\var\val\sctxtwo}$, 
    then in particular
    $\vartwo \notin \fv{\val\sctxtwo}$.
    Thus, we have
    \[
      \begin{array}{rll}
        \sizeV\deriv 
        & = 1 + \sizeV{\deriv_\tmtwo} + \sizeV{\deriv_{\val\sctxtwo}} + \sizeV{\deriv_\tmthree}
        & (\var \in \fv\tmtwo, \vartwo \notin \fv{\val\sctxtwo})
      \\
        & = \sizeV{\derivtwo} + \sizeV{\deriv_\tmthree}
      \\
        & > \sizeV{\derivtwo'} + \sizeV{\deriv_\tmthree}
        & (\ih)
      \\
        & = \sizeV{\deriv'}
        & (\vartwo \notin \fv{\tmtwo\sub\var\val\esub\var\val\sctxtwo})
      \end{array}
    \]
  \item
    Otherwise, $\vartwo \in \fv{\tmtwo\sub\var\val\esub\var\val\sctxtwo}$,
    and recall that $\vartwo \notin \fv\tmtwo$ by $\alpha$-conversion,
    so it must be the case that $\vartwo \in \fv{\val\sctxtwo}$.
    Therefore,
    \[
      \begin{array}{rllr}
          \sizeV\deriv 
        & = 1 + \sizeV{\deriv_\tmtwo} + 1 + \sizeV{\deriv_{\val\sctxtwo}} + \sizeV{\deriv_\tmthree}
        & (\var \in \fv\tmtwo, \vartwo \in \fv{\val\sctxtwo})
      \\ 
        & = 1 + \sizeV\derivtwo + \sizeV{\deriv_\tmthree}
      \\ 
        & > 1 + \sizeV{\derivtwo'} + \sizeV{\deriv_\tmthree}
        & (\ih)
      \\
        & = \sizeV{\deriv'}
        & (\vartwo \in \fv{\tmtwo\sub\var\val\esub\var\val\sctxtwo})
        & \mbox{\qedhere}
      \end{array}
    \]
  \end{itemize}
\end{itemize}
\end{proof}

\begin{exa}
Consider the subderivation 
$\derivable
  {\deriv_{\var \, \varthree}}
  {\judgs
    {\varthree : \emset; \var:\nityp}
    {\var \, \varthree}
    \emset}$ 
within the derivation $\deriv$ in \autoref{ex:derivation-V}.
Given that
$\tm$ reduces in one $\svnesym$-step to $\tm' \eqdef
(\var \, \varthree)\sub\var\vartwo\esub\var\vartwo\esub\vartwo\id = (\vartwo \, \varthree)\esub\var\vartwo\esub\vartwo\id$,
and given the derivation 
$\derivable
  {\deriv_{\vartwo \, \varthree}}
  {\judgs
    {\varthree : \emset; \vartwo : \nityp}
    {\vartwo \, \varthree}
    \emset}$ 
in ~\autoref{ex:substitution-V}, where in particular
$\varthree : \emset; \vartwo : \nityp= \varthree : \emset; \vartwo : \nityp; \var:\emset $, 
we build a typing derivation $\deriv'$ for $\tm'$ that preserves the type of $\tm$:
\begin{mathpar} \inferrule*[]{ \inferrule*[]{\derivable{\deriv_{\vartwo \, \varthree}}{\judgs
      {\varthree : \emset; \vartwo : \nityp; \var:\emset}
      {\vartwo \, \varthree}
      \emset} \\
  \inferrule*[]{
        }{
          \judgs{\vartwo : \emset}\vartwo\emset
        } }{ \judgs
  {\varthree : \emset; \vartwo:\nityp}
  {(\vartwo\,\varthree)\esub\var\vartwo} \emset }{} \\ \inferrule*[]{
          \inferrule*[]{
          }{
            \inferrule{}{\judgs{\varfour : \emset}\varfour\emset}
          }
        }{
          \judgs\emptyctx\id\nityp
        } 
  }{ \judgs {\varthree : \emset}
  {(\vartwo \, \varthree)\esub\var\vartwo\esub\vartwo\id} \emset
  }{}
\end{mathpar}

Note that the size of $\deriv'$ is $4$, which is strictly smaller than $6$. 
\end{exa}

Since we know that \VSCne-reduction preserves $\systemV$-typability thanks to the subject reduction property, we can now state
the following:
\begin{thm}[Soundness of $\systemV$ with Respect to \VSCne]
\label{thm:soundness_V_vscne}
Let t be a term.
If $\tm$ is $\systemV$-typable, then $\tm$ is \VSCne-terminating.
\end{thm}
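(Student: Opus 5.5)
The proof is a standard well-founded measure argument based on \autoref{prop:vscne_wsr_V}. Suppose $\tm$ is $\systemV$-typable, so there is a derivation $\derivablep\deriv{\judgs\tctx\tm\typV}\systemV$. We show, by strong induction on the natural number $\sizeV\deriv$, that every term admitting a $\systemV$-derivation of that size is \VSCne-terminating. That is, we show there is no infinite $\toVSCne$-sequence starting from it.

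Take any one-step reduct $\tm \toVSCne \tm'$. By \autoref{prop:vscne_wsr_V} there is a derivation $\derivablep{\deriv'}{\judgs\tctx{\tm'}\typV}\systemV$ with $\sizeV{\deriv'} < \sizeV\deriv$. The induction hypothesis then gives that $\tm'$ is \VSCne-terminating. Since every reduct of $\tm$ is terminating, $\tm$ is terminating as well, because any infinite sequence from $\tm$ would pass through some such $\tm'$.

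Equivalently, one can argue directly. If $\tm = \tm_0 \toVSCne \tm_1 \toVSCne \cdots$ were infinite, iterating \autoref{prop:vscne_wsr_V} would produce derivations $\deriv_0, \deriv_1, \ldots$ with $\sizeV{\deriv_0} > \sizeV{\deriv_1} > \cdots$. This is a strictly decreasing infinite sequence in $\nat$, which is impossible.

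The argument has no real obstacle, since all the work is in \autoref{prop:vscne_wsr_V}. The only point to check is that the proposition preserves the same judgement $\judgs\tctx{\cdot}\typV$. This guarantees that the reduct remains typable, so the proposition can be applied again at every step. Note also that this gives strong termination, which implies weak termination; the latter is the notion used in the later characterisation (\autoref{cor:v-characterises-vscne}).
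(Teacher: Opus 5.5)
Your proposal is correct, and your second (infinite-descent) argument is exactly the paper's proof: iterate \autoref{prop:vscne_wsr_V} along a putative infinite $\toVSCne$-sequence and contradict the non-negativity of $\sizeV{\_}$. Your strong induction on $\sizeV\deriv$ is the same argument phrased as well-founded induction. One small correction to your closing remark: in this paper ``terminating'' is defined as strong termination (no infinite sequence), so \autoref{cor:v-characterises-vscne} is stated for strong termination, not weak termination.
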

\begin{proof}
Let $\derivable\deriv{\judgs\tctx\tm\typV}$,
and assume $\tm$ is not \VSCne-terminating. Then, 
there exists an infinite reduction sequence
\[
  \tm \toVSCne \tm_1 \toVSCne \tm_2 \toVSCne \hdots
\]

We can apply~\autoref{prop:vscne_wsr_V} on each reduction step
in the sequence, thus obtaining 
\[
\begin{array}{c}
  \derivable{\deriv_1}{\judgs\tctx{\tm_1}\typV} 
\\
  \derivable{\deriv_2}{\judgs\tctx{\tm_2}\typV} 
\\
  \hdots
\end{array}
\]
such that $\sizeV\deriv > \sizeV{\deriv_1} > \sizeV{\deriv_2} > \hdots$.

Then, there is necessarily a step $n$ such that $\sizeV{\deriv_n}$
is a negative number.
This is a contradiction, as the size function $\sizeV{\_}$ always 
returns numbers greater than or equal to $0$ by definition.

We then conclude that $\tm$ is \VSCne-terminating.
\end{proof}

\paragraph{\textbf{Completeness.}}
We now prove the \defn{completeness} of system $\systemV$ with 
respect to \VSCne, which states that \VSCne-termination 
implies $\systemV$-typability. 
The methodology is dual to that used to prove soundness: 
we require a subject expansion property (\autoref{prop:vscne_wse_V}) based 
on an anti-substitution lemma (\autoref{lem:vscne_anti-subst_V}).
Additionally, it is necessary to prove that all \VSCne-normal forms are $\systemV$-typable:

\begin{prop}[Normal Forms in \VSCne are $\systemV$-Typable]
\label{prop:vscne_normal_forms_are_typable_V}
Let $\tm$ be a \VSCne-normal form.
Then, $\tm$ is $\systemV$-typable.
\end{prop}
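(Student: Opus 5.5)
By \autoref{prop:vscne_characterisation_nf}, it suffices to show that every term generated by the grammar $\novscne$ is $\systemV$-typable. A naive induction on the grammar fails, because the ES cases force us to type the body with a \emph{prescribed} multi-type for the bound variable. We therefore strengthen the statement. We prove, by simultaneous induction on the grammars $\vvscne$, $\nevscne$, $\absvscne$, $\novscne$, the following claims.

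\begin{enumerate}
\item For every $\tm \in \vvscne \cup \nevscne$ and every type $\typV$, there is a derivation $\judgs\tctx\tm\typV$ for some $\tctx$.
\item For every $\tm \in \absvscne$, there is a derivation $\judgs\tctx\tm\emset$ for some $\tctx$.
\item Consequently, every $\tm \in \novscne$ has a derivation $\judgs\tctx\tm\emset$.
\end{enumerate}

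The key point is that neutral-like terms (variables under ESs, and applications headed by them) can receive \emph{any} type, because the head variable can be typed arbitrarily by $\ruleTypVarV$. Abstractions, by contrast, are only guaranteed the type $\emset$, via $\ruleTypAbsV$ with $I=\emptyset$.

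The cases proceed as follows.

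\begin{itemize}
\item \emph{Variable $\var$.} For $\typV$ a multi-type $\nityp$, use $\judgs{\var:\nityp}\var\nityp$ directly. For $\typV = \basetype$ or $\typV = \nityp\to\typV'$ there is a mismatch: rule $\ruleTypVarV$ only gives multi-types. I would therefore restrict claim~1 to multi-types $\nityp$ for elements of $\vvscne$, and to arbitrary $\typV$ for elements of $\nevscne$.
\item \emph{Application $\vvscne\,\novscne$.} Type the argument with $\emset$ (claim~3) and the head with $\mset{\emset\to\typV}$ (claim~1 for $\vvscne$), then apply $\ruleTypAppV$.
\item \emph{Application $\nevscne\,\novscne$.} Here the head is typed with the multi-type $\mset{\emset\to\typV}$ by the inductive hypothesis, since neutral terms accept any type.
\item \emph{ES $\tm\esub\var\nevscne$.} First type $\tm$ with the required type, obtaining some environment $\tctx;\var:\nityp$. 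Then type the neutral substituted term with that same $\nityp$, which the inductive hypothesis allows for arbitrary types. Conclude with $\ruleTypESV$.
\item \emph{ES $\tm\esub\var\novscne$ with $\var\notin\fv\tm$.} By \autoref{lem:relevance_systemV}, $\var\notin\dom\tctx$, so $\var:\emset$. The substituted term is then typed with $\emset$ by claim~3.
\item \emph{Cases of $\absvscne$.} These are identical, with target $\emset$. The base case $\lam\var\tm$ gets $\emset$ with an empty environment via $\ruleTypAbsV$, with no premises, so the body is never inspected.
\end{itemize}

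The main obstacle is the ES case $\vvscne\esub\var\nevscne$ (and its $\nevscne$ and $\absvscne$ analogues). The multi-type $\nityp$ required for $\var$ is dictated by the derivation chosen for the body, so the inductive hypothesis for the substituted neutral term must be universally quantified over types. This is exactly why the statement must be generalised as above rather than proved merely as "typable".
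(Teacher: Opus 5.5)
Your proof is correct and follows the paper's own argument: a simultaneous induction on the grammars in which elements of $\vvscne$ are typable with every multi-type, elements of $\nevscne$ with every type, and elements of $\absvscne$ with $\emset$. This is exactly the refinement you reach after noticing that your initial claim~1 fails for variables at base and arrow types, so state the restricted form from the start; fixing the type $\emset$ for every element of $\novscne$, rather than asserting some multi-type as the paper does, is a harmless strengthening that directly handles the case of an ES on a variable not free in the body, via \autoref{lem:relevance_systemV}.
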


\begin{proof}
  The proof proceeds by simultaneous
  induction on the following refined statements:
\begin{enumerate}
\item \label{it:typing_vr_V}
  If $\tm \in \vvscne$, then for every multi-type $\nityp$ there exist $\deriv$ and $\tctx$
  such that $\derivablep\deriv{\judgs\tctx\tm\nityp}\systemV$. 
\item \label{it:typing_ne_V}
  If $\tm \in \nevscne$, then for every type $\typV$ there exist $\deriv$ and $\tctx$ 
  such that $\derivablep\deriv{\judgs\tctx\tm\typV}\systemV$. 
\item \label{it:typing_ab_V}
  If $\tm \in \absvscne$, then there exist $\deriv$ and $\tctx$ 
  such that $\derivablep\deriv{\judgs\tctx\tm\emset}\systemV$. 
\item \label{it:typing_no_V}
  If $\tm \in \novscne$, then there exist $\deriv$, $\tctx$, and $\nityp$ such 
  that $\derivablep\deriv{\judgs\tctx\tm\nityp}\systemV$. 
  \qedhere
\end{enumerate} 
\end{proof}

As its name suggests, the Anti-Substitution Lemma goes in the opposite direction to the Substitution Lemma. 
More precisely, from a typing derivation of a substituted term, it is possible to reconstruct typing derivations for the term prior to the substitution and for the value being substituted. 
Moreover, the sizes of the reconstructed derivations satisfy the same relation as in the Substitution Lemma. Formally:

\begin{lem}[Weighted Anti-Substitution]
\label{lem:vscne_anti-subst_V}
Let 
$\derivablep{\deriv_{\tm\sub\var\val}}{\judgs\tctx{\tm\sub\var\val}\typV}{\systemV}$.
Then there exist typing derivations $\deriv_\tm$ and $\deriv_\val$,
typing environments $\tctx_\tm$ and $\tctxtwo$, and a multi-type $\nityp$ satisfying:
\begin{enumerate}
\item 
  $\derivablep{\deriv_\tm}{\judgs{\tctx_\tm;\var : \nityp}\tm\typV}{\systemV}$
\item
  $\derivablep{\deriv_\val}{\judgs{\tctxtwo}\val\nityp}{\systemV}$
\item
  $\tctx = \tctx_\tm + \tctxtwo$
\end{enumerate}
Moreover, 
$\sizeV{\deriv_{\tm\sub\var\val}} 
= \sizeV{\deriv_\tm} + \sizeV{\deriv_\val} - \length\nityp$.
\end{lem}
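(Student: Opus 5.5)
We argue by induction on the structure of $\tm$, generalising over $\tctx$ and $\typV$, using the Splitting/Merging Lemma (\autoref{lem:splitting-merging-V}) to split the typing of $\val$ across several occurrences of $\var$ and to merge the resulting pieces. Throughout, we assume by $\alpha$-conversion that bound variables of $\tm$ differ from $\var$ and do not occur in $\fv\val$.

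\textbf{Base cases.} Suppose $\tm = \var$. Then $\tm\sub\var\val = \val$, and since $\val$ is a value the type $\typV$ is necessarily a multi-type $\nityp$. We take $\nityp \eqdef \typV$, $\deriv_\val \eqdef \deriv_{\tm\sub\var\val}$ with $\tctxtwo \eqdef \tctx$, and $\deriv_\tm$ the $\ruleTypVarV$ axiom $\judgs{\var:\nityp}\var\nityp$ with $\tctx_\tm \eqdef \emptyctx$. The sizes check out, since $\sizeV{\deriv_\tm} = \length\nityp$. Suppose instead $\tm = \vartwo \neq \var$. Then we take $\nityp \eqdef \emset$, $\tctx_\tm \eqdef \tctx$, $\deriv_\tm \eqdef \deriv_{\tm\sub\var\val}$, and $\deriv_\val$ the size-$0$ derivation $\judgs\emptyctx\val\emset$, which exists for every value.

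\textbf{Inductive cases.} For $\tm = \tm_1\,\tm_2$ and $\tm = \tm_1\esub\vartwo{\tm_2}$, the last rule of $\deriv_{\tm\sub\var\val}$ decomposes into premises typing $\tm_1\sub\var\val$ and $\tm_2\sub\var\val$. The \ih yields $\nityp_1,\nityp_2$, derivations for $\tm_1,\tm_2$, and derivations $\deriv_{\val}^1,\deriv_{\val}^2$ of $\val$. We set $\nityp \eqdef \nityp_1\niunion\nityp_2$ and merge $\deriv_\val^1$ and $\deriv_\val^2$ via \autoref{lem:splitting-merging-V}, which makes the sizes add. We then rebuild $\deriv_\tm$ with the same rule. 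For the ES case, the variable $\vartwo$ must be extracted from the environment of the \ih on $\tm_1$. This works because $\vartwo\notin\fv\val$, so by \autoref{lem:relevance_systemV} $\vartwo\notin\dom{\tctxtwo_1}$, and hence its assignment sits entirely in $\tctx_{\tm_1}$.

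\textbf{Size accounting in the ES case.} The main obstacle is the conditional $+1$ in the ES clause of $\sizeV{\_}$. It is triggered by $\vartwo \in \fv{\tm_1\sub\var\val}$ on the substituted side and by $\vartwo\in\fv{\tm_1}$ on the rebuilt side. These two conditions agree because $\vartwo\notin\fv\val$, so $\fv{\tm_1\sub\var\val}\ni\vartwo$ iff $\fv{\tm_1}\ni\vartwo$. With that settled, the identity follows by adding the two instances of the \ih, since $\length{\nityp_1\niunion\nityp_2} = \length{\nityp_1}+\length{\nityp_2}$.

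\textbf{Abstraction case.} For $\tm = \lam\vartwo{\tm'}$, the derivation ends with $\ruleTypAbsV$ with premises $(\judgs{\tctx_i;\vartwo:\nityp_i}{\tm'\sub\var\val}{\typV_i})_\iI$. We apply the \ih to each premise, which gives $\nityptwo_i$ and derivations of $\val$ for each $\iI$. We then set $\nityp \eqdef \niunion_\iI \nityptwo_i$ and merge the $|I|$ value derivations using \autoref{lem:splitting-merging-V}; this also covers $I=\emptyset$, where $\nityp=\emset$. The extra summand $|I|$ appears in the size on both sides and so cancels in the identity.
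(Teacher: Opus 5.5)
Your proposal is correct and follows essentially the same route as the paper. The paper defers to \cite{BucciarelliKRV23} for this lemma, but proves its $\typesystem$ analogue (Lemma~\ref{lem:vscne_anti-subst}) the same way you do: structural induction on $\tm$, with Splitting/Merging used to combine the value derivations, and relevance together with $\vartwo\notin\fv\val$ handling the binders and the conditional $+1$ in the size. In the abstraction case you leave implicit that $\vartwo:\nityp_i$ must be extracted from each environment given by the \ih; this goes through exactly as in your ES case.
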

\begin{proof}
See~\cite[Lemma 4.21]{BucciarelliKRV23}.
\end{proof}

We now move on to the subject expansion property, which establishes that \VSCne-expansion preserves $\systemV$-typability.
Since this result is also quantitative, it is called \emph{weighted subject expansion}, 
and proves, like weighted subject reduction, that the size of typing derivations strictly decreases with respect to reductions.

\begin{prop}[Weighted Subject Expansion]
\label{prop:vscne_wse_V}
Let $\tm \toVSCne \tm'$ and $\derivablep{\deriv'}{\judgs\tctx{\tm'}\mtyp}\systemV$. 
Then, there exists $\deriv$ such that 
$\derivablep\deriv{\judgs\tctx\tm\mtyp}\systemV$, with
$\sizeV\deriv > \sizeV{\deriv'}$.
\end{prop}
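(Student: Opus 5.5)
Note that the statement writes $\mtyp$, but it should be read as a type $\typV$ of system $\systemV$, as in \autoref{prop:vscne_wsr_V}. The proof will mirror that of \autoref{prop:vscne_wsr_V}. We proceed by induction on the step $\tm \toVSCne \tm'$, that is, on the weak context in which the redex occurs. The inductive cases (application left/right, ES left/right) are routine. We decompose $\deriv'$ at the root, apply the \ih\ to the premise typing the reduced subterm, and rebuild with the same rule. The only subtlety is the ES cases, where the size depends on whether $\var \in \fv{\cdot}$. This is harmless: a $\toVSCne$-step never changes the set of free variables, since $\dbsym$ keeps $\tmtwo$ in an ES and $\svnesym$ re-adds $\esub\var\val$. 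So the $+1$ contribution is the same on both sides, and the strict inequality propagates.

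For the base case $\todb$, let $\tm = (\lam\var\tmthree)\sctx\,\tmtwo$ and $\tm' = \tmthree\esub\var\tmtwo\sctx$. We induct on $\sctx$. If $\sctx = \ctxhole$, the derivation $\deriv'$ ends with $\ruleTypESV$ having premises $\judgs{\tctx_1;\var:\nityp}\tmthree\typV$ and $\judgs{\tctx_2}\tmtwo\nityp$. We build $\judgs{\tctx_1}{\lam\var\tmthree}{\mset{\nityp\to\typV}}$ by $\ruleTypAbsV$ with $I$ a singleton, then apply $\ruleTypAppV$. For sizes, the ES contributes at most $+1$, while the new derivation contributes $+1$ (abs) $+1$ (app), so it is strictly larger. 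If $\sctx = \sctxtwo\esub\vartwo{\tmthree'}$, we peel off the outer $\ruleTypESV$ of $\deriv'$ and apply the \ih\ to $\tmthree\esub\var\tmtwo\sctxtwo$. The resulting derivation of $(\lam\var\tmthree)\sctxtwo\,\tmtwo$ has $\vartwo$ only on the left of the application, since $\vartwo\notin\fv\tmtwo$ by $\alpha$-conversion and relevance (\autoref{lem:relevance_systemV}). We then need to commute the ES back inside the left argument of the application, which is a simple rearrangement of environments that preserves size. The free-variable side condition on $\vartwo$ is identical on both sides.

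For the base case $\tosvne$, let $\tm = \tmtwo\esub\var{\val\sctx}$ and $\tm' = \tmtwo\sub\var\val\esub\var\val\sctx$ with $\var\in\fv\tmtwo$, and again induct on $\sctx$. For $\sctx = \ctxhole$, the derivation $\deriv'$ ends with $\ruleTypESV$ having premises $\judgs{\tctx_1;\var:\nityp_0}{\tmtwo\sub\var\val}\typV$ and $\judgs{\tctx_2}\val{\nityp_0}$. Since $\var\notin\fv{\tmtwo\sub\var\val}$, relevance gives $\nityp_0 = \emset$. This is the key point, and we should check it: the second premise is then a derivation of $\val$ with $\emset$, which by $\ruleTypVarV$ or $\ruleTypAbsV$ with $I=\emptyset$ forces $\tctx_2 = \emptyctx$ and size $0$. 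Also $\sizeV{\deriv'}$ equals the size of the first premise, because $\var\notin\fv{\tmtwo\sub\var\val}$ means the ES adds nothing. Now apply \autoref{lem:vscne_anti-subst_V} to the first premise to get $\judgs{\tctx_\tmtwo;\var:\nityp}\tmtwo\typV$ and $\judgs\tctxtwo\val\nityp$ with $\tctx_1 = \tctx_\tmtwo+\tctxtwo$ and matching sizes. Combine them with $\ruleTypESV$; since $\var\in\fv\tmtwo$, this adds $+1$. The new size is $1 + \sizeV{\deriv_\tmtwo}+\sizeV{\deriv_\val} = 1 + \sizeV{\deriv'} + \length\nityp > \sizeV{\deriv'}$. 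The inductive case $\sctx = \sctxtwo\esub\vartwo\tmthree$ is handled exactly as in the proof of \autoref{prop:vscne_wsr_V}, reading the derivations backwards. We split the outer ES, apply the \ih\ to $\tmtwo\esub\var{\val\sctxtwo}$, and reassemble. The case split on $\vartwo\in\fv{\val\sctxtwo}$ (equivalently, $\vartwo$ free in the reduct, since $\vartwo\notin\fv\tmtwo$) gives the same $\pm1$ bookkeeping on both sides.

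The main obstacle is the environment bookkeeping in the $\sctx$-inductive cases. There we must ensure that the variable bound by the outer ES does not occur in the parts of the environment coming from $\tmtwo$, which relevance together with $\alpha$-conversion provides. We must also ensure that the size conditions on free variables coincide before and after the step, which follows from the observation that $\toVSCne$ preserves free variables.
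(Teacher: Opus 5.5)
Your proof is correct and follows the same route as the paper's. Both argue by induction on the step, treat the $\tosvne$ base case by induction on $\sctx$, use relevance to get $\nityp = \emset$ and then anti-substitution when $\sctx = \ctxhole$, and peel off the outer ES using relevance and $\alpha$-conversion in the inductive case. The differences are cosmetic: you spell out the $\todb$ base case and the free-variable preservation needed for the contextual ES cases, which the paper leaves implicit; and you observe directly that typing $\val$ with $\emset$ forces the empty environment and size $0$, where the paper instead merges the two value derivations via \autoref{lem:splitting-merging-V}.
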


\begin{proof}
We reason by induction on $\tm \toVSCne \tm'$. 
We only detail here the base case $\tm \tosvne \tm'$, as the inductive cases are straightforward.

Let $\tm = \tmtwo\esub\var{\val\sctx} \tosvne 
\tmtwo\sub\var\val\esub\var\val\sctx = \tm'$, where 
$\var \in \fv\tmtwo$. 
We proceed by induction on the substitution context $\sctx$.
\begin{itemize}
\item $\sctx = \ctxhole$.
  Then, we are in the case where $\tm = \tmtwo\esub\var\val$ and 
  $\tm' = \tmtwo\sub\var\val\esub\var\val$. 
  Therefore, $\deriv'$ is of the form:
  \begin{mathpar}
    \inferrule* [right = $\ruleTypESV$]
      {\derivable{\deriv_{\tmtwo\sub\var\val}}{\judgs{\tctx_1;\var : \nityp}{\tmtwo\sub\var\val}\typV}
        \\ \derivable{\deriv^1_\val}{\judgs{\tctx_2}\val\nityp}}
      {\judgs{\tctx_1 + \tctx_2}{\tmtwo\sub\var\val\esub\var\val}\typV}
  \end{mathpar}
  where $\tctx = \tctx_1 + \tctx_2$.

  Moreover, $\var \notin \fv{\tmtwo\sub\var\val}$, therefore
  $\var \notin \dom{\tctx_1; \var : \nityp}$ by~\autoref{lem:relevance_systemV}, so $\nityp = \emset$.
  
  Applying~\autoref{lem:vscne_anti-subst_V}, we obtain typing 
  derivations $\deriv_\tmtwo$ and $\deriv^2_\val$, typing 
  environments $\tctx_\tmtwo$ and $\tctxtwo$, and a multi-type $\nityp'$ 
  such that:
  \begin{enumerate}
  \item 
    $\derivable{\deriv_\tmtwo}{\judgs{\tctx_\tmtwo; \var:\nityp'}\tmtwo\typV}$
  \item 
    $\derivable{\deriv^2_\val}{\judgs\tctxtwo\val\nityp'}$
  \item
    $\tctx_1; \var : \nityp = \tctx_\tmtwo + \tctxtwo$ 
  \end{enumerate}
  And moreover, $\sizeV{\deriv_{\tmtwo\sub\var\val}} = 
  \sizeV{\deriv_\tmtwo} + \sizeV{\deriv^2_\val} - \length{\nityp'}$.

  We apply~\autoref{lem:splitting-merging-V} to $\deriv^1_\val$ 
  and $\deriv^2_\val$, yielding 
  $\derivable
    {\deriv_\val}
    {\judgs{\tctx_2 + \tctxtwo}\val{\nityp + \nityp'}}$,
  where $\nityp + \nityp' = \nityp'$, as 
  $\nityp = \emset$.
  Moreover, 
  $\sizeV{\deriv_\val} = \sizeV{\deriv^1_\val} + \sizeV{\deriv^2_\val}$,
  where $\sizeV{\deriv^1_\val} = 0$ given $\nityp = \emset$.
  Hence, $\sizeV{\deriv_\val} = \sizeV{\deriv^2_\val}$.

  We then build $\deriv$ as follows:
  \begin{mathpar}
    \inferrule
      {\derivable{\deriv_\tmtwo}{\judgs{\tctx_\tmtwo;\var : \nityp'}\tmtwo\typV}
        \\ \derivable{\deriv_\val}{\judgs{\tctx_2 + \tctxtwo}\val{\nityp'}}}
      {\judgs{\tctx_\tmtwo + \tctx_2 + \tctxtwo}{\tmtwo\esub\var\val}\typV}
      \ruleTypESV
  \end{mathpar}

  Moreover,
  $
      \sizeV\deriv 
    =_{\var \in \fv\tmtwo} 1 
      + \sizeV{\deriv_\tmtwo} + \sizeV{\deriv_\val}
    = 
      1 + \sizeV{\deriv_\tmtwo} + \sizeV{\deriv^2_\val}
    > \sizeV{\deriv_\tmtwo} + \sizeV{\deriv^2_\val} - \length{\nityp'}
    = \sizeV{\deriv_{\tmtwo\sub\var\val}}
    = \sizeV{\deriv_{\tmtwo\sub\var\val}} + \sizeV{\deriv^1_\val}
    =_{\var \notin \fv{\tmtwo\sub\var\val}} \sizeV{\deriv'}
  $.
\item $\sctx = \sctxtwo\esub\vartwo\tmthree$.
  Then, we are in the case where 
  $\tm = \tmtwo\esub\var{\val\sctxtwo\esub\vartwo\tmthree}$ and 
  $\tm' = \tmtwo\sub\var\val\esub\var\val\sctxtwo\esub\vartwo\tmthree$. 
  Therefore, $\deriv'$ is of the form:
  \begin{mathpar}
    \inferrule
      {\derivable{\derivtwo'}{\judgs{\tctx_1; \vartwo:\nityp}{\tmtwo\sub\var\val\esub\var\val\sctxtwo}\typV}
        \\ \derivable{\deriv_\tmthree}{\judgs{\tctx_2}\tmthree\nityp}}
      {\judgs{\tctx_1 + \tctx_2}{\tmtwo\sub\var\val\esub\var\val\sctxtwo\esub\vartwo\tmthree}\typV}
      \ruleTypESV
  \end{mathpar}
  where $\tctx = \tctx_1 + \tctx_2$.

  Given that $\tmtwo\esub\var{\val\sctxtwo} \tosvne 
  \tmtwo\sub\var\val\esub\var\val\sctxtwo$ since $\var \in \fv\tmtwo$,
  we can then apply the \ih on $\sctxtwo$, yielding $\derivtwo$ such 
  that $\derivable
    \derivtwo
    {\judgs{\tctx_1;\vartwo : \nityp}{\tmtwo\esub\var{\val\sctxtwo}}\typV}$
  and moreover, $\sizeV\derivtwo > \sizeV{\derivtwo'}$.
  The conclusion in $\derivtwo$ can only be derived by rule $\ruleTypESV$:
  \begin{mathpar}
    \inferrule
      {\derivable{\deriv_\tmtwo}{\judgs{\tctx_{11}; \var : \nityptwo}\tmtwo\typV}
        \\ \derivable{\deriv_{\val\sctxtwo}}{\judgs{\tctx_{12};\vartwo:\nityp}{\val\sctxtwo}\nityptwo}}
      {\judgs{(\tctx_{11} + \tctx_{12});\vartwo:\nityp}{\tmtwo\esub\var{\val\sctxtwo}}\typV}
      \ruleTypESV
  \end{mathpar}
  where 
  $\tctx_1 = \tctx_{11} + \tctx_{12}$ and 
  $\vartwo \notin \fv\tmtwo$ by $\alpha$-conversion, so 
  $\vartwo \notin \dom{\tctx_{11}}$ by~\autoref{lem:relevance_systemV}. 

  Then, we build $\deriv$ as follows: 
  \begin{mathpar}
    \inferrule
      {\deriv_\tmtwo
        \\ 
          \inferrule*[Right = $\ruleTypESV$]
            {\deriv_{\val\sctxtwo}
              \\ \deriv_\tmthree}
            {\judgs{\tctx_{12} + \tctx_2}{\val\sctxtwo\esub\vartwo\tmthree}\nityptwo}}
      {\judgs{\tctx_{11} + \tctx_{12} + \tctx_2}{\tmtwo\esub\var{\val\sctxtwo\esub\vartwo\tmthree}}\typV} 
    \ruleTypESV
  \end{mathpar}
  where $\tctx_{11} + \tctx_{12} + \tctx_2 = 
  \tctx_1 + \tctx_2 = \tctx$.

  Moreover, to calculate the size of $\deriv$, we consider whether 
  $\vartwo \in \fv{\tmtwo\sub\var\val\esub\var\val\sctxtwo}$.
  We show only when 
  $\vartwo \in \fv{\tmtwo\sub\var\val\esub\var\val\sctxtwo}$, since 
  the other case is analogous.
  In particular, we have $\vartwo \notin \fv\tmtwo$ by 
  $\alpha$-conversion, thus (2) $\vartwo \in \fv{\val\sctxtwo}$, and 
  therefore
  \[
    \begin{array}{rllr}
        \sizeV\deriv 
      & = 1 + \sizeV{\deriv_\tmtwo} + 1 + \sizeV{\deriv_{\val\sctxtwo}} + \sizeV{\deriv_\tmthree}
      & (\text{by } \var \in \fv\tmtwo \text{ and (2)})
    \\
      & = 1 + \sizeV\derivtwo + \sizeV{\deriv_\tmthree}
      & (\var \in \fv\tmtwo)
    \\
      & > 1 + \sizeV{\derivtwo'} + \sizeV{\deriv_\tmthree}
      & (\ih)
    \\
      & = \sizeV{\deriv'}
      & (\vartwo \in \fv{\val\sctxtwo})
      & \mbox{\qedhere}
    \end{array}
  \]
\end{itemize}
\end{proof}

\medskip

We can now state the following result:
\begin{thm}[Completeness of $\systemV$ with Respect to \VSCne]
\label{thm:completeness_V_vscne}
Let t be a term.
If $\tm$ is \VSCne-terminating, then $\tm$ is $\systemV$-typable.
\end{thm}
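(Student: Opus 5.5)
The plan is to argue by well-founded induction, combining \autoref{prop:vscne_normal_forms_are_typable_V} with \autoref{prop:vscne_wse_V}.

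Assume $\tm$ is \VSCne-terminating, so there is no infinite $\toVSCne$-sequence from $\tm$. The reduction tree of $\tm$ is finitely branching, since a term contains only finitely many redexes in weak contexts. By König's lemma, the length of the longest reduction sequence from $\tm$ is therefore a finite number, which we call $\eta(\tm)$. Proceeding by induction on $\eta(\tm)$ is the same as reasoning by well-founded induction along $\toVSCne$ restricted to the reducts of $\tm$, which is well-founded precisely because $\tm$ is terminating.

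In the base case, $\tm$ has no reduct, so it is \VSCne-irreducible. By \autoref{prop:vscne_characterisation_nf} it is then a \VSCne-normal form, \ie $\tm \in \novscne$. \autoref{prop:vscne_normal_forms_are_typable_V} then provides a derivation $\derivablep\deriv{\judgs\tctx\tm\nityp}\systemV$.

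In the inductive case, $\tm \toVSCne \tm'$ for some $\tm'$. Every reduct of a terminating term is terminating, and $\eta(\tm') < \eta(\tm)$. The \ih therefore gives $\derivablep{\deriv'}{\judgs\tctx{\tm'}\typV}\systemV$. Applying \autoref{prop:vscne_wse_V} to the step $\tm \toVSCne \tm'$ and to $\deriv'$ yields $\derivablep\deriv{\judgs\tctx\tm\typV}\systemV$, so $\tm$ is $\systemV$-typable.

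Nothing here is expected to be a serious obstacle, because the work is already done in the subject expansion and normal-form typability results. The one point needing care is that termination is meant in the strong sense, so the induction measure should be justified either through finite branching or simply as well-founded induction on $\toVSCne$ from $\tm$. An alternative is to take only one terminating sequence $\tm \rewritensteps{\VSCne}{n} \tmtwo$ with $\tmtwo$ normal, type $\tmtwo$, and pull the typing back step by step with \autoref{prop:vscne_wse_V} by induction on $n$. This weakly-normalising route needs only that one sequence and is the simplest version to write down.
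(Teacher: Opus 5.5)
Your proposal is correct and follows the paper's proof. The paper uses exactly the route you give as the alternative: it takes one finite sequence $\tm \toVSCne \tm_1 \toVSCne \cdots \toVSCne \tm_n$ to a \VSCne-normal form, types $\tm_n$ by \autoref{prop:vscne_normal_forms_are_typable_V}, and pulls the typing back by induction on $n$ using \autoref{prop:vscne_wse_V}. Your main version with the longest-reduction measure $\eta$ is also sound, but the finite-branching/K\"onig justification is unnecessary, since the single-sequence induction needs only weak normalisation.
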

\begin{proof}
Since $\tm$ is \VSCne-terminating, then there exists a finite sequence
from $\tm$ to normal form of length $n$:
\[
  \tm = \tm_0 \toVSCne \tm_1 \toVSCne \tm_2 \toVSCne \hdots \toVSCne \tm_n
\]
where $\tm_n$ is a \VSCne-normal form.
We proceed by induction on $n$.
\begin{itemize}
\item Base case: $n = 0$. 
  Then, $\tm$ is \VSCne-irreducible, which is equivalent to saying that
  $\tm$ is a \VSCne-normal form.
  By~\autoref{prop:vscne_normal_forms_are_typable_V},
  we conclude that $\tm$ is $\systemV$-typable. 
\item Inductive case: $n > 0$, assuming the statement holds for $n-1$. 
  Taking the reduction sequence of $(n-1)$ steps from $\tm_1$ to the 
  \VSCne-normal form $\tm_n$, we can apply the \ih, so that $\tm_1$ 
  is $\systemV$-typable.
  Since $\tm \toVSCne \tm_1$, then $\tm$ is $\systemV$-typable as well
  by~\autoref{prop:vscne_wse_V}, so we are done.
\qedhere
\end{itemize}
\end{proof}

From soundness (\autoref{thm:soundness_V_vscne}) and
completeness (\autoref{thm:completeness_V_vscne}), we obtain:
\begin{cor}[System $\systemV$ Characterises \VSCne-Termination]
\label{cor:v-characterises-vscne}
Let $\tm$ be a term.
Then, $\tm$ is $\systemV$-typable if and only if $\tm$ is \VSCne-terminating. 
\end{cor}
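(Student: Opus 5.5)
This corollary follows by combining the two directions already established in this subsection; no new technical ingredient is needed.

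For the left-to-right direction, we assume $\tm$ is $\systemV$-typable, i.e.\ there is some $\derivablep\deriv{\judgs\tctx\tm\typV}\systemV$. \autoref{thm:soundness_V_vscne} then gives directly that $\tm$ is \VSCne-terminating. That theorem rests on \autoref{prop:vscne_wsr_V}: the natural-number measure $\sizeV{\_}$ strictly decreases along every \VSCne-step, so no infinite \VSCne-sequence can start from a typable term.

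For the right-to-left direction, we assume $\tm$ is \VSCne-terminating. In particular there is a finite reduction sequence from $\tm$ to a \VSCne-normal form. \autoref{thm:completeness_V_vscne} then yields $\systemV$-typability of $\tm$. That argument types the normal form using \autoref{prop:vscne_normal_forms_are_typable_V} (relying on \autoref{prop:vscne_characterisation_nf}) and pulls typability back along the sequence by iterating \autoref{prop:vscne_wse_V}.

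One point deserves care: the completeness proof uses only the existence of \emph{some} reduction sequence to normal form. It therefore also shows that weak \VSCne-termination implies typability, and soundness then upgrades this to strong termination. As a consequence, the corollary holds whichever of the two notions of termination from \autoref{subsec:general_notions} is intended. The real work lies in the substitution and anti-substitution lemmas behind the two theorems, and those are already established.
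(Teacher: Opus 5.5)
Your proposal is correct and matches the paper's argument: the corollary is obtained by combining the soundness theorem (typability implies \VSCne-termination, via weighted subject reduction) with the completeness theorem (termination implies typability, via typability of normal forms and weighted subject expansion). Your remark is also accurate: completeness only uses the existence of some reduction sequence to normal form, so for typable terms weak and strong \VSCne-termination coincide.
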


This last property, together with~\autoref{t:charact-V-VSC}, provide
the following equivalence:
\begin{cor}[\VSC-Termination is equivalent to \VSCne-Termination]
\label{cor:v-equiv-vscne}
Let $\tm$ be a term.
Then, $\tm$ is \VSC-terminating if and only if $\tm$ is \VSCne-terminating.
\end{cor}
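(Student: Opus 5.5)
The plan is to chain the two characterisations through $\systemV$-typability, which serves as the pivot. First, \autoref{t:charact-V-VSC} states that $\tm$ is \VSC-terminating if and only if $\tm$ is $\systemV$-typable. Second, \autoref{cor:v-characterises-vscne} states that $\tm$ is $\systemV$-typable if and only if $\tm$ is \VSCne-terminating. Composing the two biconditionals gives the statement directly.

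For the direction from \VSC to \VSCne, I would assume $\tm$ is \VSC-terminating and use the completeness half of \autoref{t:charact-V-VSC} to get a derivation $\derivablep\deriv{\judgs\tctx\tm\typV}\systemV$. Soundness (\autoref{thm:soundness_V_vscne}) then turns this derivation into \VSCne-termination.

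Conversely, I would assume $\tm$ is \VSCne-terminating and obtain $\systemV$-typability from completeness (\autoref{thm:completeness_V_vscne}). The soundness half of \autoref{t:charact-V-VSC} then yields \VSC-termination.

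The one point I would check is that both results use the same notion of termination. For \VSCne this holds because soundness was proved for strong termination, via the absence of infinite sequences. Completeness was proved from the existence of a normalising sequence, and the induction there only needs weak termination, which strong termination implies. In the cited \cite{BucciarelliKRV20} the corresponding notion for \VSC should likewise be termination in the sense of the preliminaries; if it were only weak normalisation, that is harmless, since \VSC is diamond and so weak and strong termination coincide. With this checked, there is no real obstacle: all the semantic work has already been done in the two characterisation results.
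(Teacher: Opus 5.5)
Your proposal is correct and is exactly the paper's argument: the corollary is obtained by composing \autoref{t:charact-V-VSC} with \autoref{cor:v-characterises-vscne}, with $\systemV$-typability as the pivot. Your extra check on the notion of termination is fine but not needed, since the paper already states the cited theorem for \VSC-termination in the (strong) sense of its preliminaries.
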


\section{The \texorpdfstring{\UOCBV}{UCBV} Calculus}
\label{sec:ucbv}

This section recalls \UOCBV, the inductive formulation of the useful
open \CBV strategy which provides a well-suited basis for semantic
reasoning. 
We only introduce the \UOCBV
features relevant to this paper's results; a detailed presentation can be found
in~\cite{BKM-CADE25,BKM25}.

We begin by introducing the key notions of useful evaluation. 
Then we recall the operational semantics of \UOCBV and show some
rewriting properties of \UOCBV, such as the inductive
characterisation of \UOCBV-normal forms
(\autoref{thm:uocbv_characterisation_of_normal_forms}) and the diamond
property (\autoref{thm:diamond-property}). 
The former ensures in particular that the length of any \UOCBV reduction
sequence to normal form is independent of the specific order in which reduction rules are applied.

\paragraph{\textbf{Useful Evaluation.}} 
The calculus with ESs used to implement sharing in Accattoli and Dal 
Lago's work~\cite{AccattoliL14} is based on a previous ES calculus 
with reduction at a distance called the \emph{Linear Substitution 
Calculus} (\LSC)~\cite{AccattoliK10}.
In \LSC, the reduction rule that implements function application is 
the same $\dbsym$-rule as in \VSC (\cf~\autoref{subsec:vsc}).
Additionally, \LSC uses a \defn{linear substitution} rule, which only
allows the substitution of one \emph{single} occurrence 
of a variable $\var$ by a term $\tm$ whenever $\var$ is bound to 
$\tm$ by an explicit substitution (ES). 
More precisely, the linear substitution rule ($\lssym$) of \LSC, 
allowing substitutions one occurrence at a time, is 
defined by the rewriting rule $\ctxof\gctx\var\esub\var\tm \tols 
\ctxof\gctx\tm\esub\var\tm$, where $\gctx$ denotes an arbitrary 
evaluation context. 
Substitution is called \emph{linear} precisely because it replaces variables 
one occurrence at a time, for example 
$(\var\,\var)\esub\var\vartwo \tols
(\var\,\vartwo)\esub\var\vartwo \tols
(\vartwo\,\vartwo)\esub\var\vartwo$.

Useful evaluation is based on two principles: \emph
{sharing structures} and (linearly) \emph{substituting abstractions for progress}.
We briefly recall them below.

\emph{Sharing Structures.} 
  In useful evaluation, a \defn{structure} is a term whose 
  \emph{unfolding} (the result of performing all substitutions) 
  is an application headed by a variable. 
  For example, 
  $(\var\,\var)\esub\var{\vartwo\,\id}\esub\vartwo{\varthree\,\varthree}$ 
  is a structure, as it unfolds to 
  $\varthree\,\varthree\,\id\,(\varthree\,\varthree\,\id)$. 
  Structures must remain \emph{shared} in useful evaluation, as 
  substituting them does not create a $\dbsym$-redex. 
  Thus, a term like 
  $(\var\,\var)\esub\var{\vartwo\,\id}\esub\vartwo{\varthree\,\varthree}$ 
  is a normal form in useful evaluation.

\emph{Substituting Abstractions for Progress}.
  In useful evaluation, abstractions are substituted only if they 
  contribute to creating a $\dbsym$-redex, ensuring progress in the 
  computation. 
  For example, the reduction step 
  $\var\esub\var\id \, \vartwo \tols \id\esub\var\id \, \vartwo$ is 
  useful because substituting $\var$ by $\id$ creates the 
  $\dbsym$-redex $\id\esub\var\id \, \vartwo$.
  A subtler example of a useful step is
  $(\var\,\var)\esub\var\vartwo\esub\vartwo\id \tols
  (\vartwo\,\var)\esub\var\vartwo\esub\vartwo\id$, where the 
  substitution \emph{indirectly} contributes to creating a 
  $\dbsym$-redex in the next step:
  $(\vartwo\,\var)\esub\var\vartwo\esub\vartwo\id \tols
  (\id\,\var)\esub\var\vartwo\esub\vartwo\id$.
  Such indirect steps are precisely those that are challenging to 
  capture inductively.
  On the other hand, steps like $\var\esub\var\id \tols \id\esub\var\id$, 
  and 
  $(\var\,\var)\esub\var\id \tols (\var\, \id)\esub\var\id$ are not 
  useful, as these substitutions do not contribute to create a 
  $\dbsym$-redex. 

The principles just mentioned not only constitute the basis
of useful \CBN~\cite{AccattoliL14} but also that of useful 
\CBV~\cite{AccattoliC15,AccattoliG17}, including \UOCBV~\cite{BKM-CADE25}.

\paragraph{\textbf{Towards Useful Call-by-Value Evaluation.}} 

A key point for understanding the definition of \UOCBV evaluation is that the parameters decorating the reduction relations carry the essential information about the surrounding evaluation contexts in which the reductions take place. More precisely, they retain exactly the contextual information needed to determine which substitution steps are useful and which are not. 

Two of those parameters are sets of variables: one is called an \defn{(abstraction) frame} denoted $\aset$ and the 
second is called a \defn{(structure) frame} denoted $\sset$.
The purpose of these two variable sets 
is to classify the nature of variables bound by ESs.
For example, the variables bound by ESs in the term $(\var \, \varthree)\esub\var\vartwo\esub\varthree{\varfour\esub\varfour\id}$ are $\var$, $\varthree$, and $\varfour$.
The variable $\var$ is bound by the free variable $\vartwo$, which is a structure by definition, and thus $\var \in \sset$.
The variable $\varthree$ is bound by $\varfour$, which in turn is bound by an abstraction and thus $\varthree,\varfour \in \aset$.

We generalise this classification at the level of terms. 
Identifying terms that are already abstractions as well as those that
are currently variables --but will ultimately be substituted with
abstractions-- is necessary to correctly determine whether a substitution step
contributes to the progress of the evaluation.
This property is captured by the set of
\defn{hereditary abstractions under the frame $\aset$}, written $\HAbs\aset$, and defined by means of predicates $\tm \in \HAbs\aset$
through the inference rules in the first line of~\autoref{fig:HAbs_Struct}.
For example,
$\id\esub{\var}{\vartwo} \in \HAbs\emptyset$
holds by the third rule,
while $\var\esub{\var}{\id} \in \HAbs\emptyset$ holds by the fourth rule.
Note that the fourth rule allows building ``chains'' such as
$\var\esub{\var}{\vartwo}\esub{\vartwo}{\varthree\esub{\varthree}{\id}} \in \HAbs\emptyset$.

It is also necessary to define a predicate to identify those terms that represent an 
application headed by a free variable.
Such terms form the set of \defn{structures under the frame $\sset$}, 
written $\Struct\sset$. The predicates $\tm \in \Struct\sset$ are defined by the inference rules given in the second line of~\autoref{fig:HAbs_Struct}.
For example,
$\varfour\esub{\var}{\vartwo} \in \Struct{\set{\varfour}}$
holds by the second rule,
$\varfour\,(\id\,\var) \in \Struct{\set{\varfour}}$ by the third rule,
and $\var\esub{\var}{\varfour} \in \Struct{\set{\varfour}}$ by the fourth rule.

The interested reader can check further discussions and properties of both
sets $\HAbs\aset$ and $\Struct\sset$ in~\cite{BKM-CADE25}.

\begin{figure}[h]
\begin{mathpar}
  \inferrule
    {\var \in \aset}
    {\var \in \HAbs\aset}

  \inferrule
    { }
    {\lam\var\tm \in \HAbs\aset}

  \inferrule
    {\tm \in \HAbs\aset \\ \var \notin \aset}
    {\tm\esub\var\tmtwo \in \HAbs\aset}
  
  \inferrule{
    \tm \in \HAbs{\aset\cup\set\var}
    \\ \var\notin\aset
    \\ \tmtwo \in \HAbs\aset
  }{
    \tm\esub\var\tmtwo \in \HAbs\aset
  }
\end{mathpar}

\begin{mathpar}
  \inferrule
    {\var \in \sset}
    {\var \in \Struct\sset}

  \inferrule
    {\tm \in \Struct\sset \\ \var\notin\sset}
    {\tm\esub\var\tmtwo \in \Struct\sset}

  \inferrule
    {\tm \in \Struct\sset}
    {\tm\,\tmtwo \in \Struct\sset}

  \inferrule{
    \tm \in \Struct{\sset\cup\set\var}
    \\ \var\notin\sset
    \\ \tmtwo \in \Struct\sset
  }{
    \tm\esub\var\tmtwo \in \Struct\sset
  }
\end{mathpar}
  \caption{Inductive rules for Hereditary Abstractions (Top) and Structures (Bottom)}
  \label{fig:HAbs_Struct}
\end{figure}

\paragraph{\bf Useful Call-by-Value Reduction Rules.}
We can now recall the reduction rules of \UOCBV. 
We first define a \defn{step kind} as an element of the set
$\set{\dbsym, \lsvsym, \rulesub\var\val}$, with $\var$ a variable 
and $\val$ a value such that $\var \notin \fv\val$.
This set is used to decorate reduction steps so as to distinguish between their different nature.
Let $\tov\rulename\aset\sset\appflag$ be a family of binary relations, 
where $\rulename$ is a step kind, $\aset$ is an abstraction frame, 
$\sset$ is a structure frame, and $\appflag \in \set{\app, \nonapp}$ is 
a constant called a \defn{positional flag}.
The positional flag $\app$ indicates that the current focus of evaluation
is in an applied position, while $\nonapp$ corresponds to a non-applied position.
For example, the underlined $\var$ is applied
in $\underline{\var}\esub{\var}{\id}\,\tm$
but not in $(\varthree\,\underline{\var})\esub{\var}{\id}$.
The relation
$\tov\rulename\aset\sset\appflag$ of \UOCBV is 
inductively defined\footnote{In~\cite{BKM-CADE25}, this same 
relation is written $\tovoriginal\rulename\aset\sset\appflag$.} in~\autoref{fig:uocbv_reduction}.

\begin{figure}[t]
\begin{mathpar}
  \inferrule* [right = \ruleUDb]
    { }
    {(\lam\var\tm)\sctx\,\tmtwo \tov\ruledb\aset\sset\appflag \tm\esub\var\tmtwo\sctx}

  \inferrule* [right = \ruleUSub]
    { }
    {\var \tov{\rulesub\var\val}{\aset\cup\set\var}\sset\app \val}

  \inferrule* [right = \ruleULsv]
    {\tm \tov{\rulesub\var\val}{\aset\cup\set\var}\sset\appflag \tm'
    \\ \var \notin \aset\cup\sset
    \\ \val\sctx \in \HAbs\aset}
    {\tm\esub\var{\val\sctx} \tov\rulelsv\aset\sset\appflag \tm'\esub\var\val\sctx}

  \inferrule* [right = \ruleUEsR]
    {\tmtwo \tov\rulename\aset\sset\nonapp \tmtwo'}
    {\tm\esub\var\tmtwo \tov\rulename\aset\sset\appflag \tm\esub\var{\tmtwo'}}

  \inferrule* [right = \ruleUAppL]
    {\tm \tov\rulename\aset\sset\app \tm'}
    {\tm\,\tmtwo \tov\rulename\aset\sset\appflag \tm'\,\tmtwo}

  \inferrule* [right = \ruleUAppR]
    {\tm \in \Struct\sset \\ \tmtwo \tov\rulename\aset\sset\nonapp \tmtwo'}
    {\tm\,\tmtwo \tov\rulename\aset\sset\appflag \tm\,\tmtwo'}

  \inferrule* [right = \ruleUEsLAbs]{
    \tm \tov\rulename{\aset\cup\set\var}\sset\appflag \tm'
    \\ \tmtwo \in \HAbs\aset
    \\ \var \notin \aset\cup\sset
    \\ \var \notin \fv\rulename
  }{
    \tm\esub\var\tmtwo
    \tov\rulename\aset\sset\appflag
    \tm'\esub\var\tmtwo
  }

  \inferrule* [right = \ruleUEsLStruct] {
    \tm \tov\rulename\aset{\sset\cup\set\var}\appflag \tm'
    \\ \tmtwo \in \Struct\sset
    \\ \var \notin \aset\cup\sset
    \\ \var \notin \fv\rulename
  }{
    \tm\esub\var\tmtwo 
    \tov\rulename\aset\sset\appflag
    \tm'\esub\var\tmtwo
  }
\end{mathpar}
\caption{Reduction Rules of \UOCBV}
\label{fig:uocbv_reduction}
\end{figure}

Note that rule \ruleUDb is the same as the one
used in \VSC and \VSCne. Rule \ruleUSub is to be understood as an \emph{auxiliary} rule that performs a
\emph{linear} substitution on a particular free occurrence of a variable.
Rule \ruleULsv relies on an auxiliary \ruleUSub-step to perform a linear substitution within the left-hand side of a closure. All the remaining  rules
close the reduction relation under \emph{weak} contexts. Since \ruleUSub plays only an auxiliary role in the reduction, we are principally interested in the relation generated by $\dbsym$- and $\lsvsym$-steps. 
This motivates the forthcoming notion of \emph{primary reduction} introduced below.

The reduction in \UOCBV is non-erasing as well as in \VSCne, but the former can be seen as an optimisation of the latter (and thus of \VSC).
As explained before, reduction in \UOCBV is \emph{weak}, \ie it is not allowed under
abstractions, and it is \emph{open}, since it allows reducing
terms with free occurrences of variables.  The specification
of \UOCBV leaves the order of evaluation slightly underspecified. For
example, in a term like $\var\,\tm\,\tmtwo$, the subterms $\tm$ and
$\tmtwo$ can be evaluated in any order, without necessarily adhering
to a strict (\eg~left-to-right) order. Technically speaking, the rules
to evaluate applications are not mutually exclusive, and similarly for
ESs. 
This non-determinism is harmless, as in fact \UOCBV enjoys the
diamond property (\cf~\autoref{thm:diamond-property}). 
For a concrete example in which rules \ruleUAppL and \ruleUAppR overlap,
consider:
\[
    \var\,(\vartwo\esub\vartwo\id)\,(\id\,\id) \
  \tovinv\ruledb\emptyset{\set\var}\nonapp \
    \var\,(\id\,\id)\,(\id\,\id) \
  \tov\ruledb\emptyset{\set\var}\nonapp \
    \var\,(\id\,\id)\,(\vartwo\esub\vartwo\id)
\]

For a concrete example in which \ruleUEsLStruct and \ruleUEsR overlap,
consider:
\[
    (\vartwo\esub\vartwo\id)\esub\var{\varthree\,(\id\,\id)} \
  \tovinv\ruledb\emptyset{\set\varthree}\nonapp \
    (\id\,\id)\esub\var{\varthree\,(\id\,\id)} \
  \tov\ruledb\emptyset{\set\varthree}\nonapp \
    (\id\,\id)\esub\var{\varthree\,(\vartwo\esub\vartwo\id)}
\]

As discussed in the introduction, useful evaluation is \emph{context-sensitive}.
That is, the context surrounding a term determines whether a step is 
useful or not.
For instance, the term $\tm = \var\esub\var\id$ is already a 
\UOCBV-normal form with respect to
parameters $\aset=\emptyset$, $\sset=\emptyset$, and $\appflag=\nonapp$
because substituting $\var$ by $\id$ is not useful, 
as it does not create a $\dbsym$-redex.
However, if $\tm$ appears on the left of an application, as in 
$\var\esub\var\id \, (\vartwo \, \id)$, the substitution of $\var$ 
by $\id$ becomes useful, leading to the following reduction sequence:
\[
    \var\esub\var\id \, (\vartwo \, \id) \
  \tov\rulelsv\emptyset{\set{\vartwo}}\nonapp \
    \id\esub\var\id \, (\vartwo \, \id) \
  \tov\ruledb\emptyset{\set{\vartwo}}\nonapp \
    \varthree\esub\varthree{\vartwo \, \id}\esub\var\id
\]
Conversely, a useful step like 
$(\var\,\tmtwo)\esub\var\id 
\tov\rulelsv\emptyset\emptyset\nonapp (\id\,\tmtwo)\esub\var\id$ may 
not be allowed if the term is located below a context such as
$\lam\vartwo\ctxhole$; in fact, note that 
$\lam\vartwo{(\var\,\tmtwo)\esub\var\id}$ cannot be reduced in \UOCBV.

A term $\tm$ is \defn{$(\rulename,\aset,\sset,\appflag)$-irreducible}
if there is no term $\tm'$ such that 
$\tm \tov\rulename\aset\sset\appflag \tm'$.
A term $\tm$ belongs to the set $\Irred\aset\sset\appflag$ if $\tm$ 
is $(\rulename,\aset,\sset,\appflag)$-irreducible.

An abstraction frame $\aset$ and a structure frame $\sset$ are said 
to verify the \defn{correctness invariant} for $\tm$, written
$\inv\aset\sset\tm$, if $\aset \cap \sset = \emptyset$ and 
$\fv\tm \subseteq \aset \cup \sset$.
This invariant is often assumed in theorems and lemmas.

As explained before, rule \ruleUSub is a secondary rule
used to ensure that substitution in \UOCBV is \emph{linear}. In
general, we care about \defn{primary \UOCBV reduction}, 
defined as $\tovTopb{\aset}{\sset} {\eqdef} \tov\dbsym{\aset}\sset\nonapp \cup \tov\lsvsym{\aset}\sset\nonapp$,
where the free variables of the term being evaluated are included in $\aset \cup \sset$. 
From now on, \UOCBV reduction  refers to \emph{primary} \UOCBV reduction, unless stated otherwise. 
The same for the corresponding notion of \defn{\emph{primary} \UOCBV evaluation}.

\begin{exa}
\label{ex:reduction-for-tight}
The following is a \UOCBV-reduction sequence to normal form. 
Notice the non-erasing aspect of the reduction sequence:
\[
\begin{array}{llll}
& (\var \, \varthree)\esub\var{\vartwo\esub\vartwo\id}
& \tov\rulelsv\emptyset{\set\varthree}\nonapp
& (\vartwo \, \varthree)\esub\var\vartwo\esub\vartwo\id 
\\ [0.5em] 
  \tov\rulelsv\emptyset{\set\varthree}\nonapp
& (\id \, \varthree)\esub\var\vartwo\esub\vartwo\id
& \tov\ruledb\emptyset{\set\varthree}\nonapp 
& \var_1\esub{\var_1}\varthree\esub\var\vartwo\esub\vartwo\id
\end{array}
\]
\end{exa}

To conclude this section,
we recall two key operational properties of \UOCBV,
starting with a characterisation of the set of \UOCBV-normal forms:
\begin{thmC}[{\cite[Theorem~1]{BKM-CADE25}}]
\label{thm:uocbv_characterisation_of_normal_forms}
The set of irreducible terms $\Irred\aset\sset\appflag$ is exactly 
the set $\NF\aset\sset\appflag$ inductively defined as:

\begin{mathpar}
  \inferrule* [right = \ruleUNFVar]{
    \var \in \aset \text{ implies } \appflag = \nonapp
  }{
    \var \in \NF\aset\sset\appflag
  }

  \inferrule* [right = \ruleUNFLam]{
  }{ 
    \lam\var\tm \in \NF\aset\sset\nonapp
  }

  \inferrule* [right = \ruleUNFApp]{
    \tm \in \NF\aset\sset\app
    \\ \tmtwo \in \NF\aset\sset\nonapp
  }{
    \tm\,\tmtwo \in \NF\aset\sset\appflag
  }
  
  \inferrule* [right = \ruleUNFEsAbs]{
    \tm \in \NF{\aset\cup\set\var}\sset\appflag
    \\ \tmtwo \in \NF\aset\sset\nonapp
    \\ \tmtwo \in \HAbs\aset
  }{
    \tm\esub\var\tmtwo \in \NF\aset\sset\appflag
  }

  \inferrule* [right = \ruleUNFEsStruct]{
    \tm \in \NF\aset{\sset\cup\set\var}\appflag
    \\ \tmtwo \in \NF\aset\sset\nonapp
    \\ \tmtwo \in \Struct\sset
  }{
    \tm\esub\var\tmtwo \in \NF\aset\sset\appflag
  }
\end{mathpar}
\end{thmC}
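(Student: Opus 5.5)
The plan is to prove the two inclusions $\NF\aset\sset\appflag \subseteq \Irred\aset\sset\appflag$ and $\Irred\aset\sset\appflag \subseteq \NF\aset\sset\appflag$ separately. Both go by structural induction on the term $\tm$, generalised over all $\aset$, $\sset$ and $\appflag$ at once. The generalisation matters because the rules of \autoref{fig:uocbv_reduction} change the frames: \ruleUEsLAbs and \ruleUEsLStruct extend $\aset$ or $\sset$, and \ruleUAppL switches the flag to $\app$. The induction hypothesis must therefore be available at the modified parameters. As in the original statement, irreducibility here quantifies over every step kind $\rulename$, including the auxiliary $\rulesub\var\val$ steps, which is what makes \ruleUNFVar match \ruleUSub.

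\textbf{Soundness} ($\NF{}{}{} \subseteq \Irred{}{}{}$). I proceed by induction on the derivation of $\tm \in \NF\aset\sset\appflag$ and check that no rule of \autoref{fig:uocbv_reduction} can fire.
\begin{itemize}
\item For \ruleUNFVar, only \ruleUSub could apply, and it requires $\var \in \aset$ and $\appflag = \app$, which the premise excludes.
\item For \ruleUNFLam, no rule applies at all, since there is no reduction under $\lambda$.
\item For \ruleUNFApp, rule \ruleUDb would need $\tm = (\lam\vartwo\tm_0)\sctx$. I need the auxiliary fact that such a term is never in $\NF\aset\sset\app$, proved by induction on $\sctx$: \ruleUNFLam only gives the flag $\nonapp$, and the ES rules preserve the flag. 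Rule \ruleUAppL is excluded by the induction hypothesis at flag $\app$, and rule \ruleUAppR by the induction hypothesis on $\tmtwo$.
\item For the two ES rules, \ruleUEsR is blocked by the induction hypothesis on $\tmtwo$. \ruleUEsLAbs and \ruleUEsLStruct are blocked by the induction hypothesis on $\tm$ at the extended frame. \ruleULsv would require a $\rulesub\var\val$ step of $\tm$ at frame $\aset\cup\set\var$, which the induction hypothesis also forbids.
\end{itemize}
One subtlety arises when the rule used in the $\NF{}{}{}$ derivation (say with $\tmtwo \in \Struct\sset$) differs from the reduction rule being attempted (say \ruleUEsLAbs, which needs $\tmtwo \in \HAbs\aset$). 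To handle this I want a disjointness lemma, $\HAbs\aset \cap \Struct\sset = \emptyset$ when $\aset \cap \sset = \emptyset$, proved by induction. If that disjointness fails without the correctness invariant, I would instead add $\inv\aset\sset\tm$ as a hypothesis and propagate it through the induction.

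\textbf{Completeness} ($\Irred{}{}{} \subseteq \NF{}{}{}$). I proceed by induction on $\tm$ and case on its shape.
\begin{itemize}
\item A variable in $\Irred\aset\sset\appflag$ must satisfy the side condition of \ruleUNFVar, since otherwise \ruleUSub fires.
\item An abstraction at flag $\app$ cannot arise, so $\appflag = \nonapp$ and \ruleUNFLam applies. Strictly, $\lam\var\tm$ is irreducible even at $\app$ (no rule reduces a bare $\lambda$), so this case needs the observation that such positions never arise, or else an $\NF{}{}{}$ rule admitting it. I need to check the cited definition here; I expect the intended statement tacitly restricts to reachable configurations.
\item For an application $\tm\,\tmtwo$, irreducibility gives: $\tm$ is not an abstraction up to $\sctx$ (otherwise \ruleUDb fires); $\tm \in \Irred\aset\sset\app$ (otherwise \ruleUAppL fires); and $\tmtwo$ is irreducible at $\nonapp$ whenever $\tm \in \Struct\sset$. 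The real obstacle is when $\tm \notin \Struct\sset$: then I must still show $\tmtwo \in \NF\aset\sset\nonapp$. I would first establish a classification lemma saying that, under the invariant, every $\tm \in \NF\aset\sset\app$ lies in $\Struct\sset$. It would be proved by induction, the key point being that an applied irreducible variable must lie in $\sset$ because \ruleUSub forbids $\aset$. With that lemma, \ruleUAppR becomes available and yields the irreducibility of $\tmtwo$.
\item For an ES $\tm\esub\var\tmtwo$, irreducibility through \ruleUEsR gives $\tmtwo \in \NF\aset\sset\nonapp$. A second classification lemma then says that every such $\tmtwo$ is in $\HAbs\aset$ or in $\Struct\sset$, proved by induction using again the \ruleUSub-based argument. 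Depending on which holds, irreducibility through \ruleUEsLAbs or \ruleUEsLStruct, together with the absence of \ruleULsv redexes, gives the premise on $\tm$ at the extended frame. This requires relating $\rulesub\var\val$ steps to \ruleULsv: a $\rulesub\var\val$ step exists only if $\var$ occurs applied, and $\tmtwo = \val\sctx \in \HAbs\aset$ in the abstraction case.
\end{itemize}

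The main obstacle, then, is these two classification lemmas: applied normal forms are structures, and non-applied normal forms under the invariant are hereditary abstractions or structures. Together they force carrying $\inv\aset\sset\tm$ throughout the induction.
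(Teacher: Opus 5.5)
The paper does not prove this statement; it imports it from \cite{BKM-CADE25}. So I am checking your plan on its own terms. Your architecture is the natural one, and your auxiliary lemmas are true. Disjointness of $\HAbs\aset$ and $\Struct\sset$ needs only $\aset\cap\sset=\emptyset$; the one delicate case, both fourth rules, closes by the induction hypothesis on the ES argument. Under the invariant, applied normal forms are structures, and non-applied ones lie in $\HAbs\aset\cup\Struct\sset$. With these, your soundness direction goes through.

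\textbf{The gap is the flag $\app$ case, which you flagged and left open.} There the equality is false, and not only for bare abstractions. Every answer $(\lam\vartwo\tm_0)\sctx$ with irreducible $\sctx$ lies in $\Irred\aset\sset\app$ but never in $\NF\aset\sset\app$. An example is $\id\esub\var\varthree$ with $\varthree\in\sset$; your ES case would pass it to the induction hypothesis on $\id$ at flag $\app$, which fails. ``Such positions never arise'' cannot be used inside an induction over arbitrary $(\tm,\appflag)$; it must be built into the statement. Prove instead: if $\inv\aset\sset\tm$, $\tm\in\Irred\aset\sset\appflag$, and $\appflag=\app$ implies $\neg\absPred\tm$, then $\tm\in\NF\aset\sset\appflag$. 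The side condition holds exactly where you invoke the hypothesis at $\app$:
\begin{itemize}
\item for the function part of an irreducible application, since otherwise \ruleUDb fires, as you observed;
\item for the body of an ES, since $\neg\absPred{\tm\esub\var\tmtwo}$ implies $\neg\absPred\tm$.
\end{itemize}
The abstraction case at $\app$ then becomes vacuous. The equality holds at $\nonapp$, and at $\app$ for non-answers.

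\textbf{The invariant is required, not a fallback.} Without it the equality fails in both directions:
\begin{itemize}
\item If $\vartwo\in\aset\cap\sset$ and $\varthree\in\sset$, then $(\var\,\varthree)\esub\var\vartwo\in\NF\aset\sset\nonapp$ by \ruleUNFEsStruct. Yet it $\lsvsym$-reduces by \ruleULsv, because $\vartwo\in\HAbs\aset$.
\item If $\vartwo\notin\aset\cup\sset$, then $\vartwo\,(\id\,\id)$ is irreducible, since \ruleUAppL needs $\vartwo\in\aset$ and \ruleUAppR needs $\vartwo\in\Struct\sset$. But it is not in $\NF\aset\sset\nonapp$.
\end{itemize}
So assume $\inv\aset\sset\tm$ throughout, and choose ES-bound variables outside $\aset\cup\sset$ so the invariant propagates to the extended frames.

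\textbf{The ES case of completeness needs one more lemma.} You must show the body irreducible at the extended frame for \emph{every} step kind. However, \ruleUEsLAbs and \ruleUEsLStruct only lift kinds with $\var\notin\fv\rulename$. This leaves out $\rulesub\var{\val'}$ where $\val'$ may differ from the value heading $\tmtwo$. It also leaves out $\rulesub\vartwo{\val'}$ with $\var\in\fv{\val'}$, which your sketch does not address. Both are covered by a value-independence lemma: if $\tm$ has a $(\rulesub\vartwo{\val'},\aset',\sset',\appflag)$-step, then it has a $(\rulesub\vartwo{\val''},\aset',\sset',\appflag)$-step for every value $\val''$ with $\vartwo\notin\fv{\val''}$. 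The reason is that such a derivation is a chain of congruence rules above one \ruleUSub, and no side condition mentions the value beyond freshness, which $\alpha$-renaming handles. Combine this with two further facts. Every element of $\HAbs\aset$ has the form $\val\sctx$. A $\rulesub\var{\val'}$-step at abstraction frame $\aset$ forces $\var\in\aset$, because the frame only grows by bound variables distinct from $\var$. Together these close both subcases of the ES case.
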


Concerning \UOCBV reduction, a relevant property in this article is that it enjoys a strong form of confluence, namely the \emph{diamond property}.
To obtain such a result, we first provide a more general statement, for which it is straightforward to conclude the particular case of primary \UOCBV.
\begin{restatable}[Diamond Property of \UOCBV]{thm}{diamondproperty}
\label{thm:diamond-property}
Let $\tm$ be a term, $\aset$ an abstraction frame, and $\sset$ a 
structure frame such that $\inv\aset\sset\tm$ holds.
Consider two step kinds $\rulename_1$ and $\rulename_2$ and
suppose that
$\tm \tov{\rulename_1}\aset\sset\appflag \tm_1$ and
$\tm \tov{\rulename_2}\aset\sset\appflag \tm_2$, 
with $\tm_1 \neq \tm_2$.
Assume that if $\rulename_1 = \rulesub\var{\val_1}$ and 
$\rulename_2 = \rulesub\var{\val_2}$, then $\val_1 = \val_2$.

Then, there exists a term $\tm'$ such that
$\tm_1 \tov{\rulename_2}\aset\sset\appflag \tm'$ and
$\tm_2 \tov{\rulename_1}\aset\sset\appflag \tm'$.
\end{restatable}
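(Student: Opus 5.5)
We will prove the statement by induction on the derivation of $\tm \tov{\rulename_1}\aset\sset\appflag \tm_1$, followed by case analysis on the last rule used for $\tm \tov{\rulename_2}\aset\sset\appflag \tm_2$. The cases are organised by the shape of $\tm$.

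Before the induction, we establish auxiliary lemmas about how the side conditions behave along steps. The central one is a \emph{stability} lemma. Suppose $\inv\aset\sset\tm$ and $\tm \tov\rulename\aset\sset\appflag \tm'$. Then $\tm \in \HAbs\aset$ iff $\tm' \in \HAbs\aset$, and $\tm \in \Struct\sset$ iff $\tm' \in \Struct\sset$. We also need that $\HAbs\aset$ and $\Struct\sset$ are disjoint when $\aset\cap\sset=\emptyset$. Finally, we need to know how steps depend on the frames and the flag: a $\dbsym$-step under $\appflag$ is also a step under the other flag, and a step with kind $\rulesub\var\val$ forces $\appflag=\app$ at the variable occurrence. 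These lemmas are proved by routine induction on $\HAbs/\Struct$ membership and on steps. Their role is to guarantee that after one step fires, the side conditions of the other rule ($\tm\in\Struct\sset$ in \ruleUAppR, $\tmtwo\in\HAbs\aset$ or $\tmtwo\in\Struct\sset$ in \ruleUEsLAbs and \ruleUEsLStruct, $\val\sctx\in\HAbs\aset$ in \ruleULsv) still hold. The invariant $\inv\aset\sset\tm$ propagates to subterms with the extended frames in the obvious way, since reduction does not create free variables.

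The case analysis then splits as follows.

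\textbf{Variables.} Only \ruleUSub applies, so $\rulename_1=\rulename_2=\rulesub\var{\val}$ with the same value by hypothesis. Hence $\tm_1=\tm_2$ and the case is vacuous.

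\textbf{Abstractions.} These are irreducible.

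\textbf{Applications $\tm=\tmthree\,\tmtwo$.}
\begin{itemize}
\item \ruleUDb against \ruleUAppL is impossible: the left part $(\lam\var\tm)\sctx$ under flag $\app$ can only reduce inside $\sctx$, via \ruleUEsR or \ruleUEsLAbs/\ruleUEsLStruct. We must rule out, or else close, these steps. If a step inside $\sctx$ is possible, the two steps commute, because the distance rule carries $\sctx$ along unchanged. So this case becomes a commutation case rather than an impossible one.
\item \ruleUDb against \ruleUAppR is impossible, because an abstraction in a substitution context is not in $\Struct\sset$ (disjointness lemma).
\item \ruleUAppL against \ruleUAppR commute, with $\Struct$ preserved by stability.
\item Two steps of the same rule are closed by the induction hypothesis.
\end{itemize}

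\textbf{Closures $\tm=\tmthree\esub\var\tmtwo$.}
\begin{itemize}
\item \ruleUEsR against \ruleUEsLAbs or \ruleUEsLStruct are disjoint steps. They commute once we know that $\tmtwo'$ stays in $\HAbs\aset$, resp.\ $\Struct\sset$.
\item \ruleUEsLAbs against \ruleUEsLStruct cannot both apply, since $\tmtwo$ cannot be in both $\HAbs\aset$ and $\Struct\sset$.
\item \ruleULsv against \ruleUEsR: the step in $\tmtwo=\val\sctx$ under flag $\nonapp$ can only occur inside $\sctx$, because $\val$ is irreducible under $\nonapp$ (by \ruleUNFVar and \ruleUNFLam). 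So $\val$ is untouched and the steps commute.
\item \ruleULsv against \ruleUEsLAbs (the only compatible left rule, since $\val\sctx\in\HAbs\aset$): both steps live in $\tmthree$ under frame $\aset\cup\set\var$. One has kind $\rulesub\var\val$ and the other has kind $\rulename_2$ with $\var\notin\fv{\rulename_2}$. We apply the induction hypothesis to these two steps; the hypothesis on equal values is respected because $\rulename_2\neq\rulesub\var{\cdot}$. We then rewrap with \ruleULsv and \ruleUEsLAbs. This rewrapping needs $\val\sctx\in\HAbs\aset$ to persist, which it does trivially here since the ES content is unchanged.
\item Two \ruleULsv steps substitute two distinct occurrences of $\var$ (because $\tm_1\neq\tm_2$) with the same $\val$. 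The induction hypothesis gives the join, which is exactly the reason the statement is phrased with the equal-value assumption.
\end{itemize}

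The main obstacle is the stability lemma for $\HAbs\aset$ and $\Struct\sset$ under reduction. In particular, we need that a $\dbsym$-step does not turn a structure into a non-structure or vice versa. This requires understanding how substitution contexts created by \ruleUDb are classified, as in the fourth rules of \autoref{fig:HAbs_Struct}. We also need that \ruleULsv steps inside $\tmtwo$ preserve membership, which uses that a variable $\var\in\aset$ substituted by an abstraction remains a hereditary abstraction. We would attempt this lemma first, by induction on the step, with an auxiliary lemma on how substitution contexts classify their bound variables.
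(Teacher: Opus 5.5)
Your decomposition (induction on the step, a pairwise analysis of the last rules, and closure of $\HAbs\aset$ and $\Struct\sset$ under reduction plus their disjointness as the auxiliary facts) is the paper's. In the \ruleUEsR/\ruleUEsLAbs and \ruleUEsR/\ruleUEsLStruct cases you are more careful than the paper, since you note that the content must stay in $\HAbs\aset$, resp.\ $\Struct\sset$.

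The gap is in the \ruleULsv/\ruleULsv and \ruleULsv/\ruleUEsLAbs cases. \ruleULsv extrudes the substitution context, so $\tm_1 = \tmthree_1\esub\var\val\sctx$, not $\tmthree_1\esub\var{\val\sctx}$. Your ``rewrap with \ruleUEsLAbs'' is therefore applied to the wrong term. Likewise, in the \ruleULsv/\ruleULsv case the induction hypothesis does not by itself give the join. From $\tm_1$ (and $\tm_2$) you must push the closing step under every explicit substitution of $\sctx$ by \ruleUEsLAbs or \ruleUEsLStruct, at the frames enlarged by the outer ESs (the paper's $\expansion\aset\sctx$, $\expansion\sset\sctx$). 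This needs two facts missing from your list of lemmas: monotonicity of steps in the frames, and $\val\in\HAbs{\expansion\aset\sctx}$. Crucially, it also needs the content of each ES of $\sctx$ to be a hereditary abstraction or a structure for the current frames. Nothing guarantees this, because $\val\sctx\in\HAbs\aset$ may be derived by the third rule for hereditary abstractions, which puts no condition on the content.

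This cannot be patched, because the statement itself fails.
\begin{itemize}
\item Let $\aset=\emptyset$, $\sset=\set\varthree$, $\appflag=\nonapp$ and $\tm = (\varthree\,(\var\,\varthree)\,(\var\,\varthree))\esub\var{\id\esub\vartwo{\id\,\id}}$. Then $\inv\aset\sset\tm$ holds.
\item Since $\varthree\,(\var\,\varthree)\in\Struct{\set\varthree}$, each of the two occurrences of $\var$ can be replaced via \ruleUSub under \ruleUAppL and \ruleUAppR, with abstraction frame $\set\var$.
\item Also $\id\esub\vartwo{\id\,\id}\in\HAbs\emptyset$ by the third rule. So \ruleULsv gives $\tm\tov\lsvsym\aset\sset\nonapp(\varthree\,(\id\,\varthree)\,(\var\,\varthree))\esub\var\id\esub\vartwo{\id\,\id}=\tm_1$ and $\tm\tov\lsvsym\aset\sset\nonapp(\varthree\,(\var\,\varthree)\,(\id\,\varthree))\esub\var\id\esub\vartwo{\id\,\id}=\tm_2$.
\item Neither $\tm_1$ nor $\tm_2$ has any $\lsvsym$-step. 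The outermost content $\id\,\id$ is not of the form $\val\sctx$, so \ruleULsv does not apply. It has only a $\dbsym$-step, so \ruleUEsR yields none. It belongs neither to $\HAbs\emptyset$ nor to $\Struct{\set\varthree}$, so \ruleUEsLAbs and \ruleUEsLStruct are blocked.
\item Replacing the second $\var\,\varthree$ by $\id\,\id$ likewise refutes the \ruleULsv/\ruleUEsLAbs case, with a $\dbsym$-step in the body.
\end{itemize}
The paper's proof has the same hole (``by applying the rules \ruleUEsLAbs or \ruleUEsLStruct successively, as appropriate''). So the theorem and its corollary for primary reduction need an extra invariant on the ESs of $\sctx$, or a change of the rules, before any proof can go through.

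Two minor remarks. First, only the forward direction of your stability lemma holds: for $\vartwo\in\aset$ we have $\id\,\vartwo\tov\dbsym\aset\sset\nonapp\var\esub\var\vartwo\in\HAbs\aset$ although $\id\,\vartwo\notin\HAbs\aset$. The forward direction is all you use. Second, the \ruleUDb/\ruleUAppL case is a genuine commutation, as your self-correction says, not an impossible case.
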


\begin{proof}
By structural induction on $\tm$. Full proof in~\autoref{appendix}.
\end{proof}

\begin{cor}
\label{cor:diamond-property-primary-uocbv}
\UOCBV reduction enjoys the diamond property. 
More precisely,
let $\rulename_1, \rulename_2 \in \set{\dbsym, \lsvsym}$.
If $\tm \tov{\rulename_1}\aset\sset{\nonapp} \tm_1$ and
$\tm \tov{\rulename_2}\aset\sset{\nonapp} \tm_2$, 
then, there exists a term $\tm'$ such that
$\tm_1 \tov{\rulename_2}\aset\sset{\nonapp} \tm'$ and
$\tm_2 \tov{\rulename_1}\aset\sset{\nonapp} \tm'$.
\end{cor}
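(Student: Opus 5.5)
This corollary is the instance $\appflag = \nonapp$ of \autoref{thm:diamond-property}, so the plan is simply to check that theorem's hypotheses in the primary setting. Fix $\tm$, $\aset$, $\sset$ with $\tm \tov{\rulename_1}\aset\sset\nonapp \tm_1$ and $\tm \tov{\rulename_2}\aset\sset\nonapp \tm_2$, where $\rulename_1,\rulename_2 \in \set{\dbsym,\lsvsym}$.

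First, the correctness invariant $\inv\aset\sset\tm$ is required. By the definition of primary \UOCBV reduction, the free variables of the evaluated term lie in $\aset\cup\sset$. The condition $\aset\cap\sset=\emptyset$ is the standing assumption on frames whenever primary reduction is applied to a term. So I would state $\inv\aset\sset\tm$ explicitly as part of the meaning of ``\UOCBV reduction'' in the corollary.

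Second, the side condition on step kinds of the form $\rulesub\var\val$ holds vacuously here. Both $\rulename_1$ and $\rulename_2$ are in $\set{\dbsym,\lsvsym}$, so neither is a $\mathsf{sub}$ kind.

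Third, the case $\tm_1 \neq \tm_2$ is discharged directly by \autoref{thm:diamond-property}. This yields $\tm'$ with $\tm_1 \tov{\rulename_2}\aset\sset\nonapp \tm'$ and $\tm_2 \tov{\rulename_1}\aset\sset\nonapp \tm'$.

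The case $\tm_1 = \tm_2$ is the only step that needs care, because the corollary as stated omits the hypothesis $\tm_1 \neq \tm_2$. When $\tm_1 = \tm_2$, the corollary literally asks for a further common reduct $\tm'$. Such a $\tm'$ need not exist: $\tm_1$ may already be a normal form. For example, $\id\,\vartwo \tov\ruledb\emptyset{\set\vartwo}\nonapp \varthree\esub\varthree\vartwo$, and the latter has no further step. I would therefore read the statement as the standard diamond property from \autoref{subsec:general_notions}, which quantifies only over distinct reducts $\tm_1 \neq \tm_2$. Under that reading the equal case lies outside the property and there is nothing to prove. The only remaining work is to note that the corollary must be read in that way.
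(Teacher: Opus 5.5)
Your proposal is correct and matches the paper. The paper obtains the corollary by instantiating \autoref{thm:diamond-property} with $\appflag = \nonapp$ and $\rulename_1,\rulename_2 \in \set{\dbsym,\lsvsym}$, so the side condition on $\mathsf{sub}$ kinds is vacuous, with $\inv\aset\sset\tm$ implicit in primary reduction. You are also right that the statement must be read with $\tm_1 \neq \tm_2$, as in the definition of the diamond property in \autoref{subsec:general_notions}: the corollary as written omits this hypothesis, and your normal-form example shows that the literal version fails.
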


Some straightforward corollaries of the diamond property follow.
\begin{cor}
\label{cor:consequences-dp} \hfill
\begin{itemize}
\item \label{cor:consequences-dp-weakly-strong} 
  A term $\tm$ is weakly terminating in \UOCBV if and only if $\tm$ is terminating in primary \UOCBV.
\item \label{cor:consequences-dp-same-length} 
  Any two \UOCBV-reduction sequences to normal form have the same number of
  $\dbsym$- and $\lsvsym$-steps.
\item \label{cor:consequences-dp-conf} 
  \UOCBV reduction is confluent. 
\end{itemize}
\end{cor}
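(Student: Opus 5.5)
The plan is to treat primary \UOCBV reduction $\tovTopb\aset\sset$ as an abstract rewriting relation and derive all three items from the diamond property (\autoref{cor:diamond-property-primary-uocbv}) by standard abstract arguments, without returning to the inductive rules of \autoref{fig:uocbv_reduction}. The one point to check first is that the relation remains within its scope along reductions: the invariant $\inv\aset\sset\tm$ must persist. This follows since \UOCBV steps do not create free variables ($\dbsym$ and $\lsvsym$ only rearrange or copy subterms whose free variables were already present), so $\fv{\tm'} \subseteq \fv\tm$ whenever $\tm \tovTopb\aset\sset \tm'$.

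For the second item, I will prove by induction on $n$ the following claim. If $\tm$ has an evaluation of length $n$ to a normal form $\tmtwo$, then every reduction sequence from $\tm$ has length at most $n$, and every evaluation from $\tm$ ends in $\tmtwo$ with exactly $n$ steps. Moreover, the multiset of step kinds is the same in all such evaluations, namely the same numbers of $\dbsym$- and $\lsvsym$-steps.

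The case $n=0$ is immediate, since a normal form admits no step. For $n>0$, let $\tm \to \tm_1$ be the first step of the given evaluation, of kind $\rulename_1$, and let $\tm \to \tm_2$ be the first step of another sequence, of kind $\rulename_2$. If $\tm_1 = \tm_2$, we apply the induction hypothesis to $\tm_1$. Otherwise, \autoref{cor:diamond-property-primary-uocbv} gives $\tm'$ with $\tm_1 \to \tm'$ of kind $\rulename_2$ and $\tm_2 \to \tm'$ of kind $\rulename_1$. We then apply the induction hypothesis to $\tm_1$ (length $n-1$) and obtain that $\tm'$ reaches $\tmtwo$ in exactly $n-2$ steps. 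It follows that $\tm_2$ reaches $\tmtwo$ in $n-1$ steps, so the induction hypothesis also applies to $\tm_2$, which settles the remaining sequence.

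The step kinds are tracked through the square: both routes, $\rulename_1$ followed by $\rulename_2$ and $\rulename_2$ followed by $\rulename_1$, contribute the same multiset of kinds. One subtle case must be handled: when $\tm_1 \neq \tm_2$ but $\rulename_1 = \rulename_2$, the square still works, because the corollary imposes no distinctness requirement on the kinds. This bookkeeping is where I expect the only real care to be needed, namely keeping the counts of $\dbsym$ and $\lsvsym$ separate rather than just the total length.

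The first item follows from the claim. If $\tm$ weakly terminates with an evaluation of length $n$, then every reduction sequence from $\tm$ has length at most $n$, so there is no infinite sequence and $\tm$ is terminating. The converse holds for any relation, since termination implies the existence of a normal form.

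For the third item, confluence, it suffices to note, as recalled in \autoref{subsec:general_notions}, that the diamond property implies confluence. Concretely, one tiles $\tm \rewriten{} \tm_0$ and $\tm \rewriten{} \tm_1$ by induction on the sum of their lengths, using single-step diamonds. Equal targets correspond to degenerate tiles, in which case the other path is simply reused.
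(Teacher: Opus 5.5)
Your route is the one the paper intends: it calls these three items straightforward corollaries of the diamond property and gives no further argument. Your treatment of the first and third items, and of the length part of the second, is fine.

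There is a hole in the kind bookkeeping for the second item, and it sits in the case you did not flag. Suppose the two first steps $\tm \to_{\rulename_1} \tm_1$ and $\tm \to_{\rulename_2} \tm_2$ satisfy $\tm_1 = \tm_2$. You then apply the induction hypothesis to $\tm_1$. This shows both evaluations have length $n$ and end in $\tmtwo$. However, their kind multisets are $\rulename_1$ plus those of the evaluation from $\tm_1$, and $\rulename_2$ plus the same. These coincide only if $\rulename_1 = \rulename_2$.

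You cannot use the diamond here. Theorem~\ref{thm:diamond-property} requires $\tm_1 \neq \tm_2$, and the printed Corollary~\ref{cor:diamond-property-primary-uocbv} silently drops that hypothesis, which genuinely fails when $\tm_1$ is normal. So you must rule out that one term reaches the same term by both a $\dbsym$-step and an $\lsvsym$-step. By contrast, the case you singled out ($\tm_1 \neq \tm_2$ with $\rulename_1 = \rulename_2$) needs no special care: the square simply swaps equal labels.

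The missing fact is easy to supply by counting application constructors:
\begin{itemize}
\item A $\dbsym$-step rewrites $(\lam\var\tm)\sctx\,\tmtwo$ to $\tm\esub\var\tmtwo\sctx$, keeping $\tm$, $\tmtwo$ and $\sctx$ intact. It therefore removes exactly one application.
\item An $\lsvsym$-step rewrites $\tm\esub\var{\val\sctx}$ to $\tm'\esub\var\val\sctx$, where $\tm'$ replaces one occurrence of $\var$ by a copy of $\val$. It adds the applications of $\val$, so it never decreases the count.
\end{itemize}
Both facts are preserved by the congruence rules, since these only place the step inside a context. Hence $\tm_1 = \tm_2$ forces $\rulename_1 = \rulename_2$, and with this lemma your induction goes through as written.
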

These corollaries are important for reaching our results in Section~\ref{sec:ucbv-U}.

\section{The Type System \texorpdfstring{$\typesystem$}{U}}
\label{sec:systemU}

In this section, we introduce the non-idempotent intersection type
system $\typesystem$ and state some properties that will be useful for the following sections. 
We begin with some preliminary considerations. 
First, normal forms in \VSCne and \UOCBV fall into three categories (recall \VSCne-normal
forms from~\autoref{subsec:vscne-op} and \UOCBV-normal forms
from~\autoref{thm:uocbv_characterisation_of_normal_forms}). 
A normal form \emph{without ESs} may be an \emph{abstraction} $\lam\var\tm$,
a \emph{variable} $\var$, or a \emph{structure}
$\var\,\tm_1\hdots\tm_n$, where $n \geq 1$ and each
$(\tm_i)_{i \in \set{1,\hdots,n}}$ is recursively in normal form.
Normal forms \emph{with ESs} also fit these categories up
to unfolding (\ie performing all ESs). 
Thus, for instance,
$(\varthree\,\vartwo)\esub\varthree{\var\,\vartwo}$ is a structure
because it can be understood as representing the term
$\var\,\vartwo\,\vartwo$.

Second, one key feature of non-idempotent intersection types is that a single
occurrence of an expression in the (static) source code of a
program may be (dynamically) used multiple times during execution, 
each time playing a different role, expressed by a different type.
Specifically, each abstraction is assigned a multiset of arrow types 
of the form $\mset{\typ_1,\hdots,\typ_n}$, where the cardinality $n$ 
corresponds to the number of times this abstraction\footnote{Or, 
more precisely, one of the \emph{descendants} of this abstraction.} 
takes part as the left-hand side of a $\dbsym$-redex during evaluation. 
For example, in $(\lam{f}{\var\,(f\,\vartwo)\,(f\,\varthree)}) \, \id$,
the identity function takes part in \emph{two} $\dbsym$-redexes 
during evaluation, so any type assignment for this term must provide 
\emph{two} different arrow types for $\id$. 
In contrast, abstractions that are never applied, such as $\id$ in 
$\var\esub\var\id$, are typed with the \emph{empty} multiset $\emset$.

In the type system $\typesystem$, terms whose normal form represent
abstractions (including variables bound to abstractions) are assigned
finite multisets of arrow types, called \defn{arrow multi-types}.  On
the other hand, terms whose normal forms represent structures
(including applied variables and variables bound to structures) are
assigned a distinguished type we denote with $\tightN$.  Formally, the
types of system $\typesystem$ are given by the following grammar:
\[
  \begin{array}{rlcl@{\sep\sep}rlcll}
    \text{\defn{(Arrow Types)}}       & \typ, \typtwo & \eqgram & \optmtyp \to \mtyp 
  \\
    \text{\defn{(Arrow Multi-Types)}} & \nityp, \nityptwo, \nityp_1, \nityp_2 & \eqgram & \mset{\typ_i}_\iI & \text{where $I$ is a finite set}
  \\ 
    \text{\defn{(Types)}}             & \mtyp, \mtyptwo, \mtyp_1, \mtyp_2 & \eqgram & \tightN \mid \nityp 
  \\
    \text{\defn{(Optional Types)}}    & \optmtyp, \optmtyptwo, \optmtyp_1, \optmtyp_2 & \eqgram & \none\ \mid \mtyp
  \end{array}
\] 

A type $\mtyp$ is said to be a \defn{constant type} if
$\mtyp \in \set{\tightN, \emset}$, in which case we denote
$\mtyp$ with the special symbol $\tight$. The type $\tightN$
is assigned to terms whose normal forms are structures, and $\emset$
is assigned to terms whose normal forms are abstractions not taking
part in $\dbsym$-redexes. Unlike in previous
non-idempotent intersection type systems for \CBV, we
distinguish the optional type $\none$ from $\emset$. 
The latter is used as explained above, while the former indicates that \emph{no typing
information is available}.
As in system $\systemV$, we want to count the number of elements in
a type, so that in system $\typesystem$, we extend the function $\length \_$ to optional types as follows:
$\length\none \eqdef 0$, $\length\tightN \eqdef 0$ and
$\length{\mset{\typ_k}_\kK} \eqdef \length{K}$. We remark that
$\length{\optmtyp_1+\optmtyp_2} = \length{\optmtyp_1}
+ \length{\optmtyp_2}$. 

\defn{Typing environments} $\tctx, \tctxtwo, \ldots$ are functions 
mapping variables to optional types, assigning $\none$ to all but 
finitely many variables.
The \defn{domain} of an environment $\tctx$ is
$\dom\tctx \eqdef \set{\var \mid \tctx(\var) \neq \none}$, and
$\emptyctx$ denotes the empty typing environment, mapping each 
variable to $\none$. 

The \defn{union of arrow multi-types}, written
$\nityp_1 \niunion \nityp_2$, is a multiset of types defined as
expected, where $\emset$ is the neutral element. The \defn{union of
types}, written $\mtyp_1 + \mtyp_2$, is the (associative) \emph
{partial} operation on types given by $\tightN
+ \tightN \eqdef \tightN$ and $\nityp_1
+ \nityp_2 \eqdef \nityp_1 \niunion \nityp_2$, with all other
cases  undefined.  The \defn{union of optional types} is
defined by $\none + \none \eqdef \none$, $\none + \mtyp \eqdef \mtyp$,
and $\mtyp + \none \eqdef \mtyp$, making $\none$ the neutral element.
For typing environments $(\tctx_i)_\iI$, we write $+_\iI \tctx_i$ for
the environment mapping each variable $\var$ to
$+_\iI \tctx_i(\var)$, where $\tctx + \tctxtwo$ and $\tctx
+_\jJ \tctxtwo_j$ (meaning $\tctx + (+_\jJ \tctxtwo_j)$) are taken to
be particular instances of the general notation. 
When $\dom\tctx \cap \dom\tctxtwo = \emptyset$, we write $\tctx;\tctxtwo$
instead of $\tctx + \tctxtwo$ to emphasise that the domains of $\tctx$
and $\tctxtwo$ are disjoint.
We write $\var_1 : \optmtyp_1;\hdots; \var_n : \optmtyp_n$ to make explicit that each $\var_i$ is mapped to $\optmtyp_i$ ($i \in \set{1,\hdots,n}$).
Note that the environment $\tctx; \var : \none$ is identical to $\tctx$, whereas $\tctx; \var : \emset$ is not.

We define a \defn{controlled weakening} relation $\mleq$ between 
optional types and types by declaring that $\none \mleq \tight$ and 
$\mtyp \mleq \mtyp$ hold.

\defn{Typing judgements} in system $\typesystem$ are of the form
$\judgs\tctx\tm\mtyp$, where $\tctx$ is a type environment, and
$\mtyp$ is a type.
\defn{Typing rules} of system $\typesystem$ are given in~\autoref{fig:systemU}.

\begin{figure}[t]
  \begin{mathpar}
    \inferrule* [right = \ruleTypVar]
      { }
      {\judgs{\var:\mtyp}\var\mtyp}
    
    \inferrule* [right = \ruleTypAbs]{
      \left(\judgs{\tctx_i;\var:\optmtyp_i}\tm{\mtyptwo_i}\right)_\iI
    }{
      \judgs{+_\iI\tctx_i}{\lam\var\tm}{\mset{\optmtyp_i\to\mtyptwo_i}_\iI}
    }
  \end{mathpar}

  \begin{mathpar}
    \inferrule* [right = \ruleTypAppP]{
      \judgs\tctx\tm\tightN \\ \judgs\tctxtwo\tmtwo\tight
    }{
      \judgs{\tctx+\tctxtwo}{\tm\,\tmtwo}\tightN
    }

    \inferrule* [right = \ruleTypAppC]{
      \judgs\tctx\tm{\mset{\optmtyp\to\mtyptwo}}
      \\ \optmtyp \mleq \mtyp 
      \\ \judgs\tctxtwo\tmtwo\mtyp
    }{
      \judgs{\tctx+\tctxtwo}{\tm\,\tmtwo}\mtyptwo
    }
    
    \inferrule* [right = \ruleTypES]{
      \judgs{\tctx;\var:\optmtyp}\tm\mtyptwo
      \\ \optmtyp \mleq \mtyp
      \\ \judgs\tctxtwo\tmtwo\mtyp
    }{
      \judgs{\tctx+\tctxtwo}{\tm\esub\var\tmtwo}\mtyptwo
    }
  \end{mathpar}
  \caption{Typing Rules of System $\typesystem$}
  \label{fig:systemU}
\end{figure}

As in system $\systemV$, we write $\derivablep\deriv{\judgs\tctx\tm\mtyp}\typesystem$ to denote 
a type derivation $\deriv$ in system $\typesystem$ ending in the 
typing judgement $\judgs\tctx\tm\mtyp$. 
Sometimes, $\deriv$ or  $\typesystem$ may be omitted.

There are no general \emph{weakening} and 
\emph{contraction} rules, though the facts that $\none \mleq \tight$ 
and $\tight + \tight = \tight$ hold whenever 
$\tight \in \set{\tightN, \emset}$ can be understood as 
controlled forms of weakening and contraction, respectively.

Rule \ruleTypAbs follows the same principle as rule $\ruleTypAbsV$ of
system $\systemV$ (\cf~\autoref{fig:systemV}), and rule \ruleTypES can be easily reconstructed from rules \ruleTypAbs and 
\ruleTypAppC. Rules for applications follow the 
\emph{consuming/persistent paradigm}~\cite{KesnerVial20,KesnerViso22} and
deserve some discussion. 
Indeed, system $\typesystem$ keeps track of the different natures of 
the application constructors involved in
the evaluation process of a program: a term constructor is 
\defn{consuming} if it is \emph{eliminated} by evaluation, whereas
a \defn{persistent} constructor \emph{remains} as a syntactical part
of the normal form. 
For example, given the following \VSCne-reduction to normal form 
$\var \, ((\lam\vartwo\vartwo) \, \varthree) \todb 
\var \, \vartwo\esub\vartwo\varthree$, we say that the leftmost 
application constructor in 
$\var \, ((\lam\vartwo\vartwo) \, \varthree)$ is persistent, while 
the rightmost is consuming.
Indeed, rule \ruleTypAppP (resp. \ruleTypAppC) is used to type this 
\emph{persistent} (resp. \emph{consuming}) application.
More specifically, rule \ruleTypAppP requires the left-hand side of
the application to have type $\tightN$, ensuring that no redex is
created during evaluation, so the application constructor is
persistent.  In particular, left-hand sides of applications
representing a potential structure (a term whose unfolding is an
application headed by a variable) are expected to be given the type
$\tightN$, which means that the underlined applications in
$\underline{\varthree\,\tm}$ and
$(\underline{\var\,\tm})\esub{\var}{\varthree}$ would type with
rule \ruleTypAppP.  Rule \ruleTypAppC requires the left-hand side of
the application to type with an arrow multi-type, ensuring a
$\dbsym$-redex is created at some point during evaluation. For
instance, the underlined applications in
$(\underline{\var\,\tm})\esub{\var}{\id}$ and
$(\underline{\var\,\tm})\esub{\var}{\vartwo}\esub{\vartwo}{\id}$ would
type with rule \ruleTypAppC. Notice, however, that the
consuming/persistent paradigm in system $\typesystem$ is used
exclusively for typing applications. This approach contrasts with
more radical solutions for \CBV (see~\cite{KesnerViso22}), which apply
the same paradigm to type \emph{all} constructors, thereby increasing
the number of typing rules and making the type system more difficult
to use.

Let us now type our working~\autoref{exa:working-example} in system $\typesystem$:
\begin{exa}
\label{ex:derivation}
Let $\mtyp \eqdef \mset{\tightN \to \tightN}$,
we build the following derivation
$\derivtwo$:
\begin{mathpar}
  \inferrule*[right = \ruleTypES]{
    \inferrule*[right = \ruleTypAppC]{
      \inferrule*[right = \ruleTypVar]{
      }{
        \judgs{\var : \mtyp}\var\mtyp
      }
      \\
      \inferrule*[Right = \ruleTypVar]{
      }{
        \judgs
          {\varthree : \tightN}
          {\varthree}
          \tightN
      }
    }{
      \judgs
        {\varthree : \tightN,\var:\mtyp}
        {\var\,\varthree}
        \tightN
    }
    \HS
    \inferrule*[Right = \ruleTypES]{
      \inferrule*[Right = \ruleTypVar]{
      }{
        \judgs{\vartwo : \mtyp}\vartwo\mtyp
      }
      \\
      \inferrule*[right = \ruleTypAbs]{
        \inferrule*[Right = \ruleTypVar]
          { }
          {\inferrule{}{\judgs{\varfour : \tightN}\varfour\tightN}}
      }{
        \judgs\emptyctx\id\mtyp
      }
    }{
      \judgs\emptyctx{\vartwo\esub\vartwo\id}\mtyp
    }
  }{
    \judgs
      {\varthree : \tightN}
      {(\var \, \varthree)\esub\var{\vartwo\esub\vartwo\id}}
      \tightN
  }
\end{mathpar}
where $\mtyp \mleq \mtyp$ and 
$\tightN \mleq \tightN$ trivially hold.
\end{exa}

The use of controlled weakening deserves some discussion as well.
Indeed, in \CBV, the arguments of applications and ESs are always
evaluated. 
Since system $\typesystem$ intends to characterise termination, all 
arguments must necessarily be typed. 
For example, in $(\lam\var\id)\,(\varthree\,\varthree)$, the 
subexpression $\varthree\,\varthree$ is a structure of type $\tightN$. 
One would perhaps expect to be able to type the abstraction 
$\lam\var\id$ with type $\tightN \to \mtyp$. 
But it is impossible to derive the judgement 
$\var:\tightN \vdash \id : \mtyp$, because $\var$ does not occur free 
in $\id$. 
This is where the controlled weakening relation ($\mleq$) in the 
premise of rules $\ruleTypAppC$ and $\ruleTypES$ comes into play. 
In our example, one can give a ``stronger'' type to $\lam\var\id$, 
namely $\none \to \mtyp$, using the fact that $\none \mleq \tightN$. 
A similar use of controlled weakening for the rule \ruleTypES is 
shown below, where it should be recalled that the empty typing 
environment $\emptyctx$ is identical to $\emptyctx;\var:\none$.
\begin{mathpar}
  \inferrule* [right = \ruleTypES]{
    \inferrule*[right = \ruleTypAbs]
      { }
      {\judgs{\emptyctx; \var : \none}\id\emset}
    \\ \none \mleq \tightN
    \\
    \inferrule*[right = \ruleTypAppP]{
      \inferrule*[right = \ruleTypVar]{
      }{
        \judgs{\varthree : \tightN}\varthree\tightN
      }
      \\
      \inferrule*[right = \ruleTypVar]{
      }{
        \judgs{\varthree : \tightN}\varthree\tightN
      }
    }{
      \judgs{\varthree : \tightN}{\varthree\,\varthree}\tightN
    }
  }{
    \judgs
      {\varthree : \tightN}
      {\id\esub\var{\varthree\,\varthree}}
      {\emset}
  }
\end{mathpar}

We now state properties of the type system $\typesystem$ which will be
used throughout this section and Section~\ref{sec:ucbv-U}. We start by focusing on the quantitative information we can retrieve from system $\typesystem$. 
The size of typing derivations in system $\typesystem$ can be measured using the function \sizeU{\_}, as defined in~\autoref{fig:sz_U}.
Notice that  function \sizeU{\_} counts similarly to function $\sizeV{\_}$ in~\autoref{subsec:systemV}. 

\begin{figure}[h]
\[
  \begin{aligned}
    \sizeU{
      \inferrule{ }{\judgs{\var : \mtyp}\var\mtyp}\ruleTypVar
    } 
  & \eqdef \length\mtyp
  \\
    \sizeU{
      \inferrule
        {(\deriv_i)_\iI}
        {\judgs\tctx{\lam\var\tmtwo}{\mset{\typ_i}_\iI}}\ruleTypAbs
    } 
  & \eqdef
    \length{I} + \sum_\iI \sizeU{\deriv_i}
  \\
    \sizeU{
      \inferrule
        {\deriv_1 \\ \deriv_2}
        {\judgs\tctx{\tmtwo \, \tmthree}\mtyp}{\ruleTypAppP/\ruleTypAppC}
    } 
  & \eqdef
    1 + \sizeU{\deriv_1} + \sizeU{\deriv_2}
  \\
    \sizeU{
      \inferrule
        {\deriv_1 \\ \deriv_2}
        {\judgs\tctx{\tmtwo\esub\var\tmthree}\mtyp}\ruleTypES
    } 
  & \eqdef
      \begin{cases*}
        1 + \sizeU{\deriv_1} + \sizeU{\deriv_2} & if $\var \in \fv\tmtwo$
      \\
        \sizeU{\deriv_1} + \sizeU{\deriv_2}     & otherwise
      \end{cases*}
  \end{aligned}
\]
  \caption{Size of Typing Derivations in System $\typesystem$}
  \label{fig:sz_U}
\end{figure}

Thus, for example, the size of the derivation $\derivtwo$ in~\autoref{ex:derivation} is $6$. 

System $\typesystem$, as well as system $\systemV$, is a relevant type system:
\begin{lem}[Relevance]
\label{lem:relevance_systemU}
If $\derivablep\deriv{\judgs\tctx\tm\mtyp}\typesystem$, then $\rv{\tm} \subseteq \dom{\tctx} \subseteq \fv{\tm}$.
\end{lem}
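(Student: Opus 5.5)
We prove both inclusions $\rv{\tm} \subseteq \dom{\tctx}$ and $\dom{\tctx} \subseteq \fv{\tm}$ simultaneously by induction on the derivation $\deriv$, with a case analysis on the last rule of \autoref{fig:systemU}. Two elementary facts about optional types do most of the work. First, $\dom{\tctx + \tctxtwo} = \dom\tctx \cup \dom\tctxtwo$, because the sum of two optional types is $\none$ only when both summands are $\none$; note that the partial sum $\mtyp_1 + \mtyp_2$ on genuine types never yields $\none$. Second, $\none \mleq \mtyp$ is possible only when $\mtyp$ is a type, never $\none$, so the argument typed in rules \ruleTypAppC and \ruleTypES always receives a genuine typing derivation.

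\emph{Rule \ruleTypVar.} The environment is $\var:\mtyp$ with $\mtyp \neq \none$, since $\mtyp$ is a type. Hence $\dom\tctx = \set\var = \rv\var = \fv\var$.

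\emph{Rule \ruleTypAbs.} Here $\rv{\lam\var\tm} = \emptyset$, so the first inclusion is trivial. For the second, the \ih on each premise gives $\dom{\tctx_i;\var:\optmtyp_i} \subseteq \fv\tm$. Hence $\dom{\tctx_i} \subseteq \fv\tm \setminus \set\var$, using that $\var \notin \dom{\tctx_i}$ by the notation $\tctx_i;\var:\optmtyp_i$. Taking the union yields $\dom{+_\iI\tctx_i} \subseteq \fv{\lam\var\tm}$. When $I=\emptyset$ the environment is empty, which is fine.

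\emph{Rules \ruleTypAppP and \ruleTypAppC.} Both premises are present and typed, so the \ih applies to each, and the union fact gives both inclusions, since $\rv{\tm\,\tmtwo} = \rv\tm \cup \rv\tmtwo$ and likewise for $\fv{}$.

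\emph{Rule \ruleTypES.} This is the only delicate case. The \ih on the left premise gives $\rv\tm \subseteq \dom{\tctx;\var:\optmtyp} \subseteq \fv\tm$. Removing $\var$ yields $\rv\tm\setminus\set\var \subseteq \dom\tctx \subseteq \fv\tm\setminus\set\var$. The \ih on the right premise gives $\rv\tmtwo \subseteq \dom\tctxtwo \subseteq \fv\tmtwo$. Combining these through the union fact gives $\rv{\tm\esub\var\tmtwo} \subseteq \dom{\tctx+\tctxtwo} \subseteq \fv{\tm\esub\var\tmtwo}$.

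The point to check carefully is that the argument $\tmtwo$ is always typed, even when $\var$ is unused and $\optmtyp = \none$. This is exactly what controlled weakening guarantees: the argument receives a constant type $\tight$ rather than being dropped. This is why $\rv\tmtwo$ is still covered by the environment.

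The same observation is what makes the lower bound $\rv\tm \subseteq \dom\tctx$ hold at all, in contrast to \autoref{lem:relevance_systemV}. There is no rule that leaves a reachable subterm untyped, and there is no way to assign $\none$ to a variable through rule \ruleTypVar.
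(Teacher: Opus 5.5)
Your proof is correct and takes the same route as the paper, which proves the lemma by a straightforward induction on $\deriv$. Your case analysis, using $\dom{\tctx+\tctxtwo} = \dom\tctx \cup \dom\tctxtwo$ for the binary rules and $\var \notin \dom\tctx$ (from the notation $\tctx;\var:\optmtyp$) for the binders, supplies exactly the details the paper leaves out; your second fact about $\mleq$ is not actually needed, since \ruleTypAppC and \ruleTypES always have a premise typing the argument with a type $\mtyp$.
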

\begin{proof}
The proof is straightforward by induction on $\deriv$.
\end{proof}

In system $\typesystem$, as well as in other non-idempotent intersection types capturing \CBV strategies, it is possible to decompose the multiset types of values:
\begin{lem}[Splitting/Merging for System $\typesystem$]
\label{l:split-merge-value}
Let $\val$ be a value, $\tctx$ a typing environment, and $\mtyp_1, \mtyp_2$ be two types such that their union 
$\mtyp_1 + \mtyp_2$ is well-defined.
Then, the following are equivalent:
\begin{enumerate}
\item 
  There exists a derivation $\deriv$ such that
  $\derivable\deriv{\judgs\tctx\val{\mtyp_1 + \mtyp_2}}$.
\item 
  There exist derivations $\deriv_1,\deriv_2$ and typing 
  environments $\tctx_1,\tctx_2$ such that
  $\derivable{\deriv_1}{\judgs{\tctx_1}\val{\mtyp_1}}$, and
  $\derivable{\deriv_2}{\judgs{\tctx_2}\val{\mtyp_2}}$ , and
  $\tctx = \tctx_1 + \tctx_2$.
\end{enumerate}
Moreover,  $\sizeU\deriv = \sizeU{\deriv_1} + \sizeU{\deriv_2}$.
\end{lem}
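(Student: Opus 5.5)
The proof is by case analysis on the shape of the value $\val$, which is either a variable or an abstraction. In each case we inspect the last rule of the derivation; since $\val$ is a value, the only candidates are \ruleTypVar and \ruleTypAbs. We first observe which combinations of $\mtyp_1,\mtyp_2$ can arise. Since $\mtyp_1+\mtyp_2$ is defined, either both types are $\tightN$, or both are arrow multi-types $\nityp_1,\nityp_2$ with sum $\nityp_1\niunion\nityp_2$.

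\emph{Case $\val = \var$.} The only rule is \ruleTypVar, so a derivation $\judgs\tctx\var\mtyp$ forces $\tctx = \var:\mtyp$, and its size is $\length\mtyp$.

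For (1)$\Rightarrow$(2), we have $\tctx = \var:\mtyp_1+\mtyp_2$. We take $\deriv_i$ to be the axiom $\judgs{\var:\mtyp_i}\var{\mtyp_i}$ and set $\tctx_i \eqdef \var:\mtyp_i$. Then $\tctx_1+\tctx_2 = \var:\mtyp_1+\mtyp_2 = \tctx$, as required.

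For (2)$\Rightarrow$(1), we have $\tctx_i = \var:\mtyp_i$, so $\tctx = \var:\mtyp_1+\mtyp_2$, and the axiom $\judgs{\var:\mtyp_1+\mtyp_2}\var{\mtyp_1+\mtyp_2}$ is the required $\deriv$.

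The size equation follows from additivity, $\length{\mtyp_1+\mtyp_2} = \length{\mtyp_1}+\length{\mtyp_2}$. This holds in both subcases: when both types are $\tightN$ all sizes are $0$, and when both are multi-types it is the additivity of multiset cardinality.

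\emph{Case $\val = \lam\vartwo\tm$.} The only rule is \ruleTypAbs, whose conclusion is always an arrow multi-type. A type $\tightN$ is therefore impossible in either direction: in (1) because $\tightN+\tightN=\tightN$ cannot type an abstraction, and in (2) for the same reason. So we may assume $\mtyp_i = \nityp_i = \mset{\typ_k}_{k\in K_i}$, with $\mtyp_1+\mtyp_2 = \mset{\typ_k}_{k \in K_1 \uplus K_2}$.

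For (1)$\Rightarrow$(2), \ruleTypAbs gives premises $(\deriv^k)_{k\in K_1\uplus K_2}$ with $\deriv^k$ deriving $\judgs{\tctx^k;\vartwo:\optmtyp_k}\tm{\mtyptwo_k}$ and $\tctx = +_{k}\tctx^k$. We partition the premises according to $K_1$ and $K_2$ and reapply \ruleTypAbs to each part. This yields $\deriv_i$ for $\judgs{+_{k\in K_i}\tctx^k}{\lam\vartwo\tm}{\nityp_i}$, and the two environments sum to $\tctx$ by associativity and commutativity of $+$.

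For (2)$\Rightarrow$(1), we concatenate the premise families of $\deriv_1$ and $\deriv_2$ and apply \ruleTypAbs once.

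In both directions, the size equation follows from $\length{K_1\uplus K_2} = \length{K_1}+\length{K_2}$ together with the sum of the premise sizes. The degenerate case $K_i=\emptyset$ is allowed: \ruleTypAbs with no premises gives $\judgs\emptyctx{\lam\vartwo\tm}\emset$ with size $0$.

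The only point requiring care is the well-definedness of the environment sums. The union $+$ on optional types is partial, so we must check that $\tctx_1+\tctx_2$, and the regrouped sums $+_{k\in K_i}\tctx^k$, are defined.
\begin{itemize}
\item In direction (1)$\Rightarrow$(2), each partial sum is a sub-sum of the total $+_k\tctx^k$. Since $+$ is associative and commutative with neutral element $\none$, a sub-sum of a defined sum is defined. This holds because definedness fails only when $\tightN$ is mixed with a multi-type, and that failure would persist in the total sum.
\item In direction (2)$\Rightarrow$(1), $\tctx = \tctx_1+\tctx_2$ is assumed to be defined, and the combined \ruleTypAbs conclusion uses exactly that sum regrouped.
\end{itemize}
I expect no genuine obstacle beyond this bookkeeping about partiality of $+$.
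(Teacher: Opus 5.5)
Your proof is correct and follows essentially the same route as the paper. Both do a case analysis on whether the derivation ends with \ruleTypVar or \ruleTypAbs (equivalently, on the shape of $\val$), split or merge the variable's type in the axiom or the index set of the \ruleTypAbs premises, and obtain the size equation from additivity of $\length{\cdot}$ and of index-set cardinalities. Your explicit checks that the abstraction case with $\tightN$ is vacuous and that the partial environment sums stay defined are details the paper leaves implicit; they do not change the argument.
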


\begin{proof}
The $\val$ is either a variable or an abstraction. 
We first show that $(1 \implies 2)$.
The judgement in the conclusion of $\deriv$ can be derived either by 
rule \ruleTypVar or rule \ruleTypAbs, so we analyse both cases:
\begin{itemize}
\item \ruleTypVar.
  Then, $\deriv$ is the axiom 
  $\judgs{\var : \mtyp_1 + \mtyp_2}\var{\mtyp_1 + \mtyp_2}$,
  where $\tctx = \var : \mtyp_1 + \mtyp_2$ and $\val = \var$.
  Taking $\tctx_1 \eqdef \var : \mtyp_1$ and 
  $\tctx_2 \eqdef \var : \mtyp_2$, we obtain derivations $\deriv_1$ 
  and $\deriv_2$ such that
  $\derivable{\deriv_1}{\judgs{\var : \mtyp_1}\var{\mtyp_1}}$ and
  $\derivable{\deriv_2}{\judgs{\var : \mtyp_2}\var{\mtyp_2}}$.
  Note that $\tctx 
  = \var : \mtyp_1 + \mtyp_2 
  = (\var : \mtyp_1) + (\var : \mtyp_2)
  = \tctx_1 + \tctx_2$.
  And moreover, 
  $\sizeU\deriv 
  = \length{\mtyp_1 + \mtyp_2} 
  = \length{\mtyp_1} + \length{\mtyp_2}
  = \sizeU{\deriv_1} + \sizeU{\deriv_2}$.
\item \ruleTypAbs.
  Then, $\deriv$ is of the form
  \begin{mathpar}
    \inferrule
      {\left(
        \derivable
          {\deriv'_i}
          {\judgs{\tctx_i; \var : \optmtyptwo_i}\tm{\mtypthree_i}}
      \right)_\iI}
      {\judgs{+_\iI \tctx_i}{\lam\var\tm}{\mset{\optmtyptwo_i \to \mtypthree_i}_\iI}}
  \end{mathpar}
  where $\tctx = +_\iI \tctx_i$, $\val = \lam\var\tm$, and
  $\mtyp_1 + \mtyp_2 = \mset{\optmtyptwo_i \to \mtypthree_i}_\iI$
  with $I$ finite.
  Then, we can write $I = I_1 \uplus I_2$ in such a way that
  $\mtyp_1 = \mset{\optmtyptwo_i \to \mtypthree_i}_{i \in I_1}$ and
  $\mtyp_2 = \mset{\optmtyptwo_i \to \mtypthree_i}_{i \in I_2}$, so
  that
  $\mset{\optmtyptwo_i \to \mtypthree_i}_\iI = 
  \mset{\optmtyptwo_i \to \mtypthree_i}_{i \in I_1} +
  \mset{\optmtyptwo_i \to \mtypthree_i}_{i \in I_2}$.

  Taking 
  $\tctx_1 \eqdef +_{i \in I_1} \tctx_i$ and
  $\tctx_2 \eqdef +_{i \in I_2} \tctx_i$, we obtain typing
  derivations $\deriv_1$ and $\deriv_2$ as follows:
  applying rule \ruleTypAbs to $(\deriv'_i)_{i \in I_1}$ yields
  $\derivable
    {\deriv_1}
    {\judgs
      {+_{i \in I_1} \tctx_i}
      {\lam\var\tm}
      {\mset{\optmtyptwo_i \to \mtypthree_i}_{i \in I_1}}}$, and 
  applying rule \ruleTypAbs to $(\deriv'_i)_{i \in I_2}$ yields
  $\derivable
    {\deriv_2}
    {\judgs
      {+_{i \in I_2} \tctx_i}
      {\lam\var\tm}
      {\mset{\optmtyptwo_i \to \mtypthree_i}_{i \in I_2}}}$.
  Note that $\tctx 
  = +_\iI \tctx_i 
  = +_{i \in I_1 \uplus I_2} \tctx_i 
  = +_{i \in I_1} \tctx_i +_{i \in I_2} \tctx_i
  = \tctx_1 + \tctx_2$.

  Moreover, 
  \begin{align*} 
  \sizeU\deriv 
  & = \length{I} + \sum_\iI \sizeU{\deriv'_i}
    = \length{I_1 \uplus I_2} + \left(\sum_{i \in I_1} \sizeU{\deriv'_i}\right) 
    + \left(\sum_{i \in I_2} \sizeU{\deriv'_i}\right)
  \\
  & = \length{I_1} + \length{I_2} + \left(\sum_{i \in I_1} \sizeU{\deriv'_i}\right) 
    + \left(\sum_{i \in I_2} \sizeU{\deriv'_i}\right)
  = \sizeU{\deriv_1} + \sizeU{\deriv_2}
  \end{align*}
\end{itemize}

We now show that $(2 \implies 1)$. 
The judgements in the conclusion of $\deriv_1$ and $\deriv_2$ can be 
derived either by rule \ruleTypVar or rule \ruleTypAbs, so we analyse 
both cases:
\begin{itemize}
\item \ruleTypVar.
  Then, $\derivable{\deriv_1}{\judgs{\var : \mtyp_1}\var{\mtyp_1}}$, 
  where $\tctx_1 = \var : \mtyp_1$ and $\val = \var$.
  On the other hand, 
  $\derivable{\deriv_2}{\judgs{\var : \mtyp_2}\var{\mtyp_2}}$, where 
  $\tctx_2 = \var : \mtyp_2$ and $\val = \var$.
  Since $\tctx = \tctx_1 + \tctx_2$ by hypothesis, we then obtain
  $\derivable
    \deriv
    {\judgs{\var : \mtyp_1 + \mtyp_2}\var{\mtyp_1 + \mtyp_2}}$ 
  by the \ruleTypVar axiom.
\item \ruleTypAbs.
  Then, $\deriv_1$ is of the form
  \begin{mathpar}
    \inferrule
      {\left(
          \derivable{\deriv'_j}{\judgs{\tctx_j; \var : \optmtyptwo_j}\tm{\mtypthree_j}}
      \right)_\jJ}
      {\judgs{+_\jJ \tctx_j}{\lam\var\tm}{\mset{\optmtyptwo_j \to \mtypthree_j}_\jJ}}
  \end{mathpar}
  where $\tctx_1 = +_\jJ \tctx_j$, $\val = \lam\var\tm$, and 
  $\mtyp_1 = \mset{\optmtyptwo_j \to \mtypthree_j}_\jJ$ with $J$ finite.
  On the other hand, $\deriv_2$ is of the form
  \begin{mathpar}
    \inferrule
      {\left(
          \derivable{\deriv'_k}{\judgs{\tctx_k; \var : \optmtyptwo_k}\tm{\mtypthree_k}}
      \right)_\kK}
      {\judgs{+_\kK \tctx_k}{\lam\var\tm}{\mset{\optmtyptwo_k \to \mtypthree_k}_\kK}}
  \end{mathpar}
  where $\tctx_2 = +_\kK \tctx_k$, $\val = \lam\var\tm$, and 
  $\mtyp_2 = \mset{\optmtyptwo_k \to \mtypthree_k}_\kK$ with $K$ finite.
  And moreover, $\tctx = \tctx_1 + \tctx_2 
  = +_\jJ \tctx_j +_\kK \tctx_k$.
  We take the set $I \eqdef J \uplus K$, and note that it is finite.
  We conclude with 
  $\derivable\deriv{\judgs\tctx{\lam\var\tm}{\mtyp_1 + \mtyp_2}}$ 
  by applying rule \ruleTypAbs with $(\deriv'_i)_\iI$ as premises.
\end{itemize}
And note that in both cases, 
$\sizeU\deriv = \sizeU{\deriv_1} + \sizeU{\deriv_2}$ holds by reasoning 
like in the $(1 \implies 2)$ direction.
\end{proof}

The following property states that the controlled weakening on a sum of optional types can be equivalently decomposed into componentwise controlled weakenings:
\begin{lem}[Type Splitting]
\label{lem:vscne_type_splitting}
Let $\optmtyptwo_1$, $\optmtyptwo_2$ be two optional types and
$\mtyptwo$ be a type. 
Then $\optmtyptwo_1 + \optmtyptwo_2$ is well-defined and $\optmtyptwo_1 + \optmtyptwo_2 \mleq \mtyptwo$ if and only if 
there exist types $\mtyptwo_1$ and $\mtyptwo_2$ such that 
$\mtyptwo_1 + \mtyptwo_2 = \mtyptwo$ and $\optmtyptwo_i \mleq \mtyptwo_i$
for all $i \in \set{1,2}$.
\end{lem}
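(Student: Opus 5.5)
The plan is a direct case analysis on the shapes of $\optmtyptwo_1$ and $\optmtyptwo_2$. Each is either $\none$ or a type, and a type is either $\tightN$ or an arrow multi-type. We then unfold the definitions of $+$ on optional types and of $\mleq$, which holds only as $\none \mleq \tight$ with $\tight \in \set{\tightN,\emset}$, or as $\mtyp \mleq \mtyp$. The statement reads the well-definedness of $\optmtyptwo_1 + \optmtyptwo_2$ as part of the left-hand side. So for the right-to-left direction we must also check that the sum is defined, and this follows from $\mtyptwo_1 + \mtyptwo_2$ being defined.

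\emph{Left-to-right.} Assume $\optmtyptwo_1 + \optmtyptwo_2$ is defined and $\optmtyptwo_1 + \optmtyptwo_2 \mleq \mtyptwo$.
\begin{itemize}
\item If both are $\none$, the sum is $\none$, so $\mtyptwo = \tight$ is constant. We take $\mtyptwo_1 = \mtyptwo_2 = \tight$, using $\tight + \tight = \tight$ (both $\tightN + \tightN = \tightN$ and $\emset \niunion \emset = \emset$) and $\none \mleq \tight$.
\item If exactly one is $\none$, say $\optmtyptwo_1 = \none$ and $\optmtyptwo_2 = \mtyp$, then the sum is $\mtyp$ and the weakening forces $\mtyptwo = \mtyp$. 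We pick $\mtyptwo_2 = \mtyp$ and for $\mtyptwo_1$ the neutral-compatible constant: $\mtyptwo_1 = \tightN$ if $\mtyp = \tightN$, and $\mtyptwo_1 = \emset$ if $\mtyp$ is an arrow multi-type. Then $\mtyptwo_1 + \mtyptwo_2 = \mtyp$ and $\none \mleq \mtyptwo_1$. The symmetric case is identical.
\item If both are types, the sum is a type, so $\mleq$ is the reflexive clause and $\mtyptwo = \optmtyptwo_1 + \optmtyptwo_2$. We take $\mtyptwo_i = \optmtyptwo_i$.
\end{itemize}

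\emph{Right-to-left.} Assume $\mtyptwo_1 + \mtyptwo_2 = \mtyptwo$ and $\optmtyptwo_i \mleq \mtyptwo_i$. Each $\optmtyptwo_i$ is then either $\none$ (with $\mtyptwo_i$ constant) or equal to $\mtyptwo_i$.
\begin{itemize}
\item If both equal their $\mtyptwo_i$, the sum is $\mtyptwo$, which is defined, and $\mtyptwo \mleq \mtyptwo$.
\item If both are $\none$, the sum is $\none$. Moreover $\mtyptwo_1,\mtyptwo_2$ are constants whose sum is defined, so they are both $\tightN$ or both $\emset$, and $\mtyptwo$ is that same constant. Hence $\none \mleq \mtyptwo$.
\item If $\optmtyptwo_1 = \none$ and $\optmtyptwo_2 = \mtyptwo_2$, the sum is $\mtyptwo_2$. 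Since $\mtyptwo_1$ is a constant and $\mtyptwo_1 + \mtyptwo_2$ is defined, either $\mtyptwo_1 = \tightN = \mtyptwo_2$, or $\mtyptwo_1 = \emset$ and $\mtyptwo_2$ is an arrow multi-type. In both cases $\mtyptwo = \mtyptwo_2$, so the reflexive clause gives $\mtyptwo_2 \mleq \mtyptwo$.
\end{itemize}

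The main subtlety lies in the mixed cases. There the partiality of $+$ on types (undefined on $\tightN$ with a multi-type) forces the choice of the constant to match $\mtyp$, and $\emset$ being the unit of $\niunion$ is what makes this choice work. Otherwise the argument is routine bookkeeping.
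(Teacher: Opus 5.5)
Your proof is correct and takes the same route as the paper, whose proof states only that both directions follow by case analysis on $\optmtyptwo_1$ and $\optmtyptwo_2$. You carry out that analysis explicitly, and the symmetric mixed case you leave implicit in the right-to-left direction is identical to the one you treat.
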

\begin{proof} 
Both directions of the lemma follow by case analysis on $\optmtyptwo_1$
and $\optmtyptwo_2$.
\end{proof}

\section{Characterisation of \texorpdfstring{\VSCne}{VSCne}-Termination}
\label{sec:systemU-vscne}

In this section, we show that $\typesystem$-typability characterises \VSCne-termination (\autoref{cor:u-characterises-vscne}): a term $\tm$ is typable in system $\typesystem$ (or
$\typesystem$-typable) if and only if $\tm$ terminates in \VSCne.
Later, in Section~\ref{sec:ucbv-U},
this same type system $\typesystem$ is shown to
characterise \UOCBV-termination as well, thus establishing a
relation between \VSCne-termination, \UOCBV-termination, and
$\typesystem$-typability.

To do so, we proceed as in~\autoref{subsec:vscne_V}, and prove soundness and completeness of the type system $\typesystem$ with respect to evaluation in \VSCne.

\paragraph{\textbf{Soundness.}}
We first address the \defn{soundness} of system $\typesystem$
with respect to \VSCne, which states that 
$\typesystem$-typability implies \VSCne-termination.
This proof follows well-understood techniques, requiring a subject 
reduction property (\autoref{prop:vscne_wsr_U}) based on the following substitution lemma: 

\begin{lem}[Weighted Substitution]
\label{lem:vscne_substitution}
Let $\derivablep{\deriv_\tm}{\judgs{\tctx;\var:\optmtyptwo}\tm\mtyp}\typesystem$
and $\derivablep{\deriv_\val}{\judgs\tctxtwo\val\mtyptwo}\typesystem$,
with $\optmtyptwo \mleq \mtyptwo$ and $\var \in \fv\tm$. 
Then, there exists a typing derivation $\deriv_{\tm\sub\var\val}$ 
such that 
$\derivablep{\deriv_{\tm\sub\var\val}}{\judgs{\tctx + \tctxtwo}{\tm\sub\var\val}\mtyp}\typesystem$
and
$\sizeU{\deriv_{\tm\sub\var\val}} = \sizeU{\deriv_\tm} + \sizeU{\deriv_\val} - \length\optmtyptwo$.
\end{lem}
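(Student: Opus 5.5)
We argue by induction on the derivation $\deriv_\tm$ (equivalently, on the structure of $\tm$), generalising the hypothesis slightly. The condition $\var \in \fv\tm$ is not preserved when we descend into subterms: in $\tm_1\,\tm_2$ the variable $\var$ may occur in only one side. So we first handle the auxiliary case $\var \notin \fv\tm$ directly. By \autoref{lem:relevance_systemU} we get $\var \notin \dom{\tctx;\var:\optmtyptwo}$, hence $\optmtyptwo = \none$ and $\tm\sub\var\val = \tm$. In that case the statement is used in the form ``the derivation for $\tm$ is unchanged and no contribution from $\val$ is needed.''

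To make the bookkeeping uniform, whenever a compound term splits the environment entry $\var : \optmtyptwo$ as $\optmtyptwo = \optmtyptwo_1 + \optmtyptwo_2$ between two premises, we apply \autoref{lem:vscne_type_splitting} to $\optmtyptwo \mleq \mtyptwo$. This gives $\mtyptwo = \mtyptwo_1 + \mtyptwo_2$ with $\optmtyptwo_i \mleq \mtyptwo_i$. We then apply \autoref{l:split-merge-value} to $\deriv_\val$ to obtain $\deriv_\val^i$ with $\judgs{\tctxtwo_i}\val{\mtyptwo_i}$, $\tctxtwo = \tctxtwo_1 + \tctxtwo_2$, and sizes adding up. The \ih is applied on each side where $\var$ occurs. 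On a side where $\var$ does not occur we have $\optmtyptwo_i = \none$, so $\mtyptwo_i = \tight$. That side keeps its subderivation, and its piece $\deriv_\val^i$ must be absorbed elsewhere.

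This absorption is the main obstacle. A leftover derivation of $\val$ at type $\tight$ cannot simply be discarded: its environment $\tctxtwo_i$ must still appear in the conclusion, and its size must be accounted for. When $\tight = \emset$ and $\val$ is an abstraction, the leftover derivation has size $0$ and empty environment, so it is harmless. In the other cases, namely $\tight = \tightN$, or $\val$ a variable $\vartwo$ typed $\emset$ with environment $\vartwo:\emset$, we merge the leftover piece back into the piece used on the side where $\var$ does occur. This is possible because $\tight + \tight = \tight$: the side where $\var$ occurs has $\optmtyptwo_j \mleq \mtyptwo_j$, and if $\optmtyptwo_j$ is constant then $\mtyptwo_j + \tight = \mtyptwo_j$. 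If instead $\optmtyptwo_j$ is a nonempty arrow multi-type, then $\mtyptwo$ is an arrow multi-type, which forces $\tight = \emset$, and for a variable value we merge $\vartwo:\emset$ into $\vartwo:\mtyptwo_j$ using \autoref{l:split-merge-value}. I expect that choosing the split carefully, so that every non-occurring side receives the trivial piece (for instance $\none \mleq \tight$ with the trivial derivation included only where needed), and verifying that this is always possible, will be the delicate part of the proof. Since $\length{\tight} = 0$, these merges do not disturb the size equation.

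The base case is $\tm = \var$ with rule \ruleTypVar. Then $\optmtyptwo = \mtyp$ is a type, so $\optmtyptwo \mleq \mtyptwo$ gives $\mtyptwo = \mtyp$, and $\tctx$ is empty. We take $\deriv_{\tm\sub\var\val} \eqdef \deriv_\val$. Its size is $\length\mtyp + \sizeU{\deriv_\val} - \length\optmtyptwo$, as required.

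The inductive cases are \ruleTypAbs, \ruleTypAppP, \ruleTypAppC and \ruleTypES. Each proceeds by the splitting described above (for \ruleTypAbs, with $|I|$ premises), followed by the \ih and reassembly with the same rule. The side conditions $\optmtyp \mleq \mtyp$ of \ruleTypAppC and \ruleTypES are untouched, because the type of the substituted argument is unchanged. For \ruleTypES on $\tmtwo\esub\vartwo\tmthree$, with $\vartwo \notin \fv\val$ by $\alpha$-conversion, the extra $+1$ in the size depends on whether $\vartwo \in \fv\tmtwo$, and this is preserved because $\vartwo \in \fv\tmtwo$ if and only if $\vartwo \in \fv{\tmtwo\sub\var\val}$. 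Summing the sizes and using $\length{\optmtyptwo_1+\optmtyptwo_2} = \length{\optmtyptwo_1}+\length{\optmtyptwo_2}$ yields the equation in the statement.
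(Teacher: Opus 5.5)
Your route is the paper's: induction on $\deriv_\tm$, with \ruleTypVar as the base case and splitting via \autoref{l:split-merge-value} and \autoref{lem:vscne_type_splitting}. But the absorption step you single out as delicate genuinely fails. No choice of split rescues it, because the statement itself is false in that situation.

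The hypothesis $\var \in \fv\tm$ does not give $\var \in \dom{\tctx;\var:\optmtyptwo}$. \autoref{lem:relevance_systemU} only yields $\rv\tm \subseteq \dom{\cdot} \subseteq \fv\tm$, and all occurrences of $\var$ may lie in abstraction bodies typed by \ruleTypAbs with $I = \emptyset$. Concretely:
\begin{itemize}
\item Take $\tm = \lam\varthree\var$ with $\judgs{\emptyctx}{\lam\varthree\var}{\emset}$, so $\tctx = \emptyctx$ and $\optmtyptwo = \none$.
\item Take $\val = \vartwo$ with the axiom $\judgs{\vartwo:\tightN}{\vartwo}{\tightN}$.
\item All hypotheses hold: $\none \mleq \tightN$ and $\var \in \fv\tm$.
\item Yet $\judgs{\vartwo:\tightN}{\lam\varthree\vartwo}{\emset}$ is not derivable. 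An abstraction receives $\emset$ only from \ruleTypAbs with no premises, whose environment is empty.
\end{itemize}
The same happens with $\judgs{\vartwo:\emset}{\vartwo}{\emset}$, or with $\tm = \varfour\,(\lam\varthree\var)$ under \ruleTypAppP.

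This is precisely your zero-premise \ruleTypAbs case. There is no premise to absorb $\deriv_\val$ into. A variable value also has no ``trivial piece'' with empty environment, since every derivation of $\vartwo$ has $\vartwo$ in its domain. Your binary-node bookkeeping has the same flaw in milder form: a side can contain $\var$ free while its $\var$-entry is $\none$. Applying the \ih ``on each side where $\var$ occurs'' then just pushes the leftover down to such a leaf. The paper's proof writes out only the \ruleTypVar case and passes over this case too.

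The missing idea is to classify premises by whether $\var \in \dom{\cdot}$, not by whether $\var \in \fv{\cdot}$.
\begin{enumerate}
\item First prove a dummy-substitution fact. If $\derivable{\deriv}{\judgs\tctx\tm\mtyp}$ and $\var \notin \dom\tctx$, then for every value $\val$ there is a derivation of $\judgs\tctx{\tm\sub\var\val}\mtyp$ of the same size. This goes by induction on $\deriv$: no \ruleTypVar axiom types $\var$, and the zero-premise \ruleTypAbs case is now trivial. Your auxiliary case $\var \notin \fv\tm$ is a special instance.
\item Then replace the hypothesis $\var \in \fv\tm$ by $\optmtyptwo \neq \none$, which forces $\optmtyptwo = \mtyptwo$. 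Also presuppose that $\tctx + \tctxtwo$ is defined, as it is wherever the lemma is used.
\item In each inductive case, premises whose $\var$-entry is $\none$ use the dummy fact and receive nothing. The other premises, whose $\var$-entries add up to $\mtyptwo$, share $\deriv_\val$ via \autoref{l:split-merge-value} and use the \ih.
\end{enumerate}
With this hypothesis the zero-premise \ruleTypAbs case cannot occur and no absorption is ever needed. The size equation follows from additivity.

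In \autoref{prop:vscne_wsr_U}, the case $\optmtyptwo = \none$ is then handled by the dummy fact alone. All of $\deriv_\val$ stays in the new closure $\esub\var\val$, and the $+1$ in the size disappears because $\var \notin \fv{\tmtwo\sub\var\val}$. So that proposition still holds once this case is treated separately.
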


\begin{proof}
By induction on $\deriv_\tm$.
We show only the \ruleTypVar case,
as the 
inductive cases follow 
easily from the~\ih and Lemmas~\ref{l:split-merge-value} and~\ref{lem:vscne_type_splitting}.

If $\deriv_\tm$ ends with \ruleTypVar, then $\tm = \var$, 
since $\var \in \fv\tm$ by hypothesis. 
Then, $\tm\sub\var\val = \val$ and $\deriv_\tm$ is of the form 
$\judgs{\var:\mtyp}\var\mtyp$, thus $\tctx = \emptyctx$ and 
$\optmtyptwo = \mtyptwo = \mtyp$. 
We let $\deriv_{\tm\sub\var\val} \eqdef \deriv_\val$, 
and we are done. 
Moreover,
$\sizeU{\deriv_{\tm\sub\var\val}} 
= \sizeU{\deriv_\val} 
= \length\mtyp + \sizeU{\deriv_\val} - \length\mtyp
= \sizeU{\deriv_\tm} + \sizeU{\deriv_\val} - \length\optmtyptwo$.
\end{proof}

The condition $\var \in \fv\tm$ in the lemma above is necessary, 
since the statement would not hold otherwise.
As a counterexample, take 
$\val = \varthree$, $\tctxtwo = \varthree : \tight$, and 
$\tm = \vartwo$ with $\vartwo \neq \var$.

A straightforward remark on values typing with $\tight$ in system $\typesystem$ is that the size of their typing derivations is always $0$:
\begin{rem}
\label{rem:vscne_size_deriv_val_tight}
If $\derivablep\deriv{\judgs\tctx\val\tight}\typesystem$, 
then $\sizeU\deriv = 0$.
\end{rem}

We now move on to the subject reduction property.
Similar to the analogous result established for system $\systemV$,
this property also provides quantitative information for system $\typesystem$: the size of $\typesystem$-typing derivations strictly decreases with respect to \VSCne-reduction steps.

\begin{prop}[Weighted Subject Reduction]
\label{prop:vscne_wsr_U}
Let $\derivablep\deriv{\judgs\tctx\tm\mtyp}\typesystem$ and 
$\tm \toVSCne \tm'$. 
Then, there exists a derivation $\deriv'$ such that 
$\derivablep{\deriv'}{\judgs\tctx{\tm'}\mtyp}\typesystem$ and
$\sizeU\deriv > \sizeU{\deriv'}$.
\end{prop}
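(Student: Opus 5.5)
We argue by induction on the derivation of $\tm \toVSCne \tm'$, following the structure of the proof of \autoref{prop:vscne_wsr_V}. The inductive cases (closure under weak contexts: left and right of an application, left and right of an ES) are routine. We decompose $\deriv$ according to its last rule (\ruleTypAppP, \ruleTypAppC or \ruleTypES), apply the \ih to the premise typing the reduced subterm, and rebuild with the same rule, keeping environments and types unchanged. The size decreases because the rule's contribution to $\sizeU{\_}$ is unchanged. The one subtlety is an ES $\tmtwo\esub\var\tmthree$ reducing on the left, $\tmtwo \toVSCne \tmtwo'$. Here we need $\var \in \fv{\tmtwo'}$ whenever $\var \in \fv\tmtwo$, so that the extra ``$+1$'' is not lost. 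This holds because \VSCne is non-erasing: neither $\dbsym$ nor $\svnesym$ removes free variables, since $\svnesym$ keeps the ES $\esub\var\val$. A short auxiliary lemma to this effect will be stated first.

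For the base case $\todb$, we do an inner induction on $\sctx$. For $\sctx = \ctxhole$, the application $(\lam\var\tm)\,\tmtwo$ cannot be typed by \ruleTypAppP, since an abstraction never has type $\tightN$. So it is typed by \ruleTypAppC with $\lam\var\tm : \mset{\optmtyp\to\mtyptwo}$, whose \ruleTypAbs premise is a single derivation $\tctx_1;\var:\optmtyp \vdash \tm:\mtyptwo$. We rebuild with \ruleTypES using the same $\optmtyp \mleq \mtyp$ and the derivation of $\tmtwo$. The old size is $1 + 1 + \sizeU{\deriv_\tm} + \sizeU{\deriv_\tmtwo}$ and the new one is at most $1 + \sizeU{\deriv_\tm} + \sizeU{\deriv_\tmtwo}$, so it strictly drops. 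For $\sctx = \sctxtwo\esub\vartwo\tmthree$, we commute the outer ES past the application exactly as in \autoref{prop:vscne_wsr_V}. We use relevance (\autoref{lem:relevance_systemU}) and $\alpha$-conversion to ensure $\vartwo \notin \fv\tmtwo$, and we check that the ``$+1$'' for $\vartwo$ is preserved in both terms.

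For the base case $\tosvne$, we again induct on $\sctx$, and the case $\sctx = \ctxhole$ is the main step. We have $\tm = \tmtwo\esub\var\val$ with $\var \in \fv\tmtwo$, typed by \ruleTypES from $\tctx_1;\var:\optmtyptwo \vdash \tmtwo:\mtyp$, $\optmtyptwo \mleq \mtyptwo$ and $\tctx_2 \vdash \val : \mtyptwo$. \autoref{lem:vscne_substitution} gives $\tctx_1 + \tctx_2 \vdash \tmtwo\sub\var\val : \mtyp$ of size $\sizeU{\deriv_\tmtwo} + \sizeU{\deriv_\val} - \length\optmtyptwo$.

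The obstacle is retyping the remaining ES $\esub\var\val$. Since $\var \notin \fv{\tmtwo\sub\var\val}$, relevance forces the environment to assign $\none$ to $\var$. We therefore need some type $\mtyp_0$ with $\none \mleq \mtyp_0$, that is, a constant $\tight$, and a derivation $\Delta \vdash \val : \tight$ with $\Delta$ empty, so that the environment stays $\tctx_1+\tctx_2$. Unlike in system $\systemV$, we cannot always choose $\emset$, because a variable $\val = \vartwo$ typed by $\emset$ forces $\vartwo:\emset$ into the environment.

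Our plan is to choose $\tight$ to match the type of $\vartwo$ already present in $\tctx_2$, namely $\tightN$ or an arrow multi-type. Then $\vartwo : \tight$ merges into $\tctx_1+\tctx_2$ without change, using $\tight + \tight = \tight$ for $\tightN$, and $\emset$ being neutral for arrow multi-types. This works because $\tctx_2(\vartwo) = \mtyptwo$ and $\var \in \fv\tmtwo$. If $\val$ is an abstraction, we simply type it with $\emset$ under the empty environment. By \autoref{rem:vscne_size_deriv_val_tight} the new derivation has size $0$. The ES on $\var$ contributes no ``$+1$'', since $\var$ no longer occurs free. The old size contributes $1$ plus the substitution sizes plus $\length\optmtyptwo \ge 0$, giving strict decrease. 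The inductive case on $\sctx$ then proceeds as in \autoref{prop:vscne_wsr_V}.
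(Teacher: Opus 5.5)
Your proposal is correct at the paper's level of rigour and follows the paper's route. It uses induction on the step with an inner induction on $\sctx$, and in the $\svnesym$ base case it applies \autoref{lem:vscne_substitution} and then retypes the residual $\esub\var\val$ at a constant type absorbed by $\tctx_1+\tctx_2$, which is exactly the paper's splitting $\mtyptwo=\mtyptwo+\tight$ via \autoref{l:split-merge-value}. The $\dbsym$ and congruence cases you add are the routine ones the paper omits. In the case of an ES reducing on the left, what is actually needed is the converse of your auxiliary fact: a step cannot \emph{create} a free occurrence of $\var$. Losing the $+1$ would only help, and the converse is immediate.

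Your $\svnesym$ base case, like the paper's, inherits a defect of \autoref{lem:vscne_substitution} when $\optmtyptwo=\none$. Take $\tmtwo=\lam\varthree\var$ typed with $\emset$ under the empty environment, so $\optmtyptwo=\none$, and $\val=\vartwo$ typed with $\tightN$. Then $\judgs{\vartwo:\tightN}{\lam\varthree\vartwo}{\emset}$ is not derivable. In that case, transport $\deriv_\tmtwo$ unchanged to $\judgs{\tctx_1}{\tmtwo\sub\var\val}{\mtyp}$, which works because every occurrence of $\var$ lies in the body of an abstraction typed with no premises. Then type the residual $\esub\var\val$ with the original $\deriv_\val$, using $\none\mleq\mtyptwo$.
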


\begin{proof}
By induction on $\tm \toVSCne \tm'$. 
We only show the base case 
$\tm \rewrite\svnesym \tm'$.
Let $\tm = \tmtwo\esub\var{\val\sctx} \rewrite\svnesym 
\tmtwo\sub\var\val\esub\var\val\sctx = \tm'$, with $\var \in \fv\tmtwo$. 
We proceed by induction on $\sctx$.
\begin{itemize}
\item $\sctx = \ctxhole$.
  Then, $\deriv$ has the form:
  \begin{mathpar}
    \inferrule* [right = \ruleTypES]{
      \derivable{\deriv_\tmtwo}{\judgs{\tctx_1;\var : \optmtyptwo}\tmtwo\mtyp}
      \\ \optmtyptwo \mleq \mtyptwo
      \\ \derivable{\deriv_\val}{\judgs{\tctx_2}\val\mtyptwo}
    }{
      \judgs{\tctx_1 + \tctx_2}{\tmtwo\esub\var\val}\mtyp
    }
  \end{mathpar}
  where $\tctx = \tctx_1 + \tctx_2$.
  Notice that $\optmtyptwo = \optmtyptwo + \none$.
  Thus, let $\optmtyptwo_1 = \optmtyptwo$ and $\optmtyptwo_2 = \none$.
  We have to separate into cases depending on whether $\mtyptwo = \tightN$ or
  $\mtyptwo = \nityp$, showing only the former.
  
    Hence $\mtyptwo = \tightN$, and 
    by~\autoref{lem:vscne_type_splitting} 
    there exist $\mtyptwo_1 = \tightN$ and $\mtyptwo_2 = \tightN$ 
    such that $\mtyptwo = \tightN = \tightN + \tightN = \mtyptwo_1 + \mtyptwo_2$
    (by idempotency of the union of constant types),
    and $\optmtyptwo \mleq \tightN$ and $\none \mleq \tightN$.
    Recall that $\length\tightN = \length\none = 0$ by definition, 
    so in particular $\length\optmtyptwo = \length{\optmtyptwo_1} = 0$.

    Applying~\autoref{l:split-merge-value}, we obtain 
    $\derivable{\deriv^1_\val}{\judgs{\tctx_{21}}\val\tightN}$ and
    $\derivable{\deriv^2_\val}{\judgs{\tctx_{22}}\val\tightN}$ with
    $\tctx_2 = \tctx_{21} + \tctx_{22}$ and $\sizeU{\deriv^1_\val} + \sizeU{\deriv^2_\val} = \sizeU{\deriv_\val} = 0$
    by~\autoref{rem:vscne_size_deriv_val_tight}.
    We can apply~\autoref{lem:vscne_substitution} on $\deriv_\tmtwo$
    and $\deriv^1_\val$, thus obtaining
    $\derivable
      {\deriv_{\tmtwo\sub\var\val}}
      {\judgs{\tctx_1 + \tctx_{21}}{\tmtwo\sub\var\val}\mtyp}$ and 
    $\sizeU{\deriv_{\tmtwo\sub\var\val}} = \sizeU{\deriv_\tmtwo} +
      \sizeU{\deriv^1_\val} - |\optmtyptwo_1| =  \sizeU{\deriv_\tmtwo}$.
    Thus, we conclude building $\deriv'$:
    \[
      \inferrule{
        \derivable
          {\deriv_{\tmtwo\sub\var\val}}
          {\judgs{(\tctx_1 + \tctx_{21});\var:\none}{\tmtwo\sub\var\val}\mtyp}
        \HS
        \none \mleq \tightN
        \HS
        \derivable{\deriv^2_\val}{\judgs{\tctx_{22}}\val\tightN}
      }{
        \judgs{\tctx_1 + \tctx_2}{\tmtwo\sub\var\val\esub\var\val}\mtyp
      }\ruleTypES
    \]

    And
    $\sizeU\deriv 
      = 1 + \sizeU{\deriv_\tmtwo} + \sizeU{\deriv_\val}
      = 1 + \sizeU{\deriv_\tmtwo}
      > \sizeU{\deriv_\tmtwo}
      =
         \sizeU{\deriv_{\tmtwo\sub\var\val}}
      = \sizeU{\deriv_{\tmtwo\sub\var\val}} + \sizeU{\deriv^2_{\val}}
      = \sizeU{\deriv'}$.
\item $\sctx = \sctxtwo\esub\vartwo\tmthree$.
  Then, $\deriv$ has the form:
  \[
    \scalebox{0.9}{
      \inferrule{
        \derivable{\deriv_\tmtwo}{\judgs{\tctx';\var:\optmtyptwo}\tmtwo\mtyp}
        \HS 
        \optmtyptwo \mleq \mtyptwo
        \HS
        \inferrule*[Right = \ruleTypES]{
          \derivable{\deriv_{\val\sctxtwo}}{\judgs{\tctxtwo_1;\vartwo:\optmtypthree}{\val\sctxtwo}\mtyptwo} 
          \HS
          \optmtypthree \mleq \mtypthree
          \HS
          \derivable{\deriv_\tmthree}{\judgs{\tctxtwo_2}\tmthree\mtypthree}
        }{ 
          \judgs{\tctxtwo_1 + \tctxtwo_2}{\val\sctxtwo\esub\vartwo\tmthree}\mtyptwo
        }
      }{
        \judgs{\tctx' + \tctxtwo_1 + \tctxtwo_2}{\tmtwo\esub\var{\val\sctxtwo\esub\vartwo\tmthree}}\mtyp 
      }\ruleTypES
    }
  \]
  where $\tctx = \tctx' + \tctxtwo_1 + \tctxtwo_2$.
  Moreover, we can assume $\vartwo \notin \fv\tmtwo$ by $\alpha$-conversion.

  We now  build the following derivation $\derivtwo$:
  \[
    \inferrule{
      \derivable{\deriv_\tmtwo}{\judgs{\tctx';\var : \optmtyptwo}\tmtwo\mtyp}
      \\ \optmtyptwo \mleq \mtyptwo 
      \\
      \derivable{
        \deriv_{\val\sctxtwo}
      }{
        \judgs{\tctxtwo_1;\vartwo:\optmtypthree}{\val\sctxtwo}\mtyptwo
      }
    }{
      \judgs{\tctx' + (\tctxtwo_1;\vartwo:\optmtypthree)}{\tmtwo\esub\var{\val\sctxtwo}}\mtyp
    }\ruleTypES
  \]
  Moreover, $\tmtwo\esub\var{\val\sctxtwo} \rewrite\svnesym \tmtwo\sub\var\val\esub\var\val\sctxtwo$,
  since $\var \in \fv\tmtwo$.
  Then, the \ih yields a typing derivation $\derivtwo'$ such that
  $\derivable{\derivtwo'}{\judgs{\tctx' + (\tctxtwo_1;\vartwo:\optmtypthree)}{\tmtwo\sub\var\val\esub\var\val\sctxtwo}\mtyp}$
  and moreover, $\sizeU\derivtwo > \sizeU{\derivtwo'}$. 
  Then, we build $\deriv'$ as follows: 
  \[
    \inferrule{
      \derivable{\derivtwo'}{\judgs{\tctx' + (\tctxtwo_1; \vartwo:\optmtypthree)}{\tmtwo\sub\var\val\esub\var\val\sctxtwo}\mtyp}
      \HS
      \optmtypthree \mleq \mtypthree 
      \HS
      \derivable{\deriv_\tmthree}{\judgs{\tctxtwo_2}\tmthree\mtypthree}
    }{
      \judgs{\tctx' + \tctxtwo_1 + \tctxtwo_2}{\tmtwo\sub\var\val\esub\var\val\sctxtwo\esub\vartwo\tmthree}\mtyp
    }\ruleTypES
  \]
  Note that we can assume $\vartwo \notin \dom{\tctx'}$ by 
  $\alpha$-conversion and~\autoref{lem:relevance_systemU}. 

  To compare the sizes of both $\deriv$ and $\deriv'$, 
  we consider two cases, depending on whether $\vartwo \in \fv{\tmtwo\sub\var\val\esub\var\val\sctxtwo}$.
  Note that $\var \in \fv\tmtwo$ by hypothesis.
  \begin{itemize}
  \item 
    If $\vartwo \notin \fv{\tmtwo\sub\var\val\esub\var\val\sctxtwo}$, 
    then in particular
    $\vartwo \notin \fv{\val\sctxtwo}$.
    Thus, we have
    $\sizeU\deriv 
    = 1 + \sizeU{\deriv_\tmtwo} + \sizeU{\deriv_{\val\sctxtwo}} + \sizeU{\deriv_\tmthree}
    = \sizeU{\derivtwo} + \sizeU{\deriv_\tmthree}
    >_\ih \sizeU{\derivtwo'} + \sizeU{\deriv_\tmthree}
    = \sizeU{\deriv'}$.
  \item
    Otherwise, $\vartwo \in \fv{\tmtwo\sub\var\val\esub\var\val\sctxtwo}$,
    and recall that $\vartwo \notin \fv\tmtwo$ by $\alpha$-conversion,
    so that  $\vartwo \in \fv{\val\sctxtwo}$.
    Therefore,
    $\sizeU\deriv 
    = 1 + \sizeU{\deriv_\tmtwo} + 1 + \sizeU{\deriv_{\val\sctxtwo}} + \sizeU{\deriv_\tmthree}
    = 1 + \sizeU\derivtwo + \sizeU{\deriv_\tmthree}
    >_\ih 1 + \sizeU{\derivtwo'} + \sizeU{\deriv_\tmthree}
    = \sizeU{\deriv'}$. 
    \qedhere
  \end{itemize}
\end{itemize}
\end{proof}

We can then state the following theorem:
\begin{thm}[Soundness of $\typesystem$ with Respect to \VSCne]
\label{thm:soundness_U_vscne}
Let t be a term.
If $\tm$ is $\typesystem$-typable, then $\tm$ is \VSCne-terminating.
\end{thm}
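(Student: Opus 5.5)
The plan is to mirror the proof of \autoref{thm:soundness_V_vscne}, replacing system $\systemV$ by system $\typesystem$ and using \autoref{prop:vscne_wsr_U} in place of \autoref{prop:vscne_wsr_V}. Let $\derivablep\deriv{\judgs\tctx\tm\mtyp}\typesystem$, and suppose, towards a contradiction, that $\tm$ is not \VSCne-terminating. Then there is an infinite reduction sequence $\tm = \tm_0 \toVSCne \tm_1 \toVSCne \tm_2 \toVSCne \hdots$.

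We then apply \autoref{prop:vscne_wsr_U} repeatedly along this sequence. Formally, this is an induction on $n$ producing derivations $\derivablep{\deriv_n}{\judgs\tctx{\tm_n}\mtyp}\typesystem$ with $\deriv_0 \eqdef \deriv$ and $\sizeU{\deriv_n} > \sizeU{\deriv_{n+1}}$. Because the proposition preserves both the environment $\tctx$ and the type $\mtyp$, each step yields a derivation that is again a valid input for the next application. We thus obtain an infinite strictly decreasing sequence $\sizeU{\deriv_0} > \sizeU{\deriv_1} > \hdots$ of natural numbers.

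This is impossible, since $\sizeU{\_}$ takes values in $\nat$ by its definition in \autoref{fig:sz_U}: every clause adds non-negative quantities, namely cardinalities $\length\mtyp$ and $\length I$, constants $1$, and sizes of subderivations. Hence $\tm$ must be \VSCne-terminating. Note that no normal-form characterisation is needed for this direction.

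There is no real obstacle here: all the technical work (the substitution lemma \autoref{lem:vscne_substitution} with its $\var \in \fv\tm$ side condition, and the splitting of types with $\none$ and constant types) has already been absorbed into \autoref{prop:vscne_wsr_U}. The only point to check is that the proposition applies to \emph{every} \VSCne step, including steps under weak contexts. This is covered by the inductive cases of its proof, so the argument is a direct well-foundedness argument on derivation size.
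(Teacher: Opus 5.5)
Your proposal is correct and matches the paper's proof: the paper repeats the argument of \autoref{thm:soundness_V_vscne} with \autoref{prop:vscne_wsr_U} in place of \autoref{prop:vscne_wsr_V}. As you do, it derives a contradiction from an infinite strictly decreasing sequence of natural-number sizes $\sizeU{\deriv_n}$ along a hypothetical infinite \VSCne-reduction.
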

\begin{proof}
The proof is similar to that of~\autoref{thm:soundness_V_vscne} but uses~\autoref{prop:vscne_wsr_U}
instead of~\autoref{prop:vscne_wsr_V}.
\end{proof}

\paragraph{\textbf{Completeness.}}
We now prove the \defn{completeness} of system $\typesystem$ with 
respect to \VSCne, which states that \VSCne-termination
implies $\typesystem$-typability.
This result follows well-understood techniques, 
requiring a subject expansion property (\autoref{prop:vscne_wse_U}) based on an anti-substitution lemma (\autoref{lem:vscne_anti-subst}).
Additionally, it is necessary to prove that all \VSCne-normal forms are $\typesystem$-typable:

\begin{prop}[Normal Forms in \VSCne are $\typesystem$-Typable]
\label{prop:vscne_normal_forms_are_typable}
Let $\tm$ be a \VSCne-normal form.
Then, $\tm$ is $\typesystem$-typable with a constant type $\tight$.
More specifically, there exist $\deriv$, $\tctx$, and $\tight$
such that $\derivablep\deriv{\judgs\tctx\tm\tight}\typesystem$, where
$\tctx(\var) = \tightN$ for all $\var \in \dom\tctx$.
\end{prop}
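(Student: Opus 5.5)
We proceed as in \autoref{prop:vscne_normal_forms_are_typable_V}, by simultaneous induction on the grammar of \VSCne-normal forms $\vvscne$, $\nevscne$, $\absvscne$, $\novscne$ given in \autoref{subsec:vscne-op}, using \autoref{prop:vscne_characterisation_nf} to identify normal forms with $\novscne$. Throughout, every environment we build will map each variable in its domain to $\tightN$; we call such an environment \emph{neutral}. The refined statements we prove are:
\begin{enumerate}
\item if $\tm \in \vvscne$, then for each constant type $\tight \in \set{\tightN,\emset}$ there is a derivation of $\judgs\tctx\tm\tight$ with $\tctx$ neutral, except that the head variable of $\tm$ (the variable reached by following the left of the ESs) may be assigned $\tight$;
\item if $\tm \in \nevscne$, then $\judgs\tctx\tm\tightN$ is derivable with $\tctx$ neutral;
\item if $\tm \in \absvscne$, then $\judgs\tctx\tm\emset$ is derivable with $\tctx$ neutral;
\item if $\tm \in \novscne$, then $\judgs\tctx\tm\tight$ is derivable with $\tctx$ neutral, for some $\tight$.
\end{enumerate}
In the variable case (1) we only ever instantiate $\tight = \tightN$ when a neutral environment is needed. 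Items (2)--(4) are then neutral, and the target statement follows from (4).

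For the abstraction $\lam\var\tm$ we apply \ruleTypAbs with $I = \emptyset$, which yields $\emptyctx \vdash \lam\var\tm : \emset$. This holds regardless of $\tm$ and mirrors the $\systemV$ case. For a variable $\var$ we apply \ruleTypVar with $\var : \tightN$, which gives a neutral environment.

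For the application cases $\vvscne\,\novscne$ and $\nevscne\,\novscne$, the left subterm is typed with $\tightN$: by (1) with $\tight = \tightN$ for $\vvscne$, and by (2) for $\nevscne$. The right subterm is typed with some $\tight$ by (4). Rule \ruleTypAppP then applies and gives type $\tightN$. The resulting environment is the sum of two neutral environments, which is neutral because $\tightN + \tightN = \tightN$; moreover this sum is always defined, since no $\emset$ ever appears in the environment.

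For the explicit substitution cases $\tm\esub\var\tmtwo$, two situations arise.
\begin{itemize}
\item If $\tmtwo \in \nevscne$, the argument gets type $\tightN$ by (2). By induction, $\tm$ is typed with $\var$ mapped either to $\tightN$ (neutrality) or to $\none$ (when $\var\notin\fv\tm$, by \autoref{lem:relevance_systemU}). In both cases $\tctx(\var) \mleq \tightN$, so \ruleTypES applies.
\item If $\tmtwo \in \novscne$ with $\var \notin \fv\tm$, then $\tctx(\var) = \none$ by relevance. We type $\tmtwo$ with some $\tight$ by (4) and use $\none \mleq \tight$.
\end{itemize}
In both situations the resulting environment is neutral, since it is a sum of neutral environments, with $\var$ removed.

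The main obstacle is preserving the shape of the type through the ESs. For instance, an $\absvscne$ of the form $\absvscne\esub\var\tmtwo$ must still receive type $\emset$, so the induction has to keep track of the exact constant type and not merely typability. Since the ES rule preserves the type of its body, this works out. A second point to check is that $\emset$ never enters an environment: this matters because $\tightN + \emset$ is undefined, so a single $\emset$ in an environment could make a later sum ill-defined. Neutrality rules this out, and it holds because variables only ever receive $\tightN$ at the leaves and are otherwise weakened by $\none$.
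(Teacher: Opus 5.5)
Your argument is the paper's own proof: a simultaneous induction over $\vvscne$, $\nevscne$, $\absvscne$, $\novscne$ with environments mapping everything to $\tightN$, using the following ingredients.
\begin{itemize}
\item Abstractions are typed by rule \ruleTypAbs with no premises.
\item Applications are typed by rule \ruleTypAppP.
\item Closures are typed by rule \ruleTypES, using either $\tightN\mleq\tightN$ or $\none\mleq\tight$; relevance gives $\none$ under the side condition of the grammar.
\end{itemize}
All of these cases go through as you describe.

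The one thing to fix is item (1). You claim it for every $\tight\in\set{\tightN,\emset}$, but the $\emset$ instance is false, so the simultaneous induction as you stated it cannot be completed. Take $\var\esub\var{\varthree\,\varthree}\in\vvscne$.
\begin{itemize}
\item Typing it with $\emset$ forces $\var:\emset$ in the body.
\item Since $\emset\mleq\mtyp$ holds only for $\mtyp=\emset$, the argument $\varthree\,\varthree$ must then have type $\emset$.
\item That is possible only via rule \ruleTypAppC, with $\varthree$ assigned a non-empty arrow multi-type.
\end{itemize}
So no environment that is neutral apart from the head variable can work. In your induction this failure surfaces in the case $\vvscne\esub\var\nevscne$ when the head variable is $\var$ itself.

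Since you never actually use $\tight=\emset$, simply state (1) for $\tightN$ with a fully neutral environment, which is exactly the paper's item (1). The rest of your proof is then unchanged and correct.
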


\begin{proof}
  The proof is by simultaneous
  induction on the following statements:
\begin{enumerate}
\item \label{it:typing_vr}
  If $\tm \in \vvscne$, then there are $\deriv$ and $\tctx$
  such that $\derivablep\deriv{\judgs\tctx\tm{\tightN}}\typesystem$,
  and $\tctx(\vartwo) = \tightN$ for all $\vartwo \in \dom\tctx$. 
\item \label{it:typing_ne}
  If $\tm \in \nevscne$, then  there are $\deriv$ and $\tctx$ 
  such that $\derivablep\deriv{\judgs\tctx\tm{\tightN}}\typesystem$,
  and $\tctx(\vartwo) = \tightN$ for all $\vartwo \in \dom\tctx$. 
\item \label{it:typing_ab}
  If $\tm \in \absvscne$, then there are $\deriv$ and $\tctx$ 
  such that $\derivablep\deriv{\judgs\tctx\tm\emset}\typesystem$,
  and $\tctx(\vartwo) = \tightN$ for all $\vartwo \in \dom\tctx$. 
\item \label{it:typing_no}
  If $\tm \in \novscne$, then there are $\deriv$, $\tctx$, and $\tight$ such 
  that $\derivablep\deriv{\judgs\tctx\tm{\tight}}\typesystem$, and 
  $\tctx(\vartwo) = \tightN$ for all $\vartwo \in \dom\tctx$. \qedhere
\end{enumerate}
\end{proof}

We now state the anti-substitution lemma:
\begin{lem}[Weighted Anti-Substitution]
\label{lem:vscne_anti-subst}
Let 
$\derivablep{\deriv_{\tm\sub\var\val}}{\judgs\tctx{\tm\sub\var\val}\mtyp}\typesystem$, 
with $\var \in \fv\tm$.
Then, there exist typing derivations $\deriv_\tm$ and $\deriv_\val$,
typing environments $\tctx_\tm$ and $\tctxtwo$, an optional type 
$\optmtyptwo$, and a type $\mtyptwo$ satisfying:
\begin{enumerate}
\item 
  $\derivablep{\deriv_\tm}{\judgs{\tctx_\tm;\var : \optmtyptwo}\tm\mtyp}\typesystem$
\item
  $\derivablep{\deriv_\val}{\judgs{\tctxtwo}\val\mtyptwo}\typesystem$
\item
  $\tctx = \tctx_\tm + \tctxtwo$ and $\optmtyptwo \mleq \mtyptwo$
\end{enumerate}
Moreover, 
$\sizeU{\deriv_{\tm\sub\var\val}} 
= \sizeU{\deriv_\tm} + \sizeU{\deriv_\val} - \length\optmtyptwo$.
\end{lem}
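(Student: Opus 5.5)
The plan is to argue by induction on the structure of $\tm$, generalising over $\tctx$, $\mtyp$ and $\deriv_{\tm\sub\var\val}$. The hypothesis $\var \in \fv\tm$ rules out the base case $\tm = \vartwo \neq \var$. In every compound case $\var$ may occur in only some immediate subterms, so we must handle subterms not containing $\var$ as well. For such a subterm $\tmthree$ we have $\tmthree\sub\var\val = \tmthree$. We keep its derivation, and by \autoref{lem:relevance_systemU} it assigns $\none$ to $\var$, which contributes $0$ to $\length\optmtyptwo$.

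\emph{Base case} $\tm = \var$. Then $\tm\sub\var\val = \val$, and we take $\deriv_\tm$ to be the axiom $\judgs{\var:\mtyp}\var\mtyp$, $\deriv_\val \eqdef \deriv_{\tm\sub\var\val}$, $\tctx_\tm \eqdef \emptyctx$, $\optmtyptwo = \mtyptwo \eqdef \mtyp$. The size equation reads $\sizeU{\deriv_\val} = \length\mtyp + \sizeU{\deriv_\val} - \length\mtyp$.

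\emph{Inductive cases.} For $\tm = \tm_1\,\tm_2$, the last rule is \ruleTypAppP or \ruleTypAppC. We apply the \ih to each premise whose subterm contains $\var$, obtaining $\optmtyptwo_j \mleq \mtyptwo_j$ and $\deriv^j_\val$ with $\val$ typed by $\mtyptwo_j$; for a subterm not containing $\var$ we set $\optmtyptwo_j = \none$ and create no value derivation. We rebuild the rule for $\tm$ with $\var : \optmtyptwo_1 + \optmtyptwo_2$. When both subterms yield value derivations, we merge them with \autoref{l:split-merge-value} into $\deriv_\val$ typing $\val$ with $\mtyptwo_1 + \mtyptwo_2$, and obtain $\optmtyptwo_1 + \optmtyptwo_2 \mleq \mtyptwo_1 + \mtyptwo_2$ from \autoref{lem:vscne_type_splitting}. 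The sizes add up, since $\length{\cdot}$ is additive, and the constant $+1$ of the application is preserved. The cases $\tm = \tm_1\esub\vartwo{\tm_2}$ and $\tm = \lam\vartwo{\tm'}$ are analogous. For the closure we choose $\vartwo \notin \fv\val \cup \set\var$ by $\alpha$-conversion, so that $\vartwo \in \fv{\tm_1}$ iff $\vartwo \in \fv{\tm_1\sub\var\val}$ and the $+1$ clause for \ruleTypES is the same on both sides. For the abstraction, each of the $I$ premises $\deriv_i$ is typed with $\tm'\sub\var\val$, and the \ih gives $(\optmtyptwo_i, \mtyptwo_i, \deriv^i_\val)$ per premise, which we merge over $I$. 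Note that the abstraction rule allows $I = \emptyset$. In that case $\var$ receives $\none$ in $\deriv_\tm$, so we also need a derivation for $\val$ with no weight. We take the empty abstraction typing $\judgs\emptyctx{\lam\vartwo\tmthree}\emset$ if $\val$ is an abstraction, or the axiom with $\emset$ if $\val$ is a variable, and use $\none \mleq \emset$.

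\emph{Main obstacle: well-definedness of sums.} The union $\mtyptwo_1 + \mtyptwo_2$ is partial: $\tightN + \nityp$ is undefined. The sum $\tctx_1 + \tctx_2$ in the conclusion of the derivation for $\tm\sub\var\val$ is defined, but it only constrains free variables of $\val$, not the types assigned to $\val$ itself. I would add a normalisation step to the induction. If one component types $\val$ with $\tightN$ and another with an arrow multi-type, then $\val$ must be a variable $\varthree$: an abstraction cannot have type $\tightN$. In that case $\varthree$ would receive both $\tightN$ and an arrow multi-type in the environment, and the environment sum would be undefined, a contradiction. When $\val$ is a variable and one component is $\none$, its type is some $\tight$ (or the merge is trivial). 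When weights mismatch only through constant types, we replace $\emset$ by $\tightN$ or vice versa using $\none \mleq \tight$ and sizes $0$ (\autoref{rem:vscne_size_deriv_val_tight}). So I would strengthen the statement to record that $\tctxtwo$ is contained in the environment contribution of the occurrences of $\val$. This links well-definedness of $\mtyptwo_1 + \mtyptwo_2$ to that of $\tctx$, and I expect it to be the delicate bookkeeping step.
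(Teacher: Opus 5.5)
Your skeleton is the paper's: structural induction, the variable base case, and merging value derivations with \autoref{l:split-merge-value} and \autoref{lem:vscne_type_splitting}. The step you add for an abstraction typed with $I=\emptyset$ fails, however, when $\val$ is a variable $\varthree$. In $\typesystem$ the environment $\varthree:\emset$ is not empty; only $\varthree:\none$ is. Taking $\deriv_\val$ to be $\judgs{\varthree:\emset}{\varthree}{\emset}$ gives $\tctxtwo=\varthree:\emset$. The conclusion $\judgs{\emptyctx}{\lam{\vartwo}{\tm'\sub\var\val}}{\emset}$ has environment $\emptyctx$, so $\tctx=\tctx_\tm+\tctxtwo$ fails.

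No other choice works either. Every derivation of a variable ends with \ruleTypVar and puts that variable in the domain. So for $\tm=\lam{\vartwo}{\var}$, $\val=\varthree$ and the premise-free $\judgs{\emptyctx}{\lam{\vartwo}{\varthree}}{\emset}$, no decomposition exists. The statement as written is therefore false in this corner, which by \autoref{lem:relevance_systemU} can only arise when $\var\notin\rv\tm$. The paper's sketch treats only the variable case and does not see this either.

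Your choice also breaks instances where the lemma is true. Take $\tm=\var\,(\lam{\vartwo}{\var})$, with $\tm\sub\var\val$ typed under $\varthree:\tightN$ via \ruleTypAppP. The \ih on $\lam{\vartwo}{\var}$ returns $\varthree:\emset$, which cannot be merged with the $\varthree:\tightN$ coming from the other occurrence.

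The repair is to make the value derivation optional in the statement, as you already do for subterms not containing $\var$. Either $\optmtyptwo=\none$, $\tctx_\tm=\tctx$ and $\sizeU{\deriv_\tm}=\sizeU{\deriv_{\tm\sub\var\val}}$, with no $\deriv_\val$. Or $\optmtyptwo=\mtyptwo$, with $\deriv_\val$, $\tctx=\tctx_\tm+\tctxtwo$ and the stated size equation. When $\val$ is an abstraction, the original statement is recovered from the first alternative via $\judgs{\emptyctx}{\val}{\emset}$.

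This version goes through by your induction, used as the \ih. It still suffices for \autoref{prop:vscne_wse_U}: in $\tmtwo\sub\var\val\esub\var\val$ the ES already carries a size-$0$ derivation $\judgs{\tctx_2}{\val}{\tight}$. That derivation is reused when no $\deriv_\val$ is produced, and merged with $\deriv_\val$ otherwise.

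With this formulation your ``main obstacle'' needs no extra invariant. Abstractions only receive arrow multi-types. For $\val=\varthree$, each $\tctxtwo_j=\varthree:\mtyptwo_j$ is a summand of $\tctx_j$, so definedness of $\tctx_1+\tctx_2$ already forces definedness of $\mtyptwo_1+\mtyptwo_2$. The $\emset$/$\tightN$ swapping you propose is never needed, and it would in any case alter $\tctxtwo$ and break item~(3).
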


\begin{proof}
  By structural induction on $\tm$. 
  We only show the case when $\tm$ is a variable. 
  The inductive cases follow easily from the \ih and~\autoref{l:split-merge-value}.

  When $\tm$ is a variable, then necessarily $\tm = \var$, 
  as $\var \in \fv\tm$ by hypothesis. 
  Then, $\tm\sub\var\val = \val$. 
  We let $\deriv_\val \eqdef \deriv_{\tm\sub\var\val}$, and we obtain 
  $\derivable{\deriv_\tm}{\judgs{\var:\mtyp}\var\mtyp}$ by rule \ruleTypVar,
  so that $\tctx_\tm \eqdef \emptyctx$, $\tctxtwo \eqdef \tctx$, and 
  $\optmtyptwo = \mtyptwo = \mtyp$. 
  Moreover, 
  $\sizeU{\deriv_{\tm\sub\var\val}} 
  = \sizeU{\deriv_\val}
  = \length\mtyptwo + \sizeU{\deriv_\val} - \length\mtyptwo
  = \sizeU{\deriv_{\tm}} + \sizeU{\deriv_\val} - \length\optmtyptwo$. \qedhere
\end{proof}

The condition $\var \in \fv\tm$ from the previous lemma is necessary, 
since otherwise, the statement does not hold.
As a counterexample, it is sufficient to take $\tm = \vartwo$ with 
$\vartwo \neq \var$ and $\val = \varthree$ so that 
$\tm\sub\var\varthree = \vartwo$ and the derivation
$\derivable{\deriv_{\tm\sub\var\varthree}}{\judgs{\vartwo:\mtyp}\vartwo\mtyp}$.

We now move on to the subject expansion property, similar to the one established for system $\systemV$, since it
also provides quantitative information for system $\typesystem$: the size of $\typesystem$-typing derivations strictly increases with respect to \VSCne-expansions.

\begin{prop}[Weighted Subject Expansion]
\label{prop:vscne_wse_U}
Let $\tm \toVSCne \tm'$ and 
$\derivablep{\deriv'}{\judgs\tctx{\tm'}\mtyp}\typesystem$. 
Then, there exists $\deriv$ such that 
$\derivablep\deriv{\judgs\tctx\tm\mtyp}\typesystem$, with
$\sizeU\deriv > \sizeU{\deriv'}$.
\end{prop}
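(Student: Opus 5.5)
The plan is to mirror the proof of \autoref{prop:vscne_wse_V}: induction on $\tm \toVSCne \tm'$. The inductive cases (closure under weak contexts: left/right of an application, left/right of an ES) follow by inverting the last rule of $\deriv'$, applying the \ih to the premise typing the reduct, and rebuilding with the same rule. Sizes strictly increase because the size function is additive. For the ES case on the left, I will check that the side condition ``$\var \in \fv\cdot$'' in the size definition is not affected: \VSCne never erases, so free variables are preserved up to those bound by newly created ESs. For the $\dbsym$ case, $\deriv'$ ends with \ruleTypES (possibly under some ESs of $\sctx$). I will rebuild a \ruleTypAbs with a single premise, giving the abstraction $\mset{\optmtyptwo\to\mtyp}$, followed by \ruleTypAppC, and commute the ESs of $\sctx$ outward by induction on $\sctx$, using relevance (\autoref{lem:relevance_systemU}) to keep environments disjoint. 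The size gains $+2$ (the abstraction and the application) and may lose at most $1$ from the ES, so it strictly increases.

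The main case is $\tm = \tmtwo\esub\var{\val\sctx} \tosvne \tmtwo\sub\var\val\esub\var\val\sctx = \tm'$ with $\var\in\fv\tmtwo$, by induction on $\sctx$. When $\sctx = \ctxhole$, $\deriv'$ ends with \ruleTypES with premises $\judgs{\tctx_1;\var:\optmtyptwo_2}{\tmtwo\sub\var\val}\mtyp$, $\optmtyptwo_2\mleq\mtyptwo_2$, and $\judgs{\tctx_2}\val{\mtyptwo_2}$. Since $\var\notin\fv{\tmtwo\sub\var\val}$, relevance forces $\optmtyptwo_2=\none$, hence $\mtyptwo_2=\tight$, and the size of the $\val$ derivation is $0$ by \autoref{rem:vscne_size_deriv_val_tight}. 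Applying \autoref{lem:vscne_anti-subst} (valid since $\var\in\fv\tmtwo$) gives $\judgs{\tctx_\tmtwo;\var:\optmtyptwo_1}\tmtwo\mtyp$ and $\judgs\tctxtwo\val{\mtyptwo_1}$ with $\optmtyptwo_1\mleq\mtyptwo_1$.

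The obstacle is merging the two typings of $\val$ into one, since $\mtyptwo_1 + \tight$ may be undefined (for example $\mtyptwo_1$ an arrow multi-type and $\tight=\tightN$). I would not merge at all. Instead, I would discard the $\tight$ derivation of $\val$ entirely, which requires $\tctx_2$ to be absorbable. If $\tctx_2$ is nonempty, then $\val$ is a variable $\varthree$ with $\tctx_2=\varthree:\tight$. In that case $\tctx_1$ must still contain $\varthree$ with a compatible type, because $\varthree$ occurs in $\tmtwo\sub\var\val$ where $\var$ occurred, and relevance plus $\var\in\fv\tmtwo$ applies. I then need idempotency $\tight+\tight=\tight$, or to set $\mtyptwo_1$ accordingly.

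Concretely, I would split into cases on $\val$. If $\val$ is an abstraction, take $\tctx_2=\emptyctx$ and $\mtyptwo_2=\emset$, and merge using \autoref{l:split-merge-value} when $\mtyptwo_1$ is a multi-type; $\mtyptwo_1=\tightN$ is impossible for an abstraction. If $\val=\varthree$, then $\varthree$ has a single type in $\tctx$; compatibility of $+$ forces $\mtyptwo_1$ and $\mtyptwo_2$ to be summable, so merging again works, and the $\tight$ part adds size $0$. The size computation then reads $\sizeU\deriv = 1+\sizeU{\deriv_\tmtwo}+\sizeU{\deriv_\val} > \sizeU{\deriv_{\tmtwo\sub\var\val}} = \sizeU{\deriv'}$, as in the $\systemV$ case.

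For $\sctx=\sctxtwo\esub\vartwo\tmthree$ I copy the $\systemV$ argument verbatim: invert the outer \ruleTypES, apply the \ih to $\tmtwo\esub\var{\val\sctxtwo}$, then reassociate the ES using $\vartwo\notin\fv\tmtwo$ and relevance. The two size cases depend on whether $\vartwo\in\fv{\val\sctxtwo}$.
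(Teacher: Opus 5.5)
Your proposal is the paper's proof with the details filled in: induction on the \VSCne step, the $\svnesym$ base case via \autoref{lem:vscne_anti-subst} followed by merging the two typings of $\val$ with \autoref{l:split-merge-value}, the $\sctx=\sctxtwo\esub\vartwo\tmthree$ step done as for $\systemV$, and a routine $\dbsym$ case via \ruleTypAbs and \ruleTypAppC. Your check that $\mtyptwo_1+\tight$ is defined and equals $\mtyptwo_1$, so that $\optmtyptwo_1\mleq\mtyptwo_1$ survives the merge, is exactly the detail the paper leaves implicit.

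One caveat, which applies equally to the paper's argument. Your claim in the discard paragraph, that relevance forces $\varthree\in\dom{\tctx_1}$, is false, because \autoref{lem:relevance_systemU} only gives $\rv{\cdot}\subseteq\dom{\cdot}$; take $\tmtwo=\lam\vartwo\var$, $\val=\varthree$ and $\judgs{\emptyctx}{\lam\vartwo\varthree}{\emset}$. In exactly that situation the conclusion of \autoref{lem:vscne_anti-subst} is unsatisfiable, since any typing of $\varthree$ would put $\varthree$ in $\dom{\tctx_1}$. So this sub-case must be treated directly: type $\tmtwo$ itself with $\var:\none$ under $\tctx_1$ (the copies of $\varthree$ coming from $\var$ are untyped), and reuse the tight derivation of $\val$ from $\deriv'$. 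This still increases the size by one.
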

\begin{proof}
By induction on $\tm \toVSCne \tm'$ where, in particular,~\autoref{lem:vscne_anti-subst} is used in the base case
$\tm = \tmtwo\esub\var{\val\sctx} \rewrite\svnesym
\tmtwo\sub\var\val\esub\var\val\sctx = \tm'$.
\end{proof}

We can then state the following theorem:
\begin{thm}[Completeness of $\typesystem$ with Respect to \VSCne]
\label{thm:completeness_U_vscne}
Let t be a term.
If $\tm$ is \VSCne-terminating, then $\tm$ is $\typesystem$-typable.
\end{thm}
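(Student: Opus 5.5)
The plan mirrors the proof of \autoref{thm:completeness_V_vscne}, replacing system $\systemV$ by system $\typesystem$. Since $\tm$ is \VSCne-terminating, there is a finite reduction sequence
\[
\tm = \tm_0 \toVSCne \tm_1 \toVSCne \hdots \toVSCne \tm_n
\]
with $\tm_n$ a \VSCne-normal form. Such a sequence exists because termination implies weak termination: any maximal reduction sequence from $\tm$ is finite and ends in an irreducible term. I will prove that $\tm$ is $\typesystem$-typable by induction on $n$.

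\textbf{Base case} ($n = 0$). Here $\tm$ is \VSCne-irreducible. By \autoref{prop:vscne_characterisation_nf}, $\tm$ is a \VSCne-normal form, so \autoref{prop:vscne_normal_forms_are_typable} gives a derivation $\derivablep\deriv{\judgs\tctx\tm\tight}\typesystem$. In particular, $\tm$ is $\typesystem$-typable.

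\textbf{Inductive case} ($n > 0$). The suffix of the sequence starting at $\tm_1$ has $n-1$ steps and ends in the same normal form. The induction hypothesis therefore yields a derivation $\derivablep{\deriv'}{\judgs\tctx{\tm_1}\mtyp}\typesystem$ for some $\tctx$ and $\mtyp$. Since $\tm \toVSCne \tm_1$, weighted subject expansion (\autoref{prop:vscne_wse_U}) gives a derivation $\deriv$ with $\derivablep\deriv{\judgs\tctx\tm\mtyp}\typesystem$. Hence $\tm$ is $\typesystem$-typable.

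Given the earlier results, none of these steps is a real obstacle; the substantive work is already in \autoref{prop:vscne_wse_U}. The one point to check is that expansion keeps the same judgement. The normal form is typed with a constant type $\tight$, but \autoref{prop:vscne_wse_U} applies to arbitrary types $\mtyp$, so no tightness constraint has to be carried through the induction.
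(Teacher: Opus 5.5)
Your proposal is correct and follows the paper's proof essentially verbatim. Like the paper, you argue by induction on the length of the reduction sequence to normal form, using \autoref{prop:vscne_normal_forms_are_typable} in the base case and weighted subject expansion (\autoref{prop:vscne_wse_U}) in the inductive step, exactly as in the proof of \autoref{thm:completeness_V_vscne}.
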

\begin{proof}
The proof is similar to that of~\autoref{thm:completeness_V_vscne} but uses~\autoref{prop:vscne_normal_forms_are_typable} and~\autoref{prop:vscne_wse_U}
instead of~\autoref{prop:vscne_normal_forms_are_typable_V}
and~\autoref{prop:vscne_wse_V}.
\end{proof}

From soundness (\autoref{thm:soundness_U_vscne}) and
completeness (\autoref{thm:completeness_U_vscne}), we can now state
the result of the characterisation of \VSCne-termination via
$\typesystem$-typability:
\begin{cor}[System $\typesystem$ Characterises \VSCne-Termination]
\label{cor:u-characterises-vscne}
Let $\tm$ be a term. Then, $\tm$ is $\typesystem$-typable if and only 
if $\tm$ is \VSCne-terminating. 
\end{cor}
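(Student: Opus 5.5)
The corollary is the conjunction of the two directions already established, so I will simply put them together. For the \emph{if} direction: if $\tm$ is \VSCne-terminating, \autoref{thm:completeness_U_vscne} directly gives that $\tm$ is $\typesystem$-typable. For the \emph{only if} direction: if $\tm$ is $\typesystem$-typable, \autoref{thm:soundness_U_vscne} gives that $\tm$ is \VSCne-terminating.

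It is still worth recalling where the substance lies.

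\textbf{Soundness.} The measure $\sizeU{\_}$ is a natural number. By \autoref{prop:vscne_wsr_U} it strictly decreases along every $\toVSCne$ step while the typing judgement is preserved. Hence no infinite \VSCne-sequence can start from a typable term.

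\textbf{Completeness.} I argue by induction on the length of a reduction sequence to \VSCne-normal form; such a sequence exists because $\tm$ terminates.
\begin{itemize}
\item In the base case, the term is \VSCne-irreducible, so by \autoref{prop:vscne_characterisation_nf} it belongs to $\novscne$. It is then typable with a constant type by \autoref{prop:vscne_normal_forms_are_typable}.
\item In the inductive step, typability is pulled back along one step using \autoref{prop:vscne_wse_U}.
\end{itemize}

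The only point needing a moment's care is that ``\VSCne-terminating'' means strong termination, while completeness uses only the existence of some normalising sequence. This mismatch is harmless: strong termination together with finite branching of weak reduction yields a normal form. Soundness then gives strong termination back, so the two notions coincide for this calculus.

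The main obstacle, were one to reprove the ingredients, would be the base case of subject expansion for $\svnesym$ under a nonempty substitution context $\sctx$. There one must merge the derivation of $\val$ obtained from \autoref{lem:vscne_anti-subst} with the (possibly weakened, $\none \mleq \tight$) derivation typing the retained $\esub\var\val$, using \autoref{l:split-merge-value} and \autoref{lem:vscne_type_splitting}. For the corollary itself, however, these are taken as given, and the proof is the two-line combination above.
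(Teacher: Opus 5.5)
Your proof is correct and is exactly the paper's argument: the corollary is obtained by combining \autoref{thm:soundness_U_vscne} and \autoref{thm:completeness_U_vscne}, and your recap of how these rest on weighted subject reduction, typability of \VSCne-normal forms, and weighted subject expansion matches the paper. Two small asides: getting a normal form from strong termination needs no finite-branching assumption (just keep reducing), and the merging you describe in subject expansion takes place in the subcase $\sctx = \ctxhole$, while the nonempty-context subcase is a direct appeal to the induction hypothesis on $\sctx$.
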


From the corollary above and~\autoref{cor:v-characterises-vscne}, we obtain an interesting equivalence between $\systemV$ and
$\typesystem$:
\begin{cor}[$\systemV$-Typability is Equivalent to $\typesystem$-Typability]
\label{cor:v-and-u}
Let $\tm$ be a term. Then, $\tm$ is $\systemV$-typable if and only 
if $\tm$ is $\typesystem$-typable.
\end{cor}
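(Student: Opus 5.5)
The statement is a direct chaining of two characterisations already established, so no new type-theoretic argument is needed. First, we invoke \autoref{cor:v-characterises-vscne}: a term $\tm$ is $\systemV$-typable if and only if $\tm$ is \VSCne-terminating. Second, we invoke \autoref{cor:u-characterises-vscne}: $\tm$ is $\typesystem$-typable if and only if $\tm$ is \VSCne-terminating. Composing the two biconditionals through the common middle property (\VSCne-termination) gives that $\tm$ is $\systemV$-typable if and only if $\tm$ is $\typesystem$-typable.

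Concretely, for the left-to-right direction, assume $\derivablep\deriv{\judgs\tctx\tm\typV}\systemV$. Then \autoref{thm:soundness_V_vscne} yields that $\tm$ is \VSCne-terminating, and \autoref{thm:completeness_U_vscne} produces a $\typesystem$-derivation for $\tm$. The latter ultimately comes from typing its \VSCne-normal form with a constant type (\autoref{prop:vscne_normal_forms_are_typable}) and pulling it back along the reduction sequence by \autoref{prop:vscne_wse_U}. For the converse, from a $\typesystem$-derivation we get \VSCne-termination by \autoref{thm:soundness_U_vscne}, and then $\systemV$-typability by \autoref{thm:completeness_V_vscne}.

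The only point requiring a moment's care is the meaning of ``terminating'' in the two corollaries. Both soundness theorems give strong termination, since there is no infinite \VSCne-sequence, by the strictly decreasing derivation size. Both completeness theorems only use the existence of a finite sequence to normal form. This mismatch is harmless: strong termination implies weak termination, which is what the completeness proofs consume. So there is no genuine obstacle; the proof is a two-line composition.

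Note that the correspondence is only at the level of typability. No direct translation between $\systemV$- and $\typesystem$-derivations, nor any relation between their sizes, is claimed or needed.
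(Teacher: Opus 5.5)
Your proposal is correct and matches the paper's proof, which obtains this corollary by composing \autoref{cor:u-characterises-vscne} and \autoref{cor:v-characterises-vscne} through \VSCne-termination. Your remark that strong termination yields the finite normalising sequence used by the completeness proofs is also accurate; the paper leaves this point implicit.
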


\section{Characterisation of \texorpdfstring{\UOCBV}{UCBV}-Termination}
\label{sec:ucbv-U}

In this section, we refine system $\typesystem$ to capture
quantitative information of \UOCBV evaluation. 
Specifically, we equip typing judgements of system $\typesystem$ with counters, and we define
a notion of tight derivation, intuitively corresponding to that of
\emph{minimal} derivation. 
The counters are meant to capture the exact step count for program evaluation. 
More precisely, tight derivations in system $\typesystem$ not only guarantee termination in \UOCBV but
also provide \emph{exact} quantitative information about this
evaluation strategy: we show that a term $\tm$ is tightly typable with
counters $(\cm,\ce)$ if and only if $\tm$ evaluates in \UOCBV to a
normal form in exactly $\cm$ function application steps (\ie
$\ruledb$-steps) and $\ce$ substitution steps (\ie $\rulelsv$-steps).
In this sense, we say that system $\typesystem$ is a
\emph{quantitative interpretation} of useful \CBV.

The section is organised as follows.
In~\autoref{subsec:U_counters}, we equip typing judgements of system $\typesystem$ with counters and outline basic properties of the resulting system
that will come in handy to quantitatively characterise \UOCBV evaluation.
Then in~\autoref{subsec:uocbv_U}, we show that
$\typesystem$-typability qualitatively characterises \UOCBV-termination
(\autoref{cor:qualitative_characterisation_of_termination}),
while tight $\typesystem$-typability quantitatively characterises \UOCBV-termination
(\autoref{cor:quantitative_characterisation_of_termination}).

\subsection{Adding Counters to System \texorpdfstring{$\typesystem$}{U}} 
\label{subsec:U_counters}
As explained above, we start by decorating the typing judgements of 
system $\typesystem$ with counters. 
Formally, \defn{typing judgements} are now of the form 
$\judg[\cm,\ce]\tctx\tm\mtyp$, where $\cm$ and $\ce$ are natural 
numbers called \defn{counters}. 
Under appropriate conditions, these counters correspond 
\emph{exactly} to the number of $\dbsym$-steps ($\cm$) and 
$\lsvsym$-steps ($\ce$) in \UOCBV evaluation, as shown in 
Theorems~\ref{thm:quantitative_soundness_typing} and~\ref{thm:quantitative_completeness_typing}. 
Typing rules are refined accordingly in~\autoref{fig:systemU_counters}. 
It should be noted that counters are a convenient way to synthesise information \emph{already
existing} in a typing derivation: they do
not impose any side conditions on derivations.
Formally, $\derivable{}{\judgs\tctx\tm\mtyp}$ if and only if 
$\derivable{}{\judg[\cm,\ce]\tctx\tm\mtyp}$ for some $(\cm,\ce)$.
Moreover, the counters can be entirely extracted from the type information and typing rules of a derivation; 
no reference to the terms themselves is needed to compute them. 

We introduce an operation $\numarrows{\_}$ that counts the 
number of \texttt{t}op-level \texttt{a}rrows in a type (or optional type). 
It is defined by $\numarrows\none \eqdef 0$, 
$\numarrows\tightN \eqdef 0$, and 
$\numarrows{\mset{\typ_i}_\iI} \eqdef \length{I}$. 
Note that $\numarrows{\mtyp_1 + \mtyp_2 } 
= \numarrows{\mtyp_1} + \numarrows{\mtyp_2}$. 

\begin{figure}[t]
  \begin{mathpar}
    \inferrule* [right = \ruleTypVar]{
      n = \numarrows\mtyp
    }{
      \judg[0,n]{\var:\mtyp}\var\mtyp
    }

    \inferrule* [right = \ruleTypAbs]{
      \left(\judg[\cm_i,\ce_i]{\tctx_i;\var:\optmtyp_i}\tm{\mtyptwo_i}\right)_\iI
    }{
      \judg[+_\iI \cm_i, +_\iI \ce_i]
        {+_\iI\tctx_i}{\lam\var\tm}{\mset{\optmtyp_i\to\mtyptwo_i}_\iI}
    }

    \inferrule* [right = \ruleTypAppC]{
      \judg[\cm,\ce]\tctx\tm{\mset{\optmtyp\to\mtyptwo}}
      \\ \optmtyp \mleq \mtyp
      \\ \judg[\cm',\ce']\tctxtwo\tmtwo\mtyp
    }{
      \judg[1 + \cm + \cm', \ce + \ce']{\tctx + \tctxtwo}{\tm\,\tmtwo}\mtyptwo
    }

    \inferrule* [right = \ruleTypAppP]{
      \judg[\cm,\ce]\tctx\tm\tightN
      \\ \judg[\cm',\ce']\tctxtwo\tmtwo\tight
    }{
      \judg[\cm + \cm', \ce + \ce']{\tctx + \tctxtwo}{\tm\,\tmtwo}\tightN
    }

    \inferrule* [right = \ruleTypES]{
      \judg[\cm,\ce]{\tctx;\var:\optmtyp}\tm\mtyptwo
      \\ \optmtyp \mleq \mtyp
      \\ \judg[\cm',\ce']\tctxtwo\tmtwo\mtyp
    }{
      \judg[\cm + \cm', \ce + \ce']{\tctx+\tctxtwo}{\tm\esub\var\tmtwo}\mtyptwo
    }
  \end{mathpar}
  \caption{Typing Rules of System $\typesystem$ with Counters}
  \label{fig:systemU_counters}
\end{figure}

Only two typing rules increase the counters.
Rule \ruleTypVar initialises the second counter to the number of 
top-level arrows in the type of the variable.
This counter corresponds to the number of times the variable is 
(directly or indirectly) substituted by an abstraction taking part in 
some $\dbsym$-redex during evaluation.
That is why the second counter tracks substitution step-count. 
Rule \ruleTypAppC increases the first counter by one, to reflect that 
the application constructor introduced by this rule is \emph{consuming}, 
meaning it will be eliminated during evaluation.
That is why the first counter tracks function application step-count.
\medskip

We now outline some basic properties of system $\typesystem$.
\paragraph{\textbf{Appropriateness.}} 
To reason inductively about typing derivations, it is sometimes 
necessary to ensure invariants for the types of the free variables 
in a term.
For example, for a term like $\tm\esub\var\id$, we may need to track 
that $\var$ is bound to an abstraction when applying the inductive 
hypothesis for the subterm $\tm$. 
Specifically, we say that a typing environment $\tctx$ is
\defn{appropriate} with respect to an abstraction frame $\aset$,
written $\isAppr\aset\tctx$, if, for each $\var \in \aset$, we have 
$\tctx(\var) \neq \tightN$.
In other words, $\tctx(\var)$ must be $\none$ or an arrow multi-type 
$\nityp$. For instance, $\isAppr{\set\var}{\var: \emset, \vartwo:\tightN}$ holds, while
$\isAppr{\set\var}{ \var:\tightN,\vartwo:\tightN}$ does not.

\paragraph{\textbf{Tightness.}}
Typing in system $\typesystem$ not only characterises \UOCBV-termination, 
but it is also capable of providing \emph{exact measures} for 
reduction lengths to normal form in \UOCBV. 
Said in other words, tightness is essential for proving that system $\typesystem$ is \emph{exactly}
sound and complete with respect to \UOCBV.
To state this formally, we define a subset of typing derivations, 
called \defn{tight derivations}, as follows:
\begin{itemize}
\item 
  A \defn{type} is \defn{tight} if it is a constant type $\tight \in \set{\tightN,\emset}$.
\item 
  An \defn{optional type} is \defn{tight} if it is either $\none$ or a tight type.
\item
  A \defn{typing environment} $\tctx$ is \defn{tight} if $\tctx(\var)$ 
  is tight for every variable $\var \in \dom\tctx$.
\item
  A \defn{typing judgement} $\judg[\cm,\ce]\tctx\tm\mtyp$ is \defn{tight} 
  if both $\tctx$ and $\mtyp$ are tight.
\item
  A \defn{typing derivation} $\derivable\deriv{\judg[\cm,\ce]\tctx\tm\mtyp}$ 
  is \defn{tight} if its conclusion $\judg[\cm,\ce]\tctx\tm\mtyp$ is tight.
\end{itemize}

For example, the derivable judgement 
$\judg[0,0]{\var : \emset}\var\emset$ is tight, where the counters 
$(0,0)$ indicate that evaluating $\var$ requires no reduction steps,
meaning that $\var$ is already in \UOCBV-normal form. 
In contrast, the derivable judgement 
$\judg[0,1]{\var : \mset{\mtyp\to\mtyptwo}}\var{\mset{\mtyp\to\mtyptwo}}$ 
is \emph{not} tight, and the counters $(0,1)$ provide an upper bound 
on the number of steps required for evaluating $\var$.

Let us now add counters to the typing derivation $\derivtwo$ (\cf~\autoref{ex:derivation}) typing our working~\autoref{exa:working-example}.

\begin{exa}
\label{ex:derivation-for-tight}
Let $\mtyp \eqdef \mset{\tightN \to \tightN}$. We build the following 
derivation $\derivtwo'$, which turns out to be tight. We omit the rule names for improving readability.
\begin{mathpar}
  \inferrule* {
    \inferrule* {
      \inferrule* {
      }{
        \judg[0,1]{\var : \mtyp}\var\mtyp
      }
      \\
      \inferrule* {
      }{
        \judg[0,0]{\varthree : \tightN}\varthree\tightN
      }
    }{
      \judg[1,1]
        {\varthree : \tightN, \var : \mtyp}
        {\var \, \varthree}
        \tightN
    }
    \sep
    \inferrule* {
      \inferrule* {
      }{
        \judg[0,1]{\vartwo : \mtyp}\vartwo\mtyp
      }
      \\
      \inferrule* {
        \inferrule* {
        }{
          \judg[0,0]{\varfour:\tightN}\varfour\tightN
        }
      }{
        \judg[0,0]\emptyctx\id\mtyp
      }
    }{
      \judg[0,1]\emptyctx{\vartwo\esub\vartwo\id}\mtyp
    }
  }{
    \judg[1,2]
      {\varthree : \tightN}
      {(\var \, \varthree)\esub\var{\vartwo\esub\vartwo\id}}
      \tightN
  }
\end{mathpar}
where $\tightN \mleq \tightN$ and 
$\mtyp = \mset{\tightN \to \tightN} \mleq \mset{\tightN \to \tightN} = \mtyp$
both hold.

Note that the only subderivations of $\derivtwo'$ which are tight 
(besides $\derivtwo'$ itself) correspond to the one concluding in 
$\judg[0,0]{\varthree : \tightN}\varthree\tightN$ and the one
concluding in $\judg[0,0]{\varfour : \tightN}\varfour\tightN$.
The remaining subderivations are not tight.
\end{exa}

To conclude this subsection, we state a variant of~\autoref{l:split-merge-value}
that takes into account the quantitative information registered in the counters:

\begin{lem}[Splitting/Merging of System $\typesystem$ with Counters]
\label{lem:systemU_split-merge-value_counters}
Let $\val$ be a value, $\tctx$ a typing environment, $(\cm,\ce)$ a counter, and $\mtyp_1, \mtyp_2$ be two types such 
that their union $\mtyp_1 + \mtyp_2$ is well-defined.
Then, the following are equivalent:
\begin{enumerate}
\item 
  There exists a derivation $\deriv$ such that
  $\derivable\deriv{\judg[\cm,\ce]\tctx\val{\mtyp_1 + \mtyp_2}}$.
\item 
  There exist derivations $\deriv_1,\deriv_2$, typing 
  environments $\tctx_1,\tctx_2$, and counters $(\cm_1,\ce_1),(\cm_2,\ce_2)$ 
  such that
  $\derivable{\deriv_1}{\judg[\cm_1,\ce_1]{\tctx_1}\val{\mtyp_1}}$, and
  $\derivable{\deriv_2}{\judg[\cm_2,\ce_2]{\tctx_2}\val{\mtyp_2}}$ and
  $\tctx = \tctx_1 + \tctx_2$ and $\cm = \cm_1 + \cm_2$ and 
  $\ce = \ce_1 + \ce_2$.
  \end{enumerate}
  \end{lem}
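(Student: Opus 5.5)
The plan is to follow the proof of \autoref{l:split-merge-value} and additionally track the counters. Since $\val$ is a value, the last rule of any derivation of $\val$ is either \ruleTypVar (when $\val = \var$) or \ruleTypAbs (when $\val = \lam\var\tm$). No other rule can be involved, so the argument reduces to a case analysis on these two rules in each direction.

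For $(1 \implies 2)$ in the \ruleTypVar case, $\deriv$ is the axiom $\judg[0,n]{\var:\mtyp_1+\mtyp_2}\var{\mtyp_1+\mtyp_2}$ with $n = \numarrows{\mtyp_1+\mtyp_2}$. I take the two axioms $\judg[0,n_i]{\var:\mtyp_i}\var{\mtyp_i}$ with $n_i = \numarrows{\mtyp_i}$. Additivity $\numarrows{\mtyp_1+\mtyp_2} = \numarrows{\mtyp_1}+\numarrows{\mtyp_2}$, already noted after the definition of $\numarrows{\_}$, gives $\ce = \ce_1+\ce_2$. We also have $\cm = 0 = 0 + 0$. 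The only slightly delicate subcase is $\mtyp_1 = \mtyp_2 = \tightN$, where $\mtyp_1+\mtyp_2 = \tightN$ by idempotency; here all counters are $0$ and the splitting still works, since $(\var:\tightN)+(\var:\tightN) = \var:\tightN$.

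In the \ruleTypAbs case, $\mtyp_1+\mtyp_2$ is an arrow multi-type, so both $\mtyp_1$ and $\mtyp_2$ are arrow multi-types, because the union is defined. I split the index set as $I = I_1 \uplus I_2$ exactly as in \autoref{l:split-merge-value}. The counters of the conclusion are $(+_\iI \cm_i, +_\iI \ce_i)$, and these decompose as the sums over $I_1$ and over $I_2$. Those two sums are precisely the counters of the two derivations obtained by applying \ruleTypAbs to $(\deriv'_i)_{i\in I_1}$ and to $(\deriv'_i)_{i\in I_2}$.

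For $(2 \implies 1)$ I argue symmetrically. When both derivations are \ruleTypVar axioms on the same variable, I build the axiom for $\mtyp_1+\mtyp_2$, and its counters are $(0, \numarrows{\mtyp_1}+\numarrows{\mtyp_2})$ by additivity. When both are \ruleTypAbs derivations, I concatenate their premise families over $J \uplus K$; the counters add because they are sums over the premises.

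There is one point to check in this direction: for an abstraction, either derivation may have the type $\emset$ with $I_j=\emptyset$. Such a derivation has counters $(0,0)$ and the empty environment, so it is the unit for the concatenation and causes no problem. I expect no real obstacle in this proof. The only step needing care is the counter bookkeeping in the variable case, which rests entirely on the additivity of $\numarrows{\_}$, including the idempotent $\tightN$ subcase.
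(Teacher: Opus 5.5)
Your proposal is correct and takes the same route as the paper, which simply says the proof is analogous to that of \autoref{l:split-merge-value}. That is exactly your case analysis on \ruleTypVar versus \ruleTypAbs, with the counters tracked through the additivity of $\numarrows{\_}$ in the variable case and through the index-set splitting or concatenation of the premise sums in the abstraction case.
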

\begin{proof}
The proof is analogous to that of~\autoref{l:split-merge-value}.
\end{proof}

\subsection{Characterisation of \texorpdfstring{\UOCBV}{UCBV}-Termination Through \texorpdfstring{$\mathbfcal{U}$}{U}-Typability}
\label{subsec:uocbv_U}

We first address the \defn{soundness} of system $\typesystem$ with 
respect to \UOCBV evaluation, stating that $\typesystem$-typability implies \UOCBV-termination.
This proof follows well-understood techniques, requiring a subject 
reduction property (\autoref{prop:subject_reduction}) based on a \emph{linear} substitution lemma (\autoref{lem:linear_substitution}).
Due to the \emph{contextual} nature of \UOCBV, these lemmas are not 
trivial, as they demand formulating complex invariants involving the 
contextual parameters $\aset$, $\sset$, and $\appflag$ that define 
the evaluation strategy.

The following property ensures that the type assignment of hereditary abstractions is well-behaved, 
in the sense that the type system cannot assign the constant $\tightN$,
aimed at typing potential structures, 
to objects that are potential abstractions.

\begin{lem}[Types of Hereditary Abstractions]
\label{lem:systemU_types_of_hereditary_abstractions}
Let $\aset$ be an abstraction frame and $\tm$ a term such that 
$\tm \in \HAbs\aset$.
Let $\derivable\deriv{\judg[\cm,\ce]\tctx\tm\mtyp}$
with $\isAppr\aset\tctx$.
Then, $\mtyp \neq \tightN$.
\end{lem}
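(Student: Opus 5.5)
We proceed by induction on the derivation of $\tm \in \HAbs\aset$, generalising over $\aset$, $\tctx$, $\mtyp$, $\deriv$ and the counters, and in each case we inspect the last rule of $\deriv$. The typing rules are syntax-directed, except that \ruleTypAppP and \ruleTypAppC overlap on applications. Since no rule for $\HAbs\aset$ concludes with an application, we only need to analyse \ruleTypVar, \ruleTypAbs and \ruleTypES.

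The base cases are immediate. If $\tm = \var$ with $\var \in \aset$, then $\deriv$ ends with \ruleTypVar, so $\tctx = \var:\mtyp$. Appropriateness gives $\tctx(\var) \neq \tightN$, hence $\mtyp \neq \tightN$. If $\tm = \lam\var\tmtwo$, then $\deriv$ ends with \ruleTypAbs, and its conclusion is an arrow multi-type $\mset{\optmtyp_i\to\mtyptwo_i}_\iI$, which is never $\tightN$.

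For the two closure cases, $\tm = \tmtwo\esub\var\tmthree$ and $\deriv$ ends with \ruleTypES. Its premises are $\judg[\cm',\ce']{\tctx';\var:\optmtyp}\tmtwo\mtyp$ with $\optmtyp \mleq \mtyptwo$, and $\judg[\cm'',\ce'']{\tctxtwo}\tmthree\mtyptwo$, where $\tctx = \tctx'+\tctxtwo$. Note that $\mtyp$ is exactly the type of $\tmtwo$.

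In the third rule we have $\tmtwo \in \HAbs\aset$ and $\var\notin\aset$. We must check that $\isAppr\aset{\tctx';\var:\optmtyp}$ holds. For $\vartwo\in\aset$ we have $\vartwo \neq \var$, so we only need $\tctx'(\vartwo)\neq\tightN$. This follows from $\tctx(\vartwo)\neq\tightN$, because the sum of optional types gives $\tightN$ only when some summand is $\tightN$. More precisely, if $\tctx'(\vartwo) = \tightN$, then $\tctx(\vartwo)$ is $\tightN + \tctxtwo(\vartwo)$, which is either $\tightN$ or undefined. The \ih on $\tmtwo$ then yields $\mtyp\neq\tightN$.

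In the fourth rule we have $\tmtwo\in\HAbs{\aset\cup\set\var}$, $\var\notin\aset$ and $\tmthree\in\HAbs\aset$. We first show that $\tctxtwo$ is appropriate for $\aset$, by the same summand argument applied to $\tctxtwo$. The \ih on $\tmthree$ then gives $\mtyptwo \neq \tightN$. Next we show that $\optmtyp \neq \tightN$: from $\optmtyp \mleq \mtyptwo$ we get either $\optmtyp = \none$ or $\optmtyp = \mtyptwo$, and in both cases $\optmtyp\neq\tightN$. So $\tctx';\var:\optmtyp$ is appropriate for $\aset\cup\set\var$, and the \ih on $\tmtwo$ with the frame $\aset\cup\set\var$ gives $\mtyp\neq\tightN$.

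The main obstacle is this fourth case. There we must carry appropriateness for the newly added variable $\var$ through the controlled weakening $\mleq$, which is exactly why the \ih must first be applied to the argument $\tmthree$. A secondary point is the small lemma that $(\tctx_1+\tctx_2)(\vartwo)\neq\tightN$ implies $\tctx_i(\vartwo)\neq\tightN$ for both $i$. It is a routine check from the definition of the partial sum.
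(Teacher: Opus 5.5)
Your proof is correct and follows essentially the same route as the paper's: the same case analysis on \ruleTypVar, \ruleTypAbs and \ruleTypES. In the fourth rule for hereditary abstractions, you likewise first apply the induction hypothesis to the argument $\tmthree$ to get $\mtyptwo \neq \tightN$, transfer this through $\mleq$ to $\optmtyp$, and only then apply it to the body under the frame $\aset\cup\set\var$. The differences are cosmetic: you induct on the derivation of $\tm \in \HAbs\aset$ rather than on $\deriv$, and your summand argument for inheriting appropriateness is more precise than the paper's appeal to domain inclusion.
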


\begin{proof}
By induction on $\deriv$. 
In cases \ruleTypAppP and \ruleTypAppC, the term $\tm$ is an 
application, but at the same time the hypothesis states that $\tm$ is 
a hereditary abstraction under $\aset$, which is impossible.
Therefore, these cases do not hold.
\begin{itemize}
\item \ruleTypVar.
  Then, $\derivable\deriv{\judg[0,\numarrows\mtyp]{\var : \mtyp}\var\mtyp}$. 
  Moreover, the hypothesis $\var \in \HAbs\aset$ can only be obtained 
  by the first rule in~\autoref{fig:HAbs_Struct} (above), thus $\var \in \aset$.
  Since $\isAppr\aset{\var : \mtyp}$ holds, then $\mtyp \neq \tightN$ 
  by definition.
\item \ruleTypAbs.
  This case is immediate since
  $\derivable
    \deriv{
      \judg[\sum_\iI \cm_i, \sum_\iI \ce_i]
      {+_\iI \tctx_i}
      {\lam\var\tmtwo}
      {\mset{\optmtyptwo_i \to \mtypthree_i}_\iI}}$, 
  where in particular 
  $\mtyp$ is the arrow multi-type $\mset{\optmtyptwo_i \to \mtypthree_i}_\iI$.
\item \ruleTypES.
  Then, $\deriv$ is of the form:
  \begin{mathpar}
    \inferrule
      {\derivable{\deriv_\tmtwo}{\judg[\cm_1,\ce_1]{\tctx_1; \var : \optmtyptwo}\tmtwo\mtyp}
        \\ \optmtyptwo \mleq \mtyptwo
        \\ \derivable{\deriv_\tmthree}{\judg[\cm_2,\ce_2]{\tctx_2}\tmthree\mtyptwo}}
      {\judg[\cm_1 + \cm_2,\ce_1 + \ce_2]{\tctx_1 + \tctx_2}{\tmtwo\esub\var\tmthree}\mtyp}
  \end{mathpar}
  where $\tctx = \tctx_1 + \tctx_2$, $\cm = \cm_1 + \cm_2$,
  $\ce = \ce_1 + \ce_2$, and $\tm = \tmtwo\esub\var\tmthree$.
  Moreover, the hypothesis $\tmtwo\esub\var\tmthree \in \HAbs\aset$ 
  can be derived by the last two rules in~\autoref{fig:HAbs_Struct} (above):
  \begin{itemize}
  \item 
    If we are in the case
    \[
    \inferrule
      {\tmtwo \in \HAbs\aset \\ \var \notin \aset}
      {\tmtwo\esub\var\tmthree \in \HAbs\aset}
    \]
    given that $\dom{\tctx_1} \subseteq \dom\tctx$, then 
    $\isAppr\aset\tctx$ implies $\isAppr\aset{\tctx_1}$.
    Thus, $\isAppr\aset{\tctx_1; \var : \optmtyptwo}$ holds.
    And the \ih on $\deriv_\tmtwo$ yields $\mtyp \neq \tightN$.
  \item 
    If we are in the case
    \[
    \inferrule{
      \tmtwo \in \HAbs{\aset\cup\set\var}
      \\ \var\notin\aset
      \\ \tmthree \in \HAbs\aset
    }{
       \tmtwo\esub\var\tmthree \in \HAbs\aset
    }
    \]
    given that $\dom{\tctx_2} \subseteq \dom\tctx$, then 
    $\isAppr\aset\tctx$ implies $\isAppr\aset{\tctx_2}$.
    The \ih on $\deriv_\tmthree$ yields $\mtyptwo \neq \tightN$.
    Since $\optmtyptwo \mleq \mtyptwo$, then $\optmtyptwo \neq \tightN$.
    Moreover, $\isAppr\aset\tctx$ implies $\isAppr\aset{\tctx_1}$,
    so $\isAppr{\aset \cup \set\var}{\tctx_1; \var : \optmtyptwo}$ also holds.
    Applying the \ih on $\deriv_\tmtwo$, we conclude with $\mtyp \neq \tightN$. 
    \qedhere
  \end{itemize}
\end{itemize}
\end{proof}

We can now state the linear substitution lemma:
\begin{lem}[Exact Linear Substitution]
\label{lem:linear_substitution}
Let $\tm \tov{\rulesub\var\val}{\aset \cup \set\var}\sset\appflag \tm'$
and let $\deriv_\tm$ be 
$\derivable{\deriv_\tm}{\judg[\cm,\ce]{\tctx;\var : \optmtyptwo_\tm}\tm\mtyp}$
such that $\isAppr{\aset \cup \set\var}{\tctx; \var : \optmtyptwo_\tm}$.
Suppose moreover that if $\appflag = \app$, then either $\mtyp = \tightN$ 
or $\mtyp = \mset{\typ}$.
Then, there exists a splitting $\optmtyptwo_\tm = \mset{\typtwo} + \optmtyptwo_{\tm'}$
such that for every derivation $\derivable{\deriv_\val}{\judg[\cmtwo,\cetwo]\tctxtwo\val{\mset\typtwo}}$,
we have $\ce > 0$ and $\derivable{\deriv_{\tm'}}{\judg[\cm + \cm', \ce + \ce' - 1]{\tctx + \tctxtwo; \var : \optmtyptwo_{\tm'}}{\tm'}\mtyp}$.
\end{lem}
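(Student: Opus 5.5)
The proof will go by induction on the derivation of $\tm \tov{\rulesub\var\val}{\aset\cup\set\var}\sset\appflag \tm'$, with inversion on $\deriv_\tm$ at each step. The invariant to carry through the induction is this: the occurrence of $\var$ being replaced is typed in $\deriv_\tm$ by an axiom $\judg[0,1]{\var:\mset\typtwo}\var{\mset\typtwo}$. The contribution $\mset\typtwo$ of that axiom is split off from $\optmtyptwo_\tm$, and the rest of $\deriv_\tm$ is kept unchanged. The hypothesis ``$\appflag=\app$ implies $\mtyp=\tightN$ or $\mtyp=\mset\typ$'' is what forces that axiom to carry exactly one arrow.

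\textbf{Base case \ruleUSub.} Here $\tm=\var$ with $\appflag=\app$, and $\deriv_\tm$ is the axiom $\judg[0,\numarrows\mtyp]{\var:\mtyp}\var\mtyp$, so $\optmtyptwo_\tm=\mtyp$. Appropriateness with respect to $\aset\cup\set\var$ gives $\mtyp\neq\tightN$, so by hypothesis $\mtyp=\mset\typ$. We take $\typtwo\eqdef\typ$ and $\optmtyptwo_{\tm'}\eqdef\none$. Then $\ce=1>0$, and $\deriv_{\tm'}\eqdef\deriv_\val$ has counters $(\cm',\ce')=(0+\cm',1+\ce'-1)$, as required.

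\textbf{Rule \ruleUAppL.} If $\deriv_\tm$ ends with \ruleTypAppC, the left premise has type $\mset{\optmtyp\to\mtyptwo}$, a singleton, so the applied-position hypothesis holds for the \ih. If it ends with \ruleTypAppP, the left premise has type $\tightN$, which also satisfies it. Appropriateness passes to the premise because environments only shrink under splitting. We apply the \ih to the left premise and rebuild the rule. The split $\optmtyptwo_\tm=\mset\typtwo+\optmtyptwo_{\tm'}$ is obtained from the left premise's split by adding the type of $\var$ in the right premise. This addition is well defined since both components are arrow multi-types or $\none$, by appropriateness. Counters add as needed.

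\textbf{Rules \ruleUAppR and \ruleUEsR.} The premise is in position $\nonapp$, so there is no constraint on its type. We apply the \ih to the argument subderivation. For \ruleUAppR no appropriateness issue arises, since we use the environment already present.

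\textbf{Rules \ruleUEsLAbs and \ruleUEsLStruct.} The subterm $\tm_1$ is reduced under $\esub\vartwo\tmtwo$ with $\vartwo\notin\fv\rulename$. For \ruleUEsLAbs we need $\isAppr{\aset\cup\set\var\cup\set\vartwo}{\ldots;\vartwo:\optmtyptwo}$. This follows from $\tmtwo\in\HAbs{\aset\cup\set\var}$, \autoref{lem:systemU_types_of_hereditary_abstractions} applied to the argument derivation, and $\optmtyptwo\mleq\mtyptwo$. For \ruleUEsLStruct nothing extra is required, since structure-frame variables carry no appropriateness constraint. Because $\vartwo\notin\fv\val$, the extra environment $\tctxtwo$ never touches $\vartwo$, and the rule rebuilds directly.

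\textbf{Rule \ruleULsv.} This rule cannot occur, because the step kind is $\rulesub\var\val$. The inner \ruleUSub for another variable $z$ is only reachable through the \ruleULsv rule, which produces kind $\lsvsym$.

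\textbf{Main obstacle.} The delicate point is the universal quantification ``for every $\deriv_\val$''. The splitting must be chosen before $\deriv_\val$ is given, and it must be independent of it. This works because the split is determined entirely by the unique axiom at the redex position in $\deriv_\tm$, and $\deriv_\val$ is only plugged in at the end of the construction. A second concern is \ruleUAppL below a \ruleTypAppC whose left subterm is itself an ES chain. That case is already handled, since the singleton constraint propagates through \ruleUEsLAbs, \ruleUEsLStruct and \ruleUEsR unchanged, \ruleTypES preserving the conclusion type.
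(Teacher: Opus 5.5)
Your proposal is correct and follows essentially the same route as the paper: induction on the reduction with inversion on $\deriv_\tm$, the same \ruleUSub base case (appropriateness plus the applied-position hypothesis force $\mtyp=\mset\typ$ and $\ce=1$), and the same reconstruction of the typing rule in the congruence cases, including the relevance argument for $\vartwo\notin\dom\tctxtwo$ under an explicit substitution. You also spell out \ruleUAppR, \ruleUEsR and \ruleUEsLAbs, the last via the Types of Hereditary Abstractions lemma, which the paper leaves as analogous to the cases it shows.
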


\begin{proof}
By induction on $\tm \tov{\rulesub\var\val}{\aset \cup \set\var}\sset\appflag \tm'$.
We only show cases \ruleUSub, \ruleUAppL, and \ruleUEsLStruct.

\begin{itemize}
\item \ruleUSub. 
  Then, $\tm = \var \tov{\rulesub\var\val}{\aset \cup \set\var}\sset\app \val = \tm'$,
  where $\appflag = \app$.
  On the other hand, $\deriv_\tm$ can be only obtained from the 
  \ruleTypVar axiom, so it is of the form 
  $\judg[0,\numarrows\mtyp]{\var : \mtyp}\var\mtyp$, where 
  $\tctx = \emptyctx$, $\optmtyptwo_\tm = \mtyp$, $\cm = 0$, and 
  $\ce = \numarrows\mtyp$.

  Then, $\isAppr{\aset \cup \set\var}{\var : \mtyp}$ implies
  $\mtyp \neq \tightN$.
  This inequality and the fact that $\appflag = \app$ imply 
  $\mtyp = \mset\typ$, for some arrow type $\typ$.
  Therefore, $\ce = \numarrows\mtyp = 1 > 0$.

  Let us take $\typtwo \eqdef \typ$ and 
  $\optmtyptwo_{\tm'} \eqdef \none$ so that 
  $\optmtyptwo_\tm = \mtyp = \mset\typ = \mset\typ + \none$.
  We conclude taking $\deriv_{\tm'} \eqdef \deriv_\val$, whenever
  $\derivable{\deriv_\val}{\judg[\cmtwo,\cetwo]\tctxtwo\val{\mset\typ}}$.
\item \ruleUAppL.
  Then, the reduction step is of the form
  \begin{mathpar}
    \inferrule
      {\tmtwo \tov{\rulesub\var\val}{\aset \cup \set\var}\sset\app \tmtwo'}
      {\tm = \tmtwo \, \tmthree \tov{\rulesub\var\val}{\aset \cup \set\var}\sset\appflag \tmtwo' \, \tmthree = \tm'}
  \end{mathpar}

  The derivation $\deriv_\tm$ can be obtained either by rule 
  \ruleTypAppP or rule \ruleTypAppC.
  We consider the former, since the latter is analogous.
  Therefore, $\deriv_\tm$ is of the form:
  \begin{mathpar}
    \inferrule* [right = \ruleTypAppP]
      {\derivable
        {\deriv_\tmtwo}
        {\judg[\cm_\tmtwo,\ce_\tmtwo]{\tctx_\tmtwo; \var:\optmtyptwo_\tmtwo}\tmtwo\tightN}
        \\
        \derivable{\deriv_\tmthree}{\judg[\cm_\tmthree,\ce_\tmthree]{\tctx_\tmthree; \var:\optmtyptwo_\tmthree}\tmthree\tight}}
      {\judg[\cm_\tmtwo + \cm_\tmthree, \ce_\tmtwo + \ce_\tmthree]
        {(\tctx_\tmtwo + \tctx_\tmthree); \var : (\optmtyptwo_\tmtwo + \optmtyptwo_\tmthree)}
        {\tmtwo \, \tmthree}
        \tightN}
  \end{mathpar}
  where $\tctx = \tctx_\tmtwo + \tctx_\tmthree$,
  $\optmtyptwo_\tm = \optmtyptwo_\tmtwo + \optmtyptwo_\tmthree$,
  $\cm = \cm_\tmtwo + \cm_\tmthree$, $\ce = \ce_\tmtwo + \ce_\tmthree$,
  and $\mtyp = \tightN$.

  Since the typing environment $\tctx_\tmtwo; \var:\optmtyptwo_\tmtwo$ is 
  a subset of 
  $(\tctx_\tmtwo + \tctx_\tmthree); \var : (\optmtyptwo_\tmtwo + \optmtyptwo_\tmthree)$,
  then
  $\isAppr
    {\aset \cup \set\var}
    {(\tctx_\tmtwo + \tctx_\tmthree); \var : (\optmtyptwo_\tmtwo + \optmtyptwo_\tmthree)}$
  implies 
  $\isAppr{\aset \cup \set\var}{\tctx_\tmtwo; \var:\optmtyptwo_\tmtwo}$.
  Moreover, $\appflag = \app$ in the reduction step 
  $\tmtwo \tov{\rulesub\var\val}{\aset \cup \set\var}\sset\app \tmtwo'$.
  The conclusion of $\deriv_\tmtwo$ types with the constant $\tightN$.
  Therefore, the \ih on 
  $\tmtwo \tov{\rulesub\var\val}{\aset \cup \set\var}\sset\app \tmtwo'$ 
  yields a splitting 
  $\optmtyptwo_\tmtwo = \mset\typtwo + \optmtyptwo_{\tmtwo'}$
  such that for every 
  $\derivable{\deriv_\val}{\judg[\cmtwo,\cetwo]\tctxtwo\val{\mset\typtwo}}$,
  we have $\ce_\tmtwo > 0$ and 
  $\derivable
    {\deriv_{\tmtwo'}}
    {\judg
      [\cm_\tmtwo + \cmtwo, \ce_\tmtwo + \cetwo - 1]
      {(\tctx_\tmtwo + \tctxtwo); \var : \optmtyptwo_{\tmtwo'}}
      {\tmtwo'}
      \tightN}$.
  
  We take $\optmtyptwo_{\tm'} \eqdef 
  \optmtyptwo_{\tmtwo'} + \optmtyptwo_\tmthree$ and applying 
  rule \ruleTypAppP to $\deriv_{\tmtwo'}$ and $\deriv_\tmthree$ we obtain 
  $\derivable
    {\deriv_{\tm'}}
    {\judg
      [\cm_\tmtwo + \cmtwo + \cm_\tmthree, \ce_\tmtwo + \cetwo + \ce_\tmthree - 1]
      {(\tctx_\tmtwo + \tctxtwo + \tctx_\tmthree); \var : (\optmtyptwo_{\tmtwo'} + \optmtyptwo_\tmthree)}
      {\tmtwo' \, \tmthree}
      \tightN}$.

  Note that $\ce = \ce_\tmtwo + \ce_\tmthree > 0$ since the \ih yields $\ce_\tmtwo > 0$.
  Moreover, 
  $\optmtyptwo_\tm = \optmtyptwo_\tmtwo + \optmtyptwo_\tmthree 
  =_\ih \mset\typtwo + (\optmtyptwo_{\tmtwo'} + \optmtyptwo_\tmthree)
  = \mset\typtwo + \optmtyptwo_{\tm'}$.
\item \ruleUEsLStruct.
  Then, the reduction step is of the form:
  \begin{mathpar}
    \inferrule{
      \tmtwo \tov{\rulesub\var\val}{\aset \cup \set\var}{\sset \cup \set\vartwo}\appflag \tmtwo'
      \\ \tmthree \in \Struct\sset
      \\ \vartwo \notin (\aset \cup \set\var) \cup \sset
      \\ \vartwo \notin \fv{\rulesub\var\val}
    }{
      \tm = \tmtwo\esub\vartwo\tmthree \tov{\rulesub\var\val}{\aset \cup \set\var}\sset\appflag \tmtwo'\esub\vartwo\tmthree = \tm'
    }
  \end{mathpar}

  By $\alpha$-conversion, $\var \neq \vartwo$.
  The derivation $\deriv_\tm$ can only be obtained from rule
  \ruleTypES, so it is of the form:
  \begin{mathpar}
    \inferrule
      {\derivable{\deriv_\tmtwo}
        {\judg
          [\cm_\tmtwo,\ce_\tmtwo]
          {\tctx_\tmtwo; \var : \optmtyptwo_\tmtwo; \vartwo: \optmtypthree}
          \tmtwo
          \mtyp}
        \\ \optmtypthree \mleq \mtypthree
        \\ \derivable{\deriv_\tmthree}
            {\judg[\cm_\tmthree,\ce_\tmthree]{\tctx_\tmthree; \var : \optmtyptwo_\tmthree}\tmthree\mtypthree}}
      {\judg[\cm_\tmtwo + \cm_\tmthree, \ce_\tmtwo + \ce_\tmthree]
        {(\tctx_\tmtwo + \tctx_\tmthree); \var : (\optmtyptwo_\tmtwo + \optmtyptwo_\tmthree)}
        {\tmtwo\esub\vartwo\tmthree}
        \mtyp}
  \end{mathpar}
  where $\tctx = \tctx_\tmtwo + \tctx_\tmthree$,
  $\optmtyptwo_\tm = \optmtyptwo_\tmtwo + \optmtyptwo_\tmthree$,
  $\cm = \cm_\tmtwo + \cm_\tmthree$ and $\ce = \ce_\tmtwo + \ce_\tmthree$.

  Since the domain of the typing environment 
  $\tctx_\tmtwo; \var : \optmtyptwo_\tmtwo$ is a subset of the domain 
  of the typing environment $(\tctx_\tmtwo + \tctx_\tmthree); 
  \var : (\optmtyptwo_\tmtwo + \optmtyptwo_\tmthree)$, then 
  $\isAppr{\aset \cup \set\var}
    {(\tctx_\tmtwo + \tctx_\tmthree); \var : (\optmtyptwo_\tmtwo + \optmtyptwo_\tmthree)}$
  and $\vartwo \notin (\aset \cup \set\var) \cup \sset$
  imply $\isAppr{\aset \cup \set\var}{\tctx_\tmtwo; \var : \optmtyptwo_\tmtwo; \vartwo : \optmtypthree}$.
  Additionally, if $\appflag = \app$ in the reduction step
  $\tmtwo \tov{\rulesub\var\val}{\aset \cup \set\var}{\sset \cup \set\vartwo}\appflag \tmtwo'$, 
  then either $\mtyp = \tightN$ or $\mtyp = \mset\typ$ by hypothesis.
  Then, the \ih on $\tmtwo$ yields a splitting 
  $\optmtyptwo_\tmtwo = \mset\typtwo + \optmtyptwo_{\tmtwo'}$ 
  such that for every 
  $\derivable{\deriv_\val}{\judg[\cmtwo,\cetwo]\tctxtwo\val{\mset\typtwo}}$,
  we have $\ce_\tmtwo > 0$ and 
  $\derivable{\deriv_{\tmtwo'}}
    {\judg
      [\cm_\tmtwo + \cmtwo, \ce_\tmtwo + \cetwo - 1]
      {((\tctx_\tmtwo; \vartwo : \optmtypthree) + \tctxtwo); \var : \optmtyptwo_{\tmtwo'}}
      {\tmtwo'}
      \mtyp}$.
  Since $\vartwo \notin \fv\val$, then 
  $\vartwo \notin \dom\tctxtwo$ by~\autoref{lem:relevance_systemU},
  so we can write 
  $((\tctx_\tmtwo; \vartwo : \optmtypthree) + \tctxtwo); \var : \optmtyptwo_{\tmtwo'}$ 
  as
  $(\tctx_\tmtwo + \tctxtwo); \var : \optmtyptwo_{\tmtwo'}; \vartwo : \optmtypthree$.

  We take $\optmtyptwo_{\tm'} \eqdef 
  \optmtyptwo_{\tmtwo'} + \optmtyptwo_\tmthree$ and applying
  rule \ruleTypES to $\deriv_{\tmtwo'}$ and $\deriv_\tmthree$, we obtain
  \[
  \derivable{\deriv_{\tm'}}
    {\judg
      [\cm_\tmtwo + \cmtwo + \cm_\tmthree, \ce_\tmtwo + \cetwo + \ce_\tmthree - 1]
      {(\tctx_\tmtwo + \tctxtwo + \tctx_\tmthree); \var : (\optmtyptwo_{\tmtwo'} + \optmtyptwo_\tmthree)}
      {\tmtwo'\esub\vartwo\tmthree}
      \mtyp}
  \]

  Note that $\ce = \ce_\tmtwo + \ce_\tmthree > 0$ since the \ih yields $\ce_\tmtwo > 0$.
  Moreover, $\optmtyptwo_\tm 
  = \optmtyptwo_\tmtwo + \optmtyptwo_\tmthree
  =_\ih (\mset\typtwo + \optmtyptwo_{\tmtwo'}) + \optmtyptwo_\tmthree
  = \mset\typtwo + \optmtyptwo_{\tm'}$.
  \qedhere
\end{itemize}
\end{proof}

We illustrate by the following example the lemma above.
\begin{exa}
Recall the typing derivation
$\derivable{\derivtwo'}{\judg[1,2]
  {\varthree : \tightN}
  {(\var \, \varthree)\esub\var{\vartwo\esub\vartwo\id}}
  {\tightN}}$
from~\autoref{ex:derivation-for-tight}.
Let us consider its subderivation typing
the subterm $\var \, \varthree$ with counters $(1,1)$.
The term $\var \, \varthree$ 
$(\rulesub\var\vartwo,\set\var,\set\varthree,\nonapp)$-reduces in \UOCBV to 
$\vartwo \, \varthree$.

Consider the splitting $\optmtyptwo_\tm \eqdef \mtyp + \none$ (which is equal to $\mtyp$),
where we recall that $\mtyp$ is the one-element multiset $\mset{\tightN \to \tightN}$.
Given that $\judg[0,1]{\vartwo : \mtyp}\vartwo\mtyp$ holds, 
and that $\varthree : \tightN; \vartwo : \mtyp$ is identical to 
$\varthree : \tightN; \vartwo : \mtyp; \var : \none$, we can easily build the derivation
$\judg[1,1]{\varthree : \tightN; \vartwo : \mtyp; \var : \none}{\vartwo \, \varthree}\tightN$.

Moreover, $\tm \eqdef (\var \, \varthree)\esub\var{\vartwo\esub\vartwo\id}$ reduces in \UOCBV in one $\lsvsym$-step to 
$\tm' \eqdef (\vartwo \, \varthree)\esub\var\vartwo\esub\vartwo\id$
(\cf~\autoref{ex:reduction-for-tight}).
Let us build a typing derivation $\derivtwo''$ for $\tm'$ under the same 
typing environment and type as $\derivtwo'$. 
Moreover, the counter in $\derivtwo''$ should be $(1,1)$.

Recall that
$\tightN \mleq \tightN$, $\mtyp \mleq \mtyp$, and
$\emset \mleq \emset$ always hold. 
Omitting the rule names to improve readability, we get:
\[
  \inferrule* 
    {\inferrule*
      {\inferrule*
        {\inferrule* { }{\judg[0,1]{\vartwo : \mtyp}\vartwo\mtyp}
          \\
          \inferrule* { }{\judg[0,0]{\varthree : \tightN}\varthree\tightN} 
        }
        {\judg[1,1]{\varthree : \tightN; \vartwo : \mtyp; \var : \none}{\vartwo \, \varthree}\tightN}
        \\
        \inferrule* { }{\judg[0,0]{\vartwo : \emset}\vartwo\emset}}
      {\judg[1,1]{\varthree : \tightN; \vartwo : \mtyp}{(\vartwo \, \varthree)\esub\var\vartwo}\tightN}
      \\
      \inferrule*
        {\inferrule* { }{\judg[0,0]{\var_\id:\tightN}{\var_\id}\tightN}}
        {\judg[0,0]\emptyctx\id\mtyp}
    }
    {\judg[1,1]{\varthree : \tightN}{(\vartwo \, \varthree)\esub\var\vartwo\esub\vartwo\id}\tightN}
\]
\end{exa}
\medskip

We now show~\nameref{prop:subject_reduction},
stating that each reduction step in \UOCBV preserves typing.
Moreover, this property not only gives
qualitative information about reduction in \UOCBV
but also quantitative, as reduction correctly decrements the counters in the
corresponding typing derivations.

\begin{prop}[Exact Subject Reduction]
\label{prop:subject_reduction}
Let $\tm \tov\rulename\aset\sset\appflag \tm'$ where 
$\rulename \in \set{\ruledb,\rulelsv}$ and
$\derivable\deriv{\judg[\cm,\ce]\tctx\tm\mtyp}$, with 
$\isAppr\aset\tctx$.
Suppose moreover that if $\appflag = \app$, then either 
$\mtyp = \tightN$ or $\mtyp$ is a singleton, \ie of the form 
$\mset\typ$.
Then, there exists a derivation $\deriv'$ such that 
$\derivable{\deriv'}{\judg[\cm',\ce']\tctx{\tm'}\mtyp}$, where 
if $\rulename = \ruledb$, then $\cm > 0$ and $(\cm',\ce') = (\cm-1,\ce)$, 
and if $\rulename = \rulelsv$, then $\ce > 0$ and 
$(\cm',\ce') = (\cm,\ce-1)$.
\end{prop}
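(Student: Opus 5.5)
We proceed by induction on the derivation of $\tm \tov\rulename\aset\sset\appflag \tm'$, inspecting in each case the last rule of $\deriv$. The contextual rules (\ruleUAppL, \ruleUAppR, \ruleUEsR, \ruleUEsLAbs, \ruleUEsLStruct) are handled uniformly.

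\textbf{Contextual cases.} We decompose $\deriv$ according to its last typing rule and check that appropriateness is inherited by the premise we rewrite. Subenvironments preserve $\isAppr\aset\cdot$. For \ruleUEsLAbs we additionally need $\optmtypthree\neq\tightN$ for the bound variable, so that $\isAppr{\aset\cup\set\var}{\cdot}$ holds. This follows from \autoref{lem:systemU_types_of_hereditary_abstractions} applied to the argument $\tmtwo\in\HAbs\aset$, combined with $\optmtypthree\mleq\mtypthree$. For \ruleUEsLStruct nothing new is needed since $\var\notin\aset$.

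We also check the side condition on $\appflag$. In \ruleUAppL the left premise is typed either by \ruleTypAppP, giving $\tightN$, or by \ruleTypAppC, giving a singleton $\mset{\optmtyp\to\mtyptwo}$, so the hypothesis for $\app$ holds. In \ruleUAppR and \ruleUEsR the flag is $\nonapp$, so there is nothing to check. The \ih then gives the rewritten premise with counters decremented by one in the right component. Reapplying the same typing rule propagates the decrement and preserves $\cm>0$ or $\ce>0$, since counters add.

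\textbf{Case \ruleUDb.} We have $(\lam\var\tm)\sctx\,\tmtwo \to \tm\esub\var\tmtwo\sctx$. Here $\deriv$ must end with \ruleTypAppC, not \ruleTypAppP. The reason is that the left-hand side is an abstraction under ESs, and an inner induction on $\sctx$ shows its type cannot be $\tightN$: \ruleTypAbs yields multi-types and \ruleTypES preserves the type. So $\cm>0$.

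We then do an inner induction on $\sctx$, commuting the ESs of $\sctx$ outwards, exactly as in the $\sctx$-induction of \autoref{prop:vscne_wsr_U}. In the base case the singleton $\mset{\optmtyp\to\mtyptwo}$ forces \ruleTypAbs with a single premise $\judg{\tctx_1;\var:\optmtyp}\tm\mtyptwo$. Together with $\optmtyp\mleq\mtyp$ and the derivation of $\tmtwo$, rule \ruleTypES yields the reduct with counters $(\cm-1,\ce)$, removing exactly the $1$ contributed by \ruleTypAppC. We must ensure $\fv\tmtwo$ does not clash with the variables bound by $\sctx$; this is handled by $\alpha$-conversion and \autoref{lem:relevance_systemU}.

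\textbf{Case \ruleULsv.} This is the main obstacle. We have $\tm\esub\var{\val\sctx}\to \tm'\esub\var\val\sctx$ with $\tm\tov{\rulesub\var\val}{\aset\cup\set\var}\sset\appflag\tm'$ and $\val\sctx\in\HAbs\aset$. Again by induction on $\sctx$ we reduce to $\sctx=\ctxhole$, moving the ESs of $\sctx$ outside as in \autoref{prop:vscne_wsr_U}. In this base case $\deriv$ ends with \ruleTypES, with premises $\judg[\cm_1,\ce_1]{\tctx_1;\var:\optmtyptwo}\tm\mtyp$, $\optmtyptwo\mleq\mtyptwo$ and $\judg[\cm_2,\ce_2]{\tctx_2}\val\mtyptwo$.

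\autoref{lem:systemU_types_of_hereditary_abstractions} gives $\mtyptwo\neq\tightN$, hence $\optmtyptwo\neq\tightN$ and appropriateness holds for $\aset\cup\set\var$. \autoref{lem:linear_substitution} then provides a splitting $\optmtyptwo=\mset\typtwo+\optmtyptwo'$ and guarantees $\ce_1>0$.

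The delicate point is to split $\mtyptwo$ compatibly. By \autoref{lem:vscne_type_splitting}, $\mtyptwo=\mtyptwo_1+\mtyptwo_2$ with $\mset\typtwo\mleq\mtyptwo_1$, which forces $\mtyptwo_1=\mset\typtwo$, and with $\optmtyptwo'\mleq\mtyptwo_2$. \autoref{lem:systemU_split-merge-value_counters} splits the derivation of $\val$ accordingly. Feeding the $\mset\typtwo$ part into \autoref{lem:linear_substitution} gives a derivation of $\tm'$ with counters $(\cm_1+\cm_2^1,\ce_1+\ce_2^1-1)$. Rule \ruleTypES with the $\mtyptwo_2$ part then yields the reduct with total counters $(\cm,\ce-1)$, using $\ce\ge\ce_1>0$.
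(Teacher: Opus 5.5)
Your proposal is correct and follows essentially the same route as the paper. In the key \ruleULsv{} case you, like the paper, reduce to $\sctx=\ctxhole$ by induction on $\sctx$, use \autoref{lem:systemU_types_of_hereditary_abstractions} to get $\optmtyptwo\neq\tightN$, and combine \autoref{lem:linear_substitution}, \autoref{lem:vscne_type_splitting} and \autoref{lem:systemU_split-merge-value_counters} before re-applying \ruleTypES; the paper writes out only this case, so your treatment of \ruleUDb{} and of the contextual rules (notably extending appropriateness to $\aset\cup\set\var$ under \ruleUEsLAbs{} via \autoref{lem:systemU_types_of_hereditary_abstractions}) is extra detail consistent with it.
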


\begin{proof}
By induction on $\tm \tov\rulename\aset\sset\appflag \tm'$.
We only show the case $\tm \tov\rulelsv\aset\sset\appflag \tm'$. 
The reduction step is then of the form:
\begin{mathpar}
  \inferrule
    {\tmtwo \tov{\rulesub\var\val}{\aset \cup \set\var}\sset\appflag \tmtwo' 
      \\ \var \notin \aset \cup \sset
      \\ \val\sctx \in \HAbs\aset}
    {\tm = \tmtwo\esub\var{\val\sctx} \tov\rulelsv\aset\sset\appflag \tmtwo'\esub\var\val\sctx = \tm'}
\end{mathpar}

We thus proceed by induction on $\sctx$, showing only the base case $\sctx = \ctxhole$, since the inductive case is similar to that in~\autoref{prop:vscne_wsr_U}.

Then the derivation $\deriv$ is of the form:
\begin{mathpar}
  \inferrule* [right = \ruleTypES]
    {\derivable{\deriv_\tmtwo}{\judg[\cm_1,\ce_1]{\tctx_1; \var : \optmtyptwo}\tmtwo\mtyp}
      \\ \optmtyptwo \mleq \mtyptwo
      \\ \derivable{\deriv_\val}{\judg[\cm_2,\ce_2]{\tctx_2}\val\mtyptwo}}
    {\judg[\cm_1 + \cm_2,\ce_1 + \ce_2]{\tctx_1 + \tctx_2}{\tmtwo\esub\var\val}\mtyp}
\end{mathpar}
where $\tctx = \tctx_1 + \tctx_2$, $\cm = \cm_1 + \cm_2$, 
and $\ce = \ce_1 + \ce_2$. 
We then have $\isAppr{\aset}{\tctx_1}$ and $\isAppr{\aset}{\tctx_2}$.
Since $\var \notin \fv{\tmtwo\esub\var\val}$, 
then $\var \notin \dom\tctx$ by~\autoref{lem:relevance_systemU}.
Therefore, $\isAppr{\aset \cup \set\var}\tctx$ holds.
This implies, in particular, $\isAppr{\aset \cup \set\var}{\tctx_1}$ 
since $\dom{\tctx_1} \subseteq \dom\tctx$.
Therefore, $\isAppr{\aset \cup \set\var}{\tctx_1; \var : \optmtyptwo}$.
Since $\val \in \HAbs\aset$ and
$\isAppr{\aset}{\tctx_2}$, ~\autoref{lem:systemU_types_of_hereditary_abstractions} applied to $\deriv_{\val}$ yields $\mtyptwo \neq \tightN$.
From $\optmtyptwo \mleq \mtyptwo$, we then have $\optmtyptwo \neq \tightN$.

Let $\optmtyptwo = \mset\typtwo + \optmtyptwo_{\tmtwo'}$
be the splitting obtained by~\autoref{lem:linear_substitution}
on $\tmtwo \tov{\rulesub\var\val}{\aset \cup \set\var}\sset\appflag \tmtwo'$.
Since $\mset\typtwo + \optmtyptwo_{\tmtwo'} = \optmtyptwo$ and 
$\optmtyptwo \mleq \mtyptwo$, then there exist types 
$\mset\typtwo$ and $\mtyptwo_{\tmtwo'}$ such that 
$\mset\typtwo + \mtyptwo_{\tmtwo'} = \mtyptwo$ and
$\mset\typtwo \mleq \mset\typtwo$ and 
$\optmtyptwo_{\tmtwo'} \mleq \mtyptwo_{\tmtwo'}$ by~\autoref{lem:vscne_type_splitting}.
We can then apply~\autoref{lem:systemU_split-merge-value_counters}, 
yielding typing derivations $\deriv^1_\val$ and $\deriv^2_\val$, 
typing environments $\tctx^1_\val$ and $\tctx^2_\val$, and counters 
$\cm^1_\val$, $\cm^2_\val$, $\ce^1_\val$, and $\ce^2_\val$ such that 
$\derivable
  {\deriv^1_\val}
  {\judg[\cm^1_\val,\ce^1_\val]{\tctx^1_\val}\val{\mset\typtwo}}$
and
$\derivable
  {\deriv^2_\val}
  {\judg[\cm^2_\val,\ce^2_\val]{\tctx^2_\val}\val{\mtyptwo_{\tmtwo'}}}$,
with $\tctx_2 = \tctx^1_\val + \tctx^2_\val$ and
$\cm_2 = \cm^1_\val + \cm^2_\val$ and 
$\ce_2 = \ce^1_\val + \ce^2_\val$.

Moreover, by~\autoref{lem:linear_substitution} considering
$\deriv^1_\val$, we have $\ce_1 > 0$ and 
\[
\derivable
  {\deriv_{\tmtwo'}}
  {\judg
    [\cm_1 + \cm^1_\val, \ce_1 + \ce^1_\val - 1]
    {(\tctx_1 + \tctx^1_\val); \var : \optmtyptwo_{\tmtwo'}}
    {\tmtwo'}
    \mtyp}
\]

We then apply rule \ruleTypES with $\deriv_{\tmtwo'}$ and 
$\deriv^2_\val$ as premises, since 
$\optmtyptwo_{\tmtwo'} \mleq \mtyptwo_{\tmtwo'}$, yielding 
\[
\derivable\derivtwo
  {\judg
    [\cm_1 + \cm^1_\val + \cm^2_\val, \ce_1 + \ce^1_\val - 1 + \ce^2_\val]
    {\tctx_1 + \tctx^1_\val + \tctx^2_\val}
    {\tmtwo'\esub\var\val}
    \mtyp}
\]

In this case, $\rulename = \lsvsym$, and $e > 0$ since $e_1 > 0$ 
and $e = e_1 + e_2$.
\end{proof}

We can finally state the soundness of system $\typesystem$ with respect to \UOCBV evaluation. 
Moreover, we obtain an exact counting  for tight typing derivations, as the counters precisely capture the numbers of $\dbsym$- and $\lsvsym$-steps.

\begin{thm}[Exact Soundness of System $\typesystem$]
\label{thm:quantitative_soundness_typing}
Let $\inv{\aset}{\sset}{\tm}$ and $\isAppr{\aset}{\tctx}$. 
Let $\derivable\deriv{\judg[\cm,\ce]\tctx\tm{\mtyp}}$ be a derivation.
Then, there exists a $\tovTopb{\aset}{\sset}$-irreducible term $\tmtwo$ such 
that $\tm \tovTopb{\aset}{\sset}^{k} \tmtwo$, for some $k \in \nat$.
Moreover, if $\deriv$ is tight, then $k = \cm + \ce$ where $\cm$ and $\ce$ are 
respectively the number of $\dbsym$- and $\lsvsym$-steps.
\end{thm}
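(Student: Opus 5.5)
We argue by induction on $\cm + \ce$, using Exact Subject Reduction (\autoref{prop:subject_reduction}) as the engine. The qualitative part is straightforward. Take any primary step $\tm \tovTopb\aset\sset \tm_1$, that is, a $\tov\rulename\aset\sset\nonapp$ step with $\rulename \in \set{\dbsym,\lsvsym}$. Since the flag is $\nonapp$, the side condition of \autoref{prop:subject_reduction} on applied positions is vacuous. We therefore obtain $\derivable{\deriv_1}{\judg[\cm_1,\ce_1]\tctx{\tm_1}\mtyp}$ with $(\cm_1,\ce_1)$ equal to either $(\cm-1,\ce)$ or $(\cm,\ce-1)$, so $\cm_1+\ce_1 = \cm+\ce-1$.

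To iterate, the invariants must be preserved. The typing environment $\tctx$ is unchanged, so $\isAppr\aset\tctx$ still holds. For $\inv\aset\sset{\tm_1}$ we need $\fv{\tm_1} \subseteq \fv\tm$. This holds because \ruleUDb only moves subterms around, and \ruleULsv substitutes a value $\val$ whose free variables are already free in $\val\sctx$ inside the closure. Hence every primary reduction sequence from $\tm$ has length at most $\cm+\ce$. Stopping at a term with no further primary step gives a $\tovTopb\aset\sset$-irreducible $\tmtwo$ with $\tm \tovTopb\aset\sset^{k} \tmtwo$ and $k \le \cm+\ce$. Moreover, the last derivation has counters $(\cm-k_{\dbsym}, \ce-k_{\lsvsym})$, where $k_{\dbsym}$ and $k_{\lsvsym}$ are the numbers of $\dbsym$- and $\lsvsym$-steps performed. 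By \autoref{cor:consequences-dp}, these numbers do not depend on the chosen sequence.

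For the exact part, observe that tightness is preserved along the sequence, since $\tctx$ and $\mtyp$ never change. It therefore suffices to prove the following \textbf{key lemma}: if $\tmtwo \in \Irred\aset\sset\nonapp$ for both $\dbsym$ and $\lsvsym$ (equivalently, by \autoref{thm:uocbv_characterisation_of_normal_forms}, $\tmtwo \in \NF\aset\sset\nonapp$), $\inv\aset\sset\tmtwo$ holds, and $\derivable\derivtwo{\judg[\cm',\ce']\tctx\tmtwo\mtyp}$ is tight with $\isAppr\aset\tctx$, then $\cm'=\ce'=0$. Given this, the final counters vanish, so $k_{\dbsym}=\cm$ and $k_{\lsvsym}=\ce$, and $k = \cm+\ce$.

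The key lemma is the main obstacle. A direct induction on the $\NF{}{}{}$ rules fails, because subderivations are not tight: in \ruleUNFEsAbs the bound variable receives an arrow multi-type, and an applied head in \ruleUNFApp may carry arrows. I would instead strengthen the statement by induction on $\tmtwo \in \NF\aset\sset\appflag$, with the following invariants.
\begin{itemize}
\item Variables in $\sset$ are typed $\tightN$ or $\none$, and variables in $\aset$ are typed $\emset$ or $\none$; that is, the environment is tight relative to the frames.
\item If $\appflag = \nonapp$, the type is tight.
\item If $\appflag = \app$, the type is $\tightN$.
\end{itemize}
The conclusion of the strengthened statement is that both counters are $0$.

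Under these invariants the cases go as follows.
\begin{itemize}
\item \ruleUNFVar: the case $\var \in \aset$ with $\appflag = \app$ is excluded by the rule itself; otherwise the type is constant and $\numarrows{\cdot}=0$.
\item \ruleUNFLam: only $\nonapp$ applies, with type $\emset$, so $I=\emptyset$ and the counters are $0$.
\item \ruleUNFApp: the head has flag $\app$, so it must be typed $\tightN$. This forces \ruleTypAppP rather than \ruleTypAppC, which is what rules out the consuming increment; the argument is typed $\tight$, and the induction hypothesis applies to both sides.
\item \ruleUNFEsStruct: \autoref{lem:systemU_types_of_hereditary_abstractions}, or a dual lemma showing that structures cannot receive arrow types under appropriate environments, forces the argument's type to be $\tightN$. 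The bound variable is then typed $\none$ or $\tightN$, preserving the invariant.
\item \ruleUNFEsAbs: the argument is in $\HAbs\aset$ with flag $\nonapp$. The delicate point is showing that its type must be $\emset$ and not a nonempty arrow multi-type. I would aim for this by proving, simultaneously, that under the invariants a variable $\var \in \aset$ only occurs in non-applied positions of a normal form, so its type collects only $\emset$ contributions. This is where the precise interplay of $\appflag$, $\aset$ and appropriateness must be tuned, and I expect most of the work there.
\end{itemize}
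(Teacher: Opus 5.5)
Your top-level argument is the paper's own. You do induction on $\cm+\ce$ via \autoref{prop:subject_reduction}, with $\tctx$ and $\mtyp$ unchanged (hence appropriateness and tightness preserved) and free variables not increasing. Your key lemma is the $\nonapp$ instance of \autoref{prop:systemU_ucbv_nf_counters_zero}. The gap is in your proof of that lemma, which is where the real work lies.

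In the \ruleUNFApp case you reason ``the head has flag $\app$, so it must be typed $\tightN$''. But that is a hypothesis of your induction, so you cannot assume it for the head. You must rule out \ruleTypAppC independently, for example by showing two things: every term of $\NF\aset\sset\app$ lies in $\Struct\sset$, and a structure cannot receive a non-empty arrow multi-type when the variables of $\sset$ carry tight types.

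Your invariant that variables of $\sset$ are typed $\tightN$ or $\none$ is also not implied by a tight environment appropriate for $\aset$. Consider $\judgs{\vartwo:\tightN;\varfour:\emset}{\vartwo\,\varfour}\tightN$ with $\vartwo,\varfour\in\sset$. So your strengthened statement does not yield your key lemma. You should only require tight types there, which makes your hypotheses exactly ``tight and appropriate'', as in the paper. Then in \ruleUNFEsStruct the structure fact above (not \autoref{lem:systemU_types_of_hereditary_abstractions}) gives only a tight argument type, and that suffices.

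Most importantly, the \ruleUNFEsAbs case is left open, and handling it ``simultaneously, under the invariants'' is circular. The invariants already require $\aset$-variables to be typed $\emset$ or $\none$, and that is exactly what is at stake for the newly bound variable. The missing ingredient is a separate statement with a weaker hypothesis, the paper's \autoref{lem:systemU_multitype_implies_reducibility}. It says: if $\inv\aset\sset\tm$ and $\judgs{\tctx;\tctxtwo}\tm\tight$, with $\tctx$ tight and appropriate for $\aset$ and $\tctxtwo$ assigning non-empty arrow multi-types to a non-empty set of variables of $\aset$, then $\tm\notin\NF\aset\sset\appflag$.

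This lemma needs its own induction on the derivation, tracking into which premise the arrow-typed variables flow. For a closure they go into the body (joined by the bound variable itself when the argument's type is not tight) or into the argument. With the lemma, the non-tight subcase of \ruleUNFEsAbs is contradictory, and \autoref{lem:systemU_types_of_hereditary_abstractions} forces the argument type to be $\emset$. This is how the paper's direct induction on $\NF\aset\sset\appflag$ over tight derivations goes through, so that route does not fail as you claim.

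A smaller point, shared with the paper's proof: irreducibility for $\dbsym$ and $\lsvsym$ alone is not equivalent to membership in $\NF\aset\sset\nonapp$. For example, $\var\,\vartwo$ with $\aset=\set\var$ and $\sset=\set\vartwo$ admits only $\rulesub\var\val$-steps. Under a tight appropriate derivation you can recover the equivalence from \autoref{lem:linear_substitution}, since such a step would force $\tctx(\var)$ to contain an arrow type.
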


\begin{proof}
Let $\derivable\deriv{\judg[\cm,\ce]\tctx\tm\mtyp}$ be a derivation.
Since we are in \emph{primary} \UOCBV evaluation, the step kind satisfies $\rulename \in \set{\dbsym,\lsvsym}$.
Moreover, $\isAppr{\aset}\tctx$ holds by hypothesis and $\tctx$ is a tight typing environment.

We proceed by induction on $\cm + \ce$, analysing whether $\tm$ is 
irreducible in \UOCBV.
\begin{itemize}
\item 
  If $\tm$ is irreducible, then 
  $\tm \in \NF{\aset}{\sset}\nonapp$ by~\autoref{thm:uocbv_characterisation_of_normal_forms}. 
  Then, we let $\tmtwo \eqdef \tm$ since $\tm$ primary reduces to itself in $0$ steps. 

  Suppose now that $\deriv$ is tight.
  By~\autoref{prop:systemU_ucbv_nf_counters_zero} in the appendix, we obtain 
  $(\cm,\ce) = (0,0)$, and then we are done. 
\item
  If $\tm$ is not irreducible in \UOCBV evaluation, this means that 
  there exists a term $\tm'$ and a step kind 
  $\rulename \in \set{\dbsym,\lsvsym}$ such that 
  $\tm \tov\rulename\aset\sset\nonapp \tm'$.
  By~\autoref{prop:subject_reduction}, we have 
  $\derivable{\deriv'}{\judg[\cmtwo,\cetwo]\tctx{\tm'}\mtyp}$, 
  with $\cm > 0$ and $(\cmtwo,\cetwo) = (\cm - 1, \ce)$ if 
  $\rulename = \dbsym$, and with $\ce > 0$ and 
  $(\cmtwo,\cetwo) = (\cm, \ce - 1)$ if $\rulename = \lsvsym$.

  Since $\cm + \ce > \cmtwo + \cetwo$,
  and $\inv{\aset}{\sset}{\tm'}$ as $\fv{\tm} = \fv{\tm'}$, then we apply the \ih, 
  and we obtain a primary \UOCBV reduction sequence under parameters $\aset$ and $\sset$ from $\tm'$
  to a normal form in $k'$ steps for some $k' \in \nat$. 
  By joining this reduction sequence to normal form to the reduction
  step $\tm \tov\rulename\aset\sset\nonapp \tm'$, we obtain that
  $\tm$ reduces to a \UOCBV-normal form in $k$ steps, where $k = k'+1$.

  Suppose now that $\deriv$ is tight. 
  Since~\autoref{prop:subject_reduction} preserves both the 
  typing environment $\tctx$ and the type $\mtyp$, 
  the derivation $\deriv'$ is tight as well. 
  Hence, the \ih yields $k'=\cmtwo + \cetwo$, where $\cmtwo$ and $\cetwo$ are respectively
  the number of $\dbsym$- and $\lsvsym$-steps in the reduction
  sequence from $\tm'$ to normal form.
  We separate into cases,
  depending on the form the step kind $\rulename \in
  \set{\dbsym,\lsvsym}$ has:
  \begin{itemize}
  \item $\rulename = \dbsym$.
    Then, $\cm = \cm + (1 - 1) = \cmtwo + 1$ and
    $\ce = \cetwo$, so $\tm$ terminates in \UOCBV evaluation in $\cm$
    $\dbsym$-steps and $\ce$ $\lsvsym$-steps.
  \item $\rulename = \lsvsym$.
    Then, $\cm = \cmtwo$ and $\ce = \ce + (1 - 1) = \cetwo + 1$, 
    so $\tm$ terminates in \UOCBV evaluation in $\cm$ $\dbsym$-steps 
    and $\ce$ $\lsvsym$-steps.
    \qedhere
  \end{itemize}
\end{itemize}
\end{proof}

To illustrate this property, consider the tight derivation $\derivtwo'$ in~\autoref{ex:derivation-for-tight}
for our working~\autoref{exa:working-example}, where the counter in the conclusion of
$\derivtwo'$ is $(1,2)$. A reduction sequence from $\tm$ terminates in
$\var_1\esub{\var_1}\varthree\esub\var\vartwo\esub\vartwo\id \in
\NF\emptyset{\set\varthree}\nonapp$ in exactly $1$ $\dbsym$-step
and $2$ $\lsvsym$-steps (see~\autoref{ex:reduction-for-tight}). 
Note how these numbers match the $(1,2)$ counters.

Tightness in the previous example is crucial. 
Indeed, non-tight judgements may not provide exact information 
about the length of reduction sequences to normal form.
For example, the derivation of the judgement
$\judg[1,1]{\varthree : \tightN, \var : \mtyp}{\var\,\varthree}\tightN$ 
in~\autoref{ex:derivation-for-tight} is not tight, as the type $\mtyp$ is not 
tight, and the counters $(1,1)$ do not correspond to the number of 
\UOCBV-steps needed to reach the normal form of $\var\,\varthree$, 
which is already in \UOCBV-normal form.

\defn{Completeness} of system $\typesystem$ with respect to \UOCBV evaluation 
can be proved using standard techniques, requiring a subject expansion
property (\autoref{prop:subject_expansion}) based in turn on an
anti-linear substitution lemma
(\autoref{lem:systemU_anti-linear_substitution_ucbv}).  These
statements are quite technical, given the indexed formulation
of \UOCBV and the fact that
the substitution mechanism is linear.

The first property guarantees that \UOCBV-normal forms are tightly typable 
in system $\typesystem$ (\autoref{prop:nfs_are_tight_typable}).
For that, we need to construct tight environments for each normal form
$\tm \in \NF\aset\sset\appflag$; this is done by typing
the \emph{reachable} variables in $\aset$ with $\emset$ and those in
$\sset$ with $\tightN$. 
More precisely, given $\tm$, $\aset$, and $\sset$ such that $\inv\aset\sset\tm$, the \defn{tight environment}
for $\tm$ under $\aset$ and $\sset$ is written $\TEnv\aset\sset\tm$ and defined as follows:
\[
  \TEnv\aset\sset\tm(\var) \eqdef
    \begin{cases*}
      \emset  & if $\var \in \aset \cap \rv\tm$
    \\
      \tightN & if $\var \in \sset \cap \rv\tm$
    \\
      \none   & otherwise
    \end{cases*}
\]

Tight environments allow us to \emph{tightly} type \UOCBV-normal forms:
\begin{prop}[\UOCBV-Normal Forms are Tight Typable]
\label{prop:nfs_are_tight_typable}
Let $\tm$ be a term in $\NF\aset\sset\appflag$ and $\inv\aset\sset\tm$.
Then, there exists a tight environment $\TEnv\aset\sset\tm$ and a tight type $\tight$ such that
$\derivable\deriv{\judg[0,0]{\TEnv\aset\sset\tm}\tm\tight}$.
Moreover, if $\tm \in \HAbs\aset$, then $\tight = \emset$, and if 
$\tm \in \Struct\sset$, then $\tight = \tightN$.
\end{prop}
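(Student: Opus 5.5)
We argue by induction on the derivation of $\tm \in \NF\aset\sset\appflag$ given by the rules of \autoref{thm:uocbv_characterisation_of_normal_forms}. The statement needs strengthening in one respect. Typing an application with \ruleTypAppP requires the left-hand side to have type $\tightN$, so we also prove that if $\appflag = \app$ then $\tight = \tightN$. We further prove that the constructed environment is exactly $\TEnv\aset\sset\tm$; this is forced by relevance (\autoref{lem:relevance_systemU}) once every reachable variable actually receives a type. All derivations built below contain no \ruleTypAppC rule and only \ruleTypVar axioms at constant types, so the counters are $(0,0)$ by construction, since $\numarrows\tight = 0$.

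The base cases are as follows.
\begin{itemize}
\item \ruleUNFVar: we type $\var$ with $\emset$ if $\var \in \aset$ and with $\tightN$ if $\var \in \sset$; by the invariant exactly one of the two holds. If $\var\in\aset$, then $\appflag=\nonapp$ by the side condition, so the strengthened claim is respected. Moreover $\var \in \HAbs\aset$ iff $\var\in\aset$, and $\var \in \Struct\sset$ iff $\var\in\sset$, which gives the last sentence of the statement.
\item \ruleUNFLam: we use \ruleTypAbs with $I=\emptyset$, giving $\emptyctx \vdash \lam\var\tm : \emset$. This matches $\TEnv\aset\sset{\lam\var\tm}$ because $\rv{\lam\var\tm}=\emptyset$. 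An abstraction is not in $\Struct\sset$, so only the $\HAbs\aset$ clause applies.
\end{itemize}

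The remaining cases are inductive.
\begin{itemize}
\item \ruleUNFApp: the \ih gives $\tm$ type $\tightN$ (flag $\app$) and $\tmtwo$ some $\tight$. Rule \ruleTypAppP yields $\tightN$. The environments add up to $\TEnv\aset\sset{\tm\,\tmtwo}$ because reachable variables of an application are the union, and sums of equal constants are idempotent ($\emset+\emset=\emset$, $\tightN+\tightN=\tightN$). This sum is well defined because a variable lies in only one of $\aset,\sset$. An application is never in $\HAbs\aset$, consistent with the result $\tightN$.
\item \ruleUNFEsAbs: the \ih on $\tmtwo$ (with $\tmtwo\in\HAbs\aset$) gives type $\emset$. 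The \ih on $\tm$ under $\aset\cup\set\var$ assigns $\var$ either $\emset$ (if $\var\in\rv\tm$) or $\none$, and in both cases the type is $\mleq \emset$. We close with \ruleTypES. Since $\var \notin \aset\cup\sset$ may be assumed by $\alpha$-conversion, removing $\var$ from the environment of $\tm$ and adding that of $\tmtwo$ gives $\TEnv\aset\sset{\tm\esub\var\tmtwo}$, because $\rv{\tm\esub\var\tmtwo}=(\rv\tm\setminus\set\var)\cup\rv\tmtwo$. The invariants $\inv{\aset\cup\set\var}\sset\tm$ and $\inv\aset\sset\tmtwo$ hold since $\var\notin\aset\cup\sset$.
\item \ruleUNFEsStruct: this is symmetric, using $\tightN$ and $\none\mleq\tightN$.
\end{itemize}

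The main obstacle is the final clause for explicit substitutions: if $\tm\esub\var\tmtwo \in \HAbs\aset$, the type must be $\emset$, and if it is in $\Struct\sset$, the type must be $\tightN$. We invert the $\HAbs$/$\Struct$ derivation and reconcile it with the NF rule used.
\begin{itemize}
\item If the NF rule is \ruleUNFEsAbs, the membership $\tm\esub\var\tmtwo\in\HAbs\aset$ must come from $\tm\in\HAbs{\aset\cup\set\var}$ (via either the third or the fourth rule, using monotonicity of $\HAbs$ in the frame). Then the \ih on $\tm$ gives $\emset$.
\item If the NF rule is \ruleUNFEsStruct but the term is in $\HAbs\aset$, the third rule gives $\tm\in\HAbs\aset\subseteq\HAbs{\aset}$, and the \ih on $\tm$ (frames $\aset$, $\sset\cup\set\var$) gives $\emset$. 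The fourth rule is impossible here, since it would put $\tmtwo$ in both $\Struct\sset$ and $\HAbs\aset$.
\end{itemize}
The $\Struct$ cases are analogous. Two auxiliary facts are therefore needed: monotonicity of $\HAbs$ and $\Struct$ in their frames, and disjointness of $\HAbs\aset$ and $\Struct\sset$ when $\aset\cap\sset=\emptyset$. Both follow by easy inductions on the rules of \autoref{fig:HAbs_Struct}.
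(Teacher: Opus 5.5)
Your proposal is correct and follows the same route as the paper: induction on the derivation of $\tm \in \NF\aset\sset\appflag$, where the paper writes out only the \ruleUNFVar case. Your extra invariant that $\appflag = \app$ forces $\tight = \tightN$, together with monotonicity and disjointness of $\HAbs\aset$ and $\Struct\sset$, is what the omitted \ruleUNFApp and explicit-substitution cases need. One small correction: relevance only gives $\rv\tm \subseteq \dom\tctx \subseteq \fv\tm$, so it does not by itself force the environment to be $\TEnv\aset\sset\tm$; your case-by-case computation of the environments is what actually establishes this, and it does so correctly.
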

\begin{proof} 
By induction on the derivation of $\tm \in \NF\aset\sset\appflag$.
We only show the case \ruleUNFVar: 
\begin{mathpar}
  \inferrule
    {\var \in \aset \implies \appflag = \nonapp}
    {\tm = \var \in \NF\aset\sset\appflag}
\end{mathpar}

Since $\inv\aset\sset\var$ holds, then $\aset \disj \sset$ and
$\fv\var = \set\var \subseteq \aset \cup \sset$.
We then consider two cases, depending on whether $\var \in \aset$ 
or $\var \in \sset$.
\begin{itemize}
\item 
  If $\var \in \aset$, then $\TEnv\aset\sset\var(\var) = \emset$ 
  since $\var \in \aset \cap \rv\var$ and
  $\TEnv\aset\sset\var(\vartwo) = \none$ for any other variable $\vartwo$. 
  Having $\numarrows\emset = 0$, we take $\tight = \emset$ and 
  by applying the \ruleTypVar axiom we obtain 
  $\derivable\deriv{\judg[0,0]{\var : \emset}\var\emset}$.
  Moreover, $\var \in \HAbs\aset$ by the first rule in~\autoref{fig:HAbs_Struct} (above).
\item
  If $\var \in \sset$, then $\TEnv\aset\sset\var(\var) = \tightN$ 
  since $\var \in \sset \cap \rv\var$ and 
  $\TEnv\aset\sset\var(\vartwo) = \none$ for any other variable $\vartwo$. 
  Having $\numarrows{\tightN} = 0$, we take $\tight = \tightN$ and 
  by applying the \ruleTypVar axiom we obtain 
  $\derivable\deriv{\judg[0,0]{\var : \tightN}\var\tightN}$.
  Moreover, $\var \in \Struct\sset$ by the first rule in~\autoref{fig:HAbs_Struct} (below).
  \qedhere
\end{itemize}
\end{proof}

\begin{lem}[Exact Anti-Linear Substitution]
\label{lem:systemU_anti-linear_substitution_ucbv}
Let $\var$ be a variable, $\tm$ a term, $\aset$ an abstraction frame,
and $\sset$ a structure frame such that 
$\inv{\aset \cup \set\var}\sset\tm$ holds.
Consider the abstraction frames $\aset_0$ and $\asettwo$ such that 
$\aset_0 \subseteq \aset$ and $\asettwo \disj \aset$ hold.

Let $\tm \tov{\rulesub\var\val}{\aset \cup \set\var}\sset\appflag \tm'$,
where $\val \in \HAbs{\aset_0 \cup \asettwo}$, and 
let $\derivable{\deriv_{\tm'}}{\judg[\cm,\ce]{\tctx; \var : \optmtyptwo}{\tm'}\mtyp}$
such that $\isAppr{\aset \cup \asettwo \cup \set\var}{\tctx; \var : \optmtyptwo}$.
Suppose moreover that if $\appflag = \app$, then either 
$\mtyp = \tightN$ or $\mtyp = \mset\typ$.

Then, there exist typing derivations $\deriv_\tm,\deriv_\val$, 
typing environments $\tctx_\tm,\tctx_\val$, an arrow type 
$\typtwo$, and counters $(\cm_\tm,\ce_\tm), (\cm_\val,\ce_\val)$ satisfying:
\begin{enumerate}
\item 
  $\derivable
  	{\deriv_\tm}
    {\judg[\cm_\tm,\ce_\tm + 1]{\tctx_\tm;\var : (\optmtyptwo + \mset\typtwo)}\tm\mtyp}$
\item
  $\derivable
  	{\deriv_\val}
  	{\judg[\cm_\val,\ce_\val]{\tctx_\val}\val{\mset\typtwo}}$
\item
  $\tctx = \tctx_\tm + \tctx_\val$ and
  $\cm = \cm_\tm + \cm_\val$ and $\ce = \ce_\tm + \ce_\val$
\end{enumerate} 
\end{lem}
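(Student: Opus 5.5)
Since the auxiliary reduction $\tm \tov{\rulesub\var\val}{\aset \cup \set\var}\sset\appflag \tm'$ is inductively defined, I would argue by induction on it, mirroring the proof of \autoref{lem:linear_substitution} in the opposite direction. The only base case is \ruleUSub: $\tm = \var$, $\tm' = \val$ and $\appflag = \app$, so that $\mtyp = \tightN$ or $\mtyp = \mset\typ$. First I would rule out $\mtyp = \tightN$. The environment $\tctx;\var:\optmtyptwo$ of $\deriv_{\tm'}$ satisfies $\isAppr{\aset_0 \cup \asettwo}{\cdot}$, since $\aset_0 \subseteq \aset$, and $\val \in \HAbs{\aset_0 \cup \asettwo}$. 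Hence \autoref{lem:systemU_types_of_hereditary_abstractions} gives $\mtyp \neq \tightN$, so $\mtyp = \mset\typ$. This is exactly why $\aset_0$ and $\asettwo$ appear in the statement. Because $\var \notin \fv\val$, relevance (\autoref{lem:relevance_systemU}) forces $\optmtyptwo = \none$. I then take $\typtwo \eqdef \typ$, $\deriv_\val \eqdef \deriv_{\tm'}$, $\tctx_\val \eqdef \tctx$, $\tctx_\tm \eqdef \emptyctx$, $(\cm_\tm,\ce_\tm) \eqdef (0,0)$ and $(\cm_\val,\ce_\val) \eqdef (\cm,\ce)$. Finally, $\deriv_\tm$ is the \ruleTypVar axiom $\judg[0,1]{\var:\mset\typ}\var{\mset\typ}$, which indeed has counter $\ce_\tm + 1 = 1 = \numarrows{\mset\typ}$.

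The inductive cases proceed by inverting the last typing rule of $\deriv_{\tm'}$, applying the \ih to the premise typing the reduced subterm, and rebuilding the derivation with the same rule. For \ruleUAppL, the left premise is typed with $\tightN$ (rule \ruleTypAppP) or with a singleton $\mset{\optmtyp\to\mtyptwo}$ (rule \ruleTypAppC), so the side condition needed for flag $\app$ holds. For \ruleUAppR, \ruleUEsR, \ruleUEsLAbs and \ruleUEsLStruct the flag either becomes $\nonapp$ or is inherited, so the hypothesis on $\mtyp$ transfers. In each case I split $\tctx;\var:\optmtyptwo$ into the part belonging to the reduced premise and the rest, and apply the \ih to the former. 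I then reassemble, adding the untouched premise's environment and counters to $\tctx_\tm$ and to $(\cm_\tm,\ce_\tm)$, while $\tctx_\val$ and $\deriv_\val$ are passed up unchanged. The variable $\var$ now receives $\optmtyptwo_1 + \mset\typtwo + \optmtyptwo_2$, which equals $\optmtyptwo + \mset\typtwo$ by associativity and commutativity of $+$. This sum is well defined because $\appropriateness$ guarantees that $\var$'s types are never $\tightN$, so all of them are arrow multisets or $\none$. For the rules that type an explicit substitution, the side condition $\optmtypthree \mleq \mtypthree$ is untouched when the reduction happens on the left; when it happens on the right (\ruleUEsR), only the argument's derivation changes and its type is preserved.

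The frame bookkeeping is where the care is needed. In \ruleUEsLAbs the body is reduced under $\aset \cup \set\vartwo \cup \set\var$, and I apply the \ih with $\aset \cup \set\vartwo$ in place of $\aset$ and the same $\aset_0$ and $\asettwo$. This requires $\asettwo \disj (\aset\cup\set\vartwo)$, which can be ensured by $\alpha$-renaming $\vartwo$ away from $\asettwo$. It also requires appropriateness for $\vartwo$, which I would obtain from $\tmtwo \in \HAbs\aset$ via \autoref{lem:systemU_types_of_hereditary_abstractions} together with $\optmtypthree \mleq \mtypthree$, exactly as in the proof of that lemma. In \ruleUEsLStruct the frame $\sset$ grows, and nothing needs to change. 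Relevance again lets me commute the binder $\vartwo$ with $\tctx_\val$, since $\vartwo \notin \fv\val$.

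I expect the main obstacle to be maintaining the appropriateness hypothesis $\isAppr{\aset \cup \asettwo \cup \set\var}{\cdot}$ for sub-environments under extended frames, in particular for the bound variable in \ruleUEsLAbs. If the argument there does not fit the given hypotheses, I would strengthen the induction by generalising over $\aset$, which the statement already permits.
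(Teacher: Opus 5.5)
Your proposal is correct and follows the paper's proof: induction on the $\rulesub\var\val$-step, a \ruleUSub base case that splits $\deriv_{\tm'}$ into a \ruleTypVar axiom for $\var$ together with $\deriv_\val \eqdef \deriv_{\tm'}$, and congruence cases handled by inversion and rebuilding. In the base case you rule out $\tightN$ uniformly via \autoref{lem:systemU_types_of_hereditary_abstractions}, whereas the paper splits on whether $\val$ is a variable or an abstraction.

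One small correction concerns \ruleUEsLStruct, where you say nothing needs to change. The body's environment must be appropriate for $\aset\cup\asettwo\cup\set\var$, and the bound variable $\vartwo$, possibly typed with $\tightN$, could lie in $\asettwo$. You therefore need the same $\alpha$-renaming $\vartwo\notin\asettwo$ that you already invoke for \ruleUEsLAbs.
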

\begin{proof}
By induction on $\tm \tov{\rulesub\var\val}{\aset \cup \set\var}\sset\appflag \tm'$.
We only show cases \ruleUSub, \ruleUAppL, and \ruleUEsR.

\begin{itemize}
\item \ruleUSub.
  Then, 
  $\tm = \var \tov{\rulesub\var\val}{\aset \cup \set\var}\sset\app \val = \tm'$,
  where $\appflag = \app$.
  Recall that $\var \notin \fv\val$ by the grammar of step kinds. 
  We analyse by cases the form of $\val$:  
  \begin{itemize}
  \item 
    If $\val = \vartwo$, then $\var \neq \vartwo$.
    The derivation $\deriv_{\tm'}$ can only be obtained by the \ruleTypVar axiom, 
    so it is of the form
    $\derivable{\deriv_{\tm'}}{\judg[0,\numarrows\mtyp]{\vartwo : \mtyp}\vartwo\mtyp}$,
    where $\tctx = \vartwo : \mtyp$, $\cm = 0$, 
    $\ce = \numarrows\mtyp$, and $\optmtyptwo = \none$.
    Moreover, $\mtyp = \tightN$ or $\mtyp = \mset\typ$ by hypothesis, 
    since $\appflag = \app$. 
    
    Furthermore, since $\vartwo \in \HAbs{\aset_0 \cup \asettwo}$ holds,
    then it must be the case that $\vartwo \in \aset_0 \cup \asettwo$.
    This leads to $\vartwo \in \aset \cup \asettwo$, as $\aset_0 \subseteq \aset$.
    Then, $(\vartwo : \mtyp)(\vartwo) \neq \tightN$ since
    $\isAppr{\aset \cup \asettwo \cup \set\var}{\vartwo : \mtyp}$.
    Therefore, $\mtyp = \mset\typ$ and thus 
    $\ce = \numarrows{\mset\typ} = 1$ necessarily holds.

    We take $\deriv_\val \eqdef \deriv_{\tm'}$, 
    $\tctx_\tm \eqdef \emptyctx$, $\tctx_\val \eqdef \vartwo : \mset\typ$, 
    $\typtwo \eqdef \typ$, $\cm_\tm \eqdef 0$, $\ce_\tm \eqdef 0$, 
    $\cm_\val \eqdef 0$, and $\ce_\val \eqdef 1$, and obtain 
    $\derivable{\deriv_\tm}{\judg[0,1]{\var : \mset\typ}\var{\mset\typ}}$ 
    by the \ruleTypVar axiom.

    Moreover,
    $\tctx = \vartwo : \mset\typ = \tctx_\tm + \tctx_\val$ 
    and $\cm = 0 = \cm_\tm + \cm_\val$ and $\ce = 1 = \ce_\tm + \ce_\val$.
  \item
    If $\val = \lam\vartwo\tmtwo$ with 
    $\var \notin \fv{\lam\vartwo\tmtwo}$, then
    $\var \notin \dom{\tctx; \var : \optmtyptwo}$ by~\autoref{lem:relevance_systemU}. 
    Therefore, $\optmtyptwo = \none$.
    Moreover, $\mtyp = \tightN$ or $\mtyp = \mset\typ$ by hypothesis, 
    since $\appflag = \app$.
    Since $\deriv_{\tm'}$ can only be obtained from rule \ruleTypAbs,
    it must be the case that $\mtyp = \mset\typ$. 

    We take $\deriv_\val \eqdef \deriv_{\tm'}$, 
    $\tctx_\tm \eqdef \emptyctx$, $\tctx_\val \eqdef \tctx$, 
    $\typtwo \eqdef \typ$, $\cm_\tm \eqdef 0$, $\ce_\tm \eqdef 0$, 
    $\cm_\val \eqdef \cm$, and $\ce_\val \eqdef \ce$, and obtain
    $\derivable{\deriv_\tm}{\judg[0,1]{\var : \mset\typ}\var{\mset\typ}}$ 
    by the \ruleTypVar axiom.
    
    Moreover, 
    $\tctx = \tctx_\tm + \tctx_\val$ and 
    $\cm = \cm_\tm + \cm_\val$ and $\ce = \ce_\tm + \ce_\val$.
  \end{itemize}
\item \ruleUAppL.
  Then, the reduction step is of the form: 
  \begin{mathpar}
    \inferrule
      {\tmtwo \tov{\rulesub\var\val}{\aset \cup \set\var}\sset\app \tmtwo'}
      {\tm = \tmtwo \, \tmthree \tov{\rulesub\var\val}{\aset \cup \set\var}\sset\appflag \tmtwo' \, \tmthree = \tm'}
  \end{mathpar}

  The derivation $\deriv_{\tm'}$ can be obtained either by rule 
  \ruleTypAppP or by rule \ruleTypAppC.
  We consider the former, since the latter is analogous.
  Then, $\deriv_{\tm'}$ is of the form:
  \begin{mathpar}
    \inferrule* [right = \ruleTypAppP]
      {\inferrule*[rightskip = 1em]{}{
        \derivable
          {\deriv_{\tmtwo'}}
          {\judg[\cm_{\tmtwo'},\ce_{\tmtwo'}]{\tctx_{\tmtwo'}; \var : \optmtyptwo_{\tmtwo'}}{\tmtwo'}\tightN}}
      \\ 
        \derivable{\deriv_\tmthree}
          {\judg[\cm_\tmthree,\ce_\tmthree]{\tctx_\tmthree; \var : \optmtyptwo_\tmthree}\tmthree\tight}}
      {\judg
        [\cm_{\tmtwo'} + \cm_\tmthree, \ce_{\tmtwo'} + \ce_\tmthree]
        {(\tctx_{\tmtwo'} + \tctx_\tmthree); \var : (\optmtyptwo_{\tmtwo'} + \optmtyptwo_\tmthree)}
        {\tmtwo' \, \tmthree}
        \tightN}
  \end{mathpar}
  where $\tctx = \tctx_{\tmtwo'} + \tctx_\tmthree$, 
  $\optmtyptwo = \optmtyptwo_{\tmtwo'} + \optmtyptwo_\tmthree$,
  $\cm = \cm_{\tmtwo'} + \cm_\tmthree$, 
  $\ce = \ce_{\tmtwo'} + \ce_\tmthree$, and 
  $\mtyp = \tightN$.

  Note that 
  $\isAppr{\aset \cup \asettwo \cup \set\var}{\tctx; \var : \optmtyptwo}$
  implies
  $\isAppr{\aset \cup \asettwo \cup \set\var}{\tctx_{\tmtwo'}; \var : \optmtyptwo_{\tmtwo'}}$, 
  as $\tctx = \tctx_{\tmtwo'} + \tctx_\tmthree$ and 
  $\optmtyptwo = \optmtyptwo_{\tmtwo'} + \optmtyptwo_\tmthree$.
  
  On the other hand, it holds that $\appflag = \app$ in the reduction step
  $\tmtwo \tov{\rulesub\var\val}{\aset \cup \set\var}\sset\app \tmtwo'$
  and that the type in the conclusion of $\deriv_{\tmtwo'}$ is $\tightN$.
  Moreover, $\inv{\aset \cup \set\var}\sset{\tmtwo \, \tmthree}$
  implies in particular $\inv{\aset \cup \set\var}\sset\tmtwo$.
  We can then apply the \ih on 
  $\tmtwo \tov{\rulesub\var\val}{\aset \cup \set\var}\sset\app \tmtwo'$, 
  yielding typing derivations $\deriv_\tmtwo,\deriv_\val$,
  typing environments $\tctx_\tmtwo,\tctx_\val$, an arrow type 
  $\typtwo$, and counters $(\cm_\tmtwo,\ce_\tmtwo),(\cm_\val,\ce_\val)$ 
  satisfying:
  \begin{enumerate}[label=(\arabic*')]
  \item
    $\derivable
      {\deriv_\tmtwo}
      {\judg[\cm_\tmtwo,\ce_\tmtwo + 1]{\tctx_\tmtwo; \var : (\optmtyptwo_{\tmtwo'} + \mset\typtwo)}\tmtwo\tightN}$
  \item
    $\derivable{\deriv_\val}{\judg[\cm_\val,\ce_\val]{\tctx_\val}\val{\mset\typtwo}}$
  \item
    $\tctx_{\tmtwo'} = \tctx_\tmtwo + \tctx_\val$ and 
    $\cm_{\tmtwo'} = \cm_\tmtwo + \cm_\val$ and 
    $\ce_{\tmtwo'} = \ce_\tmtwo + \ce_\val$
  \end{enumerate}
  Taking $\deriv_\val$, 
  $\tctx_\tm \eqdef \tctx_\tmtwo + \tctx_\tmthree$, $\tctx_\val$, 
  $\typtwo$, $\cm_\tm \eqdef \cm_\tmtwo + \cm_\tmthree$, 
  $\ce_\tm \eqdef \ce_\tmtwo + \ce_\tmthree$, $\cm_\val$, and $\ce_\val$,
  we obtain the derivation
  $\derivable
    {\deriv_\tm}
    {\judg
      [\cm_\tm,\ce_\tm + 1]
      {\tctx_\tm; \var : (\optmtyptwo + \mset\typtwo)}
      {\tmtwo \, \tmthree}
      \tightN}$,
  by applying rule \ruleTypAppP with $\deriv_\tmtwo$ and 
  $\deriv_\tmthree$ as premises.
  
  Moreover,
  \[
    \tctx = \tctx_{\tmtwo'} + \tctx_\tmthree 
  =_\ih \tctx_\tmtwo + \tctx_\val + \tctx_\tmthree 
  = \tctx_\tm + \tctx_\val \text{ and}
  \] 
  \[ 
    \cm = \cm_{\tmtwo'} + \cm_\tmthree 
  =_\ih \cm_\tmtwo + \cm_\val + \cm_\tmthree 
  = \cm_\tm + \cm_\val \text{ and}
  \]
  \[
    \ce = \ce_{\tmtwo'} + \ce_\tmthree 
  =_\ih \ce_\tmtwo + \ce_\val + \ce_\tmthree 
  = \ce_\tm + \ce_\val
  \]
\item \ruleUEsR.
  Then, the reduction step is of the form:
  \begin{mathpar}
    \inferrule
      {\tmthree \tov{\rulesub\var\val}{\aset \cup \set\var}\sset\nonapp \tmthree'}
      {\tm = \tmtwo\esub\vartwo\tmthree 
        \tov{\rulesub\var\val}{\aset \cup \set\var}\sset\appflag 
        \tmtwo\esub\vartwo{\tmthree'} = \tm'}
  \end{mathpar}

  The derivation $\deriv_{\tm'}$ can be obtained only by rule \ruleTypES, 
  so it is of the form:
  \begin{mathpar}
    \inferrule
      {\inferrule*
        {}
        {\derivable
          {\deriv_\tmtwo}
          {\judg[\cm_\tmtwo,\ce_\tmtwo]
            {\tctx_\tmtwo; \var : \optmtyptwo_\tmtwo; \vartwo : \optmtypthree}\tmtwo\mtyp}}
        \\ \optmtypthree \mleq \mtypthree
        \\ 
          \derivable
            {\deriv_{\tmthree'}}
            {\judg[\cm_{\tmthree'},\ce_{\tmthree'}]
              {\tctx_{\tmthree'}; \var : \optmtyptwo_{\tmthree'}}{\tmthree'}\mtypthree}}
      {\judg
        [\cm_\tmtwo + \cm_{\tmthree'}, \ce_\tmtwo + \ce_{\tmthree'}]
        {(\tctx_\tmtwo + \tctx_{\tmthree'}); \var : (\optmtyptwo_\tmtwo + \optmtyptwo_{\tmthree'})}
        {\tmtwo\esub\vartwo{\tmthree'}}
        \mtyp}
  \end{mathpar}
  where $\tctx = \tctx_\tmtwo + \tctx_{\tmthree'}$, 
  $\optmtyptwo = \optmtyptwo_\tmtwo + \optmtyptwo_{\tmthree'}$, 
  $\cm = \cm_\tmtwo + \cm_{\tmthree'}$, and $\ce = \ce_\tmtwo + \ce_{\tmthree'}$.

  Note that $\inv{\aset \cup \set\var}\sset{\tmtwo\esub\vartwo\tmthree}$
  implies in particular $\inv{\aset \cup \set\var}\sset\tmthree$.
  On the other hand,
  $\isAppr{\aset \cup \asettwo \cup \set\var}{\tctx; \var : \optmtyptwo}$
  implies in particular 
  $\isAppr{\aset \cup \asettwo \cup \set\var}{\tctx_{\tmthree'}; \var : \optmtyptwo_{\tmthree'}}$, 
  as $\tctx = \tctx_\tmtwo + \tctx_{\tmthree'}$ and 
  $\optmtyptwo = \optmtyptwo_\tmtwo + \optmtyptwo_{\tmthree'}$.
  Moreover, it holds that $\appflag = \nonapp$ in the reduction step
  $\tmthree \tov{\rulesub\var\val}{\aset \cup \set\var}\sset\nonapp \tmthree'$.
  We can then apply the \ih on this reduction step, yielding typing 
  derivations $\deriv_\tmthree,\deriv_\val$, typing environments 
  $\tctx_\tmthree,\tctx_\val$, an arrow type $\typtwo$, and 
  counters $(\cm_\tmthree,\ce_\tmthree),(\cm_\val,\ce_\val)$
  such that:
  \begin{enumerate}[label = (\arabic*')]
  \item
    $\derivable{\deriv_\tmthree}
      {\judg
        [\cm_\tmthree,\ce_\tmthree + 1]
        {\tctx_\tmthree; \var : (\optmtyptwo_{\tmthree'} + \mset\typtwo)}
        \tmthree
        \mtypthree}$
  \item
    $\derivable{\deriv_\val}{\judg[\cm_\val,\ce_\val]{\tctx_\val}\val{\mset\typtwo}}$
  \item
    $\tctx_{\tmthree'} = \tctx_\tmthree + \tctx_\val$
    and $\cm_{\tmthree'} = \cm_\tmthree + \cm_\val$
    and $e_{\tmthree'} = \ce_\tmthree + \ce_\val$
  \end{enumerate}

  Taking $\deriv_\val$, 
  $\tctx_\tm \eqdef \tctx_\tmtwo + \tctx_\tmthree$, 
  $\tctx_\val$, $\typtwo$, $\cm_\tm \eqdef \cm_\tmtwo + \cm_\tmthree$, 
  $\ce_\tm \eqdef \ce_\tmtwo + \ce_\tmthree$, $\cm_\val$, and $\ce_\val$,
  we obtain the derivation
  $\derivable
    {\deriv_\tm}
    {\judg[\cm_\tm,\ce_\tm + 1]{\tctx_\tm; \var : (\optmtyptwo + \mset\typtwo)}
      {\tmtwo\esub\vartwo\tmthree}
      \mtyp}$ by applying rule \ruleTypES with $\deriv_\tmtwo$ 
  and $\deriv_\tmthree$ as premises. 

  Moreover, 
  \[
    \tctx = \tctx_\tmtwo + \tctx_{\tmthree'}
  =_\ih \tctx_\tmtwo + \tctx_\tmthree + \tctx_\val
  = \tctx_\tm + \tctx_\val \text{ and}
  \]
  \[
    \cm = \cm_\tmtwo + \cm_{\tmthree'}
  =_\ih \cm_\tmtwo + \cm_\tmthree + \cm_\val
  = \cm_\tm + \cm_\val \text{ and}
  \]
  \[
    \ce = \ce_\tmtwo + \ce_{\tmthree'}
  =_\ih \ce_\tmtwo + \ce_\tmthree + \ce_\val
  = \ce_\tm + \ce_\val
  \qedhere
  \] 
\end{itemize}
\end{proof}

Exact subject expansion is analogous to exact subject reduction
but applies in the \emph{reverse direction}: given a typable term 
$\tm'$ and a reduction step $\tm \tov\rulename\aset\sset\appflag \tm'$, 
the term $\tm$ is typable as well,
with the same type and typing environment.
This proposition also provides quantitative information, 
as the counters in the typing derivation of $\tm$ are 
correctly adjusted with respect to the ones in the derivation of $\tm'$.
\begin{prop}[Exact Subject Expansion]
\label{prop:subject_expansion}
Let $\inv\aset\sset\tm$ hold, and suppose
$\tm \tov\rulename\aset\sset\appflag \tm'$
where $\rulename \in \set{\ruledb,\rulelsv}$ and 
$\derivable{\deriv'}{\judg[\cm',\ce']\tctx{\tm'}\mtyp}$, 
with $\isAppr\aset\tctx$.
Furthermore, assume that if $\appflag = \app$, then either 
$\mtyp = \tightN$ or $\mtyp$ is a singleton, \ie of the form 
$\mset\typ$.
Then, there exists a derivation $\deriv$ such that 
$\derivable\deriv{\judg[\cm,\ce]\tctx\tm\mtyp}$, where 
if $\rulename = \ruledb$, then $(\cm,\ce) = (\cm'+1,\ce')$, and
if $\rulename = \rulelsv$, then $(\cm,\ce) = (\cm',\ce'+1)$.
\end{prop}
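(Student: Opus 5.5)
We argue by induction on the derivation of $\tm \tov\rulename\aset\sset\appflag \tm'$, mirroring the proof of \autoref{prop:subject_reduction} in the reverse direction. The contextual closure rules (\ruleUEsR, \ruleUAppL, \ruleUAppR, \ruleUEsLAbs, \ruleUEsLStruct) are handled uniformly. We invert the last rule of $\deriv'$: \ruleTypES for closures, and \ruleTypAppP or \ruleTypAppC for applications. We then check that the side conditions of the inductive hypothesis still hold, and reassemble with the same rule. Three points need care in these cases.

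First, appropriateness. For \ruleUEsLAbs we must extend $\aset$ to $\aset\cup\set\var$, which requires that the type $\optmtyp$ assigned to $\var$ is not $\tightN$. This follows from \autoref{lem:systemU_types_of_hereditary_abstractions} applied to the derivation of the argument $\tmtwo\in\HAbs\aset$, together with $\optmtyp\mleq\mtyp$. For \ruleUEsLStruct nothing is needed, since $\var\notin\aset$.

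Second, the singleton/$\tightN$ requirement when the flag is $\app$. In \ruleUAppL the left subterm is typed either with $\tightN$ (rule \ruleTypAppP) or with a singleton $\mset{\optmtyp\to\mtyptwo}$ (rule \ruleTypAppC), so the hypothesis is met. The ES rules propagate $\appflag$ together with the same type $\mtyp$.

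Third, the invariant $\inv\aset\sset\cdot$ passes to subterms in the obvious way. The counters are additive in every rule, so the $+1$ lands where required.

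For the base case \ruleUDb, $(\lam\var\tm_0)\sctx\,\tmtwo \to \tm_0\esub\var\tmtwo\sctx$, we proceed by induction on $\sctx$. When $\sctx=\ctxhole$, $\deriv'$ ends in \ruleTypES with premises $\judg[\cm_1,\ce_1]{\tctx_1;\var:\optmtyp}{\tm_0}{\mtyp}$, $\optmtyp\mleq\mtyptwo$ and $\judg[\cm_2,\ce_2]{\tctx_2}{\tmtwo}{\mtyptwo}$. We build \ruleTypAbs with one premise to obtain $\mset{\optmtyp\to\mtyp}$ with the same counters, then apply \ruleTypAppC. This adds exactly one to $\cm$. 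When $\sctx=\sctxtwo\esub\vartwo\tmthree$, we push the outer ES past the application, using $\vartwo\notin\fv\tmtwo$ and \autoref{lem:relevance_systemU} to split the environment.

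The \ruleULsv base case, $\tmtwo\esub\var{\val\sctx}\to\tmtwo'\esub\var\val\sctx$ with $\tmtwo\tov{\rulesub\var\val}{\aset\cup\set\var}\sset\appflag\tmtwo'$, is the main obstacle. Again we induct on $\sctx$, where the inductive step is the same ES-commutation as above. When $\sctx=\ctxhole$, $\deriv'$ ends in \ruleTypES with $\judg[\cm_1,\ce_1]{\tctx_1;\var:\optmtyp}{\tmtwo'}{\mtyp}$, $\optmtyp\mleq\mtyptwo$ and $\deriv^2_\val$ typing $\val$ with $\mtyptwo$.

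We apply \autoref{lem:systemU_anti-linear_substitution_ucbv} with $\aset_0:=\aset$, $\asettwo:=\emptyset$ and $\val\in\HAbs\aset$. Its appropriateness premise $\isAppr{\aset\cup\set\var}{\tctx_1;\var:\optmtyp}$ holds for the following reasons. On $\aset$ it holds because $\tctx$ is appropriate. For $\var$ itself, it holds because $\optmtyp\neq\tightN$: this follows from \autoref{lem:systemU_types_of_hereditary_abstractions} on $\deriv^2_\val$ (giving $\mtyptwo\neq\tightN$) and $\optmtyp\mleq\mtyptwo$. The $\inv{\aset\cup\set\var}\sset\tmtwo$ premise comes from $\inv\aset\sset\tm$.

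The lemma yields $\deriv_\tmtwo$ with $\var:(\optmtyp+\mset\typtwo)$ and counter $(\cm_\tmtwo,\ce_\tmtwo+1)$, and $\deriv^1_\val$ of type $\mset\typtwo$. We merge $\deriv^1_\val$ and $\deriv^2_\val$ via \autoref{lem:systemU_split-merge-value_counters} into a derivation of $\val$ of type $\mset\typtwo+\mtyptwo$. This sum is well-defined because $\mtyptwo$ is an arrow multi-type. By \autoref{lem:vscne_type_splitting}, $\optmtyp+\mset\typtwo\mleq\mset\typtwo+\mtyptwo$. Rule \ruleTypES then gives $\tm$ the same environment and type, with counters $(\cm',\ce'+1)$.

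The delicate points are ensuring $\optmtyp+\mset\typtwo$ is defined, which needs $\optmtyp\neq\tightN$, and matching the environments exactly, with $\var\notin\dom\tctx$ by relevance.
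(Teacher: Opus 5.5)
Your skeleton matches the paper's proof: induction on the step, inversion and reassembly for the closure rules, and in the \ruleULsv case \autoref{lem:systemU_anti-linear_substitution_ucbv} followed by merging and \autoref{lem:vscne_type_splitting}. Your \ruleUDb case and your \ruleULsv case with $\sctx=\ctxhole$ are correct. The gap is the claim that the inductive step on $\sctx$ for \ruleULsv is ``the same ES-commutation as above''. For \ruleUDb the rule has no side conditions, so the inner induction is a pure typing statement. For \ruleULsv, however, the inner hypothesis must be fed the step $\tmtwo\esub\var{\val\sctxtwo}\tov\lsvsym\aset\sset\appflag\tmtwo'\esub\var\val\sctxtwo$, and this step need not exist. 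From $\val\sctxtwo\esub\vartwo\tmthree\in\HAbs\aset$ you may only get $\val\sctxtwo\in\HAbs{\aset\cup\set\vartwo}$ (fourth rule of \autoref{fig:HAbs_Struct}). Moreover $\vartwo$ may be free in $\val\sctxtwo$, so $\inv\aset\sset{\tmtwo\esub\var{\val\sctxtwo}}$ also fails. The working example shows this: for $(\var\,\varthree)\esub\var{\vartwo\esub\vartwo\id}$ with $\aset=\emptyset$, the inner term is $(\var\,\varthree)\esub\var\vartwo$ and $\vartwo\notin\HAbs\emptyset$. For the same reason your base-case instantiation ($\val\in\HAbs\aset$, $\asettwo:=\emptyset$) is not available. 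There $\val=\vartwo$ is a hereditary abstraction only thanks to the ES in $\sctx$, and the derivation of $\tm'$ types $\vartwo$ with $\mset{\tightN\to\tightN}$. The \ruleUSub case of the lemma needs to know that this variable's type is not $\tightN$.

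This is exactly what the parameter $\asettwo$ of the anti-linear lemma is for: its appropriateness hypothesis ranges over $\aset\cup\asettwo\cup\set\var$, while its invariant ranges only over $\aset\cup\set\var$. The variables bound in $\sctx$ are not free in $\tmtwo$, but they reach $\tmtwo'$ through $\val$. To repair the argument, fix the \ruleULsv instance and prove by induction on $\sctx$ a generalised claim. For every $\asettwo$ disjoint from $\aset$ and every $\tctx$ with $\isAppr{\aset\cup\asettwo}\tctx$, if $\val\sctx\in\HAbs{\aset\cup\asettwo}$, then every derivation of $\tmtwo'\esub\var\val\sctx$ under $\tctx$ expands to one of $\tmtwo\esub\var{\val\sctx}$ with the second counter increased by one. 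In the inductive step, if $\val\sctxtwo\esub\vartwo\tmthree\in\HAbs{\aset\cup\asettwo}$ comes from the fourth rule, call the hypothesis with $\asettwo\cup\set\vartwo$. Appropriateness at $\vartwo$ comes from \autoref{lem:systemU_types_of_hereditary_abstractions} applied to the derivation of $\tmthree$: its type satisfies $\mtypthree\neq\tightN$, hence $\optmtypthree\neq\tightN$. Otherwise keep $\asettwo$; then perform your commutation. In the base case, invoke the lemma with $\aset_0:=\aset$ and the accumulated $\asettwo$. The rest of your computation is unchanged, except that \autoref{lem:systemU_types_of_hereditary_abstractions} is now applied under $\aset\cup\asettwo$. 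Alternatively, you can enlarge the frame itself and transport the substitution sub-step with \autoref{lem:ucbv_weakening_of_reduction_sets}, but this requires analogous bookkeeping.
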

\begin{proof}
By induction on $\tm \tov\rulename\aset\sset\appflag \tm'$, where in particular, \autoref{lem:systemU_anti-linear_substitution_ucbv} is used in the case
$\tm = \tmtwo\esub\var{\val\sctx} \tov\lsvsym\aset\sset\appflag \tmtwo'\esub\var\val\sctx = \tm'$.
\end{proof}

We now conclude with the quantitative completeness of system
$\typesystem$ with respect to \UOCBV evaluation. 
That is, given a terminating \UOCBV reduction sequence from $\tm$ 
with exactly $\cm$ $\dbsym$-steps and $\ce$ $\lsvsym$-steps, there 
exists a tight derivation of $\tm$ with counters $(\cm,\ce)$.

\begin{thm}[Exact Completeness of System $\typesystem$ with Respect to \UOCBV]
\label{thm:quantitative_completeness_typing} 
Let $\inv\aset\sset\tm$ and consider a reduction sequence 
$\tm\ \tovTopb{\aset}{\sset}^n\ \tmtwo$, where $\tmtwo$ is 
$\tovTopb{\aset}{\sset}$-irreducible. 
Let $n = \cm + \ce$, where $\cm$ and $\ce$ are respectively the 
numbers of $\dbsym$- and $\lsvsym$-steps in the sequence. 
Then, there exists a tight environment $\tctx$ and a tight type 
$\tight$ such that $\isAppr\aset\tctx$ and 
$\derivable{}{\judg[\cm,\ce]\tctx\tm\tight}$.
\end{thm}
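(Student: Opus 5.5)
We argue by induction on the length $n$ of the reduction sequence $\tm\ \tovTopb{\aset}{\sset}^n\ \tmtwo$, mirroring the structure of \autoref{thm:completeness_V_vscne} but keeping track of the counters and of tightness. Along the induction the frames $\aset,\sset$ stay fixed. Because reduction is weak and linear, free variables are preserved: $\fv\tm = \fv{\tm'}$ along each step. Hence the invariant $\inv\aset\sset{\cdot}$ holds for every term in the sequence. The statement we prove produces a tight environment $\tctx$ with $\isAppr\aset\tctx$. We do not need $\tctx$ to literally be $\TEnv\aset\sset\tm$ for intermediate terms; only tightness and appropriateness are required.

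\emph{Base case} $n=0$. Here $\tm=\tmtwo$ is $\tovTopb\aset\sset$-irreducible, so it is in particular $(\rulename,\aset,\sset,\nonapp)$-irreducible for $\rulename\in\set{\dbsym,\lsvsym}$. We need $\tm \in \NF\aset\sset\nonapp$, via \autoref{thm:uocbv_characterisation_of_normal_forms}. One subtlety is that $\Irred\aset\sset\nonapp$ refers to all step kinds, including $\rulesub\var\val$. However, a $\rulesub$-step at flag $\nonapp$ at top level can only occur inside an \ruleULsv step. Since $\var\notin\fv\tm$ for the top-level frames (by the side conditions $\var\notin\fv\rulename$ in the closure rules), no bare $\rulesub$-step can arise at the root. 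So primary irreducibility coincides with irreducibility in the sense of the theorem; I would verify this with a short lemma by inspection of the rules.

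We then apply \autoref{prop:nfs_are_tight_typable} to get $\judg[0,0]{\TEnv\aset\sset\tm}\tm\tight$. The environment $\TEnv\aset\sset\tm$ is tight by construction. It is appropriate for $\aset$ because variables in $\aset$ receive $\emset$ or $\none$, never $\tightN$, using $\aset\cap\sset=\emptyset$. Since $(\cm,\ce)=(0,0)$, this case is done.

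\emph{Inductive case} $n>0$. Write $\tm \tov\rulename\aset\sset\nonapp \tm'$ with $\tm'\ \tovTopb\aset\sset^{n-1}\ \tmtwo$, where the tail has $\cm'$ $\dbsym$-steps and $\ce'$ $\lsvsym$-steps. The \ih gives a tight, $\aset$-appropriate $\tctx$ and $\judg[\cm',\ce']\tctx{\tm'}\tight$. We then invoke \autoref{prop:subject_expansion} with flag $\nonapp$, so the side condition on $\app$ is vacuous, and with $\inv\aset\sset\tm$ and $\isAppr\aset\tctx$. This yields $\judg[\cm,\ce]\tctx\tm\tight$ with the same environment and type, hence still tight and appropriate. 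The counters satisfy $(\cm,\ce)=(\cm'+1,\ce')$ if $\rulename=\dbsym$ and $(\cm',\ce'+1)$ if $\rulename=\lsvsym$. These match the step counts of the whole sequence.

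The main obstacle is the base case: identifying primary-irreducible terms with $\NF\aset\sset\nonapp$ so that \autoref{prop:nfs_are_tight_typable} applies. All the heavy lifting of the inductive step is delegated to \autoref{prop:subject_expansion}, and the remaining work there is checking that the invariants $\inv\aset\sset{\cdot}$ and $\isAppr\aset{\cdot}$ are preserved.
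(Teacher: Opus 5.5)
Your induction skeleton is the paper's. The base case goes through \autoref{thm:uocbv_characterisation_of_normal_forms} and \autoref{prop:nfs_are_tight_typable}, and the inductive step through \autoref{prop:subject_expansion} at flag $\nonapp$. Your remarks on the preservation of $\fv{\cdot}$ (hence of $\inv\aset\sset{\cdot}$), and on $\TEnv\aset\sset\tm$ being tight and $\aset$-appropriate, are correct.

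The gap is the lemma you propose for the base case: that $\tovTopb\aset\sset$-irreducible terms lie in $\Irred\aset\sset\nonapp$. It is false. Take $\aset=\set\var$, $\sset=\set\vartwo$ and $\tm=\var\,\vartwo$, so that $\inv\aset\sset\tm$ holds.
\begin{itemize}
\item Rule \ruleUSub gives $\var \tov{\rulesub\var\id}{\set\var}{\set\vartwo}\app \id$.
\item Rule \ruleUAppL lifts this to $\var\,\vartwo \tov{\rulesub\var\id}{\set\var}{\set\vartwo}\nonapp \id\,\vartwo$, a bare $\rulesub\var\id$-step at the root.
\item Meanwhile $\var\,\vartwo$ has no $\dbsym$- or $\lsvsym$-step: its head is not an abstraction, it has no ES, and \ruleUAppR is blocked because $\var\notin\Struct\sset$.
\end{itemize}
So $\var\,\vartwo$ is $\tovTopb\aset\sset$-irreducible but not in $\NF\aset\sset\nonapp$.

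Your justification breaks because the side condition $\var\notin\fv\rulename$ of \ruleUEsLAbs and \ruleUEsLStruct only concerns the variable bound by the ES being crossed. It stops a substitution step on an ES-bound variable from escaping its binder. It says nothing about a free variable of $\tm$ already in $\aset$, which \ruleUSub can hit and which \ruleUAppL, \ruleUAppR, \ruleUEsR or the ES-left rules then carry to the root. What an easy induction on the step does give is: if $\tm \tov{\rulesub\var\val}\aset\sset\appflag \tm'$ then $\var\in\aset\cap\fv\tm$.

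This is not only a defect of your argument. For the term above there is no tight $\tctx$ with $\isAppr{\set\var}\tctx$ and $\judg[0,0]\tctx{\var\,\vartwo}\tight$. The head $\var$ is typed either with $\tightN$ (rule \ruleTypAppP), forcing $\tctx(\var)=\tightN$ against appropriateness, or with a non-empty arrow multi-type (rule \ruleTypAppC), against tightness. So the statement as given needs an extra hypothesis. The paper's own proof has the same hole: it simply asserts $\tm\in\NF\aset\sset\nonapp$ in the base case.

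You were right to isolate this point, but the repair is to strengthen the assumptions rather than to prove your lemma. Either of these works:
\begin{itemize}
\item Assume $\aset\cap\fv\tm=\emptyset$. This covers the top-level choice $\aset=\emptyset$ used in all the paper's examples, and it is preserved along the sequence because free variables are. Under it, the fact just stated yields exactly your lemma.
\item Read ``$\tmtwo$ is irreducible'' as $\tmtwo\in\Irred\aset\sset\nonapp$.
\end{itemize}
With either fix, the rest of your proof goes through verbatim.
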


\begin{proof}
We proceed by induction on $n$:
\begin{enumerate}
\item 
  Base case: $n = 0$.
  Then, we have that $\tm \in \NF{\aset}{\sset}{\nonapp}$,
  and $\cm = \ce = 0$, as $0 = \cm + \ce$.
  By~\autoref{prop:nfs_are_tight_typable}, there exists a 
  tight type derivation $\deriv$ and a tight type $\tight$ such that 
  $\derivable\deriv{\judg[0,0]{\TEnv{\aset}{\sset}\tm}\tm\tight}$
  holds, with $\TEnv{\aset}{\sset}\tm$ a tight environment.
  Thus, we are done.
\item 
  Inductive case: $n > 0$, assuming the statement holds for $n - 1$. 
  Then, in particular, $\tm \notin \NF{\aset}{\sset}{\nonapp}$.
  The reduction sequence is of the form:
  \[
    \tm \tov{\rulename_1}{\aset}{\sset}{\nonapp} \tm' 
        \tov{\rulename_2}{\aset}{\sset}{\nonapp} \hdots 
        \tov{\rulename_n}{\aset}{\sset}{\nonapp} \tmtwo
  \] 
  where $\rulename_{i} \in \set{\dbsym,\lsvsym}$ 
  for all $i \in \set{1..n}$, and $n - 1 = \cmtwo + \cetwo$, $\cmtwo$ and 
  $\cetwo$ correspond to the number of $\dbsym$- and $\lsvsym$-steps
  in the reduction sequence from $\tm'$ to $\tmtwo$.

  By the \ih, there is a tight typing derivation $\deriv'$ such that 
  $\isAppr\aset\tctx$ and
  $\derivable{\deriv'}{\judg[\cmtwo,\cetwo]\tctx{\tm'}\tight}$.
  We can then apply \autoref{prop:subject_expansion}, yielding a typing derivation $\deriv$
  such that $\derivable\deriv{\judg[\cm,\ce]\tctx\tm\tight}$, where
  if $\rulename_1 = \dbsym$, then $(\cm,\ce) = (\cmtwo + 1, \cetwo)$ and
  if $\rulename_1 = \lsvsym$ then $(\cm,\ce) = (\cmtwo, \cetwo + 1)$. 
  Since $\deriv$ is also tight because~\autoref{prop:subject_expansion} preserves the type environment and the type, then we are done.
  \qedhere
\end{enumerate}
\end{proof}

To illustrate this result, take the reduction sequence starting from
$\tm = (\var \, \varthree)\esub\var{\vartwo\esub\vartwo\id}$ and 
ending in
$\var_1\esub{\var_1}\varthree\esub\var\vartwo\esub\vartwo\id \in 
\NF\emptyset{\set\varthree}\nonapp$ (see \autoref{ex:reduction-for-tight}).
The length of this sequence is $3 = \cm + \ce$, with 
$\cm = 1$ $\dbsym$-step and $\ce = 2$ $\lsvsym$-steps.
\autoref{ex:derivation-for-tight} shows a tight environment 
$\varthree : \tightN$ and a tight type $\tightN$ such that 
$\derivable{\derivtwo'}{\judg[1,2]{\varthree : \tightN}\tm\tightN}$.

Theorems~\ref{thm:quantitative_soundness_typing} and~\ref{thm:quantitative_completeness_typing} state that a term 
$\tm$ is $\typesystem$-typable if and only if $\tm$ is 
\UOCBV-\emph{weakly terminating}.
The characterisation of \UOCBV-termination through $\typesystem$-typability follows 
from these two theorems and the fact that all reduction sequences to normal form have the same length
in \UOCBV evaluation, following~\autoref{cor:consequences-dp}.

We can then first state the following qualitative result:
\begin{cor}[Qualitative Characterisation of \UOCBV-Termination through $\typesystem$-Typability]
\label{cor:qualitative_characterisation_of_termination}
A term $\tm$ is $\typesystem$-typable if and only if $\tm$ terminates in \UOCBV. 
\end{cor}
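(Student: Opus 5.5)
The plan is to derive the corollary from Theorems~\ref{thm:quantitative_soundness_typing} and~\ref{thm:quantitative_completeness_typing}, fixing the parameters canonically. For a term $\tm$, take $\aset \eqdef \emptyset$ and $\sset \eqdef \fv\tm$. Then $\inv\aset\sset\tm$ holds trivially: the frames are disjoint and $\fv\tm \subseteq \aset\cup\sset$. Moreover, every environment is appropriate with respect to the empty abstraction frame, so $\isAppr\emptyset\tctx$ holds for any $\tctx$. ``$\tm$ terminates in \UOCBV'' is read as termination of primary reduction $\tovTopb{\emptyset}{\fv\tm}$. By \autoref{cor:consequences-dp}, this is equivalent to weak termination of that relation.

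For the ($\Rightarrow$) direction, suppose $\derivable\deriv{\judgs\tctx\tm\mtyp}$ in $\typesystem$. As remarked after \autoref{fig:systemU_counters}, the counters can be added to $\deriv$, giving $\judg[\cm,\ce]\tctx\tm\mtyp$ for some $(\cm,\ce)$. \autoref{thm:quantitative_soundness_typing} with $\aset=\emptyset$ and $\sset=\fv\tm$ applies, since its hypotheses are exactly the invariant and appropriateness checked above; tightness is not needed for the qualitative part. It yields a reduction $\tm \tovTopb{\emptyset}{\fv\tm}^k \tmtwo$ with $\tmtwo$ irreducible. So $\tm$ is weakly terminating, and hence terminating by the first item of \autoref{cor:consequences-dp}.

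For the ($\Leftarrow$) direction, termination in particular gives some reduction sequence to a $\tovTopb{\emptyset}{\fv\tm}$-irreducible term, of some length $n=\cm+\ce$. \autoref{thm:quantitative_completeness_typing} under the same parameters produces a tight derivation $\judg[\cm,\ce]\tctx\tm\tight$. Erasing its counters gives a derivation in $\typesystem$, so $\tm$ is $\typesystem$-typable.

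I expect no real obstacle, only bookkeeping. One point is making explicit that typability with and without counters coincide. The other is that the choice of frames is legitimate, since primary \UOCBV is defined with free variables placed in $\aset\cup\sset$. Any admissible choice works, provided the same frames are used on both sides. If the intended reading of ``terminates'' instead allows arbitrary $\aset,\sset$ with $\inv\aset\sset\tm$, the same argument still goes through, because the soundness theorem only requires $\isAppr\aset\tctx$. The only delicate case is soundness with $\aset\neq\emptyset$, where the given derivation might assign $\tightN$ to some variable in $\aset$ and so fail appropriateness. The clean way around this is to fix $\aset=\emptyset$ as above.
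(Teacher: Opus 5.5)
Your main argument is the paper's. \autoref{thm:quantitative_soundness_typing} and \autoref{thm:quantitative_completeness_typing} give typability if and only if weak termination, and the first item of \autoref{cor:consequences-dp} turns weak termination into termination. Fixing $\aset=\emptyset$ and $\sset=\fv\tm$ makes explicit a choice the paper leaves implicit, and with it every step you list is fine.

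Your closing remark about arbitrary frames, however, is false. The direction that actually breaks is ($\Leftarrow$), not the appropriateness issue you flag in ($\Rightarrow$).

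\emph{Counterexample.} Take $\tm=\var\,\Omega$, $\aset=\set\var$, $\sset=\emptyset$. The invariant $\inv\aset\sset\tm$ holds, and $\tm$ is $\tovTopb{\aset}{\sset}$-irreducible:
\begin{itemize}
\item no \ruleUDb applies;
\item the head $\var$ has no $\dbsym$- or $\lsvsym$-step, so \ruleUAppL does not apply;
\item \ruleUAppR is blocked because $\var\notin\Struct\emptyset$.
\end{itemize}
Yet $\tm$ is not $\typesystem$-typable. Any derivation must type $\Omega$, and $\Omega$ is not \VSCne-terminating, so this contradicts \autoref{thm:soundness_U_vscne}.

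\emph{What goes wrong.} The base case of \autoref{thm:quantitative_completeness_typing} passes from primary irreducibility to $\tm\in\NF{\aset}{\sset}{\nonapp}$ through \autoref{thm:uocbv_characterisation_of_normal_forms}. But $\Irred\aset\sset\nonapp$ also excludes $\rulesub\var\val$-steps, and $\var\,\Omega$ has one. An easy induction on the step shows that a $\rulesub\var\val$-step under frame $\aset$ requires $\var\in\aset\cap\fv\tm$. Only \ruleUSub creates this step kind, and \ruleUEsLAbs only adds bound variables different from $\var$. So the passage is safe whenever $\aset\cap\fv\tm=\emptyset$.

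Your choice $\aset=\emptyset$ is therefore not merely the ``clean'' option. It is what makes the base cases of both theorems, and hence the corollary, go through.
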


Moreover, this characterisation captures quantitative information on the number of \UOCBV evaluation steps \emph{exactly}:
\begin{cor}[Exact Characterisation of \UOCBV-Termination through Tight $\typesystem$-Typability]
\label{cor:quantitative_characterisation_of_termination}
A term $\tm$ is tightly $\typesystem$-typable with counters $(\cm,\ce)$ if and only if $\tm$ terminates in \UOCBV after $\cm$
function application $\dbsym$-steps and $\ce$ substitution $\lsvsym$-steps. 
\end{cor}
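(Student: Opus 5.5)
The plan is to derive both directions from the exact soundness and exact completeness theorems, taken under the canonical parameters. For a term $\tm$, I will fix frames with $\aset \eqdef \emptyset$ and $\sset \eqdef \fv\tm$. Then $\inv\aset\sset\tm$ holds trivially. Moreover, every environment is appropriate for $\aset = \emptyset$, so the appropriateness side condition $\isAppr\aset\tctx$ in both theorems comes for free. With this choice, ``$\tm$ terminates in \UOCBV after $\cm$ $\dbsym$-steps and $\ce$ $\lsvsym$-steps'' means that there is a primary reduction sequence $\tm \tovTopb{\aset}{\sset}^{\cm+\ce} \tmtwo$ to an irreducible $\tmtwo$ with that split of step kinds. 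By \autoref{cor:consequences-dp}, this split does not depend on the chosen sequence.

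For the \emph{only if} direction, assume a tight derivation $\derivable\deriv{\judg[\cm,\ce]\tctx\tm\tight}$. I will apply \autoref{thm:quantitative_soundness_typing} with $\aset,\sset$ as above. This yields a reduction to a $\tovTopb\aset\sset$-irreducible term. Because $\deriv$ is tight, the sequence uses exactly $\cm$ $\dbsym$-steps and $\ce$ $\lsvsym$-steps. Weak termination plus the diamond property (\autoref{cor:consequences-dp}) upgrades this to termination. The same corollary shows that every evaluation sequence has the same numbers of steps of each kind, so the counts are intrinsic to $\tm$.

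For the \emph{if} direction, assume $\tm$ terminates. I will take any evaluation sequence; its counts $(\cm,\ce)$ are well defined by \autoref{cor:consequences-dp}. Then I will apply \autoref{thm:quantitative_completeness_typing} to obtain a tight environment $\tctx$, a tight type $\tight$, and a derivation $\judg[\cm,\ce]\tctx\tm\tight$. The qualitative \autoref{cor:qualitative_characterisation_of_termination} follows the same way. Typability of $\tm$ with any counters and any environment gives termination by the first part of \autoref{thm:quantitative_soundness_typing}, provided the environment is appropriate for $\emptyset$, which is automatic. Conversely, completeness gives typability.

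The main obstacle is pinning down which frames the statement refers to. The soundness theorem is stated for arbitrary $\aset,\sset$ satisfying the invariant, so I must argue that the corollary implicitly means the canonical choice $\aset=\emptyset$, $\sset=\fv\tm$, in which all free variables are treated as structures. This is consistent with the worked example, which uses $\NF\emptyset{\set\varthree}\nonapp$. I also need to check that the soundness proof's counter bookkeeping really concerns primary steps only, namely $\dbsym$ and $\lsvsym$ and not $\rulesub\var\val$. That holds because \autoref{prop:subject_reduction} is stated only for $\rulename \in \set{\ruledb,\rulelsv}$.
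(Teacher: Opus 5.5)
Your proposal is correct and follows the paper's own route: the corollary is obtained by combining \autoref{thm:quantitative_soundness_typing} and \autoref{thm:quantitative_completeness_typing}, with \autoref{cor:consequences-dp} used both to pass between weak termination and termination and to make the $\dbsym$/$\lsvsym$ counts independent of the chosen evaluation sequence. Your explicit choice $\aset = \emptyset$, $\sset = \fv{\tm}$ discharges $\inv{\aset}{\sset}{\tm}$ and $\isAppr{\aset}{\tctx}$ automatically, and it only makes precise the frames the paper leaves implicit; the same argument works for any frames satisfying the invariant, provided $\isAppr{\aset}{\tctx}$ is carried along as a hypothesis on the tight derivation.
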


\section{Observational Equivalence Between the Calculi}
\label{sec:obs_eq}
In this section, we assemble all the previous results to conclude with the main contribution in this article.

\begin{cor}
\label{coro:main_result}
Let $\tm$ be a term.
The following statements are equivalent:
\begin{enumerate}
\item \label{it:v_typable}
 $\tm$ is $\systemV$-typable.
\item \label{it:u_typable}
  $\tm$ is $\typesystem$-typable.
\item \label{it:vsc_terminating}
  $\tm$ is \VSC-terminating.
\item \label{it:vscne_terminating}
  $\tm$ is \VSCne-terminating.
\item \label{it:uocbv_terminating}
  $\tm$ is \UOCBV-terminating.
\end{enumerate}
\end{cor}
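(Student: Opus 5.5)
The plan is to assemble the equivalence purely from the characterisation results already established, using the two type systems as pivots, in the style of the methodology of~\cite{Kesner16}. No new operational reasoning should be needed; the work is in chaining the right implications and checking that the notions of ``terminating'' used in each result agree.

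First, for (\ref{it:v_typable})$\Leftrightarrow$(\ref{it:vsc_terminating}) I will invoke \autoref{t:charact-V-VSC}. For (\ref{it:v_typable})$\Leftrightarrow$(\ref{it:vscne_terminating}) I will use \autoref{cor:v-characterises-vscne}. For (\ref{it:u_typable})$\Leftrightarrow$(\ref{it:vscne_terminating}) I will use \autoref{cor:u-characterises-vscne}; together with the previous step this also yields (\ref{it:v_typable})$\Leftrightarrow$(\ref{it:u_typable}), which is exactly \autoref{cor:v-and-u}. Finally, for (\ref{it:u_typable})$\Leftrightarrow$(\ref{it:uocbv_terminating}) I will invoke \autoref{cor:qualitative_characterisation_of_termination}. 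Since every statement is then connected to (\ref{it:v_typable}) or (\ref{it:u_typable}) by a biconditional, and these two are equivalent, transitivity of $\Leftrightarrow$ gives the pairwise equivalence of all five.

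The only delicate point is the meaning of ``\UOCBV-terminating'' in (\ref{it:uocbv_terminating}), which concerns primary reduction $\tovTopb\aset\sset$. The soundness and completeness theorems (\autoref{thm:quantitative_soundness_typing} and \autoref{thm:quantitative_completeness_typing}) are stated for parameters satisfying $\inv\aset\sset\tm$, and they produce or require an environment with $\isAppr\aset\tctx$. They also literally speak about weak termination, that is, the existence of a reduction sequence to normal form.

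I would handle this in two steps. First, I take the canonical parameters $\aset=\emptyset$ and $\sset=\fv\tm$. These satisfy $\inv\aset\sset\tm$, and every environment is trivially appropriate with respect to $\emptyset$; this is presumably the convention already fixed in \autoref{cor:qualitative_characterisation_of_termination}. Second, I upgrade weak termination to (strong) termination using \autoref{cor:consequences-dp}, which follows from the diamond property. With that observation the corollary applies directly, and the whole proof is a short chain of citations.
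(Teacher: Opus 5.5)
Your proposal is correct and matches the paper's proof, which also just chains \autoref{t:charact-V-VSC}, \autoref{cor:u-characterises-vscne}, \autoref{cor:qualitative_characterisation_of_termination} and \autoref{cor:v-equiv-vscne} by transitivity; the only difference is that the paper links (3) and (4) via \autoref{cor:v-equiv-vscne}, whereas you pass through (1) using \autoref{cor:v-characterises-vscne}, from which that corollary is derived anyway. Your explicit choice $\aset=\emptyset$, $\sset=\fv{\tm}$, together with the appeal to \autoref{cor:consequences-dp}, only makes explicit what the paper leaves implicit in \autoref{cor:qualitative_characterisation_of_termination}, and it is the choice actually needed: putting free variables in $\aset$ would make, e.g., $\var\,\Omega$ primary-irreducible even though it is not typable.
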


\begin{proof}
The double implication (\ref{it:u_typable} $\iff$ \ref{it:vscne_terminating}) is~\autoref{cor:u-characterises-vscne}.
The double implication (\ref{it:u_typable} $\iff$ \ref{it:uocbv_terminating}) is~\autoref{cor:qualitative_characterisation_of_termination}; 
thus the type system $\typesystem$ connects \VSCne and \UOCBV.
Therefore (\ref{it:u_typable} $\iff$ \ref{it:vscne_terminating} $\iff$ \ref{it:uocbv_terminating}).

On the other hand, (\ref{it:vsc_terminating} $\iff$ \ref{it:vscne_terminating}) is~\autoref{cor:v-equiv-vscne}, 
thus the type system $\systemV$ connects both \VSC and \VSCne (\autoref{t:charact-V-VSC} and~\autoref{cor:v-characterises-vscne}, respectively).

Last, the double implication (\ref{it:v_typable} $\iff$ \ref{it:vsc_terminating}) is~\autoref{t:charact-V-VSC}.

Thus we conclude that all items are equivalent.
\end{proof}

An interesting consequence of this corollary is that the type systems $\systemV$ and $\typesystem$ type the same terms:
\begin{cor}[$\systemV$-Typability Characterises \UOCBV-Termination]
  A term $\tm$ is $\systemV$-typable if and only if $\tm$
  is \UOCBV-terminating.
\end{cor}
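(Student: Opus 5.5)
The statement follows by chaining equivalences already established; no new type-theoretic argument is needed. I would use item (\ref{it:uocbv_terminating}) $\iff$ item (\ref{it:v_typable}) of \autoref{coro:main_result}, passing through $\typesystem$-typability and \VSCne-termination.

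\emph{Left to right.} Suppose $\tm$ is $\systemV$-typable. By \autoref{cor:v-characterises-vscne}, $\tm$ is \VSCne-terminating. By \autoref{cor:u-characterises-vscne}, $\tm$ is then $\typesystem$-typable (equivalently, use \autoref{cor:v-and-u} directly). Finally, \autoref{cor:qualitative_characterisation_of_termination} shows that $\tm$ terminates in \UOCBV.

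\emph{Right to left.} Suppose $\tm$ is \UOCBV-terminating. By \autoref{cor:qualitative_characterisation_of_termination}, $\tm$ is $\typesystem$-typable. By \autoref{cor:v-and-u}, it is therefore $\systemV$-typable.

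The only step needing care is the meaning of ``\UOCBV-terminating''. Two readings are possible: termination of primary reduction $\tovTopb{\aset}{\sset}$ for frames with $\inv\aset\sset\tm$, or weak termination. By \autoref{cor:consequences-dp}, which follows from the diamond property, these coincide. Hence the soundness/completeness pair of \autoref{thm:quantitative_soundness_typing} and \autoref{thm:quantitative_completeness_typing} yields \autoref{cor:qualitative_characterisation_of_termination} in the strong-termination sense required here.

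There is also a frame-choice subtlety. Completeness produces an environment satisfying $\isAppr\aset\tctx$, but soundness must be applied under some frames satisfying $\inv\aset\sset\tm$. The natural choice is $\aset=\emptyset$ and $\sset=\fv\tm$. Under this choice $\isAppr\emptyset\tctx$ holds trivially, so soundness applies to any derivation. I expect no real obstacle beyond this bookkeeping, since all the substantive work sits in the cited corollaries.
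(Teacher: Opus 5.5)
Your proposal is correct and is essentially the paper's argument: the statement is read off from items (1) and (5) of \autoref{coro:main_result}, which is itself obtained by chaining \autoref{cor:v-characterises-vscne}, \autoref{cor:u-characterises-vscne} (or \autoref{cor:v-and-u}) and \autoref{cor:qualitative_characterisation_of_termination}, exactly as you do. Your remarks on weak versus strong termination (via \autoref{cor:consequences-dp}) are sound bookkeeping that the paper leaves implicit, and so is your choice of $\aset=\emptyset$, $\sset=\fv\tm$, which makes $\isAppr\emptyset\tctx$ vacuous.
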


We can conclude this section by emphasising an important outcome.
Given a \CBV reduction system $\relation$,
we define $\tm$ to be \defn{observationally equivalent}
to $\tmtwo$, written $\tm \cong_{\relation} \tmtwo$, if and only if
for every weak context $\wctx$ we have that
$\ctxof{\wctx}{\tm}$ $\relation$-terminates if and only if
$\ctxof{\wctx}{\tmtwo} $ $\relation$-terminates.
Then two \CBV reduction systems $\relation_1$ and $\relation_2$ are 
\defn{observationally equivalent}, written $\relation_1 \cong \relation_2$, 
if and only if for every $\tm, \tmtwo$, we have
$\tm \cong_{\relation_1} \tmtwo \iff \tm \cong_{\relation_2} \tmtwo $.

\autoref{coro:main_result} above establishes 
that the (potentially) erasing \VSC, and the non-erasing calculi
\VSCne and \UOCBV are pairwise
observationally equivalent.

\begin{cor} \label{cor:obseq} \VSC $\cong$ \VSCne $\cong$ \UOCBV. \end{cor}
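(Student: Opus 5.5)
Because \autoref{coro:main_result} has shown that \VSC, \VSCne and \UOCBV terminate on exactly the same terms, observational equivalence comes down to a uniform argument. Each calculus has its own notion of termination, but by that corollary all three coincide with $\typesystem$-typability (equivalently $\systemV$-typability). I therefore prove that for all $\relation_1,\relation_2 \in \set{\text{\VSC},\text{\VSCne},\text{\UOCBV}}$ and all terms $\tm,\tmtwo$, $\tm \cong_{\relation_1} \tmtwo$ iff $\tm \cong_{\relation_2} \tmtwo$. Pairwise equivalence then follows directly, and transitivity of $\iff$ yields the chain \VSC $\cong$ \VSCne $\cong$ \UOCBV.

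The key step is as follows. Fix a weak context $\wctx$. The plugged terms $\ctxof\wctx\tm$ and $\ctxof\wctx\tmtwo$ are ordinary terms of the common syntax, which all three calculi share, so \autoref{coro:main_result} applies to each of them. Hence $\ctxof\wctx\tm$ is $\relation_1$-terminating iff it is $\relation_2$-terminating, and likewise for $\ctxof\wctx\tmtwo$. This makes the biconditional ``$\ctxof\wctx\tm$ $\relation_1$-terminates iff $\ctxof\wctx\tmtwo$ $\relation_1$-terminates'' equivalent to the same biconditional with $\relation_2$ in place of $\relation_1$. Quantifying over all weak contexts $\wctx$ gives $\tm\cong_{\relation_1}\tmtwo \iff \tm\cong_{\relation_2}\tmtwo$.

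The only point requiring care is to be precise about what ``terminates'' means in each calculus. For \UOCBV, termination is meant as primary $\tovTopb\aset\sset$-termination, with frames chosen so that the invariant $\inv\aset\sset{\ctxof\wctx\tm}$ holds. The natural choice is $\aset=\emptyset$ and $\sset=\fv{\ctxof\wctx\tm}$. This is the reading under which \autoref{cor:qualitative_characterisation_of_termination}, and hence \autoref{coro:main_result}, was established. I also rely on \autoref{cor:consequences-dp}, which says that weak and strong termination coincide for \UOCBV, so no ambiguity arises there.

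I do not expect a real obstacle, since the argument is a purely logical consequence of termination-equivalence holding for \emph{every} term. The one subtlety is that the statement must be applied to plugged terms $\ctxof\wctx\tm$, not just to $\tm$ itself. That is legitimate precisely because \autoref{coro:main_result} is universally quantified over terms.
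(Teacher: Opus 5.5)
Your proposal is correct and follows the paper's route exactly. The paper obtains the corollary as an immediate consequence of \autoref{coro:main_result}, which, being universally quantified over terms, applies to the plugged terms $\ctxof\wctx\tm$ and $\ctxof\wctx\tmtwo$ for every weak context $\wctx$; your explicit choice of frames $\aset=\emptyset$ and $\sset=\fv{\ctxof\wctx\tm}$ for \UOCBV is a useful clarification the paper leaves implicit, since with $\aset=\emptyset$ the appropriateness hypotheses of the exact soundness and completeness theorems hold trivially.
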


Thus, we can confidently say that \UOCBV optimises \VSC at the \emph{operational} level, while preserving its original semantics.

\section{Conclusions}
This paper presents the first semantic interpretation of useful \CBV in the
literature. 
It is based on a non-idempotent intersection type system called $\typesystem$. 
Despite the considerable complexity of different available specifications of the
operational semantics of useful \CBV, system $\typesystem$ remains
remarkably simple. This highlights the ability of semantic approaches
based on non-idempotent intersection types to capture intricate
operational details with streamlined intuitive methods.

System $\typesystem$ characterises \UOCBV-termination as well
as \VSCne-termination. In turn, system $\systemV$ characterises
both \VSCne-termination and $\VSC$-termination, providing a second
correspondence between non-erasing and open \CBV. Hence, \VSCne acts
as an intermediate calculus connecting $\VSC$, the open \CBV calculus
from which our development starts, with its useful optimisation
\UOCBV. Combining these two correspondences, we conclude that
\VSC, \VSCne, and \UOCBV induce the same notion of termination and
are therefore observationally equivalent. In particular, useful open \CBV 
can be understood as an optimisation of open \CBV that preserves
its observational semantics.

Moreover, system $\typesystem$ provides exact measures for the 
lengths of \UOCBV-reduction sequences to normal form; that is why
system $\typesystem$ is a \emph{quantitative} interpretation of \UOCBV. 
More precisely, a term $\tm$ is \emph{tightly} typable 
with counters $\cm$ and $\ce$ in system $\typesystem$ if and only if 
$\tm$ \UOCBV-terminates in exactly $\cm$ function 
application steps and $\ce$ substitution steps.

Our contribution is novel, as 
previous definitions of useful evaluation lack semantic interpretations, and current
quantitative interpretations of \CBV do not
consider \emph{usefulness}. 
Notably, except for~\cite{KesnerViso22}, which relies entirely
on the (verbose) \emph{persistent/consuming} paradigm, this is the only 
\emph{simple} and \emph{concise} quantitative type system for open \CBV that is able to count 
variable and \emph{abstraction substitution} steps \emph{exactly}.

Several lines of work stem from this paper. 
One first objective is to understand the flexibility of system 
$\typesystem$, particularly its potential to capture other \CBV optimisations, such as those in~\cite{AccattoliG19}.
Additionally, it would be interesting to capture recent alternative specifications of usefulness; for instance, the one in~\cite{Wu23}, which focuses exclusively on \emph{direct} useful reduction steps. 
Notably, erasing reduction steps in~\cite{Wu23}
pose a non-trivial difficulty for non-idempotent intersection type systems designed for usefulness. 
Another challenging direction is extending the methodology to define a \emph{strong} useful \CBV 
calculus (allowing reductions to take place within abstractions), 
which could be captured by a quantitative type system based on system $\typesystem$.

\bibliographystyle{alphaurl}
\bibliography{biblio}

\appendix
\section{Technical Proofs}
\label{appendix}
\subsection{Auxiliary Material for Proving~\autoref{thm:diamond-property} } 

The following definition will come in handy in the proof of~\autoref{thm:diamond-property}.
Let $\aset$ be an abstraction frame.
We inductively define the \defn{expansion} of $\aset$ under $\sctx$,
written $\expansion\aset\sctx$, as follows:
\begin{align*}
    \expansion\aset\ctxhole 
  & \eqdef \aset
\\
    \expansion\aset{\sctxtwo\esub\var\tm} 
  & \eqdef 
    \begin{cases*} 
      \expansion\aset\sctxtwo \cup \set\var & if $\tm \in \HAbs\aset$ 
    \\
      \expansion\aset\sctxtwo               & otherwise
    \end{cases*}
\end{align*}

Analogously, let $\sset$ be a structure frame.
We inductively define the \defn{expansion} of $\sset$ under $\sctx$,
written $\expansion\sset\sctx$, as follows:
\begin{align*}
    \expansion\sset\ctxhole 
  & \eqdef \sset
\\
    \expansion\sset{\sctxtwo\esub\var\tm} 
  & \eqdef 
    \begin{cases*} 
      \expansion\sset\sctxtwo \cup \set\var & if $\tm \in \Struct\sset$ 
    \\
      \expansion\sset\sctxtwo               & otherwise
    \end{cases*}
\end{align*}

Note that $\aset \subseteq \expansion\aset\sctx$ and $\sset \subseteq\expansion\sset\sctx$.

\begin{lemC}[{\cite[Lemma 216]{Milicich25}}]
\label{lem:ucbv_weakening_of_reduction_sets}
If $\tm \tov\rulename\aset\sset\appflag \tm'$,
then for all sets $\aset'$ and $\sset'$
such that $\aset \subseteq \aset'$ and $\sset \subseteq \sset'$
it holds that $\tm \tov\rulename{\aset'}{\sset'}\appflag \tm'$.
\end{lemC}

\diamondproperty*

\begin{proof}
By structural induction on $\tm$.
Case $\tm = \lam\var\tmtwo$ is impossible, as there are no rules for 
reducing abstractions.
\begin{itemize}
\item $\tm = \var$.
  This case does not apply since we would have, on the one hand,
  $\var \tov{\rulesub\var{\val_1}}{\aset' \cup \set\var}\sset\app \val_1 = \tm_1$ 
  and on the other hand
  $\var \tov{\rulesub\var{\val_2}}{\aset' \cup \set\var}\sset\app \val_2 = \tm_2$, 
  with $\val_1 = \val_2$ by the hypothesis,
  which contradicts the hypothesis stating $\tm_1 \neq \tm_2$.
\item $\tm = \tmtwo \, \tmthree$.
  Since $\inv\aset\sset{\tmtwo \, \tmthree}$, then 
  $\inv\aset\sset\tmtwo$ and $\inv\aset\sset\tmthree$ hold.
  Applications reduce by rules \ruleUDb, \ruleUAppL, and \ruleUAppR,
  so we consider the following subcases:
  \begin{itemize}
  \item \ruleUDb-\ruleUDb.
    This case does not apply since it ends up being that 
    $\tm_1 = \tm_2$, which contradicts the hypothesis.
  \item \ruleUDb-\ruleUAppL.
    Then,
    $\tm = (\lam\var{\tmtwo'})\sctx \, \tmthree 
    \tov\dbsym\aset\sset\appflag 
    \tmtwo'\esub\var\tmthree\sctx = \tm_1$, 
    where $\tmtwo = (\lam\var{\tmtwo'})\sctx$ and $\rulename_1 = \dbsym$.

    On the other hand,
    $\tm = (\lam\var{\tmtwo'})\sctx \, \tmthree 
    \tov{\rulename_2}\aset\sset\appflag \tmtwo_2 \, \tmthree = \tm_2$
    is derived from 
    $(\lam\var{\tmtwo'})\sctx \tov{\rulename_2}\aset\sset\app \tmtwo_2$,
    where it can be easily noted that $\tmtwo_2$ must be of the form $(\lam\var{\tmtwo'})\sctxtwo$.

    Then, $\tm_2 = (\lam\var{\tmtwo'})\sctxtwo \, \tmthree
    \tov\dbsym\aset\sset\appflag
    \tmtwo'\esub\var\tmthree\sctxtwo = \tm'$ by rule \ruleUDb, and
    $\tm_1 = \tmtwo'\esub\var\tmthree\sctx 
    \tov{\rulename_2}\aset\sset\appflag 
    \tmtwo'\esub\var\tmthree\sctxtwo = \tm'$, by rule any of the explicit substitution congruence rules.

    The following diagram summarises the proof:
    \begin{mathpar}
      \xymatrixcolsep{5pc}
      \xymatrixrowsep{3pc}
      \xymatrix{
          \tm = (\lam\var{\tmtwo'})\sctx \, \tmthree 
            \arUr\dbsym\aset\sset\appflag 
            \arUd{\rulename_2}\aset\sset\appflag
        & \tmtwo'\esub\var\tmthree\sctx = \tm_1
            \arsdUd{\rulename_2}\aset\sset\appflag
      \\
          \tm_2 = (\lam\var{\tmtwo'})\sctxtwo \, \tmthree
            \arsdUr\dbsym\aset\sset\appflag
        & \tmtwo'\esub\var\tmthree\sctxtwo = \tm'
      }
    \end{mathpar}
  \item \ruleUDb-\ruleUAppR.
    Then,
    $\tm = (\lam\var{\tmtwo'})\sctx \, \tmthree 
    \tov\dbsym\aset\sset\appflag \tmtwo'\esub\var\tmthree\sctx = \tm_1$,
    where
    $\tmtwo = (\lam\var{\tmtwo'})\sctx$ and $\rulename_1 = \dbsym$.

    On the other hand,
    $\tm = (\lam\var{\tmtwo'})\sctx \, \tmthree 
    \tov{\rulename_2}\aset\sset\appflag 
    (\lam\var{\tmtwo'})\sctx \, \tmthree_2 = \tm_2$ is derived from
    $(\lam\var{\tmtwo'})\sctx \in \Struct\sset$ and
    $\tmthree \tov{\rulename_2}\aset\sset\nonapp \tmthree_2$.
    
    This case is impossible since there is no rule from~\autoref{fig:HAbs_Struct} to conclude
    $(\lam\var{\tmtwo'})\sctx \in \Struct\sset$.
  \item \ruleUAppL-\ruleUAppR.
    Then,
    $\tm = \tmtwo \, \tmthree \tov{\rulename_1}\aset\sset\appflag \tmtwo_1 \, \tmthree = \tm_1$,
    derived from 
    (1) $\tmtwo \tov{\rulename_1}\aset\sset\app \tmtwo_1$.
    
    On the other hand,
    $\tm = \tmtwo \, \tmthree \tov{\rulename_2}\aset\sset\appflag \tmtwo \, \tmthree_2 = \tm_2$,
    derived from (2) $\tmtwo \in \Struct\sset$ and
    (3) $\tmthree \tov{\rulename_2}\aset\sset\nonapp \tmthree_2$.
    
    We can apply rule \ruleUAppL with (1) as premise, yielding
    $\tm_2 = \tmtwo \, \tmthree_2 \tov{\rulename_1}\aset\sset\appflag
    \tmtwo_1 \, \tmthree_2 = \tm'$.
    Given that (1) and (2) hold, then $\tmtwo_1 \in \Struct\sset$ since 
    the membership in hereditary abstractions and structures sets is closed under reduction (\cf~\cite[Lemma 223]{Milicich25}).
    With this result and (3), we can apply rule \ruleUAppR, yielding
    $\tm_1 = \tmtwo_1 \, \tmthree \tov{\rulename_2}\aset\sset\appflag
    \tmtwo_1 \, \tmthree_2 = \tm'$.

    The following diagram summarises the proof:
    \begin{mathpar}
      \xymatrixcolsep{5pc}
      \xymatrixrowsep{3pc}
      \xymatrix{
          \tm = \tmtwo \, \tmthree 
            \arUr{\rulename_1}\aset\sset\appflag
            \arUd{\rulename_2}\aset\sset\appflag
        & \tmtwo_1 \, \tmthree = \tm_1
            \arsdUd{\rulename_2}\aset\sset\appflag
      \\
          \tm_2 = \tmtwo \, \tmthree_2
            \arsdUr{\rulename_1}\aset\sset\appflag
        & \tmtwo_1 \, \tmthree_2 = \tm'
      }
    \end{mathpar}
  \item \ruleUAppL-\ruleUAppL.
    Then,
    $\tm = \tmtwo \, \tmthree \tov{\rulename_1}\aset\sset\appflag \tmtwo_1 \, \tmthree = \tm_1$,
    derived from $\tmtwo \tov{\rulename_1}\aset\sset\app \tmtwo_1$.

    On the other hand,
    $\tm = \tmtwo \, \tmthree \tov{\rulename_2}\aset\sset\appflag \tmtwo_2 \, \tmthree = \tm_2$,
    derived from 
    $\tmtwo \tov{\rulename_2}\aset\sset\app \tmtwo_2$, 
    where $\tmtwo_1 \neq \tmtwo_2$
    since $\tmtwo_1 \, \tmthree \neq \tmtwo_2 \, \tmthree$ by hypothesis.

    The \ih on $\tmtwo$ yields $\tmtwo'$ such that
    $\tmtwo_1 \tov{\rulename_2}\aset\sset\app \tmtwo'$
    and
    $\tmtwo_2 \tov{\rulename_1}\aset\sset\app \tmtwo'$.
    Applying rule \ruleUAppL to reduce both $\tmtwo_1 \, \tmthree$ and
    $\tmtwo_2 \, \tmthree$, we obtain
    $\tm_1 = \tmtwo_1 \, \tmthree \tov{\rulename_2}\aset\sset\appflag \tmtwo' \, \tmthree = \tm'$
    and
    $\tm_2 = \tmtwo_2 \, \tmthree \tov{\rulename_1}\aset\sset\appflag \tmtwo' \, \tmthree = \tm'$,
    respectively.

    The following diagram summarises the proof:
    \begin{mathpar}
      \xymatrixcolsep{5pc}
      \xymatrixrowsep{3pc}
      \xymatrix{
          \tm = \tmtwo \, \tmthree 
            \arUr{\rulename_1}\aset\sset\appflag 
            \arUd{\rulename_2}\aset\sset\appflag
        & \tmtwo_1 \, \tmthree = \tm_1
            \arsdUd{\rulename_2}\aset\sset\appflag
      \\
          \tm_2 = \tmtwo_2 \, \tmthree
            \arsdUr{\rulename_1}\aset\sset\appflag
        & \tmtwo' \, \tmthree = \tm'
      }
    \end{mathpar}
  \item \ruleUAppR-\ruleUAppR.
    Analogous to the previous case.
  \end{itemize}
\item $\tm = \tmtwo\esub\var\tmthree$.
  The invariant $\inv\aset\sset{\tmtwo\esub\var\tmthree}$ implies 
  $\inv{\aset \cup \set\var}\sset\tmtwo$, 
  $\inv\aset{\sset \cup \set\var}\tmtwo$, and $\inv\aset\sset\tmthree$.
  Closures reduce by rules \ruleULsv, \ruleUEsR, \ruleUEsLAbs, and 
  \ruleUEsLStruct, so we consider the following cases:
  \begin{itemize}
  \item \ruleULsv-\ruleULsv.
    Then
    \begin{mathpar}
      \inferrule* [right = \ruleULsv]
        {\tmtwo \tov{\rulesub\var\val}{\aset \cup \set\var}\sset\appflag \tmtwo_1
          \\ \var \notin \aset \cup \sset
          \\ \val\sctx \in \HAbs\aset}
        {\tm = \tmtwo\esub\var{\val\sctx} \tov\lsvsym\aset\sset\appflag
        \tmtwo_1\esub\var\val\sctx = \tm_1}
    \end{mathpar}
    where $\tmthree = \val\sctx$ and $\rulename_1 = \lsvsym$.
    
    On the other hand,
    \begin{mathpar}
      \inferrule* [right = \ruleULsv]
        {\tmtwo \tov{\rulesub\var\val}{\aset \cup \set\var}\sset\appflag \tmtwo_2
          \\ \var \notin \aset \cup \sset
          \\ \val\sctx \in \HAbs\aset}
        {\tm = \tmtwo\esub\var{\val\sctx} \tov\lsvsym\aset\sset\appflag
        \tmtwo_2\esub\var\val\sctx = \tm_2}
    \end{mathpar}
    where $\rulename_2 = \ruleULsv$ and $\tmtwo_1 \neq \tmtwo_2$,
    since $\tmtwo_1\esub\var{\val\sctx} \neq \tmtwo_2\esub\var{\val\sctx}$.

    The \ih on $\tmtwo$ yields a term $\tmtwo'$ such that
    $\tmtwo_1 \tov{\rulesub\var\val}{\aset \cup \set\var}\sset\appflag \tmtwo'$ and
    $\tmtwo_2 \tov{\rulesub\var\val}{\aset \cup \set\var}\sset\appflag \tmtwo'$.
    Moreover, note that $\expansion{(\aset \cup \set\var)}\sctx = \expansion\aset\sctx \cup \set\var$,
    and thus
    $\tmtwo_1 
    \tov{\rulesub\var\val}{\expansion\aset\sctx \cup \set\var}{\expansion\sset\sctx}\appflag 
    \tmtwo'$ 
    and
    $\tmtwo_2 
    \tov{\rulesub\var\val}{\expansion\aset\sctx \cup \set\var}{\expansion\sset\sctx}\appflag 
    \tmtwo'$ hold
    by~\autoref{lem:ucbv_weakening_of_reduction_sets}.
    Furthermore, the hypothesis $\val\sctx \in \HAbs\aset$ implies
    $\val \in \HAbs{\expansion\aset\sctx}$ (\cite[Lemma 221]{Milicich25}).
    Since 
    $\var \notin \aset \cup \sset$ by premise in rule \ruleULsv, 
    and $\var \notin \dom\sctx$
    by $\alpha$-conversion, 
    then $\var \notin \expansion\aset\sctx \cup \expansion\sset\sctx$.

    We can then apply rule \ruleULsv on both 
    $\tmtwo_1\esub\var\val$ and $\tmtwo_2\esub\var\val$, yielding
    $\tmtwo_1\esub\var\val 
    \tov\lsvsym{\expansion\aset\sctx}{\expansion\sset\sctx}\appflag 
    \tmtwo'\esub\var\val$ and
    $\tmtwo_2\esub\var\val 
    \tov\lsvsym{\expansion\aset\sctx}{\expansion\sset\sctx}\appflag 
      \tmtwo'\esub\var\val$, respectively.
    We thus conclude with
    $\tm_1 = \tmtwo_1\esub\var\val\sctx 
    \tov\lsvsym\aset\sset\appflag 
    \tmtwo'\esub\var\val\sctx = \tm'$ and
    $\tm_2 = \tmtwo_2\esub\var\val\sctx 
    \tov\lsvsym\aset\sset\appflag 
    \tmtwo'\esub\var\val\sctx = \tm'$
    by applying the rules \ruleUEsLAbs or \ruleUEsLStruct successively, as appropriate.

    The following diagram summarises the proof:
    \begin{mathpar}
      \xymatrixcolsep{5pc}
      \xymatrixrowsep{3pc}
      \xymatrix{
        \tm = \tmtwo\esub\var{\val\sctx} 
          \arUr\lsvsym\aset\sset\appflag 
          \arUd\lsvsym\aset\sset\appflag
      & \tmtwo_1\esub\var\val\sctx = \tm_1
          \arsdUd\lsvsym\aset\sset\appflag
      \\
        \tm_2 = \tmtwo_2\esub\var\val\sctx
          \arsdUr\lsvsym\aset\sset\appflag
      & \tmtwo'\esub\var\val\sctx = \tm'
      }
    \end{mathpar}
  \item \ruleULsv-\ruleUEsR. 
    Then
    \begin{mathpar}
      \inferrule* [right = \ruleULsv]
        {(1)\ \tmtwo \tov{\rulesub\var\val}{\aset \cup \set\var}\sset\appflag \tmtwo_1
          \\ (2)\ \var \notin \aset \cup \sset
          \\ (3)\ \val\sctx \in \HAbs\aset}
        {\tm = \tmtwo\esub\var{\val\sctx} \tov\lsvsym\aset\sset\appflag
        \tmtwo_1\esub\var\val\sctx = \tm_1}
    \end{mathpar}
    where $\tmthree = \val\sctx$ and $\rulename_1 = \lsvsym$.

    On the other hand,
    $\tm = \tmtwo\esub\var{\val\sctx} 
    \tov{\rulename_2}\aset\sset\appflag 
    \tmtwo\esub\var{\tmthree_2} = \tm_2$, derived from
    $(4)\ \val\sctx \tov{\rulename_2}\aset\sset\nonapp \tmthree_2$.
    Moreover, it can be easily noted that $\tmthree_2 = \val\sctxtwo$.

    We can then apply accordingly rule \ruleUEsR, \ruleUEsLAbs, or 
    \ruleUEsLStruct to $(\rulename_2, \aset, \sset, \appflag)$-reduce 
    $\tm_1 = \tmtwo_1\esub\var\val\sctx$ to the term 
    $\tmtwo_1\esub\var\val\sctxtwo = \tm'$.

    Since (3) and (4), then (5) $\val\sctxtwo \in \HAbs\aset$ since 
    the membership in hereditary abstractions and structures sets is closed under reduction.
    We can apply rule \ruleULsv with this result and (1) and (2) as
    premises, yielding
    $\tm_2 = \tmtwo\esub\var{\val\sctxtwo} 
    \tov\lsvsym\aset\sset\appflag \tmtwo_1\esub\var\val\sctxtwo = \tm'$.

    The following diagram summarises the proof:
    \begin{mathpar}
      \xymatrixcolsep{5pc}
      \xymatrixrowsep{3pc}
      \xymatrix{
        \tm = \tmtwo\esub\var{\val\sctx}
          \arUr\lsvsym\aset\sset\appflag 
          \arUd{\rulename_2}\aset\sset\appflag
      & \tmtwo_1\esub\var\val\sctx = \tm_1
          \arsdUd{\rulename_2}\aset\sset\appflag
      \\
        \tm_2 = \tmtwo\esub\var{\val\sctxtwo}
          \arsdUr\lsvsym\aset\sset\appflag
      & \tmtwo_1\esub\var\val\sctxtwo = \tm'
      }
    \end{mathpar}
  \item \ruleULsv-\ruleUEsLAbs.
    Then
    \begin{mathpar}
      \inferrule* [right = \ruleULsv]
        {\tmtwo \tov{\rulesub\var\val}{\aset \cup \set\var}\sset\appflag \tmtwo_1
          \\ (1)\ \var \notin \aset \cup \sset
          \\ (2)\ \val\sctx \in \HAbs\aset}
        {\tm = \tmtwo\esub\var{\val\sctx} \tov\lsvsym\aset\sset\appflag
        \tmtwo_1\esub\var\val\sctx = \tm_1}
    \end{mathpar}
    where $\tmthree = \val\sctx$ and $\rulename_1 = \lsvsym$.

    On the other hand,
    \begin{mathpar}
      \inferrule* [right = \ruleUEsLAbs]
        {\tmtwo \tov{\rulename_2}{\aset \cup \set\var}\sset\appflag \tmtwo_2
          \HS \val\sctx \in \HAbs\aset
          \HS \var \notin \aset \cup \sset
          \HS (3)\ \var \notin \fv{\rulename_2}}
        {\tm = \tmtwo\esub\var{\val\sctx} \tov{\rulename_2}\aset\sset\appflag
         \tmtwo_2\esub\var{\val\sctx} = \tm_2}
    \end{mathpar}

    The \ih on $\tmtwo$ yields $\tmtwo'$ such that
    $\tmtwo_1 
    \tov{\rulename_2}{\aset \cup \set\var}\sset\appflag 
    \tmtwo'$ and 
    (4) $\tmtwo_2 \tov{\rulesub\var\val}{\aset \cup \set\var}\sset\appflag 
    \tmtwo'$.
    Moreover,
    (5) $\tmtwo_1 \tov{\rulename_2}{\expansion\aset\sctx \cup \set\var}{\expansion\sset\sctx}\appflag 
    \tmtwo'$ by~\autoref{lem:ucbv_weakening_of_reduction_sets}.
    The hypothesis (2) implies 
    (6) $\val \in \HAbs{\expansion\aset\sctx}$ (\cite[Lemma 221]{Milicich25}).

    Given (1) and the fact that $\var \notin \dom\sctx$ by $\alpha$-conversion, then 
    (7) $\var \notin \expansion\aset\sctx \cup \expansion\sset\sctx$.
    Therefore, (8)
    $\tmtwo_1\esub\var\val 
    \tov{\rulename_2}{\expansion\aset\sctx}{\expansion\sset\sctx}\appflag 
    \tmtwo'\esub\var\val$ by rule \ruleUEsLAbs,
    with (5), (6), (7), and (3) as premises.
    And we conclude with
    $\tm_1 = \tmtwo_1\esub\var\val\sctx 
    \tov{\rulename_2}\aset\sset\appflag 
    \tmtwo'\esub\var\val\sctx = \tm'$ by successively applying rule 
    \ruleUEsLAbs or rule \ruleUEsLStruct accordingly.
    
    On the other hand, by rule \ruleULsv with (4), (1), and (2) as 
    premises, we obtain
    $\tm_2 = \tmtwo_2\esub\var{\val\sctx} 
    \tov\lsvsym\aset\sset\appflag 
    \tmtwo'\esub\var\val\sctx = \tm'$.
    
    The following diagram summarises the proof:
    \begin{mathpar}
      \xymatrixcolsep{5pc}
      \xymatrixrowsep{3pc}
      \xymatrix{
        \tm = \tmtwo\esub\var{\val\sctx} 
          \arUr\lsvsym\aset\sset\appflag 
          \arUd{\rulename_2}\aset\sset\appflag
      & \tmtwo_1\esub\var\val\sctx = \tm_1
          \arsdUd{\rulename_2}\aset\sset\appflag
      \\
        \tm_2 = \tmtwo_2\esub\var{\val\sctx}
          \arsdUr\lsvsym\aset\sset\appflag
      & \tmtwo'\esub\var\val\sctx = \tm'
      }
    \end{mathpar}
  \item \ruleULsv-\ruleUEsLStruct.
    Then
    \begin{mathpar}
      \inferrule* [right = \ruleULsv]
        {\tmtwo \tov{\rulesub\var\val}{\aset \cup \set\var}\sset\appflag \tmtwo_1
          \\ \var \notin \aset \cup \sset
          \\ \val\sctx \in \HAbs\aset}
        {\tm = \tmtwo\esub\var{\val\sctx} \tov\lsvsym\aset\sset\appflag
        \tmtwo_1\esub\var{\val}\sctx = \tm_1}
    \end{mathpar}
    where $\tmthree = \val\sctx$ and $\rulename_1 = \lsvsym$.

    On the other hand,
    $\tm = \tmtwo\esub\var{\val\sctx} 
    \tov{\rulename_2}\aset\sset\appflag \tmtwo_2\esub\var{\val\sctx} = \tm_2$,
    derived from 
    $\tmtwo \tov{\rulename_2}\aset{\sset \cup \set\var}\appflag \tmtwo_2$,
    $\val\sctx \in \Struct\sset$, $\var \notin \aset \cup \sset$, and 
    $\var \notin \fv{\rulename_2}$.
    Since $\inv\aset\sset{\val\sctx}$, then in particular 
    $\aset \disj \sset$.
    Hence,
    $\val\sctx \in \HAbs\aset$ and $\val\sctx \in \Struct\sset$,
    and at the same time it should be the case that
    $\val\sctx \notin \HAbs\aset$ or $\val\sctx \notin \Struct\sset$.
    Therefore, we reach a contradiction, so this case is not possible.
  \item \ruleUEsR-\ruleUEsR.
    Then
    $\tm = \tmtwo\esub\var\tmthree \tov{\rulename_1}\aset\sset\appflag
     \tmtwo\esub\var{\tmthree_1} = \tm_1$,
    derived from $\tmthree \tov{\rulename_1}\aset\sset\nonapp \tmthree_1$.

    On the other hand,
    $\tm = \tmtwo\esub\var\tmthree \tov{\rulename_2}\aset\sset\appflag
    \tmtwo\esub\var{\tmthree_2} = \tm_2$, derived from
    $\tmthree \tov{\rulename_2}\aset\sset\nonapp \tmthree_2$,
    where $\tmthree_1 \neq \tmthree_2$
    since $\tmtwo\esub\var{\tmthree_1} \neq \tmtwo\esub\var{\tmthree_2}$
    by hypothesis.

    The \ih on $\tmthree$ yields $\tmthree'$ such that
    $\tmthree_1 \tov{\rulename_2}\aset\sset\nonapp \tmthree'$
    and 
    $\tmthree_2 \tov{\rulename_1}\aset\sset\nonapp \tmthree'$.
    Applying rule \ruleUEsR on both $\tmtwo\esub\var{\tmthree_1}$ and
    $\tmtwo\esub\var{\tmthree_2}$, 
    we obtain $\tm_1 = \tmtwo\esub\var{\tmthree_1} 
    \tov{\rulename_2}\aset\sset\appflag 
    \tmtwo\esub\var{\tmthree'} = \tm'$ 
    and $\tm_2 = \tmtwo\esub\var{\tmthree_2} 
    \tov{\rulename_1}\aset\sset\appflag 
    \tmtwo\esub\var{\tmthree'} = \tm'$, respectively.

    The following diagram summarises the proof:
    \begin{mathpar}
      \xymatrixcolsep{5pc}
      \xymatrixrowsep{3pc}
      \xymatrix{
        \tm = \tmtwo\esub\var\tmthree
          \arUr{\rulename_1}\aset\sset\appflag 
          \arUd{\rulename_2}\aset\sset\appflag
      & \tmtwo\esub\var{\tmthree_1} = \tm_1
          \arsdUd{\rulename_2}\aset\sset\appflag
      \\
        \tm_2 = \tmtwo\esub\var{\tmthree_2}
          \arsdUr{\rulename_1}\aset\sset\appflag
      & \tmtwo\esub\var{\tmthree'} = \tm'
      }
    \end{mathpar}
  \item \ruleUEsLAbs-\ruleUEsLAbs.
    Analogous to the previous case.
  \item \ruleUEsLStruct-\ruleUEsLStruct.
    Analogous to the previous case.
  \item \ruleUEsR-\ruleUEsLAbs.
    Then, 
    $\tm = \tmtwo\esub\var\tmthree 
    \tov{\rulename_1}\aset\sset\appflag 
    \tmtwo\esub\var{\tmthree_1} = \tm_1$, which is derived from
    (1) $\tmthree \tov{\rulename_1}\aset\sset\nonapp \tmthree_1$.

    On the other hand, we have
    $\tm = \tmtwo\esub\var\tmthree \tov{\rulename_2}\aset\sset\appflag
    \tmtwo_2\esub\var\tmthree = \tm_2$, derived from
    (2) $\tmtwo \tov{\rulename_2}{\aset \cup \set\var}\sset\appflag \tmtwo_2$,
    (3) $\tmthree \in \HAbs\aset$, 
    (4) $\var \notin \aset \cup \sset$, and 
    (5) $\var \notin \fv{\rulename_2}$.
    We can apply rule \ruleUEsLAbs with (2), (3), (4), and (5) as 
    premises, yielding
    $\tm_1 = \tmtwo\esub\var{\tmthree_1} 
    \tov{\rulename_2}\aset\sset\appflag 
    \tmtwo_2\esub\var{\tmthree_1} = \tm'$.
    Then, we can apply rule \ruleUEsR with (1) as premise, yielding
    $\tm_2 = \tmtwo_2\esub\var\tmthree 
    \tov{\rulename_1}\aset\sset\appflag 
    \tmtwo_2\esub\var{\tmthree_1} = \tm'$.

    The following diagram summarises the proof:
    \begin{mathpar}
      \xymatrixcolsep{5pc}
      \xymatrixrowsep{3pc}
      \xymatrix{
        \tm = \tmtwo\esub\var\tmthree
          \arUr{\rulename_1}\aset\sset\appflag 
          \arUd{\rulename_2}\aset\sset\appflag
      & \tmtwo\esub\var{\tmthree_1} = \tm_1
          \arsdUd{\rulename_2}\aset\sset\appflag
      \\
        \tm_2 = \tmtwo_2\esub\var\tmthree
          \arsdUr{\rulename_1}\aset\sset\appflag
      & \tmtwo_2\esub\var{\tmthree_1} = \tm'
      }
    \end{mathpar}
  \item \ruleUEsR-\ruleUEsLStruct.
    Analogous to the previous case.
  \item \ruleUEsLAbs-\ruleUEsLStruct.
    Then, the two reduction steps are
    \begin{mathpar}
      \inferrule{
        \tmtwo \tov{\rulename_1}{\aset \cup \set\var}\sset\appflag \tmtwo_1
        \\ \tmthree \in \HAbs\aset
        \\ \var \notin \aset \cup \sset
        \\ \var \notin \fv{\rulename_1}
      }{
        \tm = \tmtwo\esub\var\tmthree \tov{\rulename_1}\aset\sset\appflag \tmtwo_1\esub\var\tmthree = \tm_1 
      }
    \end{mathpar}
    and
    \begin{mathpar}
      \inferrule{
        \tmtwo \tov{\rulename_2}\aset{\sset \cup \set\var}\appflag \tmtwo_2
        \\ \tmthree \in \Struct\sset
        \\ \var \notin \aset \cup \sset
        \\ \var \notin \fv{\rulename_2}
      }{
        \tm = \tmtwo\esub\var\tmthree \tov{\rulename_2}\aset\sset\appflag \tmtwo_2\esub\var\tmthree = \tm_2 
      }
    \end{mathpar}

    Since $\inv\aset\sset\tmthree$ then in particular $\aset \disj \sset$
    so it must be the case that $\tmthree \notin \HAbs\aset$ or $\tmthree \notin \Struct\sset$.
    However, we obtain as well that $\tmthree \in \HAbs\aset$ and 
    $\tmthree \in \Struct\sset$ by premises of rules \ruleUEsLAbs and
    \ruleUEsLStruct, respectively.
    Therefore, we reach a contradiction, so this case is not possible.
    \qedhere
  \end{itemize}
\end{itemize}
\end{proof}

\subsection{Auxiliary Material for Proving~\autoref{thm:quantitative_soundness_typing}} 

We now state two additional results needed to prove
the exact soundness of system $\typesystem$ with respect to \UOCBV evaluation.

\begin{lem}
\label{lem:systemU_multitype_implies_reducibility}
Let $\tm$ be a term, $\aset$ be an abstraction frame, and $\sset$ a 
structure frame such that $\inv\aset\sset\tm$ holds.
Suppose there exists a derivation $\deriv$ satisfying
$\derivable\deriv{\judg[\cm,\ce]{\tctx;\tctxtwo}\tm\tight}$,
such that $\tctx$ is tight and $\isAppr\aset\tctx$ holds.
Additionally, assume that $\dom\tctxtwo \neq \emptyset$ and 
$\dom\tctxtwo \subseteq \aset$, and that for all 
$\var \in \dom\tctxtwo$, the optional type $\tctxtwo(\var)$ is a 
non-empty arrow multi-type.
Then, $\tm \notin \NF\aset\sset\appflag$.
\end{lem}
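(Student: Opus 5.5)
We argue by contradiction: assume $\tm \in \NF\aset\sset\appflag$ and do induction on the derivation of this membership, i.e. on the rules of \autoref{thm:uocbv_characterisation_of_normal_forms}. The idea is that a variable $\var\in\aset$ that receives a non-empty arrow multi-type is used in a consuming position, so somewhere a redex must be available. Since the statement has to pass through subterms where the ambient type is no longer a constant, I would first generalise it: the typed judgement is allowed to have any type $\mtyp$, with the proviso that if $\mtyp$ is a non-empty arrow multi-type, then $\tm\in\HAbs\aset$ and $\appflag=\app$ must fail. The original statement is recovered for $\mtyp=\tight$.

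\textbf{Base cases.} In \ruleUNFVar, relevance (\autoref{lem:relevance_systemU}) together with $\dom\tctxtwo\neq\emptyset$ forces $\tm=\var$ with $\var\in\dom\tctxtwo\subseteq\aset$. The type is then the non-empty arrow multi-type $\tctxtwo(\var)$, which contradicts $\mtyp=\tight$. In the generalised statement, $\var\in\aset$ forces $\appflag=\nonapp$, which is the exception we allowed. In \ruleUNFLam, $\dom\tctxtwo=\emptyset$ is required for an empty $\mset{}$ type; for a non-empty type we are again in the allowed exception.

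\textbf{\ruleUNFApp.} Here $\tm=\tmtwo\,\tmthree$ with $\tmtwo\in\NF\aset\sset\app$. If the derivation ends in \ruleTypAppP, then $\tmtwo:\tightN$ and $\tmthree:\tight$, and $\tctxtwo$ splits between the two premises; one side receives a non-empty part and we apply the i.h.\ to it. If it ends in \ruleTypAppC, then $\tmtwo$ has a singleton arrow type in applied position. If $\tmtwo$ carries no part of $\tctxtwo$, it is covered by the tight/appropriate part, and we need an auxiliary fact: a term in $\NF\aset\sset\app$ with appropriate tight environment cannot have an arrow type. Such a term is a structure or a variable in $\sset$ (applied variables in $\aset$ are excluded by \ruleUNFVar), and \autoref{prop:nfs_are_tight_typable} suggests these only type with $\tightN$. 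I would prove this fact by a separate induction, noting that an abstraction is not in $\NF\aset\sset\app$.

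\textbf{ES cases.} These are the delicate ones. In \ruleUNFEsAbs, $\tm=\tmtwo\esub\var\tmthree$ with $\tmthree\in\HAbs\aset$. By \autoref{lem:systemU_types_of_hereditary_abstractions}, $\tmthree$ has a type $\mtyptwo\neq\tightN$, so $\var$ gets $\none$, $\emset$, or a non-empty multi-type. If it is non-empty, we add $\var$ to $\tctxtwo$ and use the frame $\aset\cup\set\var$ in the i.h.\ on $\tmtwo$. Otherwise the $\tctxtwo$ part either stays in $\tmtwo$ or goes to $\tmthree$, whose type may be a non-empty arrow multi-type while $\tmthree$ is a normal form in $\HAbs\aset$ at $\nonapp$. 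This is exactly where the generalised exception is needed, and I expect it to be the main obstacle: keeping track of which nonempty arrow types can be ``absorbed'' by hereditary abstractions in non-applied position without creating a redex. In \ruleUNFEsStruct, $\tmthree\in\Struct\sset$ is typed with $\tightN$ (a dual auxiliary lemma), so $\var$ is mapped to $\none$ or $\tightN$ with frame $\sset\cup\set\var$, and the $\tctxtwo$ parts pass to the i.h.\ directly. Throughout, appropriateness is preserved under environment splitting as in the proof of \autoref{lem:linear_substitution}.
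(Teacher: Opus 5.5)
Your architecture matches the paper's: contradiction, a split on how $\tctxtwo$ is distributed among the premises, \autoref{lem:systemU_types_of_hereditary_abstractions} in the \ruleUNFEsAbs case, and moving the bound variable into $\tctxtwo$ under the frame $\aset\cup\set\var$ when it receives a non-empty multi-type. Your base cases and the \ruleTypAppP case are fine. However, the \ruleUNFEsAbs case, which you leave open as the ``main obstacle'', rests on an oversight. If $\var$ receives $\optmtyp\in\set{\none,\emset}$, then $\optmtyp\mleq\mtyptwo$ forces $\mtyptwo$ to be a constant type, and $\mtyptwo\neq\tightN$ gives $\mtyptwo=\emset$. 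So in that branch $\tmthree$ never carries a non-empty arrow multi-type. The case then closes with the plain tight statement, exactly as in the paper:
\begin{itemize}
\item if $\mtyptwo$ is a non-empty multi-type, add $\var:\mtyptwo$ to $\tctxtwo_1$ and apply the i.h.\ to $\tmtwo$ under $\aset\cup\set\var$, wherever the rest of $\tctxtwo$ went;
\item if $\mtyptwo=\emset$ and $\tctxtwo_2$ is non-empty, apply it to $\tmthree$ (type $\emset$, flag $\nonapp$);
\item otherwise apply it to $\tmtwo$ with $\var:\optmtyp$ in the tight part.
\end{itemize}
No generalised exception is needed.

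You are right, and the paper does not spell this out, that \ruleTypAppC and \ruleUNFEsStruct need an auxiliary fact about structures. But the fact as you state it is false, and the induction you sketch for it does not go through. Structures do not type only with $\tightN$: a tight environment may assign $\emset$ to a variable of $\sset$. For example, $\judg[0,0]{\varthree:\emset}{\varthree}{\emset}$ with $\varthree\in\sset$, where $\varthree\in\Struct\sset\cap\NF\aset\sset\app$. (\autoref{prop:nfs_are_tight_typable} only exhibits one derivation and says nothing about all of them.) The correct statement is: if $\tm\in\Struct\sset$ and $\tctx(\vartwo)$ is tight for every $\vartwo\in\sset$, then every type of $\tm$ under $\tctx$ is a constant type. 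Its hypothesis must constrain only $\sset$, since the second rule for structures binds a variable whose argument is arbitrary and may give it a non-empty arrow multi-type. Hence in \ruleUNFEsStruct the bound variable may also receive $\emset$, which is harmless because $\var\notin\aset$.

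Similarly, a separate induction showing that terms of $\NF\aset\sset\app$ under a tight appropriate environment have no non-empty arrow type breaks at \ruleUNFEsAbs. There an argument such as $\id$ gives the bound variable a non-tight type, so the i.h.\ does not apply to the body. Instead, prove that every $\tm\in\NF\aset\sset\app$ with $\inv\aset\sset\tm$ lies in $\Struct\sset$ (an easy induction), and then apply the structure lemma. The environment $\tctx_1;\tctxtwo_1$ satisfies its hypothesis because $\dom{\tctxtwo_1}\subseteq\aset$ is disjoint from $\sset$. Then \ruleTypAppC is refuted through $\tmtwo$ alone, whatever $\tctxtwo_1$ is, and your generalised statement becomes unnecessary.
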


\begin{proof}
By induction on $\deriv$. Cases \ruleTypVar and \ruleTypAbs do not apply, 
and we only show case \ruleTypES.
Then, $\deriv$ is of the form:
\begin{mathpar}
  \inferrule
    {\derivable{\deriv_\tmtwo}{\judg[\cm_1,\ce_1]{\tctx_1;\tctxtwo_1; \var : \optmtyp}\tmtwo\tight}
      \\ \optmtyp \mleq \mtyp
      \\ \derivable{\deriv_\tmthree}{\judg[\cm_2,\ce_2]{\tctx_2;\tctxtwo_2}\tmthree\mtyp}}
    {\judg
      [\cm_1 + \cm_2, \ce_1 + \ce_2]
      {(\tctx_1 + \tctx_2);(\tctxtwo_1 + \tctxtwo_2)}
      {\tmtwo\esub\var\tmthree}
      \tight}
\end{mathpar}
where $\tctx = \tctx_1 + \tctx_2$, 
$\tctxtwo = \tctxtwo_1 + \tctxtwo_2$, 
$\cm = \cm_1 + \cm_2$, $\ce = \ce_1 + \ce_2$, and
$\tm = \tmtwo\esub\var\tmthree$. 

Moreover, $\inv\aset\sset{\tmtwo\esub\var\tmthree}$ implies in 
particular $\inv\aset\sset\tmthree$.
We reason by contradiction, assuming 
$\tmtwo\esub\var\tmthree \in \NF\aset\sset\appflag$.
This assumption can be derived either by rule \ruleUNFEsAbs or by 
rule \ruleUNFEsStruct.
We may assume $\var \notin (\aset \cup \sset)$ by $\alpha$-conversion.
\begin{itemize}
\item \ruleUNFEsAbs.
  Then
  \begin{mathpar}
    \inferrule
      {\tmtwo \in \NF{\aset \cup \set\var}\sset\appflag
        \\ \tmthree \in \NF\aset\sset\nonapp
        \\ \tmthree \in \HAbs\aset}
      {\tmtwo\esub\var\tmthree \in \NF\aset\sset\appflag}
  \end{mathpar}
  By hypothesis, $\dom{\tctxtwo_1 + \tctxtwo_2} \neq \emptyset$,
  so $\dom{\tctxtwo_1} \neq \emptyset$ and $\dom{\tctxtwo_2} \neq \emptyset$.
  Also, $\mtyp$ can be tight or not.
  We have to analyse three possible cases:
  \begin{itemize}
  \item 
    If $\mtyp$ not tight, then $\optmtyp = \mtyp$ since 
    $\optmtyp \mleq \mtyp$.
    We have that $\tctx = \tctx_1 + \tctx_2$ implies that 
    $\tctx_1$ is tight and $\isAppr\aset{\tctx_1}$, as $\tctx$ is 
    tight and $\isAppr\aset\tctx$.
    Moreover, $\isAppr{\aset \cup \set\var}{\tctx_1}$ holds, as $\var \notin \dom{\tctx_1}$.
    Additionally, $\tctxtwo = \tctxtwo_1 + \tctxtwo_2$
    implies that $\dom{\tctxtwo_1; \var : \mtyp} \subseteq \aset \cup \set\var$,
    since $\dom\tctxtwo \subseteq \aset$ holds by the hypothesis.
    And by the hypothesis, it is straightforward to note that any element 
    in $\dom{\tctxtwo_1; \var : \mtyp}$ gets assigned a 
    non-empty arrow multi-type.
    We can then apply the \ih on $\deriv_\tmtwo$, yielding
    $\tmtwo \notin \NF{\aset \cup \set\var}\sset\appflag$,
    which leads to a contradiction with our assumption
    $\tmtwo \in \NF{\aset \cup \set\var}\sset\appflag$. 
    Therefore, $\tmtwo\esub\var\tmthree \notin \NF\aset\sset\appflag$.
  \item 
    If $\mtyp = \tight$ and $\tctxtwo_2 \neq \emptyctx$.
    We have that $\tctx = \tctx_1 + \tctx_2$ implies that 
    $\tctx_2$ is tight and $\isAppr\aset{\tctx_2}$, as $\tctx$ is 
    tight and $\isAppr\aset\tctx$.
    We can apply the \ih on $\tmthree$, yielding
    $\tmthree \notin \NF\aset\sset\appflag$, in particular 
    $\appflag$ can be $\nonapp$, which leads to a contradiction with 
    our assumption $\tmthree \in \NF\aset\sset\nonapp$.
    Therefore, $\tmtwo\esub\var\tmthree \notin \NF\aset\sset\appflag$.
  \item 
    If $\mtyp = \tight$ and $\tctxtwo_1 \neq \emptyctx$ and 
    $\tctxtwo_2 = \emptyctx$, then
    We have that $\inv\aset\sset{\tmtwo\esub\var\tmthree}$ implies 
    $\inv{\aset \cup \set\var}\sset\tmtwo$.
    Furthermore, $\isAppr{\aset \cup \set\var}{\tctx_1}$ holds by 
    following the same reasoning as in the first case analysis, and 
    moreover $\isAppr\aset{\tctx_2;\tctxtwo_2}$ holds, and it implies 
    that $\mtyp \neq \tightN$ by applying~\autoref{lem:systemU_types_of_hereditary_abstractions} 
    on $\deriv_\tmthree$.
    Therefore, we are in the case where $\optmtyp \neq \tightN$,
    thus $\isAppr{\aset \cup \set\var}{\tctx_1; \var : \optmtyp}$ holds.
    We can then apply the \ih on $\tmtwo$, yielding
    $\tmtwo \notin \NF{\aset \cup \set\var}\sset\appflag$, which 
    leads to a contradiction with our assumption
    $\tmtwo \in \NF{\aset \cup \set\var}\sset\appflag$. 
    Therefore, $\tmtwo\esub\var\tmthree \notin \NF\aset\sset\appflag$.
    \qedhere
  \end{itemize}
\end{itemize}
\end{proof}

The lemma above intuitively states that system $\typesystem$ stores
non-empty multi-type information on typing environments whenever the 
typed term reduces in \UOCBV.

The following property shows that the 
counters in a typing derivation of a \UOCBV-normal form are $(0,0)$:
\begin{prop}
\label{prop:systemU_ucbv_nf_counters_zero}
Let $\tm$ be a term, $\aset$ an abstraction frame, and $\sset$ a 
structure frame such that $\inv\aset\sset\tm$ and 
$\tm \in \NF\aset\sset\appflag$ hold.
Suppose moreover there exists a tight derivation $\deriv$ such that 
$\derivable\deriv{\judg[\cm,\ce]\tctx\tm\tight}$, where
$\isAppr\aset\tctx$ and additionally, if $\appflag = \app$, 
then $\tight = \tightN$.
Then, $(\cm,\ce) = (0,0)$.
\end{prop}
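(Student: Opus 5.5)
We prove the statement by induction on the derivation of $\tm \in \NF\aset\sset\appflag$ (equivalently, by structural induction on $\tm$), inverting the last rule of $\deriv$ in each case. The key tool is \autoref{lem:systemU_multitype_implies_reducibility}: it forbids the typing derivations of the immediate subterms from carrying non-empty arrow multi-types on variables of the abstraction frame. Hence the induction hypothesis can be invoked on tight subderivations, and the counters can only come from the $\ruleTypAppC$ rule or from $\ruleTypVar$ with a non-tight type, both of which we exclude.

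\textbf{Base and abstraction cases.} For \ruleUNFVar, $\deriv$ is the axiom $\judg[0,\numarrows\tight]{\var:\tight}\var\tight$, and $\numarrows\tight = 0$ because $\tight \in \set{\tightN,\emset}$. For \ruleUNFLam we have $\appflag=\nonapp$ and a tight type; an abstraction cannot receive $\tightN$, so its type is $\emset$. Then rule \ruleTypAbs has an empty family of premises and yields counters $(0,0)$.

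\textbf{Application case} (\ruleUNFApp, $\tm = \tmtwo\,\tmthree$ with $\tmtwo \in \NF\aset\sset\app$). First we exclude \ruleTypAppC. If $\deriv$ ended with it, then $\tmtwo$ would be typed with $\mset{\optmtyp\to\mtyptwo}$; we argue that this contradicts normality. We generalise slightly: a term in $\NF\aset\sset\app$ typed under an appropriate environment (and with the hypothesis of \autoref{lem:systemU_multitype_implies_reducibility} on arrow-typed variables) with a non-empty arrow multi-type forces either a variable of $\aset$ in applied position, which is excluded by \ruleUNFVar, or an abstraction in applied position, which is not in $\NF{}{}\app$, or it propagates through ESs. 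Hence the rule is \ruleTypAppP. Its premises are tight: $\tmtwo$ has type $\tightN$ and $\tmthree$ has type $\tight$. The environments are sub-environments of the tight $\tctx$ (sums of optional types are tight only if the summands are), and appropriateness is inherited. The induction hypothesis on both premises, using $\appflag=\app$ with $\tightN$ for $\tmtwo$, then gives $(0,0)$ for each, so the sum is $(0,0)$.

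\textbf{ES cases} (\ruleUNFEsAbs / \ruleUNFEsStruct, $\tm=\tmtwo\esub\var\tmthree$). Here $\deriv$ ends with \ruleTypES, and the premises are $\judg{\tctx_1;\var:\optmtyp}\tmtwo\tight$ and $\judg{\tctx_2}\tmthree\mtyp$ with $\optmtyp \mleq \mtyp$. The main obstacle is to show that $\mtyp$ is tight. Suppose instead that $\mtyp$ is a non-empty arrow multi-type. Then $\optmtyp = \mtyp$ is non-tight, and we must be in the \ruleUNFEsAbs case: in the \ruleUNFEsStruct case, $\tmthree \in \Struct\sset$ and relevance/typing of structures would force $\tightN$, which we check via an auxiliary argument dual to \autoref{lem:systemU_types_of_hereditary_abstractions}. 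We then apply \autoref{lem:systemU_multitype_implies_reducibility} to $\deriv_\tmtwo$ with $\tctxtwo \eqdef \var:\mtyp$ and frame $\aset\cup\set\var$. This yields $\tmtwo\notin\NF{\aset\cup\set\var}\sset\appflag$, a contradiction. Hence $\mtyp$ is tight and $\optmtyp$ is $\none$ or tight. Appropriateness carries over: for $\var$ in the Abs case, \autoref{lem:systemU_types_of_hereditary_abstractions} gives $\mtyp\neq\tightN$, so $\optmtyp\neq\tightN$. The induction hypothesis applies to $\tmtwo$ (with the same $\appflag$ and type) and to $\tmthree$ (with $\nonapp$), giving $(0,0)$ for each and hence for $\tm$. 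The delicate point throughout is the consuming-application exclusion in the \ruleUNFApp case, which we would isolate as a separate lemma proved by induction on normal forms in applied position.
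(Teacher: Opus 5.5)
Your proof is correct and follows the paper's own: induction on $\tm \in \NF\aset\sset\appflag$. The explicit-substitution case is split on whether $\mtyp$ is tight; the non-tight case is refuted by \autoref{lem:systemU_multitype_implies_reducibility} under the frame $\aset\cup\set\var$, and appropriateness for $\var$ comes from \autoref{lem:systemU_types_of_hereditary_abstractions}. The paper omits the \ruleUNFApp and \ruleUNFEsStruct cases, and the two auxiliary lemmas you add for them are sound: excluding \ruleTypAppC for applied normal forms, which does need the generalised environment hypothesis, and a structure lemma for \ruleUNFEsStruct. The one correction is that the structure lemma should conclude a tight type rather than $\tightN$, since a variable $\vartwo \in \sset$ can be typed $\emset$ in a tight environment; a tight type is all you use anyway.
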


\begin{proof}
By induction on $\tm \in \NF\aset\sset\appflag$. 
We only show the cases \ruleUNFVar, \ruleUNFLam, and \ruleUNFEsAbs.
\begin{itemize}
\item \ruleUNFVar.
  Then 
  \begin{mathpar}
    \inferrule
      {\var \in \aset \Rightarrow \appflag = \nonapp}
      {\tm = \var \in \NF\aset\sset\appflag}
  \end{mathpar}

  And $\deriv$ can only be obtained from the \ruleTypVar axiom, so
  it is of the form
  $\judg[0,\numarrows\tight]{\var : \tight}\var\tight$,
  with $\numarrows\tight = 0$ by definition, so we are done, as the 
  counters are $(0,0)$.
\item \ruleUNFLam.
  Then, $\tm = \lam\var\tmtwo \in \NF\aset\sset\nonapp$.
  The derivation $\deriv$ can only be obtained by rule \ruleTypAbs
  and the only possible case for which this derivation is 
  tight is when the resulting type of the judgement in the conclusion 
  is $\emset$, which holds with no premises: 
  $\judg[0,0]\emptyctx{\lam\var\tmtwo}\emset$.
  So we are done as the counters are $(0, 0)$.
\item \ruleUNFEsAbs.
  Then
  \begin{mathpar}
    \inferrule
      {\tmtwo \in \NF{\aset \cup \set\var}\sset\appflag 
        \\ \tmthree \in \NF\aset\sset\nonapp
        \\ \tmthree \in \HAbs\aset}
      {\tm = \tmtwo\esub\var\tmthree \in \NF\aset\sset\appflag}
  \end{mathpar}

  The derivation $\deriv$ can only be obtained by rule \ruleTypES, 
  so it is of the form:
  \begin{mathpar}
    \inferrule
      {\derivable{\deriv_\tmtwo}{\judg[\cm_1,\ce_1]{\tctx_1; \var : \optmtyp}\tmtwo\tight}
        \\ \optmtyp \mleq \mtyp
        \\ \derivable{\deriv_\tmthree}{\judg[\cm_2,\ce_2]{\tctx_2}\tmthree\mtyp}}
      {\judg[\cm_1 + \cm_2,\ce_1 + \ce_2]{\tctx_1 + \tctx_2}{\tmtwo\esub\var\tmthree}\tight}
  \end{mathpar}
  where $\tctx = \tctx_1 + \tctx_2$, $\cm = \cm_1 + \cm_2$, 
  and $\ce = \ce_1 + \ce_2$.
  
  By $\alpha$-conversion, we may assume $\var \notin (\aset \cup \sset)$.
  We analyse two cases, depending on whether $\mtyp$ is tight or not:
  \begin{itemize}
  \item
    If $\mtyp$ is tight, we first show that $\deriv_\tmtwo$ is a tight 
    derivation, starting by showing that its conclusion is a tight 
    judgement.
    The type is tight because of the hypothesis.
    The typing environment $\tctx_1$ is tight because $\tctx$ is tight 
    and $\tctx = \tctx_1 + \tctx_2$, and moreover $\optmtyp$
    is tight because $\mtyp$ is tight and $\optmtyp \mleq \mtyp$.
    Therefore, $\deriv_\tmtwo$ is tight.
    And $\isAppr\aset\tctx$ implies $\isAppr\aset{\tctx_1}$ because 
    $\dom{\tctx_1} \subseteq \dom\tctx$, and in particular 
    $\var \notin \dom{\tctx_1}$ by $\alpha$-conversion, so it is easy to conclude that
    $\isAppr{\aset \cup \set\var}{\tctx_1}$. 
    On the other hand, $\tmthree \in \HAbs\aset$ by hypothesis, so
    the type in the conclusion of $\deriv_\tmthree$ cannot be $\tightN$
    by~\autoref{lem:systemU_types_of_hereditary_abstractions}, so 
    $\optmtyp \neq \tightN$ is tight because $\optmtyp \mleq \mtyp$.
    Therefore, $\isAppr{\aset \cup \set\var}{\var : \optmtyp}$, 
    so that $\isAppr{\aset \cup \set\var}{\tctx_1; \var : \optmtyp}$
    holds.
    Moreover, if $\appflag = \app$ in 
    $\tmtwo \in \NF{\aset \cup \set\var}\sset\appflag$, then 
    $\tight = \tightN$ by hypothesis.
    We can then apply the \ih on 
    $\tmtwo \in \NF{\aset \cup \set\var}\sset\appflag$, yielding
    $(\cm_1,\ce_1) = (0,0)$.

    On the other hand, we show that $\deriv_\tmthree$ is tight.
    We show first that its conclusion is a tight judgement.
    The type $\mtyp$ is tight.
    The typing environment $\tctx_2$ is tight because $\tctx$ is tight 
    and $\tctx = \tctx_1 + \tctx_2$.
    And $\isAppr\aset\tctx$ implies $\isAppr\aset{\tctx_2}$ because 
    $\dom{\tctx_2} \subseteq \dom\tctx$. 
    Additionally, $\appflag = \nonapp$ in 
    $\tmthree \in \NF\aset\sset\nonapp$, so we do not look at the
    specific shape of $\mtyp$. 
    We can then apply the \ih on 
    $\tmthree \in \NF\aset\sset\nonapp$, yielding
    $(\cm_2,\ce_2) = (0,0)$.
    Then we conclude that 
    $(\cm,\ce) = (\cm_1 + \cm_2, \ce_1 + \ce_2) = (0, 0)$.
  \item 
    If $\mtyp$ is not tight, then $\optmtyp$ cannot be $\none$. 
    Moreover, $\optmtyp = \mtyp$ and it should be the case that 
    $\mtyp$ is a non-empty arrow multi-type, since otherwise $\mtyp$ 
    would be tight.
    Therefore, $\deriv_\tmtwo$ is not a tight derivation.

    Moreover, $\isAppr{\aset \cup \set\var}{\tctx_1}$ holds following 
    the same reasoning we did in the previous case.
    In particular, $\tctx_1$ is a tight typing environment because 
    $\tctx$ is and $\tctx = \tctx_1 + \tctx_2$.

    Then, $\tmtwo \notin \NF{\aset \cup \set\var}\sset\appflag$ by~\autoref{lem:systemU_multitype_implies_reducibility}, which 
    contradicts the hypothesis.
    Hence, this case is impossible.
    \qedhere
  \end{itemize}
\end{itemize}
\end{proof}

\end{document}